\newtheorem{theorem}{Theorem}
\newtheorem{lemma}{Lemma}
\newtheorem{definition}{Definition}
\theoremstyle{definition}
\newtheorem{remark}{Remark}
\newcommand{\E} {\mathbb{E}}
\renewcommand{\P} {\mathbb{P}}
\DeclareMathOperator*{\argmax}{arg\,max}
\DeclareMathOperator*{\argmin}{arg\,min}
\newcommand{\beq}{ \begin{equation} }
	\newcommand{\eeq}{ \end{equation} }
\newcommand{\bsf}{{\boldsymbol f}}
\newcommand{\bsg}{{\boldsymbol g}}
\newcommand{\bso}{{\boldsymbol o}}
\newcommand{\bsu}{{\boldsymbol u}}
\newcommand{\bsv}{{\boldsymbol v}}
\newcommand{\bsw}{{\boldsymbol w}}
\newcommand{\bsx}{{\boldsymbol x}}
\newcommand{\bsy}{{\boldsymbol y}}
\newcommand{\bsz}{{\boldsymbol z}}
\newcommand{\bsI}{{\boldsymbol I}}
\newcommand{\bsL}{{\boldsymbol L}}
\newcommand{\bsmu}{{\boldsymbol \mu}}
\newcommand{\bszeta}{{\boldsymbol \zeta}}
\newcommand{\bssigma}{{\boldsymbol \sigma}}
\begin{document}

\title{Exact Matching in Correlated Networks with Node Attributes for Improved Community Recovery } 

 \author{%
   \IEEEauthorblockN{Joonhyuk Yang and Hye Won Chung}
   \IEEEauthorblockA{\\School of Electrical Engineering,
                     KAIST\\
                      \{joonhyuk.yang, hwchung\}@kaist.ac.kr}
                   }



\maketitle

\begin{abstract}

We study community detection in multiple networks with jointly correlated node attributes and edges. This setting arises naturally in applications such as social platforms, where a shared set of users may exhibit both correlated friendship patterns and correlated attributes across different platforms. Extending the classical Stochastic Block Model (SBM) and its contextual counterpart (Contextual SBM or CSBM), we introduce the correlated CSBM, which incorporates structural and attribute correlations across graphs. To build intuition, we first analyze correlated Gaussian Mixture Models, wherein only correlated node attributes are available without edges, and identify the conditions under which an estimator minimizing the distance between attributes achieves exact matching of nodes across the two databases. For the correlated CSBMs, we develop a two-step procedure that first applies $k$-core matching to most nodes using edge information, then refines the matching for the remaining unmatched nodes by leveraging their attributes with a distance-based estimator. We identify the conditions under which the algorithm recovers the exact node correspondence, enabling us to merge the correlated edges and average the correlated attributes for enhanced community detection. Crucially, by aligning and combining graphs, we identify regimes in which community detection is impossible in a single graph but becomes feasible when side information from correlated graphs is incorporated. Our results illustrate how the interplay between graph matching and community recovery can boost performance, broadening the scope of multi-graph, attribute-based community detection.

\end{abstract}

\section{Introduction}\label{sec:intro}

Identifying the community labels of nodes in a graph or database--often referred to as community recovery or community detection--is a fundamental problem in network analysis, with broad applications in machine learning, social network analysis, and biology. A central premise of many community detection methods is that nodes within the same community are more likely to be densely connected or to share similar attributes than those in different communities.

To enable rigorous analysis of community detection, various probabilistic models have been developed.  Among these, the Stochastic Block Model (SBM), introduced by Holland, Laskey, and Leinhardt \cite{HLL83}, is one of the most extensively studied. In its classical formulation, the SBM partitions $n$ nodes into $r$ communities, with edges formed independently with probability $p \in [0,1]$ between nodes in the same community, and $q \in [0,p)$ between nodes in different communities. This framework effectively captures community structure in a variety of domains--for instance, in social networks where edges represent interactions or friendships. In the regime where $p = \frac{a \log n}{n}$ and $q = \frac{b \log n}{n}$ for constants $a, b > 0$ and $r = \Theta(1)$, it has been shown in \cite{AS15, ABH15, Abbe17} that exact recovery of community labels is information-theoretically possible if and only if $\sqrt{a} - \sqrt{b} > \sqrt{r}$. 

Since the standard SBM focuses on graph connectivity alone, it neglects potentially informative node attributes. To remedy this, the Contextual Stochastic Block Model (CSBM) includes node attributes alongside structural connectivity. For instance, in \cite{abbe2022}, the authors consider a two-community CSBM in which each node has a Gaussian-distributed attribute vector of dimension $d$, with mean either $\bsmu$ or $-\bsmu$ (depending on the node's community) and covariance $\bsI_d$. This augmented framework leverages both graph structure and node attributes, leading to improved community recovery. Indeed, it is shown in \cite{abbe2022} that the Signal-to-Noise Ratio (SNR), derived from both edges and attributes, governs the feasibility of exact community recovery, thereby demonstrating that combining these two sources of information can outperform methods relying on only one.

While CSBMs address the integration of attributes and structure within a single network, many real-world scenarios naturally involve multiple correlated networks. For example, users often participate in several social platforms (e.g., Facebook and LinkedIn), resulting in correlated patterns of connectivity and behavior. However, privacy and platform-specific anonymization often obscure user identities, making it nontrivial to match corresponding users across networks. This task, known as graph matching, is critical for leveraging multi-graph information in downstream tasks. In fact, exact node correspondence enables the merging of edges from both networks, facilitating more accurate community detection. Prior work \cite{RS21, YC23} has shown that once exact matching is attainable in correlated SBMs, community recovery becomes easier than if only one graph were available. Moreover, Gaudio et al. \cite{GRS22} established precise information-theoretic limits for exact community recovery under correlated SBMs.

Building on these insights, this work addresses the more general setting in which both edges and node attributes are correlated across multiple networks. For instance, a user on both Facebook and LinkedIn may exhibit similar patterns of social interaction and share comparable attribute profiles. We posit that the joint correlation in structure and attributes can significantly enhance community recovery. To formalize this, we extend the CSBM to the correlated multi-network setting, resulting in a new model we refer to as correlated CSBM.

As a preliminary step, we analyze correlated Gaussian Mixture Models (GMMs), which capture correlation in node attributes alone, without considering edges.
Unlike prior studies  \cite{DCK19, KN22} that assume random databases with no underlying community structure, our model incorporates both attribute correlation and latent communities.
By extending the alignment techniques developed in \cite{DCK19, KN22} to settings with hidden communities, we identify conditions under which a minimum-distance estimator (defined in \eqref{eq:estimator}) can reliably recover the underlying permutation $\pi_* : [n] \to [n]$, even when the community memberships are unknown.

For the full correlated CSBM, we propose a two-step algorithm for exact matching. In the first step, we apply the $k$-core matching algorithm \cite{CKNP20}, which uses edge information to match all but $o(n)$ nodes. In the second step, we apply the minimum-distance estimator to align the remaining unmatched nodes using their attribute vectors. We derive precise conditions under which this two-step procedure, combining both structural and contextual information, achieves exact matching.

Once node alignment is established, merging the correlated edges results in a denser graph, while averaging the correlated attributes enhances the effective signal-to-noise ratio. Crucially, we identify regimes in which exact community recovery is impossible using a single graph alone but becomes feasible when information from correlated graphs is incorporated. This demonstrates the pivotal role of graph matching in facilitating community detection, and illustrates how their interplay can significantly improve recovery guarantees.

To the best of our knowledge, this work is the first to study community recovery in correlated graphs that incorporate both edge correlations and correlated node attributes, thereby substantially expanding the landscape of multi-graph and attribute-informed community detection.

\subsection{Models}\label{sec:model}
We introduce two new models to capture correlations across graphs with latent community structure: correlated Gaussian Mixture Models, which model correlations in node attributes alone, and correlated Contextual Stochastic Block Models, which jointly incorporate both node attributes and graph structure.

\subsubsection{\textbf{Correlated Gaussian Mixture Models}}\label{sec:model1} We first assign $d$-dimensional features (or attributes) to $n$ nodes. Let $V_1 := [n]$ denote the set of nodes in the first database, and for each node $i \in V_1$, the node attribute is given by
\begin{equation}\label{eq:node feature g1}
    \bsx_i = \boldsymbol{\mu} \sigma_i +\bsz_i,
\end{equation}
where $\boldsymbol{\mu} \in \mathbb{R}^d$ is a fixed mean vector, $\sigma_i \in \{-1,+1\}$ is the latent community label, and $\bsz_i \sim \mathcal{N}(\mathbf{0},\bsI_d)$ for each $i\in [n]$. 
Let $\bssigma := \{\sigma_i\}_{i=1}^n$ denote the vector of community labels associated with $V_1$.

Next, for the node set $V_2 := [n]$, we assign each node $i$ the attribute
\begin{equation}\label{eq:node feature g2}
    \bsy_i' =\boldsymbol{\mu} \sigma_i +\rho \bsz_i +\sqrt{1-\rho^2}\bsw_i,
\end{equation}
where $\rho \in [0,1]$ is a correlation parameter and $\bsw_i \sim \mathcal{N}(\mathbf{0},\bsI_d)$ is independent Gaussian noise.
Equivalently, the pair $(\bsx_i,\bsy_i')$ can be jointly represented as
\begin{equation}\label{eq:correlated gaussian feature}
    (\bsx_i,\bsy_i')\sim \mathcal{N}\left( \left(\boldsymbol{\mu}\sigma_i,  \boldsymbol{\mu}\sigma_i  \right),\Sigma_{d} \right),
\end{equation}
where
\begin{equation}
    \Sigma_{d}:= \left[
\begin{matrix}
    \bsI_d & \text{diag}(\rho) \\
\text{diag}(\rho) & \bsI_d \\
\end{matrix}
\right].
\end{equation}
We interpret the attributes $\{\bsx_i\}_{i=1}^n$ and $\{\bsy_i'\}_{i=1}^n$ as forming two databases, represented by the matrices  $X:=(\bsx_1, \bsx_2, \ldots ,\bsx_n)^\top \in \mathbb{R}^{n\times d}$ and  $Y':=(\bsy_1', \bsy_2', \ldots ,\bsy_n')^\top \in \mathbb{R}^{n\times d}$. We then define a new database $Y$ by permuting the rows of $Y'$ by $\pi_* : [n] \to [n]$.
The \textit{latent correspondence} between the two databases $X$ and $Y$ is governed by the ground-truth permutation $\pi_*$, assumed to be uniformly distributed over the set $S_n$ of all permutations of $n$ elements. The permuted second database is denoted by $Y=(\bsy_1, \bsy_2, \ldots ,\bsy_n)^\top=(\bsy'_{\pi_*^{-1}(1)},\bsy'_{\pi_*^{-1}(2)},\ldots,\bsy'_{\pi_*^{-1}(n)})^\top \in \mathbb{R}^{n\times d}$. 
Accordingly, the community label vectors for $X$ and $Y$ are given by $\bssigma^1 := \bssigma$ and $\bssigma^2 := \bssigma \circ \pi_*^{-1}$, respectively. We denote the resulting pair of correlated databases as 
\[
(X,\,Y) \;\sim\; \text{CGMMs}\bigl(n,\boldsymbol{\mu},d,\rho\bigr).
\]

\subsubsection{\textbf{Correlated Contextual Stochastic Block Models}}\label{sec:model2}  
Let $V = [n]$ denote the vertex set, and let $\bssigma := \{\sigma_i\}_{i=1}^n$ be the vector of community labels, where each $\sigma_i \in \{-1, +1\}$ is drawn independently and uniformly at random. We first generate a \textit{parent graph} $G \sim \text{SBM}(n, p, q)$, where $p, q \in [0,1]$, with $p > q$ and $q = \Theta(p)$. Specifically, the vertex set is partitioned into
$
V^+ := \{i \in [n] : \sigma_i = +1\}, V^- := \{i \in [n] : \sigma_i = -1\},
$
and edges are formed independently as follows: If $\sigma_u \sigma_v = +1$, an edge $(u,v)$ is placed with probability $p$; if $\sigma_u \sigma_v = -1$, an edge is placed with probability $q$.

Two subgraphs, $G_1$ and $G_2'$, are then independently generated by sampling each edge of $G$ with probability $s \in [0,1]$. This guarantees that both $G_1$ and $G_2'$ are edge-subsampled versions of the same parent graph, and the conditional probability satisfies
\[
\P\left\{(u,v) \in \mathcal{E}(G_2') \mid (u,v) \in \mathcal{E}(G_1) \right\} = s,
\]
for all distinct $u, v \in [n]$, where $\mathcal{E}(G)$ denotes the edge set of graph $G$.

Each node in $G_1$ and $G_2'$ is assigned a correlated Gaussian attribute, as defined in \eqref{eq:correlated gaussian feature}. Specifically, for each $i \in [n]$, the pair $(\bsx_i, \bsy_i')$ is jointly Gaussian, conditioned on $\sigma_i$, with mean depending on $\boldsymbol{\mu}$ and correlation coefficient $\rho \in [0,1]$. The parameter $\boldsymbol{\mu}$ is drawn uniformly from the sphere
\[
\left\{\boldsymbol{\mu} \in \mathbb{R}^d : \|\boldsymbol{\mu}\|^2 = R \right\}
\]
for some fixed $R > 0$. Finally, a hidden permutation $\pi_* : [n] \to [n]$ is applied to the nodes of $G_2'$ to produce the observed graph $G_2$. The community label vectors for $G_1$ and $G_2$ are given by $\bssigma^1 := \bssigma$ and $\bssigma^2 := \bssigma \circ \pi_*^{-1}$, respectively.

Let $X$ and $Y'$ denote the attribute matrices corresponding to $G_1$ and $G_2'$, respectively, and define $Y$ by permuting the rows of $Y'$ according to $\pi_*$. 
Similarly, let $A$, $B'$, and $B$ denote the adjacency matrices of $G_1$, $G_2'$, and $G_2$, respectively. The resulting correlated graph pair is denoted
\[
(G_1, G_2) \sim \text{CCSBMs}(n, p, q, s;\, R, d, \rho).
\]

\subsection{Prior Works}\label{sec:prior work}

      \begin{table*}[t]
    \centering
    \caption{Characteristics of various models: This table summarizes the differences between the previous models and our two new models, depending on the existence of community structure, edges, node attributes, and/or correlated graphs.  }
    \label{tbl:model}
   \footnotesize{    \begin{tabular}{c|c|c|c|c}
    \toprule
    
 Models & Communities & Edges & Node attributes & Correlated graphs    \\ \midrule
 Correlated Erd\H{o}s-R\'enyi graphs    & -& O&-& O \\

 Correlated Gaussian databases      & -&- &O &O \\

  Correlated Gaussian-Attributed Erd\H{o}s-R\'enyi model   & -& O&O&O \\

      Stochastic Block Model       & O& O&-&- \\

 Gaussian Mixture Model       & O& -&O&- \\

 Contextual Stochastic Block Model     & O& O&O&- \\

 Correlated Stochastic Block Models   & O& O&-&O \\
\midrule
 Correlated Gaussian Mixture Models (Ours)  & O& -&O&O \\
 Correlated Contextual Stochastic Block Models (Ours)  & O& O&O&O \\  \bottomrule

    \end{tabular}      }
    \end{table*}
    
      \begin{table*}[t]
    \centering
    \caption{Comparison on the information-theoretic limits for exact matching and exact community recovery across various models: In the conditions for exact community recovery, we consider the cases with two communities and the parameter regimes where $p=\frac{a \log n}{n}$, $q=\frac{b \log n}{n}$, $c\log n=  \frac{R^2}{R + d/n}$ and $c' \log n=\frac{\left(\frac{2}{1+\rho}R\right)^2}{\frac{2}{1+\rho}R+ d/n}$ for constants $a,b,c,c'>0$.  Our exact matching results also require $R \geq 2\log n +\omega(1)$ or $d=\omega(\log n)$,  but these conditions have been omitted for simplicity. }
    \label{tbl:result}
   \footnotesize{    \begin{tabular}{c|c|c|c}
    \toprule
    
& Models & Exact Matching & Exact Community Recovery    \\ \midrule
\cite{WXY22} &Correlated Erd\H{o}s-R\'enyi graphs    & $nps^2 \geq (1+\epsilon) \log n$ & - \\

\cite{DCK19}& Correlated Gaussian databases    &$\frac{d}{4} \log \frac{1}{1-\rho^2} \geq \log n +\omega (1) $ & - \\

\cite{YC24} & Correlated Gaussian-Attributed Erd\H{o}s-R\'enyi model  &  $nps^2+\frac{d}{4} \log \frac{1}{1-\rho^2} \geq (1+\epsilon) \log n$ & - \\ 

\cite{Abbe17} &     Stochastic Block Model      & -& $ s\frac{(\sqrt{a}-\sqrt{b})^2}{2}>1 $  \\ 

\cite{Ndaoud22}& Gaussian Mixture Model      & - & $ c>2 $  \\ 

\cite{abbe2022}& Contextual Stochastic Block Model   & -& $\frac{s(\sqrt{a}-\sqrt{b})^2+c}{2}>1 $  \\ 

\cite{YC23,RS21}& Correlated Stochastic Block Models & $ns^2\frac{p+q}{2} \geq (1+\epsilon) \log n$ & $\left(1-(1-s)^2\right)\frac{(\sqrt{a}-\sqrt{b})^2}{2}>1$  \\
\midrule
\multirow{2}{*}{ Our results} & Correlated Gaussian Mixture Models & $\frac{d}{4}\log \frac{1}{1-\rho^2} \geq (1+\epsilon)\log n$  & $c'>2$   \\
& Correlated Contextual Stochastic Block Models& $ns^2\frac{p+q}{2}+\frac{d}{4}\log \frac{1}{1-\rho^2} \geq (1+\epsilon)\log n$  & $\frac{\left(1-(1-s)^2\right)\left(\sqrt{a}-\sqrt{b}\right)^2+c'}{2} >1$\\  \bottomrule

    \end{tabular}      }
    \end{table*}

Table~\ref{tbl:model} summarizes various graph models, including our newly introduced ones, categorized by their use of community structure, edges, node attributes, and correlation across graphs. Table~\ref{tbl:result} provides an overview of the information-theoretic limits for exact matching in correlated databases and/or graphs, as well as the limits for community recovery in graphs with latent community structure. These results highlight the performance gains in exact community recovery achievable in our proposed models by incorporating correlated edges and/or node attributes.

\subsubsection{Exact Matching}

\paragraph{Matching Correlated Random Graphs}
One of the most extensively studied settings for graph matching is the correlated Erd\H{o}s--R\'enyi (ER) model, first proposed in~\cite{PG11}. In this model, the parent graph $G$ is drawn from $\mathcal{G}(n,p)$ (an ER graph), and $G_1$ and $G'_2$ are obtained by independently sampling every edge of $G$ with probability $s$ twice. Cullina and Kiyavash~\cite{CK16,CK17} provided the first information-theoretic limits for exact matching, showing that exact matching is possible if $nps^2 \ge \log n + \omega(1)$ under the condition $p \le O(1/(\log n)^3)$. More recently, Wu et al.~\cite{WXY22} showed that exact matching remains feasible whenever $p=o(1)$ and $nps^2 \ge (1+\epsilon)\log n$ for any fixed $\epsilon>0$. However, these proofs rely on checking all permutations, yielding time complexity on the order of $\Theta(n!)$. Consequently, a significant effort has focused on more efficient algorithms. Quasi-polynomial time ($n^{O(\log n)}$) approaches were proposed in~\cite{MX20,BCL18+}, while polynomial-time algorithms in~\cite{DMWX21,FMWX23,MRT21} achieve exact matching under $s = 1 - o(1)$. Recently, the first polynomial-time algorithms for constant correlation $s \ge \alpha$ (for a suitable constant $\alpha$) appeared in~\cite{MRT23,MWXY23}, using subgraph counting or large-neighborhood statistics.

Graph matching under correlated Stochastic Block Models (SBMs), where the parent graph is an SBM, has also been investigated~\cite{OGE16,CSKM16}. Assuming known community labels in each graph, Onaran et al.~\cite{OGE16} showed that exact matching is possible when $s(1 - \sqrt{1-s^2}) \frac{p+q}{2} \ge 3 \log n$ for two communities. Cullina et al.~\cite{CSKM16} extended this to $r$ communities, where $p = \frac{a\log n}{n}$ and $q = \frac{b\log n}{n}$, demonstrating that exact matching holds if $s^2 \left(\frac{a + (r-1)b}{r} \right)> 2$. Notably, in the special case $p = q$, correlated SBMs reduce to correlated ER graphs; even under known labels, the bounds in~\cite{OGE16,CSKM16} differ from the information-theoretic limit in the correlated ER setting. R\'acz and Sridhar~\cite{RS21} refined these results for $r=2$, proving that exact matching is possible if $s^2 \left(\frac{a + b}{2} \right)> 1$. Yang and Chung~\cite{YC23} generalized these findings to SBMs with $r$ communities, showing that exact matching holds if $n s^2 \left(\frac{p + (r-1)q}{r}\right) \ge (1 + \epsilon)\log n$, under mild assumptions. As before, achieving this bound requires a time complexity of $\Theta(n!)$. Yang et al.~\cite{YSC23} designed a polynomial-time algorithm under constant correlation when community labels are known, and Chai and R\'acz~\cite{CR24} recently devised a polynomial-time method that obviates label information.

\paragraph{Database Alignment}
Database alignment~\cite{DCK19,DCK20,CMK18,DCK23} addresses the problem of finding a one-to-one correspondence between nodes in two ``databases," where each node is associated with correlated attributes. Similar to graph matching, various models have been proposed, among which the correlated Gaussian database model is popular. In this model, each pair of corresponding nodes $(\bsx_i,\bsy_i')$ is drawn i.i.d.\ from $\mathcal{N}(\boldsymbol{\mu}, \Sigma_d)$, where $\boldsymbol{\mu} \in \mathbb{R}^{2d}$ and 
$
\Sigma_d
=\begin{bmatrix}
\bsI_d & \text{diag}(\rho)\\
\text{diag}(\rho) & \bsI_d
\end{bmatrix}.
$
Dai et al.~\cite{DCK19} showed that exact alignment is possible if $\tfrac{d}{4} \log \tfrac{1}{1-\rho^2}\ge \log n + \omega(1)$. Their method uses the maximum a posteriori (MAP) estimator, with a time complexity of $O(n^2 d + n^3)$.

\paragraph{Attributed graph matching}
In many social networks, users (nodes) have both connections (edges) and personal attributes. The \textit{attributed graph alignment} problem aims to match nodes across two correlated graphs while exploiting both edge structure and node features. In the correlated Gaussian-attributed ER model~\cite{YC24}, the edges come from correlated ER graphs, and node attributes come from correlated Gaussian databases. It was shown that exact matching is possible if 
\beq
n p s^2 + \frac{d}{4} \log\frac{1}{1 - \rho^2} \geq (1 + \epsilon)\log n,
\eeq
indicating that the effective SNR is an additive combination of edge- and attribute-based signals.

Zhang et al.~\cite{ZWW21} introduced an alternative attributed ER pair model with $n$ user nodes and $m$ attribute nodes, assuming that the $m$ attribute nodes are pre-aligned. Edges between user nodes appear with probability $p$, while edges between users and attribute nodes appear with probability $q$. Similar to the correlated ER framework, edges are independently subsampled with probabilities $s_p$ and $s_q$ for user-user and user-attribute edges, respectively, and a permutation $\pi_*$ is applied to yield $G_1$ and $G_2$. Exact matching is possible if $n p s_p^2 + m q s_q^2 \ge \log n + \omega(1)$. Polynomial-time algorithms for recovering $\pi_*$ in this setting have been explored in~\cite{WZWW24}.

\subsubsection{Community Recovery in Correlated Random Graphs}

R\'acz and Sridhar~\cite{RS21} first investigated the information-theoretic limits on exact community recovery in the presence of two or more correlated networks. Focusing on correlated SBMs with $p=\frac{a\log n}{n}$ and $q=\frac{b\log n}{n}$ (for $a,b>0$) and two communities, they established conditions under which exact matching is possible. Once the exact matching is achieved, they construct a union graph $G_1 \vee_{\pi_*} G_2 \sim \text{SBM}\bigl(n,p(1-(1-s)^2),q(1-(1-s)^2)\bigr)$, which is denser than the individual graphs $G_1,G_2\sim \text{SBM}\bigl(n,ps,qs\bigr)$. By doing so, the threshold for exact community recovery becomes less stringent: while a single graph $G_1$ or $G_2$ requires $|\sqrt{a}-\sqrt{b}|>\sqrt{\tfrac{2}{s}}$, the union graph only needs $|\sqrt{a}-\sqrt{b}|>\sqrt{\tfrac{2}{{1-(1-s)^2}}}$, illustrating the existence of a regime in which exact recovery is infeasible with a single graph but feasible with two correlated ones.

Although \cite{RS21} narrowed the gap between achievability and impossibility in exact community recovery for two-community correlated SBMs, Gaudio et al.~\cite{GRS22} completely characterized the information-theoretic limit for exact recovery by leveraging partial matching, even in cases where perfect matching is not possible. Subsequent work has extended these ideas to correlated SBMs with more communities or more than two correlated graphs. Yang and Chung~\cite{YC23} generalized the results of~\cite{RS21} to SBMs with $r$ communities that may scale with $n$, while R\'acz and Zhang~\cite{RZ24} built on \cite{GRS22} to determine the exact information-theoretic threshold for community recovery in scenarios involving more than two correlated SBM graphs.

\subsection{Our Contributions}\label{sec:our work}
This paper introduces and analyzes two new models that account for correlations in both node attributes and graph structure, to better reflect latent community structure in real-world networks. 
Specifically, we focus on \emph{correlated Gaussian Mixture Models (GMMs)} and \emph{correlated Contextual Stochastic Block Models (CSBMs)}, as formally defined in Section~\ref{sec:model}, with the ultimate goal of determining the conditions under which exact community recovery becomes possible by leveraging a second graph (or database) as side information.

The CGMMs and CCSBMs introduced in this work are, to our knowledge, the first statistical models designed to jointly study the alignment (via matching) and integration of two correlated networks with latent community structure, incorporating both correlated edges and node attributes. While the correlated Stochastic Block Models capture edge correlations alone, CGMMs model attribute correlations, and CCSBMs combine both. These models serve as natural generalizations of existing models, such as correlated Erd\H{o}s--R\'{e}nyi graphs and correlated Gaussian-attributed databases, toward a unified framework for understanding how different sources of correlation can be leveraged for improved downstream tasks like community recovery.

A key preliminary step to utilize the correlation between two graphs is to establish \emph{exact matching}--that is, to recover the one-to-one correspondence between the nodes of two correlated graphs--so that the second graph (or database) can serve as informative side information for community detection. We characterize the regimes under which exact matching is either achievable or information-theoretically impossible in each of the proposed models.

In the correlated GMMs setting, we consider a simple estimator that minimizes the total squared distance between node attributes,
\begin{equation}\label{eq:estimator}
    \hat{\pi} := \argmin_{\pi \in S_n} \sum_{i=1}^n \bigl\lVert \bsx_i - \bsy_{\pi(i)} \bigr\rVert^2,
\end{equation}
and we derive sharp thresholds for this estimator to achieve exact alignment. 
In this model, the presence of hidden community labels affects the distribution of pairwise distances between node attributes, depending on whether node pairs belong to the same community or not. This violates the symmetry assumption exploited in \cite{KN22}, where distances were identically distributed across all node pairs. 
To handle this, we introduce two high-probability \emph{good events} (defined in Eqs.~\eqref{eqn:def_A1} and~\eqref{eqn:def_A2}) that control the probability of error caused by incorrect matches, particularly those involving nodes from different communities. Under these events, we carefully analyze the minimum-distance estimator and establish that exact matching is achievable under the main condition $\frac{d}{4} \log \left( \frac{1}{1 - \rho^2} \right) \ge (1 + \epsilon) \log n$, which matches the fundamental limit shown in our converse (Theorem \ref{thm:gmm matching imp}).

For the correlated CSBMs, we develop a two-step matching algorithm. In the first step, we apply \emph{$k$-core matching} using only edge information to align the majority of nodes. In the second step, we use the distance-based attribute estimator in~\eqref{eq:estimator} to match the remaining unmatched nodes. This two-step strategy, originally proposed for correlated Gaussian-attributed Erd\H{o}s--R\'enyi graphs~\cite{YC24}, is shown here to remain effective even in the presence of latent community structure.
In contrast to the model in~\cite{YC24}, where edge structures and attribute data are assumed independent, the CCSBM model introduces a more intricate dependency: both the attribute distance and the edge probability between two nodes depend on whether the nodes belong to the same community. This dependency complicates the analysis of joint correlation across edge and attribute data.

To resolve this challenge, our two-step algorithm first applies $k$-core matching~\cite{CKNP20, GRS22, RS23, YC24}, which identifies the largest matching such that all matched nodes have degree at least $k$ in the intersection graph under a given permutation. This step successfully aligns all but $n^{1 - \frac{n s^2 (p+q)}{2 \log n}}$ nodes using only edge correlation. The remaining unmatched nodes are then matched via a distance-based estimator applied to the attribute data.
The key technical insight is that, even though each step uses only a single type of correlation (edges or attributes), their sequential combination allows us to bypass the dependence between edge structure and attributes while still achieving a \emph{tight} recovery condition:
$
n s^2 \frac{p + q}{2} + \frac{d}{4} \log \left( \frac{1}{1 - \rho^2} \right) \ge (1 + \epsilon) \log n,
$
which matches the information-theoretic limit established by our converse (Theorem~\ref{thm:csbm matching imp}).

Having established the conditions for exact matching, we then investigate \emph{exact community recovery} in these correlated models. When matching is successful, one can merge the two correlated graphs by taking their union, thereby creating a denser graph, and average their correlated Gaussian attributes to reduce variance. Consequently, the achievable range for exact community detection expands relative to the scenario of having only a single graph or database, as illustrated in Figures~\ref{fig:community rec gmm} and~\ref{fig:community recovery csbm}. In particular, for correlated GMMs $(X,Y)\sim \text{CGMMs}\bigl(n,\boldsymbol{\mu},d,\rho\bigr)$ with $\|\boldsymbol{\mu}\|^2 = R$, the effective signal-to-noise ratio for exact recovery increases from 
\[
\frac{R^2}{\,R + \tfrac{d}{n}\,} 
\quad\text{to}\quad 
\frac{\bigl(\tfrac{2}{1+\rho}\,R\bigr)^2}{\,\tfrac{2}{1+\rho}\,R + \tfrac{d}{n}\,}
\]
for the correlation parameter $\rho\in[0,1]$, while for the contextual SBMs $(G_1,G_2)\sim \text{CCSBMs}\bigl(n,p,q,s;R,d,\rho\bigr)$, the corresponding SNR improves from 
\[
\frac{ s\bigl(\sqrt{a}-\sqrt{b}\bigr)^2 + c}{\,2\,}
\quad\text{to}\quad 
\frac{\bigl(1-(1-s)^2\bigr)\bigl(\sqrt{a}-\sqrt{b}\bigr)^2 + c'}{\,2\,},
\]
where $c\log n= \frac{R^2}{\,R + d/n\,}$ and $c' \log n = \frac{\bigl(\tfrac{2}{1+\rho}\,R\bigr)^2}{\,\tfrac{2}{1+\rho}\,R + d/n\,},$ compared to having only one contextual SBM ($G_1$ or $G_2$).

\subsection{Notation}
For a positive integer $n$, write $[n] := \{1, 2, \ldots, n\}$. For a graph $G$ on vertex set $[n]$, let $\deg_G(i)$ be the degree of node $i$, and let $G\{M\}$ be the subgraph induced by $M \subseteq [n]$. Define $d_{\min}(G)$ as the minimum degree of $G$. Let $\mathcal{E} := \{\{i,j\} : i,j \in [n], i \neq j\}$ be the set of all unordered vertex pairs. For a community label vector $\bssigma$, define 
\[
\mathcal{E}^{+}(\bssigma) := \{\{i, j\} \in \mathcal{E} : \sigma_i \sigma_j = +1\}
\quad\text{and}\quad
\mathcal{E}^{-}(\bssigma) := \{\{i, j\} \in \mathcal{E} : \sigma_i \sigma_j = -1\}.
\]
Then, $\mathcal{E}^{+}(\bssigma)$ and $\mathcal{E}^{-}(\bssigma)$ partition $\mathcal{E}$ into intra- and inter-community node pairs. Let $A, B', B$ be the adjacency matrices of $G_1, G'_2, G_2$, respectively, and let $X, Y', Y$ be the corresponding databases of node attributes. Denote by $\vee$ and $\wedge$ the entrywise max and min, respectively. For a permutation $\pi$, define
\[
(A \vee_{\pi} B)_{i,j} =\max\{A_{i,j},\,B_{\pi(i),\pi(j)}\}
\quad\text{and}\quad
(A \wedge_{\pi} B)_{i,j} =\min\{A_{i,j},\,B_{\pi(i),\pi(j)}\}.
\]
For $v = (v_1,\ldots,v_k)^\top \in \mathbb{R}^k$ and $i \in [k]$, let $v_{-i}$ be the vector obtained by removing $v_i$. For an event $E$, let $\mathds{1}(E)$ be its indicator. Write $\Phi(\cdot)$ for the tail distribution of a standard Gaussian. For two functions $f,g : [n] \to [m]$, define their overlap as $\mathbf{ov}(f,g) := \tfrac{1}{n}\sum_{i=1}^n \mathds{1}\bigl(f(i)=g(i)\bigr)$. Lastly, asymptotic notation $O(\cdot), o(\cdot), \Omega(\cdot), \omega(\cdot), \Theta(\cdot)$ is used with $n \to \infty$.
 For $M\subset [n]$ and an injective function $\mu $ : $M \to [n]$, we define $\mu(M)$ as the image of $M$ under $\mu$, and $\mu\{M\}:=\{(i,\mu(i)) : i \in [M]\}$.

\section{Correlated Gaussian Mixture Models}\label{sec:main results}

In this section, we investigate the \emph{correlated Gaussian Mixture Models (GMMs)} introduced in Section~\ref{sec:model1}, with a primary goal of determining conditions for \emph{exact community recovery} when two correlated databases are provided. Our approach consists of two steps: (i) establishing \emph{exact matching} between the two databases, and (ii) merging the matched databases to identify regimes in which exact community recovery becomes significantly more tractable compared to using only a single database.

\subsection{Exact Matching on Correlated Gaussian Mixture Models}

We begin by examining the requirements for exact matching in correlated GMMs. Theorem~\ref{thm:gmm matching ach} below provides sufficient conditions under which an estimator \eqref{eq:estimator} achieves perfect alignment with high probability.

\begin{theorem}[Achievability for Exact Matching]\label{thm:gmm matching ach}
Let $(X,Y)\sim \textnormal{CGMMs}(n,\boldsymbol{\mu},d,\rho)$ as defined in Section~\ref{sec:model1}. Suppose that either 
\begin{equation}\label{eq:gmm matching cond1}
    \frac{d}{4}\log \frac{1}{1-\rho^2} \geq\log n +\omega(1)
    \quad\text{and}\quad
    \|\boldsymbol{\mu}\|^2 \geq 2\log n +\omega(1),
\end{equation}
or
\begin{equation}\label{eq:gmm matching cond2}
    \frac{d}{4}\log \frac{1}{1-\rho^2} \ge (1+\epsilon)\log n
    \quad\text{and}\quad
    d =\omega(\log n)
\end{equation}
holds for an arbitrarily small constant $\epsilon>0$. Then, there exists an estimator $\hat{\pi}(X,Y)$ such that $\hat{\pi}(X,Y) = \pi_*$ with high probability.
\end{theorem}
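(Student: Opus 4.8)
The plan is to analyze the minimum‑distance estimator \eqref{eq:estimator} by a union bound over the cycle structure of a competing permutation relative to $\pi_*$: intra‑community cycles will reduce to the classical Gaussian database problem, while inter‑community cycles are handled with the help of two high‑probability events. First I would pass from distances to inner products: since $\sum_i\norm{\bsx_i}^2$ and $\sum_i\norm{\bsy_{\pi(i)}}^2=\sum_j\norm{\bsy_j}^2$ do not depend on $\pi$, the estimator \eqref{eq:estimator} equivalently maximizes $\sum_i\scalar{\bsx_i}{\bsy_{\pi(i)}}$, so $\{\hat{\pi}\neq\pi_*\}$ is contained in the event that some $\pi$ with $\tau:=\pi_*^{-1}\pi\neq\mathrm{id}$ has $\sum_i\scalarb{\bsx_i}{\bsy_{\pi(i)}-\bsy_{\pi_*(i)}}\geq0$. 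Because $(\bsx_i,\bsy_{\pi_*(i)})_i$ are exactly the correlated pairs of \eqref{eq:correlated gaussian feature}, I may relabel the second database by $\pi_*$ and assume $\pi_*=\mathrm{id}$; the sum above then splits over the nontrivial cycles of $\tau$, an $\ell$‑cycle $C=(j_1\,j_2\,\cdots\,j_\ell)$ contributing $G_C:=\sum_{k=1}^{\ell}\scalarb{\bsx_{j_k}}{\bsy_{j_{k+1}}-\bsy_{j_k}}$ (indices mod $\ell$), which depends only on the data indexed by $C$ and satisfies $\E[G_C]=-2t(C)\norm{\bsmu}^2-\rho\ell d$, where $t(C)$ counts the places where $\bssigma$ changes sign around $C$. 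A nonnegative total forces $G_C\geq0$ for some cycle, so, using that there are at most $n^\ell$ cycles of length $\ell$,
\begin{equation}\label{eq:pf-union}
 \P\{\hat{\pi}\neq\pi_*\}\;\leq\;\sum_{\ell=2}^{n}\ \sum_{C:\,\abs{C}=\ell}\P\{G_C\geq0\}.
\end{equation}

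\emph{Monochromatic cycles.} If $\sigma_{j_1}=\cdots=\sigma_{j_\ell}$, the mean vectors cancel in every difference $\bsy_{j_{k+1}}-\bsy_{j_k}$ and $G_C$ has precisely the law it has in the correlated Gaussian database model \emph{without} community structure. Writing $G_C$ as a Gaussian quadratic form, diagonalizing the associated block‑circulant matrix, and optimizing the Chernoff bound $\P\{G_C\geq0\}\leq\E[\e{\lambda G_C}]$ — the computation carried out in \cite{DCK19,KN22} — gives, for each $\ell$, a bound of the form $(1-\rho^2)^{d\ell/4}$, the transposition $\ell=2$ being the binding case. Hence the monochromatic part of \eqref{eq:pf-union} is at most $\sum_{\ell\geq2}\bigl(n(1-\rho^2)^{d/4}\bigr)^{\ell}$, which is $o(1)$ whenever $\tfrac{d}{4}\log\tfrac{1}{1-\rho^2}\geq\log n+\omega(1)$ — in particular under either \eqref{eq:gmm matching cond1} or \eqref{eq:gmm matching cond2}.

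\emph{Bichromatic cycles (the crux).} When $C$ uses nodes from both communities, the $t:=t(C)\geq2$ sign changes inject an extra $\pm2\bsmu$ into the corresponding differences, and decomposing the data along $\R\bsmu$ and its orthogonal complement gives
\begin{equation}\label{eq:pf-decomp}
 G_C\;=\;G_C^{\mathrm{mono}}\;+\;\scalarb{\bsmu}{\bsxi_C}\;-\;2t\norm{\bsmu}^2,
\end{equation}
where $G_C^{\mathrm{mono}}$ has the monochromatic law of the previous step in dimension $d$ (so $\E[G_C^{\mathrm{mono}}]=-\rho\ell d$) and $\bsxi_C$ is a Gaussian vector built from $\{\bsz_{j_k},\bsw_{j_k}\}$ with coefficients supported on the $\leq2t$ sign‑change positions of $C$. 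Thus the sign changes contribute a deterministic negative drift $-2t\norm{\bsmu}^2$ at the cost of a $\bsmu$‑direction term $\scalarb{\bsmu}{\bsxi_C}$ built from only $O(t)$ of the projections $\scalar{\bsmu}{\bsz_{j_k}},\scalar{\bsmu}{\bsw_{j_k}}$; but $\bsxi_C$ is correlated with $G_C^{\mathrm{mono}}$, so the clean block‑circulant diagonalization no longer applies and a bare Chernoff bound is unavailable. To control $\P\{G_C\geq0\}$ for such cycles I would condition on the two good events $\caA_1,\caA_2$ of \eqref{eqn:def_A1}--\eqref{eqn:def_A2}, which hold with probability $1-o(1)$ and pin down the community sizes together with uniform bounds, over all $n$ nodes, on the noise statistics $\norm{\bsz_i}^2,\norm{\bsw_i}^2$ and $\scalar{\bsmu}{\bsz_i},\scalar{\bsmu}{\bsw_i}$, so that on $\caA_1\cap\caA_2$ both $\scalarb{\bsmu}{\bsxi_C}$ and the fluctuation of $G_C^{\mathrm{mono}}$ are dominated — uniformly over $C$ — by the available negative drift. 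In the regime \eqref{eq:gmm matching cond1} this drift is the $-2t\norm{\bsmu}^2$ from the sign changes, which wins precisely because $\norm{\bsmu}^2\geq2\log n+\omega(1)$; in the regime \eqref{eq:gmm matching cond2}, $\log\tfrac{1}{1-\rho^2}$ is forced to be $o(1)$, so instead the $-\rho\ell d$ drift of $G_C^{\mathrm{mono}}$ carries the argument with no constraint on $\norm{\bsmu}$. Either way one gets $\P\{G_C\geq0\mid\caA_1\cap\caA_2\}\leq(1-\rho^2)^{d\ell/4}$ (with an additional $\e{-\Omega(t\norm{\bsmu}^2)}$ under \eqref{eq:gmm matching cond1}), and summing over $\ell$ and over even $t\geq2$ keeps the bichromatic part of \eqref{eq:pf-union} $o(1)$; this dichotomy is exactly why the hypothesis is stated as \eqref{eq:gmm matching cond1} \emph{or} \eqref{eq:gmm matching cond2}.

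\emph{Assembly.} Combining, $\P\{\hat{\pi}\neq\pi_*\}\leq\P\{\caA_1^{\mathrm c}\}+\P\{\caA_2^{\mathrm c}\}+(\text{monochromatic sum})+(\text{bichromatic sum})=o(1)$, after verifying $\P\{\caA_1^{\mathrm c}\},\P\{\caA_2^{\mathrm c}\}=o(1)$ by routine Gaussian‑tail and concentration estimates; hence $\hat{\pi}=\pi_*$ with high probability. The main obstacle is the bichromatic step: engineering $\caA_1,\caA_2$ so that the drift provably beats the $\bsmu$‑direction fluctuation simultaneously over every cycle length $\ell$ and every number of sign changes $t$, covering both the ``$\norm{\bsmu}^2$ large'' and ``$d$ large'' regimes with one argument.
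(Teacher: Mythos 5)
Your plan is essentially the paper's own proof: reduce the error event to per-cycle events $\mathcal{F}_t$ for the minimum-distance estimator, bound same-community cycles by the block-circulant Chernoff computation of \cite{KN22} (yielding $\exp(-\tfrac{d}{2}S(\tfrac{1-\rho^2}{\rho^2},t))$, with $t=2$ binding), and handle mixed-community cycles by conditioning on the good events $\mathcal{A}_1$, $\mathcal{A}_2$ so that the $\boldsymbol{\mu}$-direction cross terms are absorbed by the $-2t\|\boldsymbol{\mu}\|^2$ drift (regime \eqref{eq:gmm matching cond1}) or by a slightly lossy quadratic form with $\lambda=r\rho<\rho$ (regime \eqref{eq:gmm matching cond2}), exactly as in Lemma~\ref{lem:upperbound F_t}. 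The only slips are cosmetic: the events control $\langle\boldsymbol{\mu},\bsz_i\rangle$ and $\|\bsz_i-\bsz_j+c\boldsymbol{\mu}\|^2$ rather than community sizes or $\|\bsw_i\|^2$, and the final union bound should invoke only $\mathcal{A}_1$ under \eqref{eq:gmm matching cond1} and only $\mathcal{A}_2$ under \eqref{eq:gmm matching cond2}, since neither event holds with high probability under the other regime's hypotheses.
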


In correlated GMMs, the MAP estimator can be written as
\begin{equation}\label{eq:MAP cgmms}
\begin{aligned}
    \hat{\pi} &= \argmax_{\pi \in S_n} \P( \pi_*=\pi | X,Y)\\
    & \stackrel{(a)}{=} \argmax_{\pi \in S_n} \P(X,Y|\pi_*=\pi)\\
    & =\argmax_{\pi \in S_n} \sum^n_{i=1}\left( -\|\bsx_i -\boldsymbol{\mu} \sigma_i\|^2 + 2\rho \langle \bsx_i-\boldsymbol{\mu} \sigma_i,\bsy_{\pi(i)}-\boldsymbol{\mu} \sigma_{\pi(i)} \rangle -\| \bsy_{\pi(i)}-\boldsymbol{\mu} \sigma_{\pi(i)}\|^2\right)\\
    & =\argmax_{\pi \in S_n}    \sum^n_{i=1} \left(-\rho\|\bsx_i-\bsy_{\pi(i)} \|^2 -(1-\rho) (\|\bsx_i\|^2+\| \bsy_{\pi(i)}\|^2 ) \right) + f(\bssigma,\bssigma_\pi)\\
    & = \argmin_{\pi \in S_n} \sum^n_{i=1}  \|\bsx_i-\bsy_{\pi(i)} \|^2 - \frac{1}{\rho} f(\bssigma,\bssigma_\pi),
\end{aligned}
\end{equation}
where 
\[
f\bigl(\bssigma,\bssigma_\pi\bigr)
=\sum_{i=1}^n\Bigl(
    2\langle \bsx_i,\boldsymbol{\mu}\sigma_i\rangle
    + 2\langle \bsy_{\pi(i)},\boldsymbol{\mu}\sigma_{\pi(i)}\rangle
    - 2\rho\langle \bsx_i,\boldsymbol{\mu}\sigma_{\pi(i)}\rangle
    - 2\rho\langle \bsy_{\pi(i)},\boldsymbol{\mu}\sigma_i\rangle
    + 2\rho\|\boldsymbol{\mu}\|^2\sigma_i\sigma_{\pi(i)}
\Bigr).
\]
Step~$(a)$ uses the fact that $\pi_*$ is uniformly distributed over $S_n$. Note also that 
$
\sum_{i=1}^n \|\bsx_i\|^2 + \|\bsy_{\pi(i)}\|^2 
$
is invariant under permutation $\pi$, ensuring this part does not affect the alignment decision. If the community labels $\bssigma$ and $\bssigma_\pi$ are known, then $f\bigl(\bssigma,\bssigma_\pi\bigr)$ is also fixed, so the MAP estimator reduces exactly to the simpler distance-based estimator in~\eqref{eq:estimator}. Even without known labels, the proof of Theorem~\ref{thm:gmm matching ach} shows that \eqref{eq:estimator} attains tight bounds for exact matching under the conditions $\|\boldsymbol{\mu}\|^2 \ge 2\log n + \omega(1)$ or $d = \omega(\log n)$; see Section~\ref{sec:proof of gmm matching ach} for details.

\begin{remark}[Interpretation of Conditions for Exact Matching]\label{rem: Int_cond_Thm1}
Assume $d = o(n\log n)$. If $\|\boldsymbol{\mu}\|^2 \ge (2+\epsilon)\log n$ for some $\epsilon>0$, then the exact community recovery is already possible in each individual database $X$ or $Y$ \cite{Ndaoud22}. In this case, we only need to align nodes within the same community. Consequently, the distance-based estimator \eqref{eq:estimator}, which coincides with the MAP rule under known labels, matches nodes optimally. This explains the sufficiency of condition~\eqref{eq:gmm matching cond1} for exact alignment when communities can first be recovered.

On the other hand, if $\|\boldsymbol{\mu}\|^2 = O(\log n)$ is not large enough to recover community labels, exact matching can still be achieved by taking the condition \eqref{eq:gmm matching cond2}. Below, we provide a high-level explanation of this condition. Suppose $d =\omega(\log n)$. Then, the distance $ \| \bsx_i-\bsy_i' \|^2$, which originally follows a scaled chi-squared distribution $\chi^2(d)$ with scale $2(1-\rho)$, can be approximated by a normal distribution,
   \begin{equation}\label{eq:correct dis}
        \| \bsx_i-\bsy_i' \|^2 \stackrel{(d)}{\approx}   2(1-\rho) \mathcal{N}(d,2d).
   \end{equation}
    Similarly, for different nodes $i\neq j$ within the same community, the distance $\| \bsx_i-\bsy_j' \|^2$, which also follows a chi-squared distribution $\chi^2(d)$ scaled by $2$,  can be approximated by a normal distribution, 
    \begin{equation}\label{eq:same dis}
        \| \bsx_i-\bsy_j' \|^2 \stackrel{(d)}{\approx}  2 \mathcal{N}(d,2d) \; \text{ for } i\neq j \text{ and }\sigma_i=\sigma_j.
    \end{equation}
    For nodes in different communities, the distance follows a non-central chi-squared distribution $\chi^2(d,2\|\bsmu\|^2)$, scaled by 2, which can be approximated as
    \begin{equation}\label{eq:diff dis}
        \| \bsx_i-\bsy_k' \|^2 \stackrel{(d)}{\approx}  2\mathcal{N}(2\| \boldsymbol{\mu}\|^2+d,8\| \boldsymbol{\mu}\|^2+2d)\; \text{ for } i\neq k \text{ and }\sigma_i\neq\sigma_k.
    \end{equation}
    Now, consider the regime where $d \gg \|\boldsymbol{\mu}\|^2$, which holds if $d = \omega(\log n)$ and $\|\boldsymbol{\mu}\|^2 = O(\log n)$. In this case, \eqref{eq:diff dis} and \eqref{eq:same dis} are approximately equal, and all distances involving mismatched nodes, whether in the same or different communities, are roughly distributed as $2 \cdot \mathcal{N}(d, 2d)$. However, the correctly matched pair $(\bsx_i, \bsy_i')$ still has a smaller mean due to correlation.
    
Using this approximation, we analyze the probability that node $i$ is not matched to its true counterpart:
\begin{equation}
\begin{aligned}
    &\P\left( \|\bsx_i - \bsy_i'\|^2 \ge \|\bsx_i - \bsy_j'\|^2 \text{ for some } j \ne i \right) \\
    &\stackrel{(a)}{\le} n \cdot \P\left( \mathcal{N}((1 - \rho) d,\, 2(1 - \rho)^2 d) \ge \mathcal{N}(d, 2d) \right) \\
    &\le n \cdot \P\left( \mathcal{N}(0,\, 2d(1 + (1 - \rho)^2)) \ge \rho d \right) \\
    &\stackrel{(b)}{\le} \exp\left( \log n - \frac{1}{4} d \rho^2 \cdot \frac{1}{1 + (1 - \rho)^2} \right),
\end{aligned}
\end{equation}
where inequality (a) follows from a union bound and approximating \eqref{eq:diff dis} by \eqref{eq:same dis}, and inequality (b) follows from the Gaussian tail bound (Lemma~\ref{lem:normal tail}).
Therefore, if
\[
\frac{1}{4} d \rho^2 \ge (1 + (1 - \rho)^2) \log n + \omega(1),
\]
then exact matching becomes possible even with a greedy algorithm that matches each point to its nearest neighbor.
In the special case where $\rho = o(1)$, this reduces to the condition $\frac{1}{4} d \rho^2 \ge 2 \log n + \omega(1)$. In comparison, the sufficient condition in \eqref{eq:gmm matching cond2} is approximately
\[
\frac{d}{4} \log \left( \frac{1}{1 - \rho^2} \right) \approx \frac{1}{4} d \rho^2 \ge (1 + \epsilon) \log n.
\]
The tighter bound in \eqref{eq:gmm matching cond2} is obtained through a refined analysis of the distance-based estimator \eqref{eq:estimator}, which minimizes the total squared distance across all pairs, rather than greedily matching each node to its nearest neighbor.
\end{remark}

We now present the \emph{impossibility} result for exact matching in correlated Gaussian Mixture Models (CGMMs).

\begin{theorem}[Impossibility for Exact Matching]\label{thm:gmm matching imp}
Let $(X,Y)\sim \textnormal{CGMMs}(n,\boldsymbol{\mu},d,\rho)$ be as defined in Section~\ref{sec:model1}. Suppose that either of the following conditions holds:
\begin{equation}\label{eqn:thm2_imp_1}
    \frac{d}{4}\log \frac{1}{1-\rho^2} \leq (1-\epsilon)\log n
    \quad\text{and}\quad
    1\ll d = O(\log n),
\end{equation}
for any small $\epsilon>0$, or
\begin{equation}\label{eqn:thm2_imp_2}
    \frac{d}{4}\log \frac{1}{1-\rho^2}
    \leq \log n - \log d + C
    \quad\text{and}\quad
    \frac{1}{\rho^2} - 1 \leq \frac{d}{40}
\end{equation}
for some positive constant $C>0$. Then, under either set of conditions, no estimator can achieve exact matching with high probability.
\end{theorem}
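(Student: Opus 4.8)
The plan is to prove the converse by a second‑moment argument applied to the number of ``improving transpositions'' of the optimal estimator, after collapsing the community structure with a genie. Since the MAP rule minimizes the matching error, it suffices to upper bound the success probability of the stronger estimator that additionally knows the label vector $\bssigma$ (and the mean $\bsmu$). By the derivation of \eqref{eq:MAP cgmms}, once $\bssigma$ is revealed the posterior of $\pi_*$ is supported on \emph{label-consistent} permutations $\pi$ --- those mapping $\bssigma^1$-communities to $\bssigma^2$-communities exactly as $\pi_*$ does --- and for every such $\pi$ one has $\bssigma_\pi=\bssigma^2$, so $f(\bssigma,\bssigma_\pi)$ is constant and the genie-aided MAP is just $\argmin_{\pi}\sum_i\|\bsx_i-\bsy_{\pi(i)}\|^2$ restricted to label-consistent $\pi$. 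Working on the high-probability event that both communities have size $(\tfrac12+o(1))n$ and centering the attributes within each community (subtracting the known $\bsmu\sigma_i$, which turns $\bsx_i$ into $\bsz_i$), the centered data is a correlated Gaussian database with $(\tfrac12+o(1))n$ nodes per community. For same-community $i\neq j$, swapping their $\pi_*$-images does not increase $\sum_i\|\bsx_i-\bsy_{\pi(i)}\|^2$ exactly when the event
\[
E_{ij}\ :=\ \bigl\{\ \rho\,\|\bsu_{ij}\|^2+\sqrt{1-\rho^2}\,\langle\bsu_{ij},\bsv_{ij}\rangle\ \le\ 0\ \bigr\},\qquad \bsu_{ij}:=\bsz_i-\bsz_j,\quad \bsv_{ij}:=\bsw_i-\bsw_j,
\]
occurs, with $\bsu_{ij},\bsv_{ij}\sim\mathcal{N}(\mathbf 0,2\bsI_d)$ independently. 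Let $Z$ count the same-community pairs $\{i,j\}$ with $E_{ij}$ true. Since the Gaussian law is atomless there are a.s.\ no ties, so the event that the genie-aided MAP equals $\pi_*$ is contained in $\{Z=0\}$; hence $\P(\hat\pi=\pi_*)\le\P(Z=0)$ for \emph{every} estimator $\hat\pi$, and it remains to show $\P(Z=0)$ does not tend to $1$ (in fact tends to $0$) under either hypothesis.

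For the first moment, conditioning on $\|\bsu_{ij}\|$ gives $\P(E_{ij})=\E_{r\sim\chi^2(d)}\Phi(\rho\sqrt{r/(1-\rho^2)})$, and bounding this through the moment generating function of $\chi^2(d)$ --- equivalently, an exponential tilt that concentrates $r$ near $(1-\rho^2)d$ --- yields $\P(E_{ij})=\Theta\bigl((1-\rho^2)^{d/2}/(1+\rho\sqrt d)\bigr)$. As there are $\Theta(n^2)$ same-community pairs, $\E Z=\Theta\bigl(n^2(1-\rho^2)^{d/2}/(1+\rho\sqrt d)\bigr)$, which diverges whenever $\tfrac d4\log\tfrac1{1-\rho^2}\le\log n-\tfrac14\log d-\omega(1)$; both \eqref{eqn:thm2_imp_1} (where $d=O(\log n)$ makes the $\log d$ correction negligible against the constant $\epsilon\log n$ slack) and \eqref{eqn:thm2_imp_2} (which keeps an explicit $\log d$ margin) imply this.

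For the second moment, decompose $\E[Z^2]=\E Z+\sum_{\mathrm{disjoint}}\P(E_{ij})\P(E_{kl})+\sum_{\mathrm{shared}}\P(E_{ij}\cap E_{ik})$. Because $E_{ij}$ is measurable with respect to $\{\bsz_i,\bsz_j,\bsw_i,\bsw_j\}$ only, disjoint transpositions give \emph{independent} events and the middle sum is $\le(\E Z)^2$, while the diagonal term $\E Z$ is $o((\E Z)^2)$. In the shared sum, over the $\Theta(n^3)$ pairs of same-community transpositions meeting at a single vertex $i$, one conditions on the only common randomness $(\bsz_i,\bsw_i)$: given these, $E_{ij}$ and $E_{ik}$ involve the disjoint fresh vectors $(\bsz_j,\bsw_j)$ and $(\bsz_k,\bsw_k)$, hence are conditionally independent with a common conditional probability $g(\bsz_i,\bsw_i)$ satisfying $\E g=\P(E_{ij})$, so $\P(E_{ij}\cap E_{ik})=\E[g^2]$. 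A tilted-$\chi^2$ large-deviation bound for $g$ gives $\E[g^2]\le\P(E_{ij})^{2-o(1)}$, so the shared sum is $O\bigl(n^3\,\P(E_{ij})^{2-o(1)}\bigr)=o((\E Z)^2)$ (using $-\log\P(E_{ij})=O(\log n)$). Thus $\E[Z^2]=(1+o(1))(\E Z)^2$, Chebyshev gives $\P(Z=0)\le\var Z/(\E Z)^2\to0$, and the impossibility follows.

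The step I expect to be the main obstacle is the bound $\E[g^2]\le\P(E_{ij})^{2-o(1)}$: transpositions sharing a vertex are genuinely correlated through $\bsz_i$ and $\bsw_i$, and controlling the joint tail requires showing that, under the relevant exponential tilt of the $\chi^2(d)$ law, $\|\bsu_{ij}\|$ concentrates and the effective ``$z$-score'' $\rho\sqrt d$ driving $E_{ij}$ is bounded below, so that the joint exponent stays strictly smaller than twice the marginal one. Making this quantitative with usable constants is exactly what the two technical side conditions are for: when $1\ll d=O(\log n)$ the correction factors are $n^{o(1)}$ and get absorbed into the constant slack $\epsilon$, whereas the assumption $\tfrac1{\rho^2}-1\le\tfrac d{40}$ (i.e.\ $\rho^2\gtrsim 1/d$, so $\rho\sqrt d$ is bounded below) is what lets the estimate reach the near-optimal threshold $\log n-\log d+C$.
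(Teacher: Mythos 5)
Your genie reduction --- reveal $\bssigma^1,\bssigma^2$ and $\bsmu$, note that the posterior is then supported on label-consistent permutations so that the genie-aided MAP is the distance minimizer over those, and center the attributes within a community --- is exactly the paper's first step. The difference is what comes next: the paper simply observes that the centered same-community sub-problem \emph{is} a correlated Gaussian database alignment problem on $|V'|\ge n/2$ nodes and invokes the known converses (Theorem~2 of \cite{DCK19} for condition \eqref{eqn:thm2_imp_1}, Theorem~19 of \cite{Wang22} for condition \eqref{eqn:thm2_imp_2}), whereas you re-prove those converses from scratch by a second-moment count of improving transpositions. Your event $E_{ij}$, the first-moment estimate $\P(E_{ij})\asymp(1-\rho^2)^{d/2}/(1+\rho\sqrt d)$, and the decomposition of $\E[Z^2]$ with conditional independence of $E_{ij},E_{ik}$ given $(\bsz_i,\bsw_i)$ are all correct.

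There are, however, two genuine gaps. First, the bound $\E[g^2]\le\P(E_{ij})^{2-o(1)}$, which you flag as ``the main obstacle,'' is not a routine technicality: it is essentially the entire content of the cited theorems and is not automatic. A tilting/Chernoff computation gives $\E[g^2]\le\bigl((1+2\theta)(1+6\theta)\bigr)^{-d/2}$ against $\P(E_{ij})^2\asymp(1+4\theta)^{-d}$ with $\theta=\rho^2/(4(1-\rho^2))$, so the correction factor is $\bigl(1+\tfrac{4\theta^2}{1+8\theta+12\theta^2}\bigr)^{d/2}$, which can be as large as $(4/3)^{d/2}$; one must verify that under the stated hypotheses this is at most $n^{1-\epsilon}$ (e.g.\ via $\log\bigl(1+\tfrac{4\theta^2}{1+8\theta+12\theta^2}\bigr)\le\tfrac12\log(1+4\theta)$) and that the polynomial prefactors $(1+\rho\sqrt d)^2$ lost in the Gaussian tail are absorbed. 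This can be made to work, but asserting the exponent $2-o(1)$ is not a proof of it. Second, under condition \eqref{eqn:thm2_imp_2} your first-moment claim is false in part of the regime: the hypothesis only yields $2\log n-\tfrac d2\log\tfrac1{1-\rho^2}\ge 2\log d-2C$, so for $d=O(1)$ one gets $\E Z=\Theta(1)$ rather than $\E Z\to\infty$, and at this boundary the shared-vertex sum is $\Theta((\E Z)^2)$ rather than $o((\E Z)^2)$, so Chebyshev gives nothing. The conclusion is still reachable --- Paley--Zygmund gives $\P(Z\ge1)\ge(\E Z)^2/\E[Z^2]=\Omega(1)$, hence $\P(\hat\pi=\pi_*)\le 1-\Omega(1)$, which rules out success with high probability --- but that is a different and weaker closing step than the one you wrote, and it needs to be stated as such.
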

To prove the converse in Theorem~\ref{thm:gmm matching imp}, we show that the stated conditions imply failure of exact matching \emph{even if} the community labels $\bssigma^1$ and $\bssigma^2$ are known, and a partial matching for all but $|V'|$ nodes is revealed (where $V':=\{\,i\in[n]: \sigma_i=+1\}$). This reduces the task to a database alignment problem with $|V'|$ nodes in correlated Gaussian databases, for which previous impossibility arguments in~\cite{DCK19,Wang22} directly apply. A complete proof is given in Section~\ref{sec:proof of gmm matching imp}.
\begin{remark}[Gaps in Achievability and Converse Results]
Comparing Theorems~\ref{thm:gmm matching ach} and~\ref{thm:gmm matching imp} shows that the limiting condition for exact matching is approximately 
\beq
\frac{d}{4}\log \frac{1}{1-\rho^2} \geq (1+\epsilon)\log n.
\eeq
However, to achieve exact matching via the distance-based estimator, we additionally require either $\|\boldsymbol{\mu}\|^2 \ge 2 \log n + \omega(1)$ or $d = \omega(\log n)$. These conditions ensure that the signal strength or attribute dimensionality is sufficiently large to enable reliable matching in high-dimensional and noisy settings.
This gap between achievability and converse can be attributed to the fact that the estimator~\eqref{eq:estimator} omits the term $f(\bssigma, \bssigma_\pi)$ in the full MAP formulation~\eqref{eq:MAP cgmms}. While this simplification permits matching without knowledge of the latent community labels $\bssigma$ and $\bssigma_\pi$, it necessitates stronger conditions than the optimal MAP-based strategy.

We emphasize that these conditions are not difficult to satisfy in practice, even in large-scale real-world networks. For instance, in platforms like Facebook, the number of effective users is on the order of $10^9$, implying $\log 10^9\approx 20$. Thus, the required attribute signal strength or feature dimension only needs to be on the order of a few dozen. Therefore, while there is a gap between achievability and converse due to the growth condition on $\|\boldsymbol{\mu}\|^2$ or $d$, this requirement is modest and well aligned with the scale and complexity of modern user data.

\end{remark}

\subsection{Exact Community Recovery in Correlated Gaussian Mixture Models}
\label{sec:gmm:exact_recovery}

We now identify the conditions under which \emph{exact community recovery} becomes feasible in \emph{correlated Gaussian Mixture Models} when two correlated node-attribute databases are available.

\begin{theorem}[Achievability for Exact Community Recovery]\label{thm:gmm recovery ach}
  Let $(X,Y)\sim \textnormal{CGMMs}(n,\boldsymbol{\mu},d,\rho)$ be as defined in Section~\ref{sec:model1}. Suppose either \eqref{eq:gmm matching cond1} or \eqref{eq:gmm matching cond2} holds. If
  \begin{equation}\label{eq:gmm recovery ach}
      \|\boldsymbol{\mu}\|^2 
      \geq 
      (1+\epsilon)\frac{1+\rho}{2}\biggl(1+\sqrt{1+\frac{2d}{n \log n}}\biggr)\log n
  \end{equation}
  for an arbitrarily small constant $\epsilon>0$, then there exists an estimator $\hat{\bssigma}(X,Y)$ such that $\hat{\bssigma}(X,Y) = \bssigma$ with high probability.
\end{theorem}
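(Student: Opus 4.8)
The plan is to reduce the problem to a single–database Gaussian mixture by \emph{aligning and then averaging} the two attribute databases. Since one of the matching conditions \eqref{eq:gmm matching cond1}--\eqref{eq:gmm matching cond2} is among the hypotheses, Theorem~\ref{thm:gmm matching ach} supplies an estimator $\hat{\pi}=\hat{\pi}(X,Y)$ with $\P(\caM)=1-o(1)$, where $\caM:=\{\hat{\pi}=\pi_*\}$. I would build $\hat{\bssigma}$ as follows: run $\hat{\pi}$, undo the permutation to obtain, for each $i\in[n]$, the pair $(\bsx_i,\bsy_{\hat{\pi}(i)})$ (which equals the correlated pair $(\bsx_i,\bsy_i')$ on $\caM$), and feed these $n$ pairs to the community-recovery procedure described next. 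Writing $\caM_0$ for the failure event of that procedure when run on the deterministically defined pairs $\{(\bsx_i,\bsy_i')\}_{i\in[n]}$, one has $\{\hat{\bssigma}\neq\bssigma\}\subseteq\caM^c\cup\caM_0$, so it suffices to prove $\P(\caM_0)=o(1)$; no conditioning bias arises because we never condition on $\caM$.

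For the community-recovery procedure, the key point is that averaging the aligned attributes is the right thing to do. For fixed $\boldsymbol{\mu}$ and a fixed value of $\sigma_i$, the pair $(\bsx_i,\bsy_i')$ is Gaussian with covariance $\Sigma_d$, and a one-line log-likelihood-ratio computation for $\sigma_i=+1$ versus $\sigma_i=-1$ yields $\tfrac{2}{1+\rho}\langle\boldsymbol{\mu},\bsx_i+\bsy_i'\rangle$; hence $\bar{\bsx}_i:=\tfrac12(\bsx_i+\bsy_i')$ is a sufficient statistic for $\sigma_i$. From \eqref{eq:node feature g1}--\eqref{eq:node feature g2}, conditionally on $\bssigma$ and $\boldsymbol{\mu}$ the $\bar{\bsx}_i$ are independent with
\[
\bar{\bsx}_i\;\sim\;\mathcal{N}\!\Bigl(\boldsymbol{\mu}\,\sigma_i,\;\tfrac{1+\rho}{2}\,\bsI_d\Bigr),
\]
since the residual $\tfrac12\bigl((1+\rho)\bsz_i+\sqrt{1-\rho^2}\,\bsw_i\bigr)$ has covariance $\tfrac{1+\rho}{2}\bsI_d$. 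Rescaling each $\bar{\bsx}_i$ by $\sqrt{2/(1+\rho)}$ makes the noise isotropic and turns the mean into a vector of squared norm $\tilde R:=\tfrac{2}{1+\rho}\|\boldsymbol{\mu}\|^2$. Thus $\bar X:=(\bar{\bsx}_1,\dots,\bar{\bsx}_n)^\top$ is distributed exactly as the attribute matrix of a standard symmetric two-component GMM on $n$ points in $\mathbb{R}^d$ with effective signal strength $\tilde R$ -- strictly larger than the single-database value $\|\boldsymbol{\mu}\|^2$ whenever $\rho<1$.

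It then remains to invoke the sharp exact-recovery threshold for the symmetric two-component GMM (\cite{Ndaoud22}): on $\bar X$, the two-stage estimator -- a spectral/PCA step to estimate the direction $\pm\boldsymbol{\mu}$, then the coordinatewise rule $\hat{\sigma}_i=\operatorname{sign}\langle\hat{\boldsymbol{\mu}}^{(-i)},\bar{\bsx}_i\rangle$ with a leave-one-out estimate $\hat{\boldsymbol{\mu}}^{(-i)}$ -- recovers the labels (up to the usual global sign ambiguity) w.h.p.\ as soon as $\tfrac{\tilde R^2}{\tilde R+d/n}\ge(2+\epsilon')\log n$ for some $\epsilon'>0$. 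Solving the quadratic inequality $\tilde R^2-2\log n\,\tilde R-\tfrac{2d}{n}\log n\ge0$ in $\tilde R$ gives $\tilde R\ge\log n\bigl(1+\sqrt{1+\tfrac{2d}{n\log n}}\bigr)$, i.e.\ $\|\boldsymbol{\mu}\|^2\ge\tfrac{1+\rho}{2}\bigl(1+\sqrt{1+\tfrac{2d}{n\log n}}\bigr)\log n$, which is exactly \eqref{eq:gmm recovery ach} after absorbing $\epsilon'$ into $\epsilon$. This identifies $\caM_0$ with the GMM failure event and bounds $\P(\caM_0)=o(1)$, completing the argument via the union bound of the first paragraph.

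The conceptual steps (matching, the averaging identity, the quadratic) are routine; the technical heart is the high-dimensional GMM analysis invoked above, and in particular controlling the estimation error of the unknown $\boldsymbol{\mu}$. That error is precisely what contributes the additive $d/n$ term to the effective noise variance, so pinning down its constant -- so that the recovery boundary is $\tilde R+d/n$ and not a larger multiple -- requires the leave-one-out device to decouple $\hat{\boldsymbol{\mu}}$ from the node being classified, together with a verification that the spectral initialization already attains the correct rate in the regime where $d$ is comparable to or larger than $n\log n$. If one prefers a self-contained argument to citing \cite{Ndaoud22}, these are the estimates that must be reproduced; everything else is bookkeeping.
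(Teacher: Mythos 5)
Your proposal is correct and follows essentially the same route as the paper: use Theorem~\ref{thm:gmm matching ach} to obtain $\hat{\pi}=\pi_*$ w.h.p., average the aligned attributes to get $\mathcal{N}(\boldsymbol{\mu}\sigma_i,\tfrac{1+\rho}{2}\bsI_d)$ features, and invoke the single-GMM exact-recovery threshold of \cite{Ndaoud22} together with the union bound $\{\hat{\bssigma}\neq\bssigma\}\subseteq\{\hat{\pi}\neq\pi_*\}\cup\{\text{failure on the correctly aligned data}\}$. The sufficient-statistic justification for averaging and the explicit quadratic manipulation are harmless additions; the substance is identical to the paper's proof.
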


If exact matching is possible, or if $\pi_*$ is given, then the two correlated node attributes can be merged by taking their average. Specifically, if node $i \in [n]$ has attributes $\bsx_i$ and $\bsy_{\pi_*(i)}$, we form  
\begin{equation}\label{eqn:avg_feature}
    \frac{\bsx_i + \bsy_{\pi_*(i)}}{2}
    =\boldsymbol{\mu}\sigma_i + \frac{(1+\rho)\bsz_i + \sqrt{1-\rho^2}\bsw_i}{2},
\end{equation}
where $\bsx_i$ and $\bsy_{\pi_*(i)}=\bsy_i'$ are defined in \eqref{eq:node feature g1} and \eqref{eq:node feature g2}, respectively. Since $\bsz_i$ and $\bsw_i$ are independent $d$-dimensional Gaussian vectors, 
$
\frac{(1+\rho)\bsz_i + \sqrt{1-\rho^2}\bsw_i}{2}
$
is a $d$-dimensional Gaussian with mean $0$ and covariance $\frac{1+\rho}{2}\,\bsI_d$. Therefore, the averaged attribute has the same mean as the individual node attributes but a smaller variance. By applying the results for exact community recovery in a single GMM~\cite{Ndaoud22,abbe2022} to this averaged feature, we establish Theorem~\ref{thm:gmm recovery ach}. A detailed proof is provided in Section~\ref{sec:proof of gmm recovery ach}.

\begin{remark}[Comparison with a Single GMM Result]
  When only $X$ is available, \cite{Ndaoud22,abbe2022} showed that exact community recovery in a Gaussian Mixture Model requires
   \begin{equation}\label{eq:single gmm}
       \lVert \boldsymbol{\mu} \rVert^2 \geq (1+\epsilon)\left(1+\sqrt{1+\frac{2d}{n \log n}}\right) \log n.
   \end{equation}
  Comparing \eqref{eq:gmm recovery ach} with \eqref{eq:single gmm} reveals that $\|\boldsymbol{\mu}\|^2$ can be \emph{smaller} by a factor of $\tfrac{1+\rho}{2}$ when the correlated database $Y$ is available and we can form the combined feature~\eqref{eqn:avg_feature} via exact matching. From \eqref{eq:gmm matching cond2}, such a gain is particularly relevant in the regime $d=\omega(\log n)$. If $d=O(\log n)$, then to satisfy \eqref{eq:gmm matching cond1} we need $\|\boldsymbol{\mu}\|^2 \geq 2\log n +\omega(1)$ for exact matching, which can exceed the threshold \eqref{eq:gmm recovery ach} for community recovery. Hence, in low-dimensional regimes, the advantage of incorporating the second database $Y$ is constrained by the stricter requirement on $\|\boldsymbol{\mu}\|^2$ for alignment.
\end{remark}

We now provide the converse result, stating when exact community recovery is \emph{not} possible in correlated GMMs.

\begin{theorem}[Impossibility for Exact Community Recovery]\label{thm:gmm recovery imp}
  Let $(X,Y) \sim \textnormal{CGMMs}(n,\boldsymbol{\mu},d,\rho)$ as defined in Section~\ref{sec:model1}. Suppose
\begin{equation}
        \lVert \boldsymbol{\mu} \rVert^2 \leq (1-\epsilon)\frac{1+\rho}{2}\left(1+\sqrt{1+\frac{2d}{n \log n}}\right) \log n
    \end{equation}
  for an arbitrarily small constant $\epsilon>0$. Then, there is no estimator that achieves exact community recovery with high probability.
\end{theorem}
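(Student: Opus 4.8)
The plan is to reduce the two-database problem to a one-database lower bound and then invoke the known converse for the single Gaussian Mixture Model. The key observation is that the pair $(X,Y)$ carries exactly the information in the averaged attribute $\frac{\bsx_i+\bsy_{\pi_*(i)}}{2}$ together with the difference $\bsx_i-\bsy_{\pi_*(i)}$, and — crucially — if $\pi_*$ is known, the difference terms are independent of both $\bssigma$ and the averages. So I would begin by \emph{giving the estimator extra information}: reveal the ground-truth permutation $\pi_*$ and, additionally, all the difference vectors $\{\bsx_i - \bsy_{\pi_*(i)}\}_{i\in[n]}$ (equivalently, reveal the noise combination $(1-\rho)\bsz_i - \sqrt{1-\rho^2}\bsw_i$ for each $i$). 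Since providing more information can only make exact community recovery easier, any impossibility statement for this genie-aided problem implies impossibility for the original one.

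Under this reduction, the remaining informative observation for node $i$ is precisely the averaged feature in \eqref{eqn:avg_feature}, namely $\bar{\bsx}_i := \frac{\bsx_i+\bsy_{\pi_*(i)}}{2} = \boldsymbol{\mu}\sigma_i + \boldsymbol{\xi}_i$ where $\boldsymbol{\xi}_i \sim \mathcal{N}(\mathbf{0}, \tfrac{1+\rho}{2}\bsI_d)$ are i.i.d.\ (and independent of the revealed differences, by the orthogonality of $(1+\rho)\bsz_i+\sqrt{1-\rho^2}\bsw_i$ and $(1-\rho)\bsz_i - \sqrt{1-\rho^2}\bsw_i$ when one rescales appropriately — a direct covariance computation). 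Rescaling by $\sqrt{\tfrac{2}{1+\rho}}$ shows this is exactly a single Gaussian Mixture Model with mean vector $\boldsymbol{\mu}' := \sqrt{\tfrac{2}{1+\rho}}\,\boldsymbol{\mu}$, so $\|\boldsymbol{\mu}'\|^2 = \tfrac{2}{1+\rho}\|\boldsymbol{\mu}\|^2$, unit-variance noise, $n$ nodes, and dimension $d$. The hypothesis $\|\boldsymbol{\mu}\|^2 \le (1-\epsilon)\tfrac{1+\rho}{2}\bigl(1+\sqrt{1+\tfrac{2d}{n\log n}}\bigr)\log n$ translates precisely to $\|\boldsymbol{\mu}'\|^2 \le (1-\epsilon)\bigl(1+\sqrt{1+\tfrac{2d}{n\log n}}\bigr)\log n$, which is below the single-GMM exact-recovery threshold \eqref{eq:single gmm}. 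Invoking the converse of \cite{Ndaoud22,abbe2022} for this single GMM then yields that no estimator recovers $\bssigma$ exactly with high probability, completing the argument.

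A few technical points need care. First, I should confirm that the single-GMM converse in \cite{Ndaoud22,abbe2022} is stated for the relevant regime of $d$ (possibly scaling with $n$) and, if necessary, cite or reprove the genie-type lower bound: the standard argument shows that with high probability there exist two nodes whose labels cannot be distinguished even given all other labels, i.e., the likelihood ratio test between $\sigma_i = +1$ and $\sigma_i=-1$ given $\bar{\bsx}_i$ has error probability $\omega(1/n)$ under the stated SNR, and a union/second-moment argument over $i$ forces at least one misclassification with constant probability. Second, I must verify the independence claim: writing $\bsu_i := (1+\rho)\bsz_i+\sqrt{1-\rho^2}\bsw_i$ and $\bsv_i := (1-\rho)\bsz_i - \sqrt{1-\rho^2}\bsw_i$, one checks $\mathbb{E}[\bsu_i \bsv_i^\top] = \bigl((1+\rho)(1-\rho) - (1-\rho^2)\bigr)\bsI_d = \mathbf{0}$, and jointly Gaussian uncorrelated implies independent; hence revealing $\{\bsv_i\}$ (equivalently the differences) leaves the conditional law of $\bssigma$ given everything equal to its law given only $\{\bsu_i\}$. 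The main obstacle is thus not conceptual but bookkeeping: ensuring the single-GMM impossibility result is quoted in exactly the parameter regime we need (in particular that it holds with the $\sqrt{1+2d/(n\log n)}$ correction term and does not secretly require $d = O(\log n)$ or $d = \omega(\log n)$), and otherwise supplying the short two-hypotheses-are-indistinguishable argument inline. I would carry it out in the order: (1) genie reduction revealing $\pi_*$ and the differences; (2) covariance computation establishing the averaged feature is a clean single GMM independent of the revealed data; (3) rescaling to normalize noise and translating the SNR hypothesis; (4) invoking (or reproving) the single-GMM converse; (5) concluding impossibility for the original CGMM.
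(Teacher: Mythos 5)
Your proposal is correct and follows essentially the same route as the paper: the paper also reduces to the averaged attribute $\tfrac{1}{\sqrt{2+2\rho}}(\bsf_i+\bsg_i)\sim\mathcal{N}(\sqrt{\tfrac{2}{1+\rho}}\boldsymbol{\mu}\,\sigma_i,\bsI_d)$ being the sufficient statistic for $\sigma_i$, handles the unknown permutation by a simulation argument (generating $(G_1,G_2)$ from an unpermuted parent $H$, which is equivalent to your genie step of revealing $\pi_*$ and the independent difference vectors), and then applies the single-GMM genie-aided converse of Abbe et al./Ndaoud at the boosted SNR $\tfrac{2}{1+\rho}\|\boldsymbol{\mu}\|^2$. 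The only difference is presentational: the paper re-derives the single-GMM lower bound for the merged feature via the lemmas of \cite{abbe2022} rather than citing it as a black box, which addresses exactly the parameter-regime bookkeeping you flag.
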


\noindent
The proof of Theorem~\ref{thm:gmm recovery imp} follows the same reasoning as in Theorem~\ref{thm:csbm recovery imp} for the special case $p,q=0$ (without edges), thus it can be viewed as a corollary of that result.

\begin{remark}[Information-Theoretic Gaps in Exact Community Recovery]
  Theorem~\ref{thm:gmm recovery ach} assumes that either \eqref{eq:gmm matching cond1} or \eqref{eq:gmm matching cond2} is satisfied to ensure exact matching. This raises a natural question: \emph{if} matching is \emph{not} feasible (i.e., $\tfrac{d}{4}\log\tfrac{1}{1-\rho^2} < \log n$), must exact community recovery also remain impossible, even if \eqref{eq:gmm recovery ach} holds? Addressing this question when exact matching is precluded is an interesting open problem, discussed further in Section~\ref{sec:discu and open}.
\end{remark}

\begin{figure}[!htb]    
  \centering
  \includegraphics[scale=0.3]{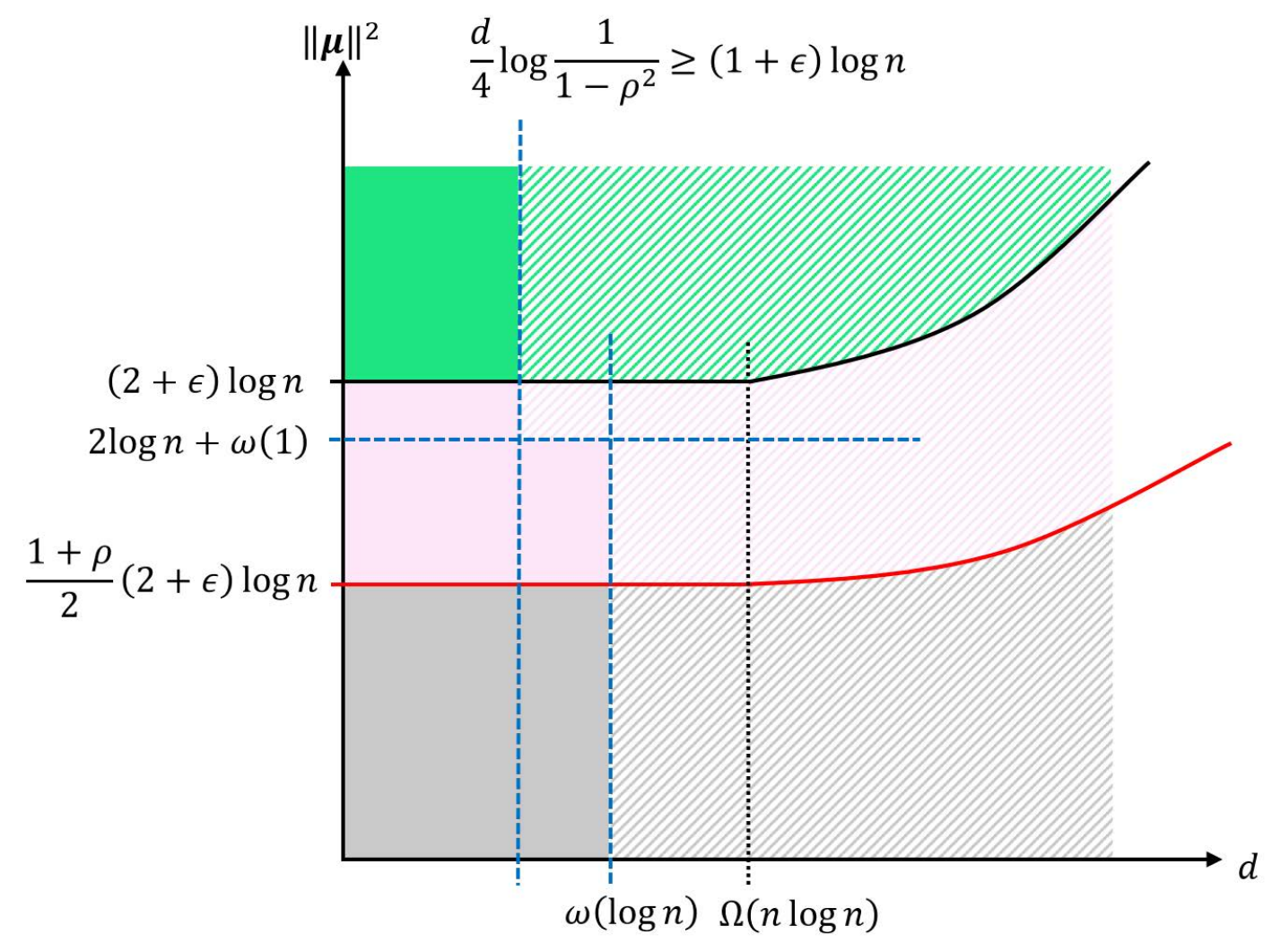} 
  \caption{%
    Conditions for exact community recovery (\emph{solid lines}) and exact matching (\emph{dashed lines}) in Gaussian Mixture Models (GMMs). 
    The \textbf{green region} indicates where exact recovery is possible with a single GMM $X$, whereas the \textbf{gray region} shows where recovery is not possible even with correlated GMMs $(X,Y)$. 
    The \textbf{pink + green region} (above the red curve) highlights where two correlated attributes $X$ and $Y$ enable exact recovery. 
    The \textbf{blue dashed line} corresponds to conditions \eqref{eq:gmm matching cond1} and \eqref{eq:gmm matching cond2} for exact matching, with \textbf{striped areas} indicating parameter regimes where matching is feasible. 
    Notably, the \textbf{striped pink region} represents cases where $X$ alone is insufficient for community recovery, but $(X,Y)$ together can achieve it, demonstrating the benefit of using correlated attributes.
  } \label{fig:community rec gmm}
\end{figure}

Figure \ref{fig:community rec gmm} illustrates the threshold conditions for both exact matching and exact community recovery in correlated GMMs. It visually highlights how the addition of correlated node attributes expands the parameter regime where community detection becomes feasible.

\section{Correlated Contextual Stochastic Block Models}
\label{sec:ccsbms}

In this section, we consider the \emph{correlated Contextual Stochastic Block Models (CSBMs)} introduced in Section~\ref{sec:model2}. Similar to Section~\ref{sec:main results}, we first establish the conditions for exact matching and then derive the conditions for exact community recovery, assuming we can perfectly match the nodes between the two correlated graphs.

\subsection{Exact Matching}\label{sec:CCSBM_exact_matching}

The following theorem provides sufficient conditions for exact matching in correlated CSBMs.

\begin{theorem}[Achievability for Exact Matching]\label{thm:csbm matching ach}
Let $(G_1,G_2) \sim \textnormal{CCSBMs}(n,p,q,s;R,d,\rho)$ be as defined in Section~\ref{sec:model2}, and suppose
\begin{equation}
\label{eq:csbm matching cond0}
    p\leq O\left(\frac{1}{e^{(\log \log n)^3}}\right),
\end{equation}
as well as
 \begin{equation}\label{eq:csbm matching cond2}
         R \geq 2\log n+\omega(1) \text{ or }  d =\omega (\log n)
     \end{equation}
     holds. If
 \begin{equation}\label{eq:csbm matching cond1}
        ns^2\frac{p+q}{2}+\frac{d}{4} \log \frac{1}{1-\rho^2} \geq (1+\epsilon) \log n
    \end{equation}
for an arbitrarily small constant $\epsilon>0$, then there exists an estimator $\hat{\pi}(G_1,G_2)$ such that $\hat{\pi}(G_1,G_2) = \pi_*$ with high probability.
\end{theorem}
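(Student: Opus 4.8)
\textbf{Proof Proposal for Theorem~\ref{thm:csbm matching ach}.}

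The plan is to analyze the two-step algorithm outlined in the introduction: first run $k$-core matching on the intersection graph $G_1 \wedge_\pi G_2$ using only edge information, then complete the alignment of the residual unmatched nodes with the minimum-distance estimator~\eqref{eq:estimator} applied to the Gaussian attributes. The first step leverages results of~\cite{CKNP20, GRS22, YC24} on $k$-core matching for subsampled parent graphs; the sparsity hypothesis~\eqref{eq:csbm matching cond0} is exactly the regime in which those analyses apply. The key point I would establish is that, conditioned on the community labels $\bssigma$, the parent graph is $\text{SBM}(n,p,q)$ and the intersection graph $G_1 \wedge_{\pi_*} G_2$ under the true permutation is an SBM with intra-/inter-community edge probabilities $ps^2$ and $qs^2$; since $q = \Theta(p)$, its average degree is $\Theta\!\left(ns^2\tfrac{p+q}{2}\right)$. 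Choosing $k$ on the order of this average degree, $k$-core matching correctly aligns all but at most $n^{1 - ns^2(p+q)/(2\log n) + o(1)}$ nodes with high probability, leaving a residual set $M$ of that size to be matched by attributes. (If $ns^2\tfrac{p+q}{2} \geq (1+\epsilon)\log n$ already, the residual set is empty and we are done with edges alone.)

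For the second step, write $m := |M|$ for the number of residual nodes. These are nodes whose attribute pairs $(\bsx_i, \bsy'_i)$ are still jointly Gaussian as in~\eqref{eq:correlated gaussian feature}, and I would restrict the minimum-distance estimator to permutations of $M$. The analysis mirrors the proof of Theorem~\ref{thm:gmm matching ach}: introduce the high-probability good events on attribute distances (controlling same-community versus different-community pairwise distances, as in the events referenced in~\eqref{eqn:def_A1} and~\eqref{eqn:def_A2}), and show that under condition~\eqref{eq:csbm matching cond2} the distance-based estimator recovers the true alignment on $M$ provided $\tfrac{d}{4}\log\tfrac{1}{1-\rho^2} \geq (1+\epsilon)\log m$. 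Because $\log m = \bigl(1 - \tfrac{ns^2(p+q)}{2\log n}\bigr)\log n + o(\log n)$, this attribute requirement reads $\tfrac{d}{4}\log\tfrac{1}{1-\rho^2} \geq (1+\epsilon)\bigl(\log n - \tfrac{ns^2(p+q)}{2}\bigr)$, which rearranges precisely to the claimed threshold~\eqref{eq:csbm matching cond1}. The growth condition~\eqref{eq:csbm matching cond2} plays the same role as in Theorem~\ref{thm:gmm matching ach}: it guarantees the Gaussian (or $\chi^2$) concentration needed for the distance estimator to succeed at the reduced problem size.

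The main obstacle — and the place where this differs genuinely from~\cite{YC24} — is the dependence between the edge data and the attribute data through the shared community labels $\bssigma$: the set $M$ of nodes left unmatched by $k$-core matching is not independent of the attributes, since both the connectivity pattern and the attribute distribution of a node are governed by $\sigma_i$. To handle this, I would condition on $\bssigma$ throughout and observe that, given $\bssigma$, the edge randomness (subsampling of the parent SBM) and the attribute noise ($\bsz_i, \bsw_i$) are independent; hence the residual set $M$ produced by $k$-core matching is a function of the edges alone and is conditionally independent of the attributes. The remaining subtlety is that $M$ may be community-imbalanced, so when I union-bound over bad permutations of $M$ in the distance estimator, I must use the good events~\eqref{eqn:def_A1}–\eqref{eqn:def_A2} to control cross-community mismatches uniformly over \emph{any} subset of size $m$, rather than relying on a fixed balanced $M$. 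A second, more routine point is verifying that $k$-core matching indeed fails on only $o(n)$ nodes in the SBM (not just ER) setting; this follows because $q = \Theta(p)$ keeps the degree distribution concentrated around a single scale, so the branching-process arguments of~\cite{CKNP20, GRS22} carry over with the effective density $ns^2\tfrac{p+q}{2}$. Combining the two steps via a union bound over their (conditionally independent) failure events yields $\hat\pi(G_1,G_2) = \pi_*$ with high probability, completing the proof.
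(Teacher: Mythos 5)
Your proposal follows the paper's own proof essentially verbatim: $k$-core matching on the intersection graph leaves at most $n^{1-\frac{ns^2(p+q)}{2\log n}+o(1)}$ unmatched nodes, and Theorem~\ref{thm:gmm matching ach} applied to that residual set (whose attributes are conditionally independent of the edge randomness given $\bssigma$, and whose possible community imbalance is harmless because the good events \eqref{eqn:def_A1}--\eqref{eqn:def_A2} are defined over all pairs in $[n]$) closes the gap, yielding exactly condition~\eqref{eq:csbm matching cond1}. The one imprecision is your choice of $k$: it must sit a polylogarithmic factor \emph{below} the average degree of the intersection graph (the paper takes $k=\frac{\log n}{(\log \log n)^2}\vee \frac{nps^2}{(\log nps^2)^2}$) so that the number of vertices of degree at most $k$ --- and hence, via the Luczak expansion, the complement of the $k$-core --- stays at $n^{1-\frac{ns^2(p+q)}{2\log n}+o(1)}$; taking $k$ literally of the order of the average degree would exclude a constant fraction of the vertices from the $k$-core and break the size bound on the residual set.
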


In \cite{RS21,RS23,YC23}, it was shown that for correlated SBMs, exact matching is possible if the average degree of the intersection graph $ns^2\left(\frac{p + q}{2}\right)$ exceeds $\log n$. Hence, we focus on the regime $ns^2 \tfrac{p + q}{2} = O(\log n)$, where we employ a two-step approach for exact matching. First, we use \emph{$k$-core matching} to align a large fraction of the nodes via edge information alone; the $k$-core approach has been extensively studied in \cite{CKNP20,GRS22,RS23} and is adapted here to handle more general regimes of $p$ and $q$. By choosing an appropriate $k = \frac{\log n}{(\log \log n)^2} \vee \frac{nps^2}{(\log(nps^2))^2}$, we obtain a partial matching that is both large (in terms of the number of matched nodes) and accurate (no mismatches). More specifically, the $k$-core matching step recovers the node correspondence for all but $n^{1 - \frac{ns^2(p + q)}{2 \log n}}$ nodes. 
The remaining unmatched node pairs follow the correlated Gaussian Mixture Model framework, so we apply Theorem~\ref{thm:gmm matching ach} to match those residual pairs via the node-attribute estimator.
Since the size of this unmatched node set is $n^{1 - \frac{ns^2(p + q)}{2 \log n}}$, it suffices to achieve an SNR of
$
\frac{d}{4} \log \frac{1}{1 - \rho^2} \geq \log\left(n^{1 - \frac{ns^2(p + q)}{2 \log n}}\right),
$
which yields the condition in \eqref{eq:csbm matching cond1}.

\begin{remark}[Interpretation of Conditions for Exact Matching]
Condition \eqref{eq:csbm matching cond0} ensures that the $k$-core algorithm yields the correct matching for nodes within the $k$-core of $G_1 \wedge_{\pi_*} G_2$. Condition \eqref{eq:csbm matching cond2} is required for matching the remaining nodes using node attributes. The combined inequality \eqref{eq:csbm matching cond1} highlights the benefit of leveraging both edge and node-attribute correlations. As a baseline, if $d=0$ (no node attributes), the model reduces to correlated SBMs, and \eqref{eq:csbm matching cond1} becomes $ns^2 \tfrac{p+q}{2} \ge (1+\epsilon)\,\log n$, matching the exact-matching threshold in \cite{RS21,RS23,YC23}. Conversely, if $p=q=0$ (no edges), the model is just correlated GMMs, and \eqref{eq:csbm matching cond1} becomes $\frac{d}{4}\log\!\bigl(\tfrac{1}{1-\rho^2}\bigr) \ge (1+\epsilon)\log n$, consistent with Theorem~\ref{thm:gmm matching ach}.
\end{remark}

\begin{theorem}[Impossibility for Exact Matching]\label{thm:csbm matching imp}
Let $(G_1,G_2) \sim \textnormal{CCSBMs}(n,p,q,s; R,d,\rho)$ be as in Section~\ref{sec:model2}, and assume $ps^2 = o(1)$. Suppose either
    \begin{equation}\label{eq:csbm matching imp1}
      ns^2\frac{p+q}{2}+\frac{d}{4} \log \frac{1}{1-\rho^2} \leq (1-\epsilon) \log n   \text{ and }  1\ll d =O(\log n)
    \end{equation}
    for arbitrarily small $\epsilon>0$, or
        \begin{equation}\label{eq:csbm matching imp2}
        n s^2\frac{p+q}{2}+\frac{d}{4} \log \frac{1}{1-\rho^2} \leq \log n -\log d -\omega(1) \text{ and } \frac{1}{\rho^2} -1 \leq \frac{d}{40}.
    \end{equation}Then, no estimator can achieve exact matching with high probability.
\end{theorem}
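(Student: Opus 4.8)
\textbf{Proof proposal for Theorem~\ref{thm:csbm matching imp} (Impossibility for Exact Matching in CCSBMs).}

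The plan is to reduce the CCSBM matching problem to a pure Gaussian-database alignment problem, exactly as is done for the CGMM converse (Theorem~\ref{thm:gmm matching imp}), but now also handing the edge information to the genie. Concretely, I would argue that even a genie-aided estimator that is told (i) the community labels $\bssigma^1$ and $\bssigma^2$, (ii) the entire adjacency matrices $A$ and $B$, and (iii) the correct images $\pi_*(i)$ for every node $i$ with $\sigma_i = -1$, still fails to recover $\pi_*$ with high probability under either \eqref{eq:csbm matching imp1} or \eqref{eq:csbm matching imp2}. Once the labels and all the $\sigma_i=-1$ correspondences are revealed, the remaining uncertainty concerns only the restriction of $\pi_*$ to $V' := \{i \in [n] : \sigma_i = +1\}$, which by symmetry has size $|V'| = \tfrac n2 (1+o(1))$ w.h.p.\ (a Chernoff bound on the Binomial$(n,1/2)$). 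Conditioned on $\bssigma$, this restricted permutation is uniform over $S_{V'}$, so we are left to align the $|V'|$-node pair consisting of the attribute sub-databases $\{\bsx_i\}_{i\in V'}$, $\{\bsy'_i\}_{i\in V'}$ together with the induced subgraphs $A\{V'\}$ and $B'\{\pi_*(V')\}$.

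The key observation is that the edge information on $V'$ carries \emph{essentially no} extra discriminative power relative to the attributes here, because $ps^2 = o(1)$ forces the intersection subgraph on $V'$ to be extremely sparse; more to the point, I would simply discard the edges and show the attribute-only alignment already fails. On $V'$ all within-community distances are statistically indistinguishable in the precise sense used in \cite{DCK19,Wang22}: $(\bsx_i,\bsy'_i)$ for $i\in V'$ are i.i.d.\ $\mathcal N\bigl((\bsmu,\bsmu),\Sigma_d\bigr)$, so after centering by $\bsmu$ this is exactly the correlated Gaussian database model of \cite{DCK19} with $|V'|$ nodes and correlation $\rho$. I would then invoke the impossibility bounds of \cite{DCK19} (for the regime $d=\omega(1)$, $\tfrac d4\log\tfrac1{1-\rho^2}\le \log|V'| + \omega(1)$, matching \eqref{eq:csbm matching imp1} after absorbing the $\log 2$ into $\epsilon$) and of \cite{Wang22} (for the sharper second-order regime $\tfrac d4\log\tfrac1{1-\rho^2}\le \log|V'| - \log d + C$, matching \eqref{eq:csbm matching imp2}), together with the side condition $\tfrac1{\rho^2}-1\le \tfrac d{40}$ that those references require. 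Since $\log|V'| = \log n - \log 2 + o(1)$, the stated thresholds \eqref{eq:csbm matching imp1}--\eqref{eq:csbm matching imp2} are, up to the vanishing $o(1)$ and the fixed constant shifts absorbed into $\epsilon$ and $C$, exactly what the genie-aided problem cannot beat, and data-processing (conditioning only makes the problem easier) finishes the reduction.

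The technical core I would need to nail down carefully is the claim that revealing $A$, $B$ in full does not help beyond the attribute signal once the labels and the $\sigma_i=-1$ correspondences are given. Here the honest statement is that I do \emph{not} need edges to carry zero information globally --- I only need that the edges restricted to $V'$, together with the already-revealed structure, cannot by themselves determine the restriction of $\pi_*$ to $V'$. The cleanest route is: within $V'$ the intra-community edge probability is $p s^2 = o(1)$ in the intersection graph $G_1 \wedge_{\pi_*} G_2$, so the intersection subgraph on $V'$ is a very sparse Erdős--Rényi-type graph; its automorphism-rich / near-empty structure means that for a typical pair $i,j\in V'$ swapping them leaves the edge data unchanged. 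Formalizing this amounts to a first-moment / union-bound argument showing that w.h.p.\ there exist many ``confusable'' transpositions on $V'$ that are consistent with $(A,B)$ and with the revealed partial matching, and then arguing that the attribute data alone cannot break all of these ties when \eqref{eq:csbm matching imp1} or \eqref{eq:csbm matching imp2} holds --- which is precisely the content of the \cite{DCK19,Wang22} converses. I expect \emph{this} step --- rigorously quantifying that the sparse edge structure on $V'$ provides no leverage, so that the reduction to the pure Gaussian database converse is lossless up to $o(\log n)$ --- to be the main obstacle; the rest is bookkeeping (Chernoff bound on $|V'|$, absorbing $\log 2$ and fixed constants into $\epsilon$ and $C$, and citing the two database-alignment impossibility theorems verbatim). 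A full proof is given in Section~\ref{sec:proof of csbm matching imp}.
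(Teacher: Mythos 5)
There is a genuine gap, and it is quantitative, not just a matter of rigor. Your reduction hands the genie all of $A,B$ and all correspondences on $V^-$, and then tries to run the Gaussian-database converse on $V'=V^+$ with $|V'|=n/2$. But that can only ever yield the threshold $\tfrac d4\log\tfrac1{1-\rho^2}\le(1-\epsilon)\log n$; the term $ns^2\tfrac{p+q}{2}$ in \eqref{eq:csbm matching imp1}--\eqref{eq:csbm matching imp2} never appears. The theorem's hypothesis allows, say, $ns^2\tfrac{p+q}{2}=0.9\log n$ and $\tfrac d4\log\tfrac1{1-\rho^2}=0.05\log n$; in that regime the intersection graph $G_1\wedge_{\pi_*}G_2$ restricted to $V'$ has average degree $\Theta(\log n)$, almost every node in $V'$ has a distinctive nonempty intersection neighborhood, and $k$-core matching correctly aligns all but $n^{0.1+o(1)}$ of them from edges alone. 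So your central claim --- that $ps^2=o(1)$ makes the intersection subgraph on $V'$ ``near-empty'' so that a typical transposition in $V'$ leaves the edge data unchanged --- is false precisely in the regime the theorem must cover. Relatedly, ``simply discarding the edges'' is not a legitimate move in a converse: you must show that edges \emph{plus} attributes jointly fail, not that an attribute-only estimator fails; and showing separately that edges alone do not determine $\pi_*|_{V'}$ and that attributes alone do not is not enough, since the two likelihoods combine in the posterior.

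The paper's proof fixes exactly this by choosing the confusable set to be $\mathcal{H}_*$, the isolated nodes of the intersection graph $A\wedge_{\pi_*}B$ (split into $\mathcal{H}_*^+$ and $\mathcal{H}_*^-$), rather than all of $V'$. Lemma~\ref{lem:size H} gives $|\mathcal{H}_*^{\pm}|\ge\tfrac18 n^{1-ns^2(p+q)/(2\log n)}$ w.h.p., so $\log|\mathcal{H}_*^+|\approx\log n-ns^2\tfrac{p+q}{2}$, which is precisely how the edge term enters the threshold. It then factorizes the genie-aided posterior (Lemma~\ref{lem:posterior} and Eq.~\eqref{eq:condition poster}) into an edge factor $K(\pi)$ times the attribute posterior, and uses Lemma~\ref{lem:psi chi} to show $K(\pi)\ge K(\pi_*)$ for every $\pi$ that permutes within $\mathcal{H}_*^+$ and $\mathcal{H}_*^-$ --- so on this set the edges provably cannot help (and may even favor wrong permutations) --- before invoking the \cite{DCK19,Wang22} converses on a database of size $|\mathcal{H}_*^+|$. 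If you replace $V'$ by $\mathcal{H}_*^+$ and add the posterior-factorization step, your outline becomes the paper's argument; as written, it proves a strictly weaker statement.
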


To prove Theorem~\ref{thm:csbm matching imp}, we show that even the MAP estimator fails under a strengthened information setting: specifically, when we are given not only  $G_1$, $G_2$, but also the mean vector $\boldsymbol{\mu}$, and partial knowledge of the ground truth permutation $\pi_*$ on all nodes except those in the set
\begin{equation}
    \mathcal{H}_*:=\left\{ i\in [n] : \forall j \in [n], A_{ij}B_{\pi_*(i)\pi_*(j)} = 0\right\},
\end{equation}
which consists of isolated nodes in the intersection graph $A \wedge_{\pi_*} B$. Additionally, we assume access to the community labels $\bssigma^1$ on $\mathcal{H}_*$ (i.e., on $ \mathcal{H}^+_*$ and $ \mathcal{H}^-_*$), and the complete community labels $\bssigma^2$.
Even under this augmented information regime, we show that the MAP estimator fails if either condition~\eqref{eq:csbm matching imp1} or~\eqref{eq:csbm matching imp2} holds. A detailed argument is provided in Section~\ref{sec:proof of csbm matching imp}.

\begin{figure}[!htb]
  \centering
  \includegraphics[scale=0.3]{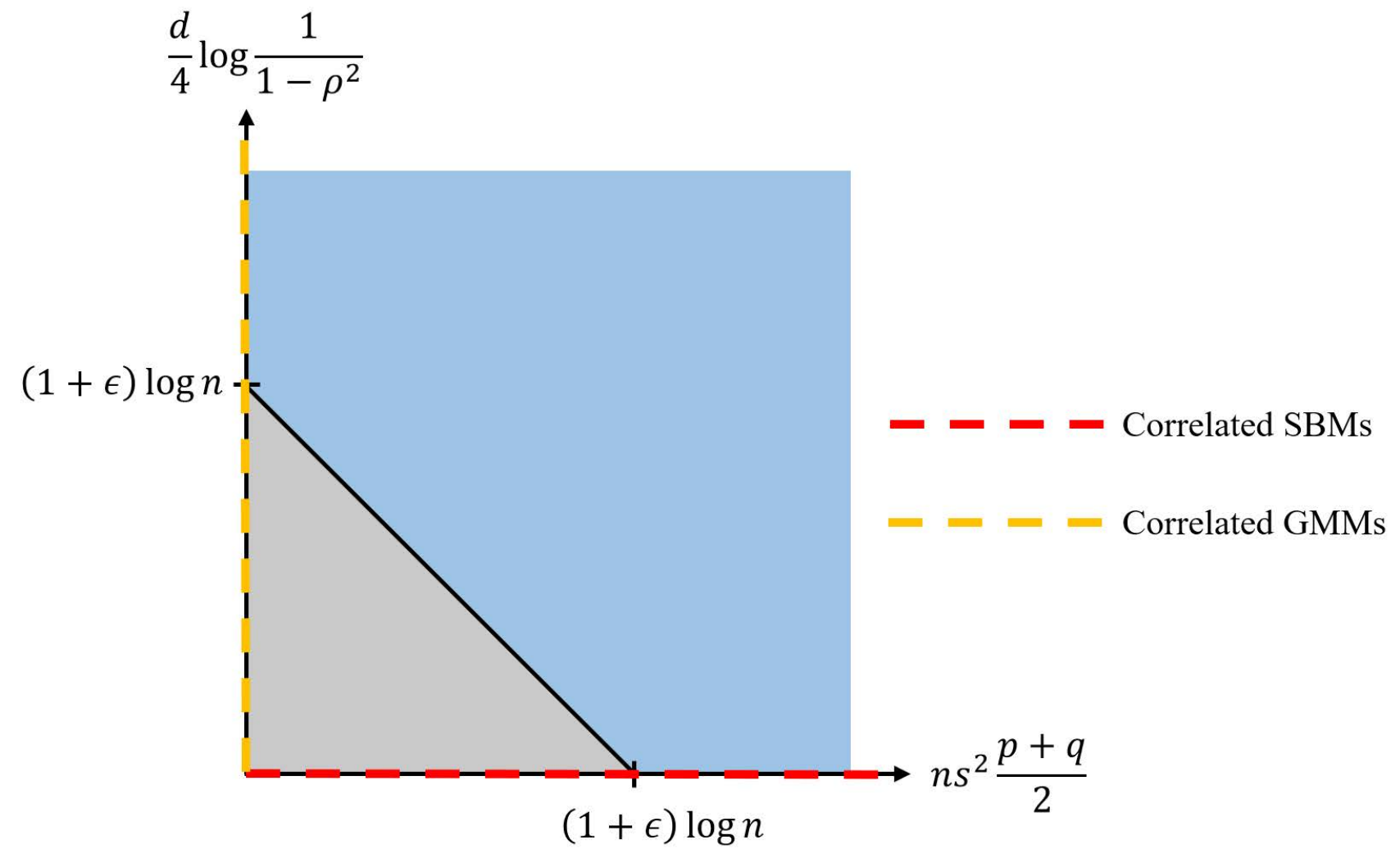}
  \caption{
    The region in \textbf{blue} indicates where exact matching is possible in correlated CSBMs, and the \textbf{gray} region shows where matching is infeasible. The red dashed line ($d=0$) captures the special case of correlated SBMs, while the yellow dashed line ($p=q=0$) represents correlated GMMs. Each dashed line intersects the boundary of the blue region at its respective matching threshold, thus recovering the known results for correlated SBMs and correlated GMMs. 
  }
  \label{fig:exact matching}
\end{figure}

Figure \ref{fig:exact matching} illustrates the parameter regions where exact node matching is (blue area) and is not (gray area) achievable in correlated Contextual Stochastic Block Models. It also highlights how removing either edges ($p,q=0$) or node attributes ($d=0$) reduces the region of feasible matching to the thresholds for correlated GMMs and correlated SBMs, respectively.

\subsection{Exact Community Recovery}
\label{sec:ccsbms_recovery}

In \cite{Abbe17}, it was shown that if \(p, q = \omega\bigl(\tfrac{\log n}{n}\bigr)\) ($p>q$), then \emph{exact community recovery} in a single CSBM graph \(G_1\) is achievable using only edge information. Likewise, it was shown in \cite{Ndaoud22,abbe2022} that if the effective SNR
$
\frac{R^2}{R + d/n} = \omega(\log n)
$ where
$\|\boldsymbol{\mu}\|^2 = R,
$
then exact community recovery is possible using only node attributes in \(G_1\). Accordingly, for correlated CSBMs, we focus on the setting
\begin{equation}\label{eq:assumption}
    p=\frac{a\log n}{n},\quad q=\frac{b\log n}{n},\quad \frac{R^2}{R + d/n}= c \log n,\quad \frac{\left(\frac{2}{1+\rho}R\right)^2}{\frac{2}{1+\rho}R+ d/n}=c' \log n
\end{equation}
for positive constants \(a,b,c,c'\). The last term in \eqref{eq:assumption} represents the SNR for the averaged correlated Gaussian attributes in \eqref{eqn:avg_feature}. Under these assumptions, we have the following feasibility and infeasibility results for exact community recovery.

\begin{theorem}[Achievability for Exact Community Recovery]\label{thm:csbm recovery ach}
  Let \((G_1,G_2) \sim \textnormal{CCSBMs}\bigl(n,p,q,s; R,d,\rho\bigr)\) be as in Section~\ref{sec:model2}. Suppose conditions \eqref{eq:csbm matching cond0}, \eqref{eq:csbm matching cond2}, and \eqref{eq:csbm matching cond1} hold, and also assume \eqref{eq:assumption}. If
  \begin{equation}\label{eq:csbm recov ach cond1}
      \frac{\bigl(1 - (1-s)^2\bigr)\,\bigl(\sqrt{a}-\sqrt{b}\bigr)^2 + c'}{\,2\,} \;>\; 1,
  \end{equation}
  then there exists an estimator \(\hat{\bssigma}(G_1,G_2)\) such that \(\hat{\bssigma}(G_1,G_2)=\bssigma\) with high probability.
\end{theorem}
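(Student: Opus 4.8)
The plan is to reduce the problem to exact community recovery in a single contextual SBM built from the merged data, invoking the single-graph result of~\cite{abbe2022,Ndaoud22}. First I would note that, under conditions \eqref{eq:csbm matching cond0}, \eqref{eq:csbm matching cond2}, and \eqref{eq:csbm matching cond1}, Theorem~\ref{thm:csbm matching ach} guarantees that we can recover $\pi_*$ exactly with high probability; call the estimate $\hat\pi$. On the event $\{\hat\pi = \pi_*\}$ we may relabel the nodes of $G_2$ so that the two graphs share a common vertex set with a common community labelling $\bssigma$. I would then form two merged objects: (i) the \emph{union graph} $G_\vee := G_1 \vee_{\pi_*} G_2$, whose adjacency matrix is $A \vee_{\pi_*} B$, and (ii) the \emph{averaged attribute matrix} whose $i$th row is $\tfrac12(\bsx_i + \bsy_{\pi_*(i)})$ as in \eqref{eqn:avg_feature}.

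The key structural observations are the following. Since $G_1$ and $G_2'$ are independent $s$-subsamples of the same parent $G \sim \mathrm{SBM}(n,p,q)$, a standard computation (as in~\cite{RS21,YC23}) shows that conditioned on $\bssigma$ the union graph $G_\vee$ is distributed as $\mathrm{SBM}\bigl(n,\,p(1-(1-s)^2),\,q(1-(1-s)^2)\bigr)$, i.e.\ it is again an SBM with in/out parameters $a' \log n / n$ and $b' \log n / n$ where $a' = a(1-(1-s)^2)$ and $b' = b(1-(1-s)^2)$. Meanwhile, as computed below \eqref{eqn:avg_feature}, the averaged attribute equals $\boldsymbol{\mu}\sigma_i$ plus a $\mathcal N(\mathbf 0, \tfrac{1+\rho}{2}\bsI_d)$ noise vector; rescaling by $\sqrt{2/(1+\rho)}$ turns this into a standard CSBM attribute with mean $\wt{\boldsymbol{\mu}} := \sqrt{2/(1+\rho)}\,\boldsymbol{\mu}$, so $\|\wt{\boldsymbol{\mu}}\|^2 = \tfrac{2}{1+\rho}R$ and the associated attribute SNR is exactly $\tfrac{(\tfrac{2}{1+\rho}R)^2}{\tfrac{2}{1+\rho}R + d/n} = c'\log n$ by \eqref{eq:assumption}. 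Crucially, I must check that the edge data $G_\vee$ and the averaged attribute data are \emph{conditionally independent given $\bssigma$}: this holds because the edge-subsampling randomness and the attribute noises $\bsz_i,\bsw_i$ are generated independently in the model, and $\wt{\boldsymbol{\mu}}$ is a fixed (though random) direction independent of everything else. Hence $(G_\vee, \text{averaged attributes})$ is distributed exactly as a single CSBM with parameters $a', b', \wt{\boldsymbol{\mu}}, d$, and the single-graph achievability theorem of~\cite{abbe2022} applies: exact community recovery succeeds whenever $\tfrac{(\sqrt{a'}-\sqrt{b'})^2 + c'}{2} > 1$, which is precisely \eqref{eq:csbm recov ach cond1} since $(\sqrt{a'}-\sqrt{b'})^2 = (1-(1-s)^2)(\sqrt a - \sqrt b)^2$.

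There are two points requiring care. The first, and I expect the main obstacle, is that the matching $\hat\pi$ is itself a function of the data $(G_1,G_2,X,Y)$, so conditioning on the success event $\{\hat\pi=\pi_*\}$ could in principle distort the conditional law of $(G_\vee,\text{attributes})$ and break the clean CSBM description used above. The standard resolution is a union-bound / decoupling argument: since $\P(\hat\pi \neq \pi_*) = o(1)$, for any community estimator $\hat\bssigma$ we have $\P(\hat\bssigma \neq \bssigma) \le \P(\hat\pi\neq\pi_*) + \P(\hat\bssigma\neq\bssigma,\ \hat\pi=\pi_*)$, and on the second event we may run the \emph{oracle} estimator that receives the true $\pi_*$ and then applies the single-CSBM recovery algorithm to the genuinely-distributed merged data; its failure probability is $o(1)$ by~\cite{abbe2022}, independent of how $\hat\pi$ was obtained. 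The second point is that $\wt{\boldsymbol{\mu}}$ lies on a sphere of radius $\sqrt{2R/(1+\rho)}$ rather than being arbitrary, but the single-graph result of~\cite{abbe2022,Ndaoud22} already allows exactly this spherical prior on the mean, so no extra work is needed. Assembling these pieces yields the claimed threshold; the full calculation of the union-graph edge probabilities and the Gaussian rescaling is routine and deferred to Section~\ref{sec:proof of csbm recovery ach}.
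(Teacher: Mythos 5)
Your proposal is correct and follows essentially the same route as the paper: exact matching via Theorem~\ref{thm:csbm matching ach}, merging into the union graph $\mathrm{SBM}\bigl(n,p(1-(1-s)^2),q(1-(1-s)^2)\bigr)$ with averaged attributes of effective SNR $c'\log n$, applying the single-CSBM result of~\cite{abbe2022}, and decoupling the data-dependence of $\hat\pi$ via the union bound $\P(\hat\bssigma\neq\bssigma)\le\P(\hat\pi\neq\pi_*)+\P(\hat\bssigma(G_1+_{\pi_*}G_2)\neq\bssigma)$. The paper's Section~\ref{sec:proof of csbm recovery ach} is precisely this argument.
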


When exact matching is achievable, we can construct a new Contextual Stochastic Block Model by forming a denser union graph $G_1 \vee_{\pi_*} G_2$ whose edges follow
\(\textnormal{SBM}\bigl(n,p(1-(1-s)^2),q(1-(1-s)^2)\bigr)\), and by assigning to each node the averaged correlated attribute from \eqref{eqn:avg_feature}. Applying the techniques of \cite{abbe2022} to this merged structure establishes Theorem~\ref{thm:csbm recovery ach}.

\begin{remark}[Comparison with the Single-Graph Setting]
  Abbe et al.~\cite{abbe2022} showed that exact community recovery in a \emph{single} CSBM \(G_1\) is possible if 
  \[
  \frac{s\bigl(\sqrt{a}-\sqrt{b}\bigr)^2 + c}{2} > 1.
  \]
  When two correlated graphs \(G_1\) and \(G_2\) are available, an exact matching (if feasible) allows one to form a denser \emph{union graph} from the correlated edges and to average the correlated attributes. This increases the effective SNR term to
  \[
  \frac{\bigl(1 - (1-s)^2\bigr)\bigl(\sqrt{a}-\sqrt{b}\bigr)^2 + c'}{2},
  \]
  thereby relaxing the threshold to the condition in \eqref{eq:csbm recov ach cond1}. A detailed proof of Theorem \ref{thm:csbm recovery ach} can be found in Section~\ref{sec:proof of csbm recovery ach}.
\end{remark}

\begin{theorem}[Impossibility for Exact Community Recovery]\label{thm:csbm recovery imp}
  Let \((G_1,G_2) \sim \textnormal{CCSBMs}\bigl(n,p,q,s; R,d,\rho\bigr)\) be as in Section~\ref{sec:model2}, and assume \eqref{eq:assumption}. If
  \begin{equation}\label{eq:csbm rec imp cond1}
      \frac{\bigl(1 - (1-s)^2\bigr)\bigl(\sqrt{a}-\sqrt{b}\bigr)^2 + c'}{2} < 1,
  \end{equation}
  then no estimator can achieve exact community recovery with high probability.
\end{theorem}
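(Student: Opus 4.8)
\textbf{Proof plan for Theorem~\ref{thm:csbm recovery imp}.}

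The plan is to reduce the impossibility question for the correlated CSBM to a single-graph CSBM impossibility statement, by handing the estimator as much side information as possible and showing that even then exact community recovery fails. Concretely, suppose that the ground-truth permutation $\pi_*$ is revealed to the estimator. Once $\pi_*$ is known, the pair $(G_1,G_2)$ can be aligned node-by-node, and all the available information about $\bssigma$ is contained in the union graph $G_1 \vee_{\pi_*} G_2 \sim \textnormal{SBM}\bigl(n, p(1-(1-s)^2), q(1-(1-s)^2)\bigr)$ together with the averaged attribute vectors $\tfrac{\bsx_i + \bsy_{\pi_*(i)}}{2} = \boldsymbol{\mu}\sigma_i + \bsg_i$, where $\bsg_i \sim \mathcal{N}(\mathbf{0}, \tfrac{1+\rho}{2}\bsI_d)$ are independent across $i$ (the intersection graph $G_1 \wedge_{\pi_*} G_2$ and the difference $\bsx_i - \bsy_{\pi_*(i)}$ are, conditionally on the union graph and the average, independent of $\bssigma$, so revealing them too does not help). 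Hence it suffices to prove that exact community recovery is impossible in this \emph{single} CSBM with edge parameters $\bigl(p(1-(1-s)^2), q(1-(1-s)^2)\bigr)$ and attribute noise variance $\tfrac{1+\rho}{2}$ per coordinate. Rescaling the averaged attribute by $\sqrt{\tfrac{2}{1+\rho}}$ makes the noise standard Gaussian with mean $\sqrt{\tfrac{2}{1+\rho}}\,\boldsymbol{\mu}\,\sigma_i$, whose squared norm is $\tfrac{2}{1+\rho}R$; the effective edge signal becomes $a' = a(1-(1-s)^2)$, $b' = b(1-(1-s)^2)$ in the notation $p(1-(1-s)^2) = \tfrac{a'\log n}{n}$.

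The second step is to invoke the sharp single-graph CSBM converse of Abbe et al.~\cite{abbe2022}: exact recovery in $\textnormal{CSBM}$ with parameters $(a', b')$ and attribute SNR $\tfrac{(\|\boldsymbol{\mu}'\|^2)^2}{\|\boldsymbol{\mu}'\|^2 + d/n} = c''\log n$ is information-theoretically impossible whenever $\tfrac{(\sqrt{a'}-\sqrt{b'})^2 + c''}{2} < 1$. Here $\|\boldsymbol{\mu}'\|^2 = \tfrac{2}{1+\rho}R$, so by the definition of $c'$ in \eqref{eq:assumption} we have $c'' = c'$, and $(\sqrt{a'}-\sqrt{b'})^2 = (1-(1-s)^2)(\sqrt{a}-\sqrt{b})^2$ since $\sqrt{a'} = \sqrt{1-(1-s)^2}\sqrt{a}$ and likewise for $b'$. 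Therefore the hypothesis \eqref{eq:csbm rec imp cond1} is exactly the Abbe et al.\ impossibility condition for the reduced model, and the claim follows: since recovery is impossible \emph{even with $\pi_*$ revealed}, it is a fortiori impossible without it.

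The one point that needs care — and which I expect to be the main technical obstacle — is justifying rigorously that revealing $\pi_*$ (and then the intersection graph and attribute differences) is genuinely a \emph{reduction}, i.e.\ that the minimum probability of error of the genie-aided problem lower-bounds that of the original. This is the standard data-processing / sufficiency argument: any estimator for the original problem induces one for the genie-aided problem with no larger error, so impossibility for the genie-aided problem transfers back. The subtlety is that in the correlated CSBM the mean vector $\boldsymbol{\mu}$ is itself random (uniform on the sphere of radius $\sqrt{R}$), so one must either condition on $\boldsymbol{\mu}$ throughout (the single-graph converse of \cite{abbe2022} holds for fixed $\boldsymbol{\mu}$ on the sphere, hence also on average) or verify that the relevant likelihood-ratio computations are insensitive to the direction of $\boldsymbol{\mu}$; both work, and the cleanest exposition conditions on $\boldsymbol{\mu}$ from the start. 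One must also check the claimed conditional independence: conditionally on the union adjacency $G_1 \vee_{\pi_*} G_2$ and on the pair-averages $\{\tfrac{\bsx_i+\bsy_{\pi_*(i)}}{2}\}$, the edge-disagreement pattern (which edges lie in $G_1$ only vs.\ $G_2$ only vs.\ both) is determined by the sampling randomness $s$ and is independent of $\bssigma$ given the union edge set, and the differences $\bsx_i - \bsy_{\pi_*(i)} \sim \mathcal{N}(\mathbf{0}, 2(1-\rho)\bsI_d)$ carry no information about $\sigma_i$; this is a short computation using the joint Gaussianity in \eqref{eq:correlated gaussian feature} and the edge-subsampling definition, and once it is in place the reduction is complete. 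Finally, note the theorem is stated without requiring the matching conditions \eqref{eq:csbm matching cond0}--\eqref{eq:csbm matching cond1}, which is consistent with this argument: the converse only gets stronger when $\pi_*$ is given for free, so no assumption guaranteeing matchability is needed.
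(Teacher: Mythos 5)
Your proposal is correct, but it takes a genuinely different route from the paper. The paper proves Theorem~\ref{thm:csbm recovery imp} by a simulation argument in the style of Theorem~3.4 of \cite{RS21}: it constructs a master object $H$ consisting of the union-rate SBM (edge probabilities $p(1-(1-s)^2)$ and $q(1-(1-s)^2)$) in which every node carries \emph{both} correlated attributes $(\bsf_i,\bsg_i)$, shows that $(G_1,G_2)$ can be generated from $H$ using only independent edge-subsampling and an independent permutation (so any estimator for $(G_1,G_2)$ induces one for $H$), and then proves impossibility for $H$ directly in Lemma~\ref{lem:rec imp H} by re-deriving the genie-aided lower bound of \cite{abbe2022} for this two-attribute model; the sufficiency of the scaled average $\bsu_i=\tfrac{1}{\sqrt{2+2\rho}}(\bsf_i+\bsg_i)$ only emerges inside that likelihood computation. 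You instead reveal $\pi_*$ and argue by a factorization/sufficiency reduction that the genie-aided observation collapses to a \emph{standard single-attribute} CSBM with edge parameters $(a',b')=(1-(1-s)^2)\,(a,b)$ and mean $\sqrt{2/(1+\rho)}\,\boldsymbol{\mu}$, after which the converse of \cite{abbe2022} applies as a black box. Both reductions land on the same reduced model and the same threshold $\tfrac{(\sqrt{a'}-\sqrt{b'})^2+c'}{2}<1$; yours is more modular (no need to re-prove the analogues of Lemmas~\ref{lem:abbe1} and~\ref{lem:abbe2} for two attributes), at the cost of having to verify explicitly the two conditional-independence facts you flag — namely that, given the union edge set, the pattern $(A_{ij},B_{\pi_*(i)\pi_*(j)})$ has conditional law $\bigl(\tfrac{s}{2-s},\tfrac{1-s}{2-s},\tfrac{1-s}{2-s}\bigr)$ on $\{(1,1),(1,0),(0,1)\}$ irrespective of whether the pair is intra- or inter-community, and that $\bsx_i-\bsy_{\pi_*(i)}\sim\mathcal{N}(\mathbf{0},\,2(1-\rho)\bsI_d)$ is independent of the average and of $\sigma_i$ — both of which check out. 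Your treatment of the random $\boldsymbol{\mu}$ (condition on it throughout; the threshold depends only on $\|\boldsymbol{\mu}\|^2=R$ by rotational symmetry) is also sound, and your observation that no matchability assumptions are needed is consistent with the theorem statement.
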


\noindent
The proof of Theorem~\ref{thm:csbm recovery imp} resembles the simulation argument used in Theorem~3.4 of \cite{RS21}, where one constructs a graph \(\mathrm{H}\) to mirror \((G_1,G_2)\). By showing that exact community recovery in \(\mathrm{H}\) is impossible under \eqref{eq:csbm rec imp cond1}, it follows that \((G_1,G_2)\) also fails to admit exact community recovery. A detailed proof is given in Section~\ref{sec:proof of csbm recovery imp}.

Figure \ref{fig:community recovery csbm} visualizes the parameter ranges under which community recovery is possible or impossible in correlated CSBMs. It highlights how adding correlated node attributes and edges can expand the regime where exact community detection succeeds.

\begin{figure}[ht]  
  \centering
  \includegraphics[scale=0.3]{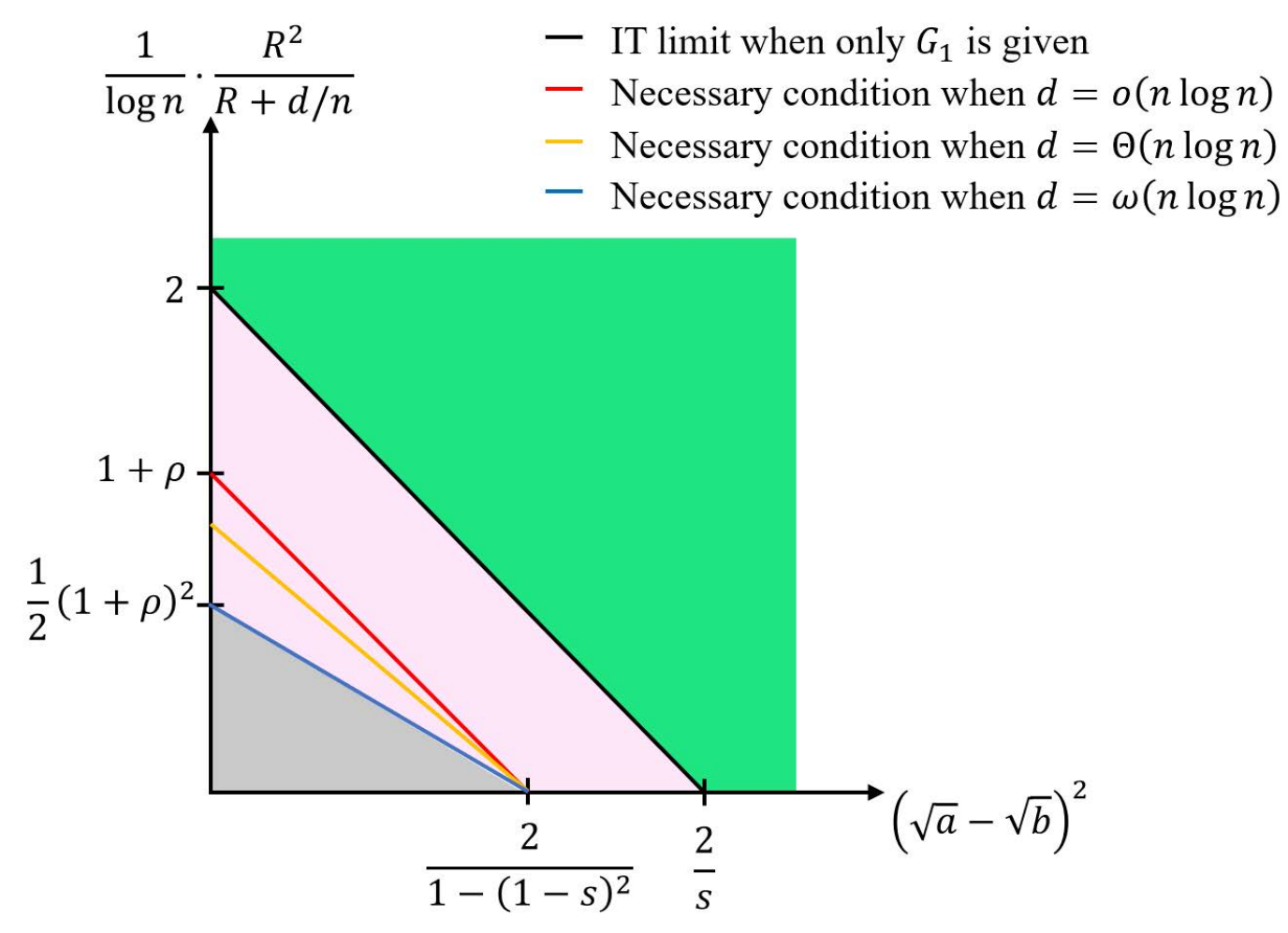}
  \caption{
    \textbf{Green region}: exact community recovery is possible using a single graph \(G_1\). 
    \textbf{Gray region}: exact recovery remains impossible even when two correlated graphs \((G_1,G_2)\) are given. 
    The \textbf{red, yellow,} and \textbf{blue lines} mark necessary conditions for exact recovery, depending on the dimensional regime of \(d\). 
    In particular, the \textbf{pink area} indicates where single-graph recovery fails, but having two correlated graphs \((G_1,G_2)\) can enable exact recovery if exact matching is feasible.
  }
  \label{fig:community recovery csbm}
\end{figure}

\begin{remark}[SNR of Node Attributes]
  Suppose
  \[
  \frac{\left(\frac{2}{1+\rho} R\right)^2}
       {\tfrac{2}{1+\rho}R + \tfrac{d}{n}}
  = 2\,\log n
  \quad\Longrightarrow\quad
  \frac{2}{1+\rho}\,R
  \approx
  \sqrt{\frac{2d}{n}\,\log n}
  \;\;\text{or}\;\;
  2\log n,
  \]
  depending on whether \(d\) dominates \(n\log n\) or not. 
  In the high-dimensional regime \(d = \omega(n\log n)\), we obtain 
  \(\frac{2}{1+\rho}\,R \approx \sqrt{\tfrac{2d}{n}\log n}\), yielding 
  \(\frac{R^2}{\,R + \tfrac{d}{n}\,}\approx \tfrac{1}{2}(1+\rho)^2\log n\). 
  In contrast, if \(d=o(n \log n)\), we have 
  \(\frac{2}{1+\rho}\,R\approx 2\log n\), so 
  \(\frac{R^2}{\,R + \tfrac{d}{n}\,}\approx (1+\rho)\log n\). 
  Finally, in the intermediate case \(d=\Theta(n\log n)\), the achievable SNR lies between these two extremes. These scenarios explain how the \(y\)-intercept in Figure~\ref{fig:community recovery csbm} varies with respect to \(d\).
\end{remark}

\begin{remark}[Scaling Regime for Community Detection]
Compared to the correlated GMM setting, the correlated CSBM model incorporates additional side information in the form of correlated edge structures between the two graphs. A comparison between Theorem~\ref{thm:gmm recovery ach} (achievability for community detection in CGMMs) and Theorem~\ref{thm:csbm recovery ach} (achievability for community detection in CCSBMs) demonstrates that this additional structural signal further relaxes the requirement for successful community detection:
\[
c' > 2
\quad \text{to} \quad
\left(1 - (1 - s)^2\right)\left(\sqrt{a} - \sqrt{b} \right)^2 + c' > 2.
\]
From a practical standpoint, both conditions--either \( c' > 2 \) or 
$
\left(1 - (1 - s)^2\right)\left(\sqrt{a} - \sqrt{b} \right)^2 + c' > 2
$--translate into logarithmic scaling of the attribute signal power \( \|\boldsymbol{\mu}\|^2 = R = \Omega(\log n) \) and/or the average node degree \( n(p + q)/2 = \Omega(\log n) \).

In real-world networks such as Facebook, where the number of effective users is on the order of \( 10^9 \), we have \( \log 10^9 \approx 20 \). Thus, the required signal strength or average degree per node need only be on the order of a few dozen. For example, in binary attribute settings (e.g., gender, education, occupation), users can be represented using tens of features. Similarly, the average number of friends per user in modern social networks typically exceeds this threshold. Hence, the theoretical requirements established in our analysis are not only reasonable but are also easily satisfied in practice.
\end{remark}

\section{Outline of the Proof}
\label{sec:outline}

Without loss of generality, let $\pi_* : [n]\to [n]$ be the identity permutation.  In this section, we outline the proofs of exact matching for the two proposed models.

\subsection{Proof Sketch of Theorem \ref{thm:gmm matching ach}}\label{sec:outline gmm matching ach}
We begin by explaining how Theorem~\ref{thm:gmm matching ach} can be established via the estimator in \eqref{eq:estimator}. The argument closely follows the ideas of \cite{KN22}, but with technical novelties highlighted in Remark \ref{rem:tech_nov_Thm1}.

Define
 \begin{equation}
     Z_{ij}:=  \lVert \bsx_i - \bsy_{j} \rVert^2.
 \end{equation}
Then the estimator \eqref{eq:estimator} can be rewritten as
 \begin{equation}\label{eq:estimator-1}
     \hat{\pi} = \argmin_{\pi \in S_n} \sum\limits_{i=1}^n  Z_{i \pi(i)}.
 \end{equation}
 Let
\begin{equation}
    \mathcal{M} := \{ i\in [n] : \hat{\pi}(i) \neq \pi_*(i)\}.
\end{equation}
be the set of mismatched nodes. Our goal is to show that $\P\bigl(|\mathcal{M}|=0\bigr)=1 - o(1)$.

Consider the event
 \begin{equation}\label{eq:t augment}
      \mathcal{F}_t=\left\{ \sum_{k=1}^t Z_{i_k i_k} \geq \sum_{k=1}^t Z_{i_k i_{k+1}}  \right\},
 \end{equation}
where $i_1,\ldots,i_t$ are $t$ distinct nodes in $[n]$, and $i_{t+1}=i_1$. We will show that the probability of $\mathcal{F}_t$ is sufficiently small if either \eqref{eq:gmm matching cond1} or \eqref{eq:gmm matching cond2} holds. Denote this upper bound by $f(t)$. Since there are ${n \choose t}(t-1)!$ ways to choose $t$ distinct, cyclically ordered nodes, we obtain
 \begin{equation}
     \E(|\mathcal{M}|) \leq \sum\limits_{t=2}^n t f(t){n \choose t} (t-1)!
 \end{equation}
where $t=1$ cannot yield a mismatch by itself. If $f(t)$ is made sufficiently small, it follows that $\E\bigl(|\mathcal{M}|\bigr)=o(1)$, and by Markov's inequality, $\P\bigl(|\mathcal{M}|=0\bigr)=1-o(1)$. Achieving a suitably small $f(t)$ necessitates either \eqref{eq:gmm matching cond1} or \eqref{eq:gmm matching cond2}. A rigorous proof can be found in Section~\ref{sec:proof of gmm matching ach}.

\begin{remark}[Technical Novelty in the Proof of Theorem \ref{thm:gmm matching ach}]\label{rem:tech_nov_Thm1}
In the proof of Theorem~\ref{thm:gmm matching ach}, we analyze a minimum-distance estimator \eqref{eq:estimator-1} that minimizes the total squared distance across all pairs of node attributes. While this estimator coincides with the MAP rule when the community labels in both databases are known (as shown in Eq.~\eqref{eq:MAP cgmms}), in our setting the labels are unknown. Although similar estimators were previously studied in  \cite{KN22} under the assumption that node attributes are uniformly distributed, resulting in symmetry in pairwise distances, our setup introduces new challenges: the distance between two nodes depends on whether they belong to the same community or not.
In particular, as discussed in Remark \ref{rem: Int_cond_Thm1},
\begin{itemize}
    \item The distance between a correctly matched pair follows a chi-squared distribution $\chi^2(d)$ scaled by $2(1 - \rho)$, with mean $2(1 - \rho)d$ and variance $8(1 - \rho)^2 d$.
    \item An incorrect pair from the \emph{same community} follows a scaled $\chi^2(d)$ distribution with scale 2, with mean $2d$ and variance $8d$.
    \item An incorrect pair from \emph{different communities} follows a \emph{non-central} chi-squared distribution $\chi^2(d, 2\|\boldsymbol{\mu}\|^2)$ scaled by $2$, with mean $4\|\boldsymbol{\mu}\|^2 + 2d$ and variance $32\|\boldsymbol{\mu}\|^2 + 8d$.
\end{itemize}

While the correct pair has the smallest expected distance, the incorrect pairs, especially those from different communities, have higher variance, which may dominate and introduce errors in the matching. Furthermore, the dependence on unknown community labels breaks the distance symmetry assumed in previous analyses  \cite{KN22} and complicates the estimation of the probability of \emph{cyclic errors} \eqref{eq:t augment}. We address this by introducing two high-probability ``good" events, $\mathcal{A}_1$ and $\mathcal{A}_2$ (defined in Eq.~\eqref{eqn:def_A1} and~\eqref{eqn:def_A2}), under which we can tightly bound $\mathbb{P}(\mathcal{F}_t)$.
Specifically, we show that $\mathbb{P}(\mathcal{F}_t \cap \mathcal{A}_1)$ and $\mathbb{P}(\mathcal{F}_t \cap \mathcal{A}_2)$
is sufficiently small (in Lemma \ref{lem:upperbound F_t}), enabling exact matching when
\[
\frac{d}{4} \log \frac{1}{1 - \rho^2} \geq (1 + \epsilon) \log n,
\]
which also matches the necessary condition in Theorem~\ref{thm:gmm matching imp}.

The events $\mathcal{A}_1$ and $\mathcal{A}_2$, which occur with high probability when either $\|\boldsymbol{\mu}\|^2 \ge 2 \log n + \omega(1)$ or $d = \omega(\log n)$, are carefully designed to ensure that incorrect pairs from different communities are unlikely to yield higher error probability than incorrect pairs from the same community. This allows us to achieve exact matching \emph{without performing community detection}, which is our ultimate goal. In other words, we demonstrate that exact matching can be achieved as an intermediate step, even in the absence of label recovery, and can be subsequently leveraged to enhance community detection.
\end{remark}

\subsection{Proof Sketch of Theorem \ref{thm:csbm matching ach}}
Exact matching in correlated CSBMs is proved via a \emph{two-step} algorithm:

\begin{enumerate}
    \item $k$-core matching using edges:
    We choose 
    \[
k=\frac{\log n}{(\log \log n)^2}\vee \frac{nps^2}{(\log nps^2)^2}.
    \]
    Applying $k$-core matching with this $k$ matches $n - o(n)$ nodes correctly, leaving a set $F$ of unmatched nodes satisfying
\begin{equation}
    |F|\leq  n^{1-\frac{ns^2(p+q)}{2\log n}+o(1)} .
\end{equation}
    \item Matching the remainder via node attributes:  
    The remaining unmatched nodes in $F$ form a correlated Gaussian Mixture Model. Under assumptions \eqref{eq:csbm matching cond2} and \eqref{eq:csbm matching cond1},  Theorem~\ref{thm:gmm matching ach} applies, ensuring that these nodes can also be matched correctly using their node attributes.
\end{enumerate}
A full proof of Theorem \ref{thm:csbm matching ach}, detailing each step, can be found in Section~\ref{sec:proof of csbm matching ach}.

\begin{remark}[Technical Novelty in the Proof of Theorem \ref{thm:csbm matching ach}]
The main technical challenge in proving the achievability of exact matching in CCSBMs lies in \emph{how to combine the two distinct sources of correlation}, the correlated edges and the correlated node attributes, to achieve the tight information-theoretic bound \eqref{eq:csbm matching cond1}:
\beq\nonumber
ns^2\frac{p+q}{2}+\frac{d}{4} \log \frac{1}{1-\rho^2} \geq (1+\epsilon) \log n,
\eeq
which is tight only when the signals from both edge and attribute correlations are optimally integrated.

To this end, we propose a two-step matching procedure, described above. First, we use $k$-core matching to recover a partial matching based solely on edge information. Then, we apply minimum-distance matching to the remaining unmatched nodes using their attribute vectors.
The $k$-core matching step recovers the correspondence for all but $n^{1 - \frac{ns^2(p + q)}{2 \log n}+o(1)}$ nodes. A key reason we adopt $k$-core matching in the first step, despite its exponential complexity in general, is that it not only identifies a correct partial matching, but also reveals which specific nodes are matched (those in the $k$-core) and which are not. This property distinguishes it from other partial matching methods that offer similar recovery guarantees but do not explicitly indicate which nodes have been matched. Knowing the identity of matched versus unmatched nodes is crucial for the second step.

In the second step, we apply minimum-distance matching to the unmatched node set. Since the size of this set is $n^{1 - \frac{ns^2(p + q)}{2 \log n}}$, it suffices to achieve an signal-to-noise ratio (SNR) of
\[
\frac{d}{4} \log \frac{1}{1 - \rho^2} \geq \log\left(n^{1 - \frac{ns^2(p + q)}{2 \log n}}\right),
\]
which yields the information-theoretically tight condition \eqref{eq:csbm matching cond1}.

Moreover, while our analysis of $k$-core matching builds on prior theoretical results~\cite{CKNP20, RS23, GRS22}, most previous works provided only coarse estimates of the matched set size, typically of the form $\Omega(n)$ or $n - o(n)$, or derived exact sizes only under the assumption that $p, q = \Theta(\log n / n)$. Our contribution generalizes this analysis across a broader range of $p, q$ regimes.
Since smaller values of $k$ yield larger matched sets but lower accuracy, and larger values of $k$ yield smaller matched sets but higher accuracy, we select
$
k = \frac{\log n}{(\log \log n)^2} \vee \frac{nps^2}{(\log(nps^2))^2}
$
to balance this trade-off. Under these conditions, we demonstrate that it is still possible to recover a sufficiently large and reliable matched node set for the two-step procedure to succeed.
\end{remark}

\section{Discussion and Open Problems}\label{sec:discu and open}

In this work, we investigated the problem of community recovery in the presence of two correlated networks. We introduced, for the first time, the \emph{Correlated Gaussian Mixture Model (CGMM)}, which captures correlation through node attributes, and the \emph{Correlated Contextual Stochastic Block Model (CCSBM)}, which incorporates both correlated node attributes and edge structures. To the best of our knowledge, CGMMs and CCSBMs are the first statistical models that jointly address the alignment (via graph matching) and integration of two correlated networks with latent community structure, while simultaneously accounting for correlations in both edge connections and node-level features.

We showed that in certain parameter regimes where exact community recovery is information-theoretically impossible using a single network, the availability of a second, correlated network can provide powerful side information, rendering exact recovery feasible. To establish these results, we first derived sharp conditions under which \emph{exact matching} between the two networks is achievable. Notably, we demonstrated that exact matching becomes significantly more tractable when both node attributes and edge structures are correlated, compared to scenarios where only one modality is present. We conclude by outlining several open problems and potential extensions motivated by our findings.

\subsubsection{Closing the information-theoretic gap for exact matching}
Consider first the case of Correlated Gaussian Mixture Models. In Theorem~\ref{thm:gmm matching ach}, either \(d=\omega(\log n)\) or \(\|\boldsymbol{\mu}\|^2 \ge 2\log n + \omega(1)\) must hold for exact matching. In contrast, in the \emph{community-free} correlated Gaussian database model (i.e., \(\boldsymbol{\mu}=0\)), Dai et al.~\cite{DCK19} showed that exact matching is possible with
$\frac{d}{4}\log \frac{1}{1-\rho^2}\geq \log n + \omega(1)$ 
without further constraints on \(d\). Meanwhile, Theorem~\ref{thm:gmm matching imp} leaves some gaps, especially for \(d=\Omega(n)\). We hypothesize that higher-dimensional attributes should make exact matching progressively easier, suggesting that
$\frac{d}{4}\log \frac{1}{1-\rho^2}\geq (1+\epsilon) \log n$
may indeed be the core information-theoretic threshold, even without additional conditions on \(\boldsymbol{\mu}\) and \(d\). Resolving this would also address the same gap in the CCSBM setting, since the latter relies on CGMMs for unmatched nodes.

\subsubsection{Closing the information-theoretic gap for exact community recovery}
Theorems~\ref{thm:gmm recovery ach} and~\ref{thm:csbm recovery ach} establish achievable regions for exact community recovery in CGMMs and CCSBMs under the assumption that exact matching is possible. Comparing these to the converse results (Theorems~\ref{thm:gmm recovery imp} and~\ref{thm:csbm recovery imp}) reveals that the stated conditions are not necessarily tight. Thus, it is natural to ask whether \emph{exact matching} is truly needed for exact community recovery.  
Indeed, in correlated SBMs, R\'acz and Sridhar~\cite{RS21} used exact matching to derive conditions for community recovery, whereas Gaudio et al.~\cite{GRS22} refined the threshold by using \emph{partial matching} (e.g., $k$-core). Similarly, one might conjecture that for correlated GMMs there exists a constant \(\kappa\) such that, even if $\frac{d}{4} \log \frac{1}{1-\rho^2} \geq (1-\kappa) \log n$
there exists an estimator \(\tilde{\bssigma}\) that can exactly recover \(\bssigma\) even without full matching. Exploring this possibility is an intriguing open problem.

\subsubsection{Generalizing to more communities}
Our analysis has focused on the simplest case of two communities. A natural next step is to consider models with \(r \ge 2\) communities. It would be interesting to investigate whether the conditions for exact matching and exact community recovery generalize in a straightforward manner or require fundamentally new techniques.

\subsubsection{Multiple correlated graphs}
We have considered only two correlated graphs. Another direction is to explore the setting where more than two correlated graphs are given. For example, Ameen and Hajek~\cite{AH24} established exact matching thresholds for $r \ge 2$ correlated Erd\H{o}s-R\'enyi graphs when the average degree is on the order of \(\log n\). An open question is to characterize exact matching in the presence of $r$ correlated graphs (possibly with node attributes) and then design optimal strategies for combining their edge and attribute information to further improve community recovery. We conjecture that taking the union of edges and averaging node attributes across multiple graphs may yield the best results.

\subsubsection{Efficient algorithms}
From a computational viewpoint, there remain major challenges. Although \cite{abbe2022} showed that exact community recovery can be done with spectral methods in Gaussian Mixture Models or Contextual Stochastic Block Models in \(O(n^3)\) time, exact matching under CGMMs currently relies on a Hungarian algorithm to minimize pairwise distances, incurring \(O(n^2 d + n^3)\) complexity. For CCSBMs, the situation is more severe: the $k$-core method requires examining all matchings, leading to $\Theta(n!)$ complexity. Developing \emph{polynomial-time} algorithms that achieve $k$-core-level performance for exact matching--and thus enable more efficient community recovery in CCSBMs would be a significant breakthrough for large-scale network analysis.

\subsubsection{A Closer Model to the Real World}
The Stochastic Block Model (SBM) assumes that edge probabilities depend solely on community memberships and that edges are generated independently. However, real-world networks often exhibit spatial or geometric structure. A representative model that captures these characteristics is the Geometric Stochastic Block Model (GSBM)~\cite{gaudio2024exact,gaudio2025sharp}, which incorporates node proximity to more accurately reflect real-world structures.
Additionally, in real networks, connection probabilities are not uniform but vary according to node degrees. The Degree-Corrected Stochastic Block Model (DC-SBM)~\cite{karrer2011stochastic} accounts for this heterogeneity by allowing degree variability across nodes.
Based on these models, an interesting problem is to investigate whether it is possible to construct correlated networks by sampling two subgraphs and to examine whether the $k$-core matching approach can be extended in this setting to identify the conditions required for exact matching.

\section{Proof of Theorem \ref{thm:gmm matching ach}: \\Achievability of Exact Matching in Correlated Gaussian Mixture Models}\label{sec:proof of gmm matching ach}
 We analyze the estimator \eqref{eq:estimator-1}, which finds a permutation that minimizes the sum of attribute distances, and establish the conditions under which no mismatched node pairs arise. Our proof technique builds on the approach of \cite{KN22}, where the estimator \eqref{eq:estimator-1} was analyzed in the context of geometric partial matching without community structures, assuming an identical distribution for all node attribute vectors.  In contrast, we analyze the correlated Gaussian Mixture Models where node attribute distributions vary with the unknown community labels. We demonstrate that the estimator \eqref{eq:estimator-1} achieves the exact matching when conditions \eqref{eq:gmm matching cond1} or \eqref{eq:gmm matching cond2} are met.

First, for the analysis, we present a lemma that provides an upper bound for $\P\left(\mathcal{F}_t\right)$, where $\mathcal{F}_t$ is defined in \eqref{eq:t augment}. For $t\geq 2$ and $\alpha>0$, let us define 
\begin{equation}\label{eq:def S}
    S\left(\alpha, t\right):=\sum_{j=1}^{t-1}\log \left(1+\frac{1}{2 \alpha}\left(1-\cos \left(\frac{2 \pi}{t}j\right)\right)\right),
\end{equation}
\begin{equation}\label{eq:def I}
    I(\alpha) :=\int_0^1\log \left(1+\frac{1}{2 \alpha}\left(1-\cos \left(2 \pi x\right)\right)\right) d x .
\end{equation}
Additionally, let us define two events

\begin{equation}\label{eqn:def_A1}
\mathcal{A}_1:=\left\{    -\|\bsmu\|^2\leq \langle \bsmu, \bsz_i \rangle \leq \|\bsmu\|^2 \text{ for all } i\in[n] \right\}
\end{equation}
\begin{equation}
\begin{gathered}\label{eqn:def_A2}
        \mathcal{A}_2:=\left\{ \frac{\rho-\lambda}{2}\left\|\bsz_i-\bsz_j+\boldsymbol{\mu}+\frac{1-\lambda}{\rho-\lambda} \boldsymbol{\mu}\right\|^2 \geq \frac{\rho-\lambda}{2}\left(1+\frac{1-\lambda}{\rho-\lambda}\right)^2\|\boldsymbol{\mu}\|^2-2(1-\lambda)\|\boldsymbol{\mu}\|^2 \text{ and } \right. \\  \left.  \frac{\rho-\lambda}{2}\left\|\bsz_i-\bsz_j+\boldsymbol{\mu}-\frac{1-\lambda}{\rho-\lambda} \boldsymbol{\mu} \right\|^2 \geq \frac{\rho-\lambda}{2}\left(1-\frac{1-\lambda}{\rho-\lambda}\right)^2\|\boldsymbol{\mu}\|^2 \text{ for all distinct } i,j\in[n] \right\}.
\end{gathered}
\end{equation}

\begin{lemma}\label{lem:upperbound F_t}
    For any $t$ distinct integers $i_1,...,i_t \in [n]$, it holds that
    \begin{equation}\label{eq:upperbound F_t1}
        \P\left(\mathcal{F}_t\right) \leq \exp\left(-\frac{d}{2}S\left(\frac{1-\rho^2}{\rho^2},t\right) \right)+\P(\mathcal{A}^c_1); 
    \end{equation}
    and for some $\lambda \in (0,\rho)$, it holds that
     \begin{equation}\label{eq:upperbound F_t2}
        \P\left(\mathcal{F}_t\right) \leq  \exp\left(-\frac{d}{2}S\left(\frac{1-\rho^2}{\lambda^2},t\right) \right)+\P(\mathcal{A}^c_2) .
    \end{equation}
\end{lemma}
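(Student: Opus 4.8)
\textbf{Proof proposal for Lemma~\ref{lem:upperbound F_t}.}

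The plan is to bound $\P(\mathcal{F}_t)$ by conditioning on the community-label pattern of the cycle $i_1,\dots,i_t$ and exploiting the fact that, up to the random signs $\sigma_{i_k}\sigma_{i_{k+1}}$, the summands $Z_{i_k i_k}$ and $Z_{i_k i_{k+1}}$ have an explicit joint Gaussian-quadratic structure. Recall $Z_{ij}=\norm{\bsx_i-\bsy_j}^2$ and, with $\pi_*=\mathrm{id}$, $\bsy_i=\bsy_i'=\bsmu\sigma_i+\rho\bsz_i+\sqrt{1-\rho^2}\bsw_i$. Thus $\bsx_{i_k}-\bsy_{i_k}=(1-\rho)\bsz_{i_k}-\sqrt{1-\rho^2}\bsw_{i_k}$, which is mean-zero, while $\bsx_{i_k}-\bsy_{i_{k+1}}=\bsmu(\sigma_{i_k}-\sigma_{i_{k+1}})+\bsz_{i_k}-\rho\bsz_{i_{k+1}}-\sqrt{1-\rho^2}\bsw_{i_{k+1}}$ acquires a mean term that vanishes when $i_k,i_{k+1}$ are in the same community and equals $\pm2\bsmu$ otherwise. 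The difference $D:=\sum_k Z_{i_k i_k}-\sum_k Z_{i_k i_{k+1}}$ is therefore a quadratic form in the Gaussian vector $(\bsz_{i_1},\dots,\bsz_{i_t},\bsw_{i_1},\dots,\bsw_{i_t})$ plus a deterministic (label-dependent) shift and linear cross terms $\langle\bsmu,\bsz_{i_k}\rangle$; $\mathcal{F}_t=\{D\le0\}$. First I would decompose $D=D_0+D_\mu$ where $D_0$ is the part one would get if all labels along the cycle agreed (the ``community-free'' quantity analyzed in \cite{KN22}) and $D_\mu$ collects every contribution caused by a sign flip $\sigma_{i_k}\ne\sigma_{i_{k+1}}$.

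For the clean part $D_0$, I would reproduce the chi-square/MGF computation of \cite{KN22}: diagonalizing the circulant quadratic form via the discrete Fourier transform, the relevant eigenvalues are $1-\cos(2\pi j/t)$, and an exponential-Markov (Chernoff) bound on $\P(D_0\le0)$ optimized over the tilting parameter yields exactly $\exp(-\tfrac d2 S(\alpha,t))$ with $\alpha=\tfrac{1-\rho^2}{\rho^2}$ --- this is where the function $S(\alpha,t)$ in \eqref{eq:def S} comes from, the $\log(1+\tfrac1{2\alpha}(1-\cos))$ being the log-MGF of the Fourier mode at the optimal tilt. For the first bound \eqref{eq:upperbound F_t1} I would then show that on the event $\mathcal{A}_1$, every stray term in $D_\mu$ --- the deterministic shifts $\pm\norm{\bsmu}^2$ and the linear terms $\pm2\langle\bsmu,\bsz_{i_k}\rangle$ bounded by $2\norm{\bsmu}^2$ on $\mathcal{A}_1$ --- is nonnegative in the direction that \emph{helps}, i.e.\ the label mismatches only make $D$ larger (a mismatched pair pushes $Z_{i_k i_{k+1}}$ up by a mean term $4\norm{\bsmu}^2$ minus a fluctuation controlled on $\mathcal{A}_1$), so $\{D\le0\}\cap\mathcal{A}_1\subseteq\{D_0\le0\}$ and $\P(\mathcal{F}_t)\le\P(D_0\le0)+\P(\mathcal{A}_1^c)$. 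The second bound \eqref{eq:upperbound F_t2} is the more delicate variant: here one does not assume $\mathcal{A}_1$, so a mismatched pair could in principle have $\langle\bsmu,\bsz_{i_k}\rangle$ large and negative. Instead I would introduce the slack parameter $\lambda\in(0,\rho)$: write the mismatch contribution as $\tfrac{\rho-\lambda}{2}\norm{\bsz_{i_k}-\bsz_{i_{k+1}}\pm\bsmu\pm\tfrac{1-\lambda}{\rho-\lambda}\bsmu}^2$ plus a remainder, recognize that on $\mathcal{A}_2$ this square is bounded below by the quantity appearing in \eqref{eqn:def_A2}, and absorb the remainder into a modified clean quantity whose Chernoff bound now has the smaller correlation-like parameter $\lambda$ in place of $\rho$, giving $\exp(-\tfrac d2 S(\tfrac{1-\rho^2}{\lambda^2},t))$. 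Then $\P(\mathcal{F}_t)\le\exp(-\tfrac d2 S(\tfrac{1-\rho^2}{\lambda^2},t))+\P(\mathcal{A}_2^c)$.

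The main obstacle is the careful bookkeeping in the $D_\mu$ term: because each edge $(i_k,i_{k+1})$ of the cycle may or may not be a community boundary, and the cycle must cross the boundary an even number of times, there are many label patterns, and one must verify uniformly over all of them that the net effect of the mismatches is ``pushing $D$ up,'' after the relevant good event is imposed. The algebraic identity that makes \eqref{eqn:def_A2} exactly the right threshold --- completing the square in $\bsz_{i_k}-\bsz_{i_{k+1}}+\bsmu\pm\tfrac{1-\lambda}{\rho-\lambda}\bsmu$ and matching the cross terms $-2(1-\lambda)\norm{\bsmu}^2$ --- is the crux, and getting the constant $\tfrac{1-\lambda}{\rho-\lambda}$ and the sign conventions right is where most of the work lies; everything else is the standard circulant-Chernoff argument of \cite{KN22} applied to $D_0$ (resp.\ its $\lambda$-modified version). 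I would also note that the union bound over the $\binom{n}{t}(t-1)!$ choices of cycle is deferred to the main proof (Section~\ref{sec:proof of gmm matching ach}) and that there $S(\alpha,t)\ge t\,I(\alpha)$ via the integral comparison \eqref{eq:def I}, but that is not needed for the lemma itself.
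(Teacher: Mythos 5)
Your overall architecture matches the paper's: handle matched and mismatched pairs separately, reduce the right-hand side of the cycle inequality to a circulant quadratic form $\bsL^{C_t}\otimes\bsI_d$, evaluate the resulting Gaussian-quadratic MGF via the determinant formula to produce $\exp(-\tfrac d2 S(\cdot,t))$, and use $\mathcal{A}_1$ (resp.\ the $\lambda$-slack completion of squares behind $\mathcal{A}_2$) to control the label-mismatch contributions. However, there is a genuine gap in your central reduction $\{D\le 0\}\cap\mathcal{A}_1\subseteq\{D_0\le 0\}$. When you expand $Z_{i_k i_{k+1}}=\norm{\bsmu(\sigma_{i_k}-\sigma_{i_{k+1}})+\bsz_{i_k}-\rho\bsz_{i_{k+1}}-\sqrt{1-\rho^2}\,\bsw_{i_{k+1}}}^2$, the mismatch part $D_\mu$ contains not only the deterministic shifts $\norm{\bsmu}^2(\sigma_{i_k}-\sigma_{i_{k+1}})^2$ and the linear terms $\langle\bsmu,\bsz_{i_k}\rangle(\sigma_{i_k}-\sigma_{i_{k+1}})$, but also cross terms $\sqrt{1-\rho^2}\,\langle\bsmu(\sigma_{i_k}-\sigma_{i_{k+1}}),\bsw_{i_{k+1}}\rangle$. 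The events $\mathcal{A}_1,\mathcal{A}_2$ constrain only $\langle\bsmu,\bsz_i\rangle$ and $\norm{\bsz_i-\bsz_j\pm(\cdots)\bsmu}^2$; they say nothing about $\bsw$, so these cross terms are unbounded mean-zero Gaussians and $D_\mu$ cannot be signed pointwise on the good event. The containment fails, and with it your claim that the mismatches ``only push $D$ up.''

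The paper avoids this by never separating the mean from the $\bsw$-linear form: it writes the inequality as
\begin{equation*}
\sqrt{1-\rho^2}\sum_{i=1}^t\left\langle \bsx_i-\bsx_{i+1},\,\bsw_{i+1}\right\rangle \;\ge\; \text{(RHS)},
\end{equation*}
so that, conditionally on the $\bsx_i$'s, the left side is a single Gaussian with variance $(1-\rho^2)\,\bsx^\top(\bsL^{C_t}\otimes\bsI_d)\bsx$ \emph{including the mean shifts}; the good event is then used only to show that the RHS dominates $\tfrac{\rho}{2}\bsx^\top(\bsL^{C_t}\otimes\bsI_d)\bsx$ (resp.\ $\tfrac{\lambda}{2}$ for the second bound), i.e.\ the same quantity as the conditional variance. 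The Gaussian tail bound then gives $\E_{\bsx}\exp(-\tfrac{\rho^2}{8(1-\rho^2)}\bsx^\top(\bsL^{C_t}\otimes\bsI_d)\bsx)$, and the non-centrality of this quadratic form (coming from the means $\bsmu\sigma_i$) only contributes an extra factor $\le 1$ by positive semi-definiteness of $\bsI_{dt}-[\bsI_{dt}+\tfrac{\rho^2}{4(1-\rho^2)}\bsL^{C_t}\otimes\bsI_d]^{-1}$. Your argument can be repaired by adopting this conditioning structure — match the lower bound on the RHS to the conditional variance of the $\bsw$-form rather than to a community-free surrogate $D_0$ — after which your $\lambda$-completion-of-squares for \eqref{eq:upperbound F_t2} is essentially the paper's Eq.~\eqref{eq:lambda_t3_lower}. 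As written, though, the key step does not go through.
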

We analyze $\P(\mathcal{F}_t)\leq \P(\mathcal{F}_t\cap\mathcal{A}_1 )+\P(\mathcal{A}^c_1)$ and $\P(\mathcal{F}_t)\leq \P(\mathcal{F}_t\cap\mathcal{A}_2 )+\P(\mathcal{A}^c_2)$ to derive \eqref{eq:upperbound F_t1} and \eqref{eq:upperbound F_t2}, respectively. 
The full proof of Lemma \ref{lem:upperbound F_t} can be found in Section \ref{sec:proof of lemma upperbound f_t}. 
Furthermore, it is known that  $S(\alpha,t)$ has the following lower bound:

\begin{lemma}[Corollary 2.1 in \cite{KN22}]\label{lem:cor in kn}
For all $t_0\geq 2$ and $t>t_0$, we have
\begin{equation}
    S(\alpha,t)>S(\alpha,t_0)+(t-t_0) I(\alpha),
\end{equation}
where  $t_0 I(\alpha)>S(\alpha,t_0) $.
\end{lemma}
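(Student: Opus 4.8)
The plan is to obtain an exact closed form for the quantity $D(t) := S(\alpha,t) - t\,I(\alpha)$ via the Fourier series of the integrand, and then read off both assertions from it. Write $g(x) := \log\bigl(1 + \tfrac{1}{2\alpha}(1-\cos(2\pi x))\bigr)$ and $\beta := \tfrac{1}{2\alpha} > 0$, so that $S(\alpha,t) = \sum_{j=1}^{t-1} g(j/t)$ and $I(\alpha) = \int_0^1 g(x)\,dx$. Since $g(0)=0$, I may freely include the $j=0$ term and write $S(\alpha,t) = \sum_{j=0}^{t-1} g(j/t)$, a sum over a full period, which is what makes discrete orthogonality applicable.

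First I would factor the argument of the logarithm as
\[
1 + \beta\bigl(1-\cos(2\pi x)\bigr) = A\bigl(1 - 2r\cos(2\pi x) + r^2\bigr),
\]
matching the constant and cosine coefficients. This forces $Ar = \beta/2$ and $A(1+r^2) = 1+\beta$, so $r$ solves $\beta r^2 - 2(1+\beta)r + \beta = 0$. The product of its two roots is $1$ and their sum is $2(1+\beta)/\beta > 2$, so exactly one root lies in $(0,1)$; I take that $r$, whence $A = \beta/(2r) > 0$. Using the standard expansion $\log(1 - 2r\cos\theta + r^2) = -2\sum_{k\ge1}\tfrac{r^k}{k}\cos(k\theta)$, which is absolutely convergent for $|r|<1$, I obtain
\[
g(x) = \log A - 2\sum_{k\ge1}\frac{r^k}{k}\cos(2\pi k x),
\qquad
I(\alpha) = \int_0^1 g(x)\,dx = \log A,
\]
the cosine terms integrating to zero over $[0,1]$.

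Next I would substitute this expansion into $S(\alpha,t)=\sum_{j=0}^{t-1}g(j/t)$ and exchange the finite outer sum with the absolutely convergent inner series. The discrete orthogonality identity $\sum_{j=0}^{t-1}\cos(2\pi k j/t) = t\,\mathds{1}(t \mid k)$ collapses the Fourier sum to multiples of $t$, yielding
\[
S(\alpha,t) = t\log A - 2\sum_{m\ge1}\frac{r^{mt}}{m} = t\,I(\alpha) + 2\log\bigl(1 - r^{t}\bigr),
\]
where the last equality uses $\sum_{m\ge1} x^m/m = -\log(1-x)$ with $x = r^t \in (0,1)$. Hence $D(t) = 2\log(1-r^t)$ in closed form.

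Both claims then follow immediately. Since $r\in(0,1)$ we have $0 < 1-r^t < 1$, so $D(t) = 2\log(1-r^t) < 0$, i.e. $S(\alpha,t_0) < t_0\,I(\alpha)$ for every $t_0\ge 2$, which is the stated inequality $t_0 I(\alpha) > S(\alpha,t_0)$. Moreover $r^t$ is strictly decreasing in $t$, so $1-r^t$ is strictly increasing and $D(t)$ is strictly increasing; thus for $t>t_0$ we have $D(t) > D(t_0)$, which rearranges to $S(\alpha,t) > S(\alpha,t_0) + (t-t_0)\,I(\alpha)$. The only points demanding care are verifying $r\in(0,1)$, which the product-of-roots argument handles cleanly, and justifying the termwise evaluation of the Fourier series on the grid; the latter is legitimate because $\sum_k r^k/k$ converges absolutely, after which the discrete orthogonality identity and the logarithm series are entirely routine.
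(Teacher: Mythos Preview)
Your proof is correct and self-contained. The paper does not give its own argument for this lemma; it simply imports it as Corollary~2.1 from \cite{KN22} (and the companion inequality $S(\alpha,t)<tI(\alpha)$ as Lemma~2.2 there), so there is no in-paper proof to compare against directly.

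Your route---factoring $1+\beta(1-\cos2\pi x)=A(1-2r\cos2\pi x+r^2)$ with $r\in(0,1)$, expanding the logarithm as a Fourier cosine series, and collapsing the sampled sum via discrete orthogonality---yields the exact identity $S(\alpha,t)-tI(\alpha)=2\log(1-r^{t})$, from which both assertions are immediate. As a consistency check, your $I(\alpha)=\log A$ agrees with the paper's Lemma~\ref{lem:I value}: writing $u=\sqrt{1+\alpha^{-1}}$ one finds $A=\bigl((1+u)/2\bigr)^2$, so $\log A=2\log\bigl((1+u)/2\bigr)$. The only analytic justifications needed---that the quadratic in $r$ has a root in $(0,1)$ (clear from the product and sum of roots), and that the finite sampling sum may be interchanged with the absolutely convergent Fourier series---you have addressed explicitly.
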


Using the above two lemmas, we can prove Theorem \ref{thm:gmm matching ach} as follows:

\begin{IEEEproof}[Proof of Theorem \ref{thm:gmm matching ach}]
    Recall that the estimator we use is $\hat{\pi} = \argmin_{\pi \in S_n} \sum\limits_{i=1}^n  Z_{i \pi(i)}$. We define the set of mismatched nodes by $ \mathcal{M} := \{ i\in [n] : \hat{\pi}(i) \neq \pi_*(i)\}$. To show that $\hat{\pi}$ can achieve exact matching, we will show that $\P(|\mathcal{M}|=0)=1-o(1)$. First, let us show that exact matching is possible when \eqref{eq:gmm matching cond1} holds. 
 When $\lVert \boldsymbol{\mu} \rVert^2 \geq 2\log n +\omega(1)$, we can have that
    \begin{equation}
    \begin{aligned}\label{eqn:bound_EM}
        \P(\mathcal{A}_1)  \geq 1-n \P\left(\langle \boldsymbol{\mu},\bsz_1\rangle \leq - \lVert \boldsymbol{\mu}\rVert^2 \text{ or } \langle \boldsymbol{\mu},\bsz_1\rangle \geq  \lVert \boldsymbol{\mu}\rVert^2 \right) \geq 1-2n e^{-\frac{1}{2}\lVert \boldsymbol{\mu} \rVert^2} = 1-o(1)
    \end{aligned}
    \end{equation}
    by the tail bound of normal distributions (Lemma \ref{lem:normal tail}).     
    Therefore,  we can obtain
    \begin{equation}\label{eq:EM1}
    \begin{aligned}
        \E(|\mathcal{M}|) &\leq \sum\limits_{t=2}^n \P(\mathcal{F}_t) t{n \choose t} (t-1)!\\
        &\stackrel{(a)}{\leq}  \sum\limits_{t=2}^n \ \exp\left(-\frac{d}{2}S\left(\frac{1-\rho^2}{\rho^2},t\right) +o(1)\right) n(n-1)\cdots(n-t+1)\\
        &\stackrel{(b)}{\leq} \sum\limits_{t=2}^n \ \exp\left(t \log n-\frac{d}{2}S\left(\frac{1-\rho^2}{\rho^2},t\right) +o(1)\right)\\
        &\stackrel{(c)}{\leq}  \sum\limits_{t=2}^n \ \exp\left(t \log n-\frac{d}{2}S\left(\frac{1-\rho^2}{\rho^2},2\right) -\frac{d}{2}(t-2)I\left(\frac{1-\rho^2}{\rho^2}\right )+o(1) \right)\\
        &\stackrel{(d)}{\leq} \sum\limits_{t=2}^n \ \exp\left(t \log n-\frac{dt}{4} S\left(\frac{1-\rho^2}{\rho^2},2\right)+o(1) \right).
    \end{aligned}
    \end{equation}
The inequality $(a)$ holds by the first result \eqref{eq:upperbound F_t1} in Lemma \ref{lem:upperbound F_t}. The inequality $(b)$ holds since $\frac{n(n-1)\cdots(n-t+1)}{t} \leq n^t$. 
The inequality $(c)$ holds by Lemma \ref{lem:cor in kn} with $t_0=2$. 
To show inequality $(d)$, we use Lemma~\ref{lem:S<tI}.  
In showing
\beq
\begin{split}\nonumber
& t\log n-\frac{d}{2} S\left(\frac{1-\rho^{2}}{\rho^{2}}, 2\right)-\frac{d}{2}(t-2) I\left(\frac{1-\rho^{2}}{\rho^{2}}\right)\leq t\log n-\frac{d t}{4} S\left(\frac{1-\rho^{2}}{\rho^{2}}, 2\right),
\end{split}
\eeq
first note that the left-hand side can be expressed as 
\beq\nonumber
\begin{split}
&t\log n-\frac{d}{2} S\left(\frac{1-\rho^{2}}{\rho^{2}}, 2\right)-\frac{d}{2}(t-2) I\left(\frac{1-\rho^{2}}{\rho^{2}}\right)\\
&=\log n\left(2 -\frac{d}{2\log n} S\left(\frac{1-\rho^{2}}{\rho^{2}}, 2\right)+(t-2)\left(1-\frac{d}{2\log n }I\left(\frac{1-\rho^{2}}{\rho^{2}}\right)\right)\right).
\end{split}
\eeq
Using Lemma~\ref{lem:S<tI}, we have 
$$
\frac{1}{2} S\left(\frac{1-\rho^{2}}{\rho^{2}}, 2\right)< I\left(\frac{1-\rho^{2}}{\rho^{2}}\right),
$$
which is equivalent to
$$
2-\frac{d}{2\log n} S\left(\frac{1-\rho^{2}}{\rho^{2}}, 2\right)> 2-\frac{d}{2\log n} 2I\left(\frac{1-\rho^{2}}{\rho^{2}}\right)=2\left(1-\frac{d}{2\log n} I\left(\frac{1-\rho^{2}}{\rho^{2}}\right)\right).
$$
Using $\left(1-\frac{d}{2\log n} I\left(\frac{1-\rho^{2}}{\rho^{2}}\right)\right)<\frac{1}{2}\left(2-\frac{d}{2\log n} S\left(\frac{1-\rho^{2}}{\rho^{2}}, 2\right)\right) $, we get
\beq\nonumber
\begin{split}
&\log n\left(2 -\frac{d}{2\log n} S\left(\frac{1-\rho^{2}}{\rho^{2}}, 2\right)+(t-2)\left(1-\frac{d}{2\log n }I\left(\frac{1-\rho^{2}}{\rho^{2}}\right)\right)\right)\\
&\leq \log n\left(2 -\frac{d}{2\log n} S\left(\frac{1-\rho^{2}}{\rho^{2}}, 2\right)+\frac{(t-2)}{2}\left(2-\frac{d}{2\log n} S\left(\frac{1-\rho^{2}}{\rho^{2}}, 2\right)\right)\right)\\
&=\log n\left(\frac{t}{2} \left(2-\frac{d}{2\log n} S\left(\frac{1-\rho^{2}}{\rho^{2}}, 2\right)\right)\right)=t\log n-\frac{d t}{4} S\left(\frac{1-\rho^{2}}{\rho^{2}}, 2\right),
\end{split}
\eeq
which implies $(d)$.

From the definition of $S(\alpha,t)$ in \eqref{eq:def S}, we have $ S\left(\frac{1-\rho^2}{\rho^2},2\right)= \log \frac{1}{1-\rho^2}$. Thus, if $\frac{d}{4} \log \frac{1}{1-\rho^2} \geq \log n +\omega(1)$, then  $ \log n-\frac{d}{4} S\left(\frac{1-\rho^2}{\rho^2},2\right) \leq -\omega(1)$. 
Therefore, if \eqref{eq:gmm matching cond1} holds, from \eqref{eqn:bound_EM} and $ \log n-\frac{d}{4} S\left(\frac{1-\rho^2}{\rho^2},2\right) \leq -\omega(1)$, we get
  \begin{equation}
        \E(|\mathcal{M}|) \leq \sum\limits_{t=2}^n  \exp\left(-\omega(1)\cdot t\right) = o(1).
    \end{equation}
 Finally, by using Markov's inequality, we can show that
\begin{equation}\label{eq:1}
    \P(|\mathcal{M}|\geq 1)\leq \E(|\mathcal{M}|) =o(1).
\end{equation}
Thus, it holds that $\P(|\mathcal{M}|=0)=1-o(1)$.

We will next show that the exact matching is possible with high probability when \eqref{eq:gmm matching cond2} holds.  As the case where $\| \bsmu \|^2 \geq 2 \log n +\omega(1)$ has already been considered in the proof above, we now assume that $\| \bsmu\|^2 = O(\log n)$. First, we will show that $\P(\mathcal{A}_2) \geq 1- o(1)$ when $d = \omega(\log n )$.

For $\bsz_1,\bsz_2 \sim \mathcal{N}(0,\bsI_d)$, let $\bsz=\frac{\bsz_1-\bsz_2}{\sqrt{2}} \sim \mathcal{N}(0,\bsI_d)$. 
We can have that 
\begin{equation}
    \begin{aligned}
        &\P\left( \frac{\rho-\lambda}{2}\left\|\bsz_1-\bsz_2+\boldsymbol{\mu}+\frac{1-\lambda}{\rho-\lambda} \boldsymbol{\mu}\right\|^2 \leq \frac{\rho-\lambda}{2}\left(1+\frac{1-\lambda}{\rho-\lambda}\right)^2\|\boldsymbol{\mu}\|^2-2(1-\lambda)\|\boldsymbol{\mu}\|^2  \right)\\
        &= \P \left(\left\| \bsz+\frac{\rho-2\lambda+1}{\sqrt{2}(\rho-\lambda)}\boldsymbol{\mu}  \right\|^2 \leq  \frac{1}{2}\left(\frac{\rho-2\lambda+1}{\rho-\lambda}\right)^2\|\boldsymbol{\mu}\|^2-\frac{2(1-\lambda)}{\rho-\lambda}\|\boldsymbol{\mu}\|^2 \right).
    \end{aligned}
\end{equation}
Then, $\left\| \bsz+\frac{\rho-2\lambda+1}{\sqrt{2}(\rho-\lambda)} \bsmu \right\|^2$ follows a noncentral chi-squared distribution ($\chi^2$ distribution) with $d$ degrees of freedom and noncentrality parameter $\frac{(\rho-2\lambda+1)^2}{2(\rho-\lambda)^2}\|\boldsymbol{\mu}\|^2$. By the tail bound for noncentral chi-squared distribution (Lemma \ref{lem:noncentral}), if $\frac{\left(d+\frac{2-2 \lambda}{\rho-\lambda}\lVert \boldsymbol{\mu} \rVert^2\right)^2}{d+\left(\frac{\rho-2 \lambda+1}{\rho-\lambda}\right)^2\lVert \boldsymbol{\mu} \rVert^2} \geq 8 \log n +\omega(1)$, then
we have 
\begin{equation}\label{eq:noncent1}
    \P\left( \left\| \bsz+\frac{\rho-2\lambda+1}{\sqrt{2}(\rho-\lambda)}\boldsymbol{\mu}  \right\|^2 \leq  \frac{1}{2}\left(\frac{\rho-2\lambda+1}{\rho-\lambda}\right)^2\|\boldsymbol{\mu}\|^2-\frac{2(1-\lambda)}{\rho-\lambda}\|\boldsymbol{\mu}\|^2 \right)\leq \exp\left(-2\log n -\omega(1) \right).
\end{equation}
Similarly, we can have that
\begin{equation}
    \begin{aligned}
        &\P\left(  \frac{\rho-\lambda}{2}\left\|\bsz_1-\bsz_2+\boldsymbol{\mu}-\frac{1-\lambda}{\rho-\lambda} \boldsymbol{\mu} \right\|^2 \leq \frac{\rho-\lambda}{2}\left(1-\frac{1-\lambda}{\rho-\lambda}\right)^2\|\boldsymbol{\mu}\|^2 \right)\\
        &= \P \left(\left\| \bsz+\frac{\rho-1}{\sqrt{2}(\rho-\lambda)}\boldsymbol{\mu}  \right\|^2 \leq  \frac{1}{2}\left(\frac{\rho-1}{\rho-\lambda}\right)^2\|\boldsymbol{\mu}\|^2 \right).
    \end{aligned}
\end{equation}
Then, $\left\| \bsz+\frac{\rho-1}{\sqrt{2}(\rho-\lambda)}\boldsymbol{\mu}  \right\|^2$ follows a noncentral chi-squared distribution ($\chi^2$ distribution)  with $d$ degrees of freedom and noncentrality parameter $\frac{(\rho-1)^2}{2(\rho-\lambda)^2}\|\boldsymbol{\mu}\|^2$. Again, by the tail bound for noncentral chi-squared distribution (Lemma \ref{lem:noncentral}), if $\frac{d^2}{d+\left(\frac{\rho-1}{\rho-\lambda}\right)^2\lVert \boldsymbol{\mu} \rVert^2} \geq 8 \log n +\omega(1)$, then
we have 
\begin{equation}\label{eq:noncent2}
    \P\left( \left\| \bsz+\frac{\rho-1}{\sqrt{2}(\rho-\lambda)}\boldsymbol{\mu}  \right\|^2 \leq  \frac{1}{2}\left(\frac{\rho-1}{\rho-\lambda}\right)^2\|\boldsymbol{\mu}\|^2\right)\leq \exp\left(-2\log n -\omega(1) \right).
\end{equation}
Under the assumption $d=\omega(\log n)$ in \eqref{eq:gmm matching cond2} and $\| \bsmu\|^2 = O(\log n)$, for any constant $r\in(0,1)$ and the associated $\lambda=r \rho$, we have $\frac{\left(d+\frac{2-2 \lambda}{\rho-\lambda}\lVert \boldsymbol{\mu} \rVert^2\right)^2}{d+\left(\frac{\rho-2 \lambda+1}{\rho-\lambda}\right)^2\lVert \boldsymbol{\mu} \rVert^2} \geq 8 \log n +\omega(1)$ and  $\frac{d^2}{d+\left(\frac{\rho-1}{\rho-\lambda}\right)^2\lVert \boldsymbol{\mu} \rVert^2} \geq 8 \log n +\omega(1)$. Therefore, combining \eqref{eq:noncent1} and \eqref{eq:noncent2} and taking a union bound over all distinct $i,j\in[n]$, we obtain that \begin{equation}
    \P(\mathcal{A}_2)\geq 1-o(1).
\end{equation}

Thus, we can obtain that
\begin{equation}\label{eq:EM2}
\begin{aligned}
\E(|\mathcal{M}|) &\leq \sum\limits_{t=2}^n \P(\mathcal{F}_t) t{n \choose t} (t-1)!\\
&\stackrel{(e)}{\leq} \sum\limits_{t=2}^n \ \exp\left(-\frac{d}{2}S\left(\frac{1-\rho^2}{\lambda^2},t\right) +o(1)\right) n(n-1)\cdots(n-t+1)\\
&\leq \sum\limits_{t=2}^n \ \exp\left(t \log n-\frac{d}{2}S\left(\frac{1-\rho^2}{\lambda^2},t\right)+o(1) \right)\\
&\stackrel{(f)}{\leq}  \sum\limits_{t=2}^n \ \exp\left(t \log n-\frac{d}{2}S\left(\frac{1-\rho^2}{\lambda^2},2\right) -\frac{d}{2}(t-2)I\left(\frac{1-\rho^2}{\lambda^2}\right )+o(1) \right)\\
&\stackrel{(g)}{\leq} \sum\limits_{t=2}^n \ \exp\left(t \log n-\frac{dt}{4} S\left(\frac{1-\rho^2}{\lambda^2},2\right) +o(1)\right).
\end{aligned}
\end{equation}
The inequality $(e)$ holds by \eqref{eq:upperbound F_t2} in Lemma \ref{lem:upperbound F_t}, the inequality $(f)$ holds by Lemma \ref{lem:cor in kn} with $t_0=2$, and the inequality $(g)$ holds  by $\frac{1}{2} S\left(\frac{1-\rho^2}{\lambda^2},2\right)<I\left(\frac{1-\rho^2}{\lambda^2}\right )$, which can be shown using Lemma \ref{lem:S<tI}. Moreover, when $\frac{d}{4} \log \frac{1}{1-\rho^2} \geq (1+\epsilon)\log n $, there exists some constant $r\in (0,1)$, depending on $\epsilon$, such that $\frac{d}{4} \log \left(1+\frac{r^2\rho^2}{1-\rho^2}\right) \geq  (1+\epsilon/2)\log n $.  For such an $r$, let $\lambda=r\rho$. From the definition of $S(\alpha,t)$ in \eqref{eq:def S}, we have $ S\left(\frac{1-\rho^2}{\lambda^2},2\right)= \log \left(1+\frac{r^2\rho^2}{1-\rho^2}\right)$.
Thus, we have  $ \log n-\frac{d}{4} S\left(\frac{1-\rho^2}{\lambda^2},2\right) \leq -\frac{\epsilon}{2}\log n$. So if 
\eqref{eq:gmm matching cond2} holds, we can get
\begin{equation}
    \E(|\mathcal{M}|) \leq \sum\limits_{t=2}^n  \exp\left(-t\frac{\epsilon}{3} \log n \right) = o(1).
\end{equation}
 Finally, by using Markov's inequality, we can have that
\begin{equation}\label{eq:2}
    \P(|\mathcal{M}|\geq 1)\leq \E(|\mathcal{M}|) =o(1).
\end{equation}
Hence, it holds that $\P(|\mathcal{M}|=0)=1-o(1)$. By combining the results of \eqref{eq:1} and \eqref{eq:2}, we complete the proof of Theorem \ref{thm:gmm matching ach}.

\end{IEEEproof}

\section{Proof of Theorem \ref{thm:gmm matching imp}: \\Impossibility of Exact Matching in Correlated Gaussian Mixture Models}\label{sec:proof of gmm matching imp}

    Assume that the community labels  $\bssigma^1$ and $\bssigma^2$ are given. Without loss of generality, assume that at least half of the nodes, i.e., $n/2$ more, are assigned $+$ label. Let this set of nodes be denoted as $V'$. Additionally, assume that $\pi_*\left\{V \backslash V' \right\}$ and the mean vector $\boldsymbol{\mu}$ are also given. 
Then, we can consider that we are solving the database alignment problem for $|V'|$ nodes in the given correlated Gaussian databases.
Dai et al. \cite{DCK19} identified the conditions under which exact matching is impossible, and the results are as follows.
\begin{theorem}[Theorem 2 in \cite{DCK19}]\label{thm:dai imp}
Consider the correlated Gaussian databases $X,Y \in \mathbb{R}^{n\times d}$. Suppose that $1\ll d =O(\log n)$ and 
\begin{equation}
    \frac{d}{4}\log \frac{1}{1-\rho^2} \leq (1-\epsilon)\log n
\end{equation}
for an arbitrary small constant $\epsilon>0$. Then, there is no estimator that can exactly recover $\pi_*$ with probability $1-o(1).$
\end{theorem}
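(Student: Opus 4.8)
The plan is to establish the converse by showing that even the optimal estimator fails. Since $\pi_*$ is uniform over $S_n$, the MAP estimator coincides with maximum likelihood, and maximizing the jointly Gaussian likelihood of $(X,Y)$ over $\pi$---after discarding the terms $\sum_i\|\bsx_i\|^2$, $\sum_j\|\bsy_j\|^2$, and all mean-dependent terms, which are invariant under $\pi$---reduces to the linear assignment problem $\hat\pi_{\mathrm{MAP}}=\argmax_{\pi\in S_n}\sum_{i=1}^n\langle\bsx_i,\bsy_{\pi(i)}\rangle$. Because the shared mean $\bsmu$ contributes only $\pi$-invariant terms and cancels in every row difference, I may take $\bsmu=\mathbf 0$ and (relabeling) $\pi_*=\mathrm{id}$ without loss of generality. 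For distinct $i,j$ define the transposition gain $W_{ij}:=\langle\bsx_i-\bsx_j,\,\bsy_j-\bsy_i\rangle$, which equals the change in the ML objective when the images of $i$ and $j$ are swapped; thus $W_{ij}>0$ certifies a permutation strictly beating the truth. Setting $N:=\sum_{i<j}\indi(W_{ij}>0)$, we have $\{N\ge 1\}\subseteq\{\hat\pi_{\mathrm{MAP}}\neq\pi_*\}$, so $\P(\hat\pi_{\mathrm{MAP}}=\pi_*)\le\P(N=0)$; by optimality of MAP it suffices to show $N\ge 1$ with high probability.

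Next I would compute the single-pair confusion probability $p:=\P(W_{ij}>0)$. Writing $\bsu:=\bsx_i-\bsx_j$ and $\bsv:=\bsy_i-\bsy_j$, both are $\mathcal N(\mathbf 0,2\bsI_d)$ with independent coordinates and $\E[u_kv_k]=2\rho$, and $W_{ij}=-\langle\bsu,\bsv\rangle$. The orthogonal change of variables $a_k\propto u_k+v_k$, $b_k\propto u_k-v_k$ diagonalizes each coordinate and yields $\langle\bsu,\bsv\rangle=(1+\rho)A-(1-\rho)B$, with $A,B\sim\chi^2_d$ independent and mean $2\rho d>0$, so $\{W_{ij}>0\}=\{(1+\rho)A<(1-\rho)B\}$ is a lower-tail large-deviation event. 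A Chernoff bound over the $\chi^2$ moment generating functions, optimized at $\theta^\ast=\rho/(2(1-\rho^2))$, gives the exact rate
\[
p\;\le\;\exp\!\Bigl(-\tfrac{d}{2}\log\tfrac{1}{1-\rho^2}\Bigr),
\]
and the matching Cram\'er lower bound (valid since $1\ll d$) gives $p=\exp\!\bigl(-\tfrac d2\log\tfrac1{1-\rho^2}\,(1+o(1))\bigr)$.

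With the per-pair rate in hand, the first-moment step is immediate: $\E[N]=\binom n2 p$. The hypothesis $\tfrac d4\log\tfrac1{1-\rho^2}\le(1-\epsilon)\log n$ is equivalent to $\tfrac d2\log\tfrac1{1-\rho^2}\le 2(1-\epsilon)\log n$, and because $d=O(\log n)$ the correction $o(1)\cdot\tfrac d2\log\tfrac1{1-\rho^2}$ is only $o(\log n)$. Hence $\E[N]\ge\exp\bigl(2\log n-2(1-\epsilon)\log n-o(\log n)\bigr)=n^{2\epsilon-o(1)}\to\infty$, so confusable transpositions are present in expectation.

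Finally I would upgrade the first moment to $N\ge 1$ with high probability via Paley--Zygmund, $\P(N\ge1)\ge(\E N)^2/\E[N^2]$, which requires $\E[N^2]=(1+o(1))(\E N)^2$. Pairs $\{i,j\}$ and $\{k,l\}$ on disjoint index sets give independent increments and contribute at most $(\E N)^2$, while the diagonal contributes $\E N=o((\E N)^2)$. The delicate term---and the main obstacle---is pairs sharing one index, say $W_{ij}$ and $W_{il}$, which both depend on $(\bsx_i,\bsy_i)$. Conditioning on $(\bsx_i,\bsy_i)$ makes them independent, so their joint probability equals $\E[h^2]$, where $h:=\P(W_{ij}>0\mid\bsx_i,\bsy_i)$ with $\E h=p$; since there are $O(n^3)$ such triples, I must show $\E[h^2]\le n^{o(1)}p^2$ to keep their total contribution at $o(n^4p^2)=o((\E N)^2)$. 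The difficulty is that $h$ develops a heavy upper tail when $(\bsx_i,\bsy_i)$ is atypically large and aligned, and one must prove that these intermediate deviations inflate $\E[h^2]$ over $p^2$ by at most a sub-polynomial factor; this is precisely where $d=O(\log n)$ is used, since it keeps the relevant $e^{O(\sqrt d)}=n^{o(1)}$ corrections negligible against the $n^{2\epsilon}$ growth of $\E N$. Granting this bound, $\E[N^2]=(1+o(1))(\E N)^2$, whence $N\ge1$ with high probability, the MAP estimator fails, and by its optimality no estimator can achieve exact matching with probability $1-o(1)$.
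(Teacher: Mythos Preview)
This theorem is not proved in the paper; it is quoted from \cite{DCK19} and used as a black box in Section~\ref{sec:proof of gmm matching imp}. There is therefore no in-paper argument to compare against.

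Your outline is the natural second-moment route and is correct up to the point you flag. The reduction of MAP to $\argmax_\pi\sum_i\langle\bsx_i,\bsy_{\pi(i)}\rangle$, the centering at $\bsmu=\mathbf 0$, the diagonalization $\langle\bsu,\bsv\rangle=(1+\rho)A-(1-\rho)B$ with $A,B$ independent $\chi^2_d$, and the resulting rate $p=\exp\bigl(-\tfrac d2\log\tfrac1{1-\rho^2}(1+o(1))\bigr)$ are all valid; so is the first-moment conclusion $\E N=n^{2\epsilon-o(1)}\to\infty$.

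The genuine gap is exactly the one you identify: the bound $\E[h^2]\le n^{o(1)}p^2$ for transposition pairs sharing one index. Your $e^{O(\sqrt d)}$ heuristic points in the right direction but is not yet a proof; after conditioning on $(\bsx_i,\bsy_i)$ the event $\{W_{ij}>0\}$ becomes a shifted large deviation whose rate depends on $\|\bsx_i\|$, $\|\bsy_i\|$, and $\langle\bsx_i,\bsy_i\rangle$, and integrating $h^2$ against their joint law requires care. A cleaner route is to bypass the conditioning and bound $\P(W_{ij}>0,\,W_{il}>0)$ directly via a two-parameter Chernoff bound on the joint MGF of $(W_{ij},W_{il})$, which is an explicit Gaussian quadratic-form computation; the optimized exponent is $(2-o(1))\cdot\tfrac d2\log\tfrac1{1-\rho^2}$, and the hypothesis $d=O(\log n)$ ensures all subexponential prefactors are $n^{o(1)}$, after which Paley--Zygmund closes. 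Until that step is made rigorous, the argument is incomplete.
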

Assume that $1\ll d =O(\log n)$. Then, by applying Theorem \ref{thm:dai imp}, we can conclude that exact matching is impossible if $\frac{d}{4}\log \frac{1}{1-\rho^2} \leq (1-\epsilon) \log |V'| \leq (1-\epsilon) \log n$.

Wang et al. \cite{Wang22} also idenitfied the conditions under which exact matching is not possible, and the results are as follows.
\begin{theorem}[Theorem 19 in \cite{Wang22}]\label{thm:wang imp}
Consider the correlated Gaussian databases $X,Y \in \mathbb{R}^{n\times d}$. Suppose that $\frac{1}{\rho^2} -1 \leq \frac{d}{40}$ and 
\begin{equation}
    \frac{d}{4}\log \frac{1}{1-\rho^2} \leq \log n - \log d +C,
\end{equation}
for a constant $C>0$. Then, there is no estimator that can exactly recover $\pi_*$ with probability $1-o(1).$
\end{theorem}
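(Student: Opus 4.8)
The plan is to prove the converse by showing that the \emph{Bayes-optimal} estimator for exact recovery already fails. Since $\pi_*$ is uniform over $S_n$, the rule minimizing $\P(\hat\pi\neq\pi_*)$ is the maximum a posteriori---equivalently, since the prior is uniform, the maximum-likelihood---estimator, so it suffices to prove that this rule returns something other than $\pi_*$ with probability $1-o(1)$. Writing $\bsy_i=\rho\,\bsx_i+\sqrt{1-\rho^2}\,\bsw_i$ with $\bsw_i\sim\mathcal N(\mathbf 0,\bsI_d)$ independent of the $\bsx$'s (the common mean is known and cancels across all permutations, so we take it to be $\mathbf 0$), the log-likelihood of a permutation $\pi$ reduces, after discarding the permutation-invariant norm terms, to $L(\pi)=\sum_i\scalar{\bsx_i}{\bsy_{\pi(i)}}$. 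Thus the estimator fails as soon as some $\pi\neq\pi_*$ has $L(\pi)>L(\pi_*)$. Taking $\pi_*=\mathrm{id}$ without loss of generality and restricting to transpositions $(i\,j)$, this occurs whenever the swap gain $\Delta_{ij}:=\scalar{\bsx_i}{\bsy_j}+\scalar{\bsx_j}{\bsy_i}-\scalar{\bsx_i}{\bsy_i}-\scalar{\bsx_j}{\bsy_j}$ is positive, so I will show that $N:=\sum_{i<j}\indi(\Delta_{ij}>0)\geq 1$ with high probability.

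A substitution gives the clean identity $\Delta_{ij}=-\rho\norm{\bsu}^2+\sqrt{1-\rho^2}\,\scalar{\bsu}{\bsv}$, where $\bsu:=\bsx_i-\bsx_j$ and $\bsv:=\bsw_j-\bsw_i$ are independent $\mathcal N(\mathbf 0,2\bsI_d)$ vectors. Conditioning on $\bsu$ makes $\scalar{\bsu}{\bsv}$ centered Gaussian with variance $2\norm{\bsu}^2$, so $\P(\Delta_{ij}>0)=\E\,\Phi\!\big(\tfrac{\rho r}{\sqrt{1-\rho^2}}\big)$, where $r^2:=\tfrac12\norm{\bsu}^2\sim\chi^2_d$ and $\Phi$ is the standard Gaussian tail. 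I then evaluate this expectation by Laplace's method against the $\chi^2_d$ density: the exponent $-\tfrac{\rho^2 r^2}{2(1-\rho^2)}+\tfrac d2\log r^2-\tfrac{r^2}2$ is maximized at $r_\star^2=d(1-\rho^2)$, and collecting the Gamma-ratio and Laplace prefactors yields $\P(\Delta_{ij}>0)=\Theta\!\big(\tfrac1{\rho\sqrt d}(1-\rho^2)^{d/2}\big)$. The hypothesis $\tfrac1{\rho^2}-1\leq\tfrac d{40}$ enters exactly here: it forces the argument $\rho\sqrt d$ of $\Phi$ at the saddle to stay bounded below by a constant, so the sharp tail estimate $\Phi(x)\asymp x^{-1}e^{-x^2/2}$ is valid on the dominant region of the integral and the matching lower bound on $\P(\Delta_{ij}>0)$ survives. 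Taking first moments, $\E N=\binom n2\P(\Delta_{12}>0)$, and the stated condition $\tfrac d4\log\tfrac1{1-\rho^2}\leq\log n-\log d+C$ gives $\log\E N\to\infty$ (indeed $\log\E N\geq\tfrac32\log d-\log\rho+O(1)$ when $d\to\infty$, and $\E N=\Theta(n^2)$ when $d=O(1)$).

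To turn $\E N\to\infty$ into $\P(N\geq1)\to1$ I use the second-moment method, $\P(N\geq1)\geq(\E N)^2/\E(N^2)$, and bound $\E(N^2)$ according to how the two pairs overlap. Disjoint pairs $\{i,j\}$ and $\{k,l\}$ involve disjoint underlying Gaussians, so $\Delta_{ij}$ and $\Delta_{kl}$ are independent and contribute exactly $(1-o(1))(\E N)^2$. Pairs sharing a single index are the delicate case: conditioning on the shared coordinate $(\bsx_i,\bsy_i)$ makes $\Delta_{ij}$ and $\Delta_{ik}$ conditionally independent, so their joint probability equals $\E[g^2]$ for $g:=\P(\Delta_{ij}>0\mid\bsx_i,\bsy_i)$, while the product of marginals is $(\E g)^2=\P(\Delta_{12}>0)^2$.

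The crux, and the step I expect to be hardest, is to show that $\E[g^2]$ exceeds $(\E g)^2$ by at most a constant factor---that is, that the conditional swap probability $g$ concentrates around its mean. This reduces to a fluctuation bound for the Gaussian functional defining $g$, for which the condition $\tfrac1{\rho^2}-1\leq\tfrac d{40}$ is again essential to keep the relevant tails in their sub-Gaussian regime; granting it, the $O(n^3)$ index-sharing terms sum to $o((\E N)^2)$. Hence $\E(N^2)=(1+o(1))(\E N)^2$, Paley--Zygmund gives $\P(N\geq1)\to1$, and a beneficial transposition exists with high probability. Consequently the maximum-likelihood (hence MAP) estimator does not return $\pi_*$, and since this estimator is optimal for exact recovery, no estimator can recover $\pi_*$ with probability $1-o(1)$ under the stated hypotheses, which is the claim.
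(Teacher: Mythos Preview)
The paper does not prove this statement at all: it is quoted verbatim as Theorem~19 of \cite{Wang22} and used as a black box in the proof of Theorem~\ref{thm:gmm matching imp}. There is therefore no ``paper's own proof'' to compare against; the paper's contribution at this point is the reduction (via revealing labels and the partial matching on $[n]\setminus V'$) that lets it invoke the cited result on the remaining $|V'|$ nodes.

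As for your sketch on its own merits: the overall architecture---MAP equals ML, reduce to the existence of a beneficial transposition, first moment via Laplace on the $\chi^2_d$ integral, then Paley--Zygmund---is indeed the standard route, and your first-moment calculation is correct, including the role of the hypothesis $\tfrac{1}{\rho^2}-1\le \tfrac{d}{40}$ in keeping $\rho\sqrt d$ bounded below so that the Mills-ratio approximation to $\Phi$ is two-sided at the saddle. The genuine gap is exactly where you flag it: you assert that $\E[g^2]\le C\,(\E g)^2$ for $g=\P(\Delta_{ij}>0\mid \bsx_i,\bsy_i)$ but do not prove it. This step is the heart of the argument and is not a formality---$g$ is an exponentially small quantity whose second moment could in principle be dominated by rare configurations of $(\bsx_i,\bsw_i)$ on which $g$ is atypically large, so one needs a quantitative control of how $g$ depends on $\|\bsx_i\|$ and $\|\bsw_i\|$ and a matching tail bound on those norms. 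Without that computation (or an explicit pointer to where it is carried out), the proposal is a correct outline but not a proof.
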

Assume that $\frac{1}{\rho^2} -1 \leq \frac{d}{40}$. Then, by applying Theorem \ref{thm:wang imp}, we can conclude that exact matching is impossible if $\frac{d}{4}\log \frac{1}{1-\rho^2} \leq \log |V'| - \log d +C \leq  \log n - \log d +C $.

\section{Proof of Theorem \ref{thm:gmm recovery ach}:\\ Achievability of Exact Community Recovery in Correlated Gaussian Mixture Models}\label{sec:proof of gmm recovery ach}
Given a permutation $\pi : [n]\to [n]$, let $X+_\pi Y$ represent the database where each node $i$ is assigned the vector $\frac{\bsx_i + \bsy_{\pi(i)}}{2}$. Recall that for two functions $f,g$, $\mathbf{ov}(f,g):=\sum\limits_{i=1}^n \frac{\mathds{1}\left(f(i)=g(i)\right)}{n}$.
Ndaoud \cite{Ndaoud22} found the conditions under which there exists an estimator $\hat{\bssigma}$, which is based on a variant of Lloyd's iteration initialized by a spectral method, that can exactly recover $\bssigma$ from a Gaussian mixture model $X$. 
\begin{theorem}[Theorem 8 in \cite{Ndaoud22}]\label{thm:ndaoud ach}
    For $k>0$ and $\bsz_i \sim \mathcal{N}(0,\bsI_d)$, let $X:=\left\{\bsx_i\right\}_{i=1}^n \sim \operatorname{GMM}(\boldsymbol{\mu}, \bssigma)$, where $\boldsymbol{\mu}\in \mathbb{R}^d$ and $\bsx_i:= \boldsymbol{\mu} \sigma_i +k \bsz_i$. If 
    \begin{equation}
           \lVert \boldsymbol{\mu} \rVert^2 \geq k^2(1+\epsilon)\left(1+\sqrt{1+\frac{2d}{n \log n}}\right) \log n
    \end{equation}
    for an arbitrary small constant $\epsilon>0$, then there exists an estimator $\hat{\bssigma}$ achieving $\P(\mathbf{ov}(\hat{\bssigma},\bssigma)=1)=1-o(1)$.
\end{theorem}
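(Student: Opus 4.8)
The plan is to establish exact recovery by exhibiting a two-stage estimator---a spectral initialization followed by a single Lloyd-type refinement, exactly the scheme of \cite{Ndaoud22}---and to show that under the stated signal-to-noise condition every one of the $n$ nodes is labeled correctly with high probability. Write the data matrix as $X = \bssigma\boldsymbol{\mu}^\top + kZ$ with $Z:=(\bsz_1,\dots,\bsz_n)^\top$, a rank-one planted signal corrupted by isotropic Gaussian noise. For the initialization, let $\hat{\bsu}$ be the leading left singular vector of $X$ and set $\hat{\bssigma}^{(0)}:=\mathrm{sign}(\hat{\bsu})$ (the global sign is fixed arbitrarily). Standard matrix-perturbation arguments (e.g.\ Davis--Kahan together with a concentration bound on $\|Z\|$) show that $\hat{\bssigma}^{(0)}$ agrees with $\bssigma$ on all but an $o(n)$ fraction of coordinates, provided $\|\boldsymbol{\mu}\|^2/k^2$ exceeds the weak-recovery threshold of order $\sqrt{d/n}$. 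Since the hypothesis forces $\|\boldsymbol{\mu}\|^2/k^2=\Omega(\log n)$, far above this threshold, the initialization succeeds with high probability and is \emph{not} the bottleneck.

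For the refinement, form the centroid estimate $\hat{\boldsymbol{\mu}}:=\tfrac1n\sum_{i=1}^n \hat{\sigma}^{(0)}_i\bsx_i$ and relabel each node by $\hat{\sigma}_i:=\mathrm{sign}\langle\bsx_i,\hat{\boldsymbol{\mu}}\rangle$. Letting $\eta:=1-\mathbf{ov}(\hat{\bssigma}^{(0)},\bssigma)=o(1)$ be the fraction mislabeled by the first stage, a direct computation gives
\[
\hat{\boldsymbol{\mu}} = (1-2\eta)\,\boldsymbol{\mu} + \bse, \qquad \bse=\frac{k}{n}\sum_{i=1}^n c_i\sigma_i\bsz_i,\ \ c_i\in\{\pm1\},
\]
where the correctly labeled nodes contribute $\boldsymbol{\mu}+k\sigma_i\bsz_i$ and the mislabeled ones contribute $-\boldsymbol{\mu}-k\sigma_i\bsz_i$. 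One shows with high probability that $\|\bse\|^2=(1+o(1))\,k^2 d/n$, that $\langle\boldsymbol{\mu},\bse\rangle=O(k\|\boldsymbol{\mu}\|/\sqrt n)=o(\|\boldsymbol{\mu}\|^2)$ (so $\bse$ is essentially orthogonal to $\boldsymbol{\mu}$ in high dimension), and that the attenuation factor satisfies $1-2\eta=1-o(1)$. Thus the centroid recovers the true mean direction up to a lower-order additive error whose size is governed precisely by $k^2 d/n$.

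To make the per-node analysis rigorous one must decouple $\bsx_i$ from the centroid it helped build; replace $\hat{\boldsymbol{\mu}}$ by the leave-one-out estimate $\hat{\boldsymbol{\mu}}^{(-i)}$ (dropping the $i$-th summand), which differs from $\hat{\boldsymbol{\mu}}$ by $O(k\|\bsx_i\|/n)$ and is independent of $\bsz_i$. For a node with $\sigma_i=+1$, conditioning on $\hat{\boldsymbol{\mu}}^{(-i)}$ gives $\langle\bsx_i,\hat{\boldsymbol{\mu}}^{(-i)}\rangle\sim\mathcal{N}\!\big(\langle\boldsymbol{\mu},\hat{\boldsymbol{\mu}}^{(-i)}\rangle,\,k^2\|\hat{\boldsymbol{\mu}}^{(-i)}\|^2\big)$, so by the Gaussian tail bound (Lemma~\ref{lem:normal tail}) the conditional misclassification probability is at most
\[
\exp\!\left(-\frac{\langle\boldsymbol{\mu},\hat{\boldsymbol{\mu}}^{(-i)}\rangle^{2}}{2k^{2}\,\|\hat{\boldsymbol{\mu}}^{(-i)}\|^{2}}\right)
= \exp\!\left(-(1+o(1))\,\frac{\|\boldsymbol{\mu}\|^{2}}{2k^{2}\bigl(1+k^{2}d/(n\|\boldsymbol{\mu}\|^{2})\bigr)}\right),
\]
using $\langle\boldsymbol{\mu},\hat{\boldsymbol{\mu}}^{(-i)}\rangle=(1-o(1))\|\boldsymbol{\mu}\|^2$ and $\|\hat{\boldsymbol{\mu}}^{(-i)}\|^2=(1+o(1))(\|\boldsymbol{\mu}\|^2+k^2d/n)$. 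Requiring this exponent to exceed $(1+\epsilon)\log n$ and setting $R:=\|\boldsymbol{\mu}\|^2$ reduces to the quadratic $R^2-2k^2(\log n)R-2k^4(\log n)d/n\ge 0$, whose positive root is $k^2(1+\sqrt{1+2d/(n\log n)})\log n$; absorbing the lower-order $\epsilon$-terms yields exactly the hypothesis $\|\boldsymbol{\mu}\|^2\ge k^2(1+\epsilon)(1+\sqrt{1+2d/(n\log n)})\log n$. A union bound over the $n$ nodes then makes the total misclassification probability $o(1)$, giving $\P(\mathbf{ov}(\hat{\bssigma},\bssigma)=1)=1-o(1)$.

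The main obstacle is the \emph{sharp} refinement analysis, not the initialization. Two points demand care: decoupling each $\bsx_i$ from $\hat{\boldsymbol{\mu}}$ (handled by the leave-one-out device) and controlling the contamination of $\hat{\boldsymbol{\mu}}$ by the $o(n)$ nodes mislabeled in stage one, which must be shown to perturb both $\langle\boldsymbol{\mu},\hat{\boldsymbol{\mu}}\rangle$ and $\|\hat{\boldsymbol{\mu}}\|^2$ only at lower order. Obtaining the exact constant $1+\sqrt{1+2d/(n\log n)}$, rather than a loose bound, hinges on tracking the precise balance between the per-node noise variance $k^2\|\boldsymbol{\mu}\|^2$ and the centroid-estimation penalty $k^4 d/n$, and on invoking the tight Gaussian tail estimate of Lemma~\ref{lem:normal tail}. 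Because a single refinement iteration already drives the error to zero under this threshold, no convergence analysis of iterated Lloyd steps is required.
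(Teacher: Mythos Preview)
This theorem is quoted in the paper as an external result (Theorem~8 of \cite{Ndaoud22}); the paper does not supply its own proof but simply invokes it as a black box. So there is no ``paper's proof'' to compare against---you have essentially sketched the argument from the original source, and the high-level strategy (spectral initialization plus a single Lloyd refinement, analyzed via a Gaussian tail bound and a union bound) is correct and matches \cite{Ndaoud22}.

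That said, there is one genuine gap in your sketch. You claim that the leave-one-out centroid $\hat{\boldsymbol{\mu}}^{(-i)}$ is \emph{independent} of $\bsz_i$. This is not true as stated: the initial labels $\hat{\sigma}^{(0)}_j$ for $j\neq i$ come from the leading singular vector of the full matrix $X$, which depends on \emph{all} rows including $\bsx_i$. Dropping the $i$-th summand from the centroid average does not sever this dependence. To make the decoupling rigorous you need either (i) a leave-one-out analysis at the spectral level as well (showing that the singular vector computed from $X$ with row $i$ removed yields nearly the same initialization), or (ii) a sample-splitting device, or (iii) an argument that the residual dependence contributes only lower-order terms. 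Ndaoud handles this carefully; your sketch glosses over it. The rest of the argument---the quadratic in $R=\|\boldsymbol{\mu}\|^2$ yielding the sharp root $k^2\bigl(1+\sqrt{1+2d/(n\log n)}\bigr)\log n$, and the union bound---is fine.
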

When applying this estimator $\hat{\bssigma}$ to $X+_{\hat{\pi}} Y$ for the estimator $\hat{\pi}$ defined in \eqref{eq:estimator-1}, we can have
\begin{equation}\label{eq:recov eq1}
\begin{aligned}
    \P\left(\mathbf{ov}(\hat{\bssigma}(X+_{\hat{\pi}} Y),\bssigma) \neq 1\right) &\leq \P\left(\{\mathbf{ov}(\hat{\bssigma}(X+_{\hat{\pi}} Y),\bssigma) \neq 1\} \cap \{\hat{\pi}= \pi_*\}\right) +\P(\hat{\pi}\neq \pi_*)\\
     &\leq  \P\left(\mathbf{ov}(\hat{\bssigma}(X+_{\pi_*} Y),\bssigma) \neq 1\right) +\P(\hat{\pi}\neq \pi_*).
\end{aligned}
\end{equation}
If \eqref{eq:gmm matching cond1} or \eqref{eq:gmm matching cond2} holds, then by Theorem \ref{thm:gmm matching ach}, we can obtain that
\begin{equation}\label{eq:recov eq2}
    \P(\hat{\pi}\neq \pi_*)=o(1).
\end{equation}
Moreover, we have $\frac{\bsx_i+\bsy_{\pi_*(i)}}{2}= \boldsymbol{\mu}\bssigma_i + \frac{(1+\rho)\bsz_i+\sqrt{1-\rho^2}\bsw_i}{2}\sim \boldsymbol{\mu}\bssigma_i + \sqrt{\frac{1+\rho}{2}}\bso_i$, where $\bsz_i,\bsw_i,\bso_i\sim \mathcal{N}(0,\bsI_d)$. Therefore, by Theorem \ref{thm:ndaoud ach}, if $ \lVert \boldsymbol{\mu} \rVert^2 \geq (1+\epsilon)\frac{1+\rho}{2}\left(1+\sqrt{1+\frac{2d}{n \log n}}\right) \log n$, then it holds that 
\begin{equation}\label{eq:recov eq3}
    \P\left(\mathbf{ov}(\hat{\bssigma}(X+_{\pi_*} Y),\bssigma) \neq 1\right)=o(1).
\end{equation}
By combining the results from \eqref{eq:recov eq1}, \eqref{eq:recov eq2} and \eqref{eq:recov eq3}, the proof is complete.

\section{Proof of Theorem \ref{thm:csbm matching ach}: \\ Achievability of Exact Matching in Correlated Contextual Stochastic Block Models}\label{sec:proof of csbm matching ach}
To prove Theorem \ref{thm:csbm matching ach}, we consider a two-step procedure for exact matching. The first step utilizes the $k$-core matching based solely on edge information to recover the matching over $n-n^{1-\frac{ns^2(p+q)}{2\log n}+o(1)}$ nodes. Then, the second step utilizes the node features to match the rest $n^{1-\frac{ns^2(p+q)}{2\log n}+o(1)}$ nodes, where we apply the estimator \eqref{eq:estimator-1} used in proving Theorem \ref{thm:gmm matching ach}.

\subsection{The $k$-core matching and the proof of Theorem \ref{thm:csbm matching ach} }\label{sec:k-core}
The $k$-core matching has been extensively studied in recovering the latent vertex correspondence between the edge-correlated Erd\H{o}s-R\'enyi graphs or more general inhomogeneous random graphs including the correlated Stochastic Block Models  \cite{CKNP20,GRS22,RS23,YC24}. In this subsection, we will demonstrate that the analytical techniques developed in the previous papers can be effectively applied to the general correlated SBMs we consider in this paper.

We restate the definitions of matching, $k$-core matching and $k$-core estimator, introduced in \cite{CKNP20}, for the completeness of the paper.

\begin{definition}[Matching]\label{def:matching}
    Consider two graphs $G_1$ and $G_2$. $(M,\varphi)$ is a matching between $G_1$ and $G_2$ if $M\subset [n]$ and $\varphi $ : $M \to [n]$ is injective. For a matching $(M,\varphi)$, we define $\varphi(M)$ as the image of $M$ under $\varphi$, and $\varphi\{M\}:=\{(i,\varphi(i)) : i \in [M]\}$.
\end{definition}

Consider two graphs $G_1$ and $G_2$ with a matching $(M,\varphi)$. We define the intersection graph $G_1 \wedge_{\varphi} G_2$ as follows:
\begin{itemize}
    \item For $u,v\in M$, $(u,v)$ is an edge in $G_1 \wedge_{\varphi} G_2$ if and only if $(u,v)$ is an edge in $G_1$ and $(\varphi(u),\varphi(v))$ is an edge in $G_2$.
\end{itemize}

\begin{definition}[$k$-core matching and $k$-core estimator]\label{def:k-core estimator}
    Consider two graphs $G_1$ and $G_2$. A matching $(M,\varphi)$ is a $k$-core matching if $d_{\min}(G_1\wedge_{\varphi} G_2)\geq k$. Furthermore, the $k$-core estimator $(\widehat{M}_k,\widehat{\varphi}_k)$ is the $k$-core matching that includes the largest nodes among all the $k$-core matchings.
\end{definition}

By using the $k$-core estimator, we can achieve a partial matching with no mismatched node pairs, as stated in the following theorem. 
\begin{theorem}[Partial matching achievable by the $k$-core estimator]\label{thm:k-core matching}
    Consider the correlated Stochastic Block Models with two communities $(G_1,G_2)\sim \textnormal{CSBMs}(n,p,q,s)$. Suppose that
      \begin{equation}\label{eq:k-core p regmie}
    p\leq O\left(\frac{1}{e^{(\log \log n)^3}}\right) \text{ and}
\end{equation}
      \begin{equation}\label{eq:k-core k regmie}
   k=\frac{nps^2}{(\log nps^2)^2} \vee \frac{\log n}{(\log \log n)^2}.
\end{equation}
Then, the $k$-core estimator $(\widehat{M}_k,\widehat{\varphi}_k)$ satisfies that 
\begin{equation}\label{eq:k-core size}
    |\widehat{M}_k|\geq n-n^{1-\frac{ns^2(p+q)}{2\log n}+o(1)} \text{ and}
\end{equation}
    \begin{equation}\label{eq:k-core groundtruth}
    \widehat{\varphi}_k\{\widehat{M}_k\}=\pi_*\{\widehat{M}_k\}
    \end{equation}
  with probability $1-o(1)$.
\end{theorem}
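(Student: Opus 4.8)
The plan is to adapt the $k$-core analysis of \cite{CKNP20,RS23,GRS22} to the inhomogeneous setting of correlated SBMs, carefully tracking the dependence on the (possibly vanishing) degree parameters $p,q$. The argument has two logically separate parts: (i) a \emph{no-mismatch} (consistency) statement, i.e. $\widehat{\varphi}_k\{\widehat M_k\}=\pi_*\{\widehat M_k\}$, and (ii) a \emph{size} statement, i.e. $|\widehat M_k|\ge n - n^{1-\frac{ns^2(p+q)}{2\log n}+o(1)}$. Throughout I take $\pi_*$ to be the identity, so that the intersection graph $G_1\wedge_{\pi_*}G_2$ is exactly $\textnormal{ER}$-like: each pair $\{u,v\}$ is present independently with probability $p_{uv}s^2$, where $p_{uv}=p$ if $\sigma_u\sigma_v=+1$ and $p_{uv}=q$ if $\sigma_u\sigma_v=-1$; in particular its average degree is $ns^2\frac{p+q}{2}(1+o(1))$ since the communities are balanced up to $O(\sqrt{n\log n})$ fluctuations.

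For part (i), the standard strategy is: any $k$-core matching $(M,\varphi)$ that disagrees with $\pi_*$ somewhere must contain a ``core'' of $\ell\ge k+1$ vertices on which $\varphi\neq\pi_*$ yet every such vertex still has intersection-degree $\ge k$ \emph{within that core}; one then shows that the expected number of such structures is $o(1)$ by a union bound over the choice of $\ell$ vertices, the choice of the ``wrong'' images, and the requirement that $\ell k/2$ specific edges be simultaneously present in $G_1$ and in the $\varphi$-image in $G_2$ (each such coincidence costs a factor $\le (ps^2)$ or smaller). The key quantity is a bound of the form $\sum_{\ell\ge k+1} n^{\ell}\cdot(\text{something})^{\ell}\cdot(ps)^{\ell k/2}$, which is $o(1)$ precisely when $k=\omega(\log n/\log(1/(ps)))$ — this is why the regime $p\le e^{-(\log\log n)^3}$ in \eqref{eq:k-core p regmie} is imposed, as it guarantees $\log(1/(ps))$ is large enough that the chosen $k=\frac{nps^2}{(\log nps^2)^2}\vee\frac{\log n}{(\log\log n)^2}$ dominates. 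I would cite the corresponding lemma of \cite{RS23} (or \cite{GRS22}) for the ER case and verify that replacing $p$ by $p_{uv}\in\{p,q\}$ only changes constants, using $q=\Theta(p)$.

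For part (ii), I would argue that $\widehat M_k$ contains the $k$-core of the intersection graph $G_1\wedge_{\pi_*}G_2$ itself (since the identity restricted to that $k$-core is a valid $k$-core matching, and $\widehat M_k$ is the largest one), so it suffices to lower-bound the size of the $k$-core of an inhomogeneous ER graph with edge probabilities $p_{uv}s^2$ and average degree $\bar d:=ns^2\frac{p+q}{2}$. The complement has size at most the number of vertices of degree $<k$ plus the vertices peeled during the core-stripping process; a first-moment / branching-process comparison shows that the number of vertices \emph{not} in the $k$-core is at most $n\cdot\P(\mathrm{Bin}(n-1,\bar d/n)<k)\cdot n^{o(1)}$. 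When $\bar d = \Theta(\log n)$ and $k=\bar d/(\log\bar d)^2 = o(\bar d)$, a Chernoff bound gives $\P(\mathrm{Bin}(\cdot)<k)\le \exp(-\bar d(1-o(1)))=n^{-\frac{ns^2(p+q)}{2\log n}+o(1)}$, yielding \eqref{eq:k-core size}; when $\bar d=\omega(\log n)$ the exponent already makes this quantity $n^{-\omega(1)}$, consistent with the displayed bound. The main technical obstacle, as in the cited works, is controlling the core-stripping beyond the trivial low-degree vertices — one must show the process does not cascade and remove an unexpectedly large set — and generalizing the existing ER-specific concentration arguments for this cascade to the two regimes of $p$ (namely $nps^2=\Theta(\log n)$ versus $nps^2=\omega(\log n)$) in a unified way; I would handle this by the sub/super-solution method for the peeling recursion used in \cite{RS23}, checking that the inhomogeneity $p\leftrightarrow q$ and the choice of $k$ in \eqref{eq:k-core k regmie} keep the fixed-point analysis on the ``good'' side in both regimes.
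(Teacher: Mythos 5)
Your proposal is correct and follows essentially the same route as the paper: part (i) is the first-moment/union bound over mismatched weak-$k$-core structures (the paper packages this as the quantity $\xi$ via Corollary 4.7 of \cite{GRS22} and Lemma A.4 of \cite{RS23}, choosing $\theta=(\log\log n)^{2.5}$ to exploit $p\le e^{-(\log\log n)^3}$), and part (ii) is a Chernoff bound on the number of vertices of intersection-degree below $k$ combined with the Luczak-type peeling control $|\bar J_k|\le 3|L_k|$ from \cite{RS23}. The only cosmetic difference is that the paper splits off the regime $ns^2\frac{p+q}{2}\ge(1+\epsilon)\log n$ by invoking the exact-matching Theorem~\ref{thm:exact matching} rather than treating both degree regimes in a unified peeling analysis.
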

Furthermore, the matched node set $\widehat{M}_k$ is equivalent to the $k$-core set of the graph $G_1 \wedge_{\pi_*} G_2$. A detailed explanation and the related lemmas for the $k$-core matching can be found in Section \ref{sec:lemma k core}.
We also state the sufficient conditions for exact matching in the correlated Stochastic Block Models achievable by the $k$-core estimator.

\begin{theorem}[Exact matching achievable by the $k$-core estimator]\label{thm:exact matching}
     Consider the correlated Stochastic Block Models with two communities $(G_1,G_2)\sim \textnormal{CSBMs}(n,p,q,s)$. Suppose that $\eqref{eq:k-core p regmie}$ and \eqref{eq:k-core k regmie} hold. Also, assume that
    \begin{equation}\label{eq:exact matching}
        ns^2 \frac{p+q}{2}\geq (1+\epsilon) \log n
    \end{equation}
    for an arbitrary small constant $\epsilon>0$. Then, $\widehat{\varphi}_{k}=\pi_*$ with probability $1-o(1)$.
\end{theorem}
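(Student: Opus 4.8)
The plan is to obtain Theorem~\ref{thm:exact matching} as a short corollary of the partial-matching guarantee in Theorem~\ref{thm:k-core matching}; the only new input is the observation that condition \eqref{eq:exact matching} forces the set of unmatched vertices to be empty. First I would invoke Theorem~\ref{thm:k-core matching}: under the hypotheses \eqref{eq:k-core p regmie} and \eqref{eq:k-core k regmie}, with probability $1-o(1)$ the $k$-core estimator $(\widehat M_k,\widehat\varphi_k)$ simultaneously satisfies the size bound \eqref{eq:k-core size}, namely $|\widehat M_k|\ge n-n^{1-\frac{ns^2(p+q)}{2\log n}+o(1)}$, and the correctness statement \eqref{eq:k-core groundtruth}, namely $\widehat\varphi_k\{\widehat M_k\}=\pi_*\{\widehat M_k\}$ (recall that $\widehat M_k$ is, w.h.p., the $k$-core of the intersection graph $G_1\wedge_{\pi_*}G_2\sim\textnormal{SBM}(n,ps^2,qs^2)$).

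Next I would plug \eqref{eq:exact matching} into the exponent $1-\frac{ns^2(p+q)}{2\log n}+o(1)$. Since $ns^2\frac{p+q}{2}\ge(1+\epsilon)\log n$, we have $\frac{ns^2(p+q)}{2\log n}\ge 1+\epsilon$, so the exponent is at most $-\epsilon+o(1)$, which is strictly negative for all sufficiently large $n$; hence $n^{1-\frac{ns^2(p+q)}{2\log n}+o(1)}=n^{-\epsilon+o(1)}=o(1)$, and in particular this quantity is $<1$ once $n$ is large. As $n-|\widehat M_k|$ is a nonnegative integer bounded above by a number strictly less than $1$, it must be $0$; therefore $\widehat M_k=[n]$ with probability $1-o(1)$.

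Finally, on the event $\{\widehat M_k=[n]\}\cap\{\widehat\varphi_k\{\widehat M_k\}=\pi_*\{\widehat M_k\}\}$ we have $\{(i,\widehat\varphi_k(i)):i\in[n]\}=\{(i,\pi_*(i)):i\in[n]\}$, which forces $\widehat\varphi_k(i)=\pi_*(i)$ for every $i\in[n]$, i.e.\ $\widehat\varphi_k=\pi_*$. A union bound over the $o(1)$-probability failure events of Theorem~\ref{thm:k-core matching} and of the size estimate then gives $\widehat\varphi_k=\pi_*$ with probability $1-o(1)$, completing the proof.

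Regarding the main obstacle: essentially all the technical weight sits in Theorem~\ref{thm:k-core matching} (quantifying the $k$-core of $G_1\wedge_{\pi_*}G_2$ and showing the estimator recovers $\pi_*$ on it across every $p,q$ regime permitted by \eqref{eq:k-core p regmie}), which we are entitled to assume. The only point requiring care in the present argument is that the bound on $|\widehat M_k|$ must be of the \emph{sharp} form $n-n^{1-\frac{ns^2(p+q)}{2\log n}+o(1)}$ rather than a cruder $n-o(n)$; it is precisely this polynomial-in-$n$ control of the unmatched set that lets the threshold $ns^2\frac{p+q}{2}=(1+\epsilon)\log n$ push the number of unmatched vertices below one and hence to zero.
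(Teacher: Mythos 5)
Your reduction ``exponent $\le -\epsilon+o(1)$, so the number of unmatched vertices is a nonnegative integer below $1$, hence zero'' is sound arithmetic, but the argument as structured is circular relative to this paper: the proof of Theorem~\ref{thm:k-core matching} handles the regime $ns^2\frac{p+q}{2}\ge(1+\epsilon)\log n$ precisely by \emph{citing Theorem~\ref{thm:exact matching}} (its opening line is ``If $ns^2\frac{p+q}{2}\geq(1+\epsilon)\log n$\ldots\ we can obtain $\widehat{\varphi}_k=\pi_*$ by Theorem~\ref{thm:exact matching}''), and the independent portion of that proof --- Markov's inequality on $\E[|\overline L_k|]$ combined with the Luczak expansion (Lemma~\ref{lem:3L}) --- is carried out only under $nps^2,nqs^2=O(\log n)$ and $ns^2(p+q)=\Theta(\log n)$. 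Lemma~\ref{lem:3L} itself assumes $np,nq=O(\log n)$, so the size bound \eqref{eq:k-core size} has no standalone proof in, say, the regime $nps^2=\omega(\log n)$, which is permitted by \eqref{eq:exact matching}. Treating Theorem~\ref{thm:k-core matching} as a black box therefore does not discharge the obligation: the very case you need is the one whose proof depends on the statement you are trying to establish.

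The repair is exactly what the paper does, and it is short: keep the part of your argument that imports the correctness guarantee $\widehat M_k=M_k$ and $\widehat\varphi_k\{\widehat M_k\}=\pi_*\{\widehat M_k\}$ (this comes from the $\xi=o(n^{-2})$ analysis via Lemmas~\ref{lem:k-core lem1} and~\ref{lem:k-core lem2}, which is independent of either theorem's conclusion), but replace the appeal to the size bound \eqref{eq:k-core size} by a direct minimum-degree computation on the intersection graph. Namely, on the event $\mathcal B$, a union bound over $i\in[n]$ together with the Chernoff estimate of Lemma~\ref{lem:size degree k} applied to $G_1\wedge_{\pi_*}G_2\sim\textnormal{SBM}(n,ps^2,qs^2)$ gives
\begin{equation*}
\P\bigl(d_{\min}(G_1\wedge_{\pi_*}G_2)<k\bigr)\le n\exp\Bigl(-ns^2\tfrac{p+q}{2}+o\bigl(ns^2\tfrac{p+q}{2}\bigr)+k\log(nps^2)+1\Bigr)=o(1)
\end{equation*}
under \eqref{eq:k-core k regmie} and \eqref{eq:exact matching}, so $M_k=[n]$ with high probability and the conclusion follows. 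This is essentially your ``push the unmatched set below one'' idea, but executed at the level of degrees rather than via the already-packaged (and here unavailable) polynomial size bound.
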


The proofs of Theorem~\ref{thm:k-core matching} and \ref{thm:exact matching} will be presented in Section \ref{sec:proof k core}.

\begin{IEEEproof}[Proof of Theorem \ref{thm:csbm matching ach}]
    Let $ k=\frac{nps^2}{(\log nps^2)^2} \vee \frac{\log n}{(\log \log n)^2}$. Let us first consider the case where $ns^2\left(\frac{p+q}{2}\right)\geq (1+\epsilon)\log n$ for an arbitrary small constant $\epsilon >0$. By Theorem \ref{thm:exact matching}, we can confirm that the exact matching is achievable through the $k$-core matching using only edge information, without relying on node information.
    
    Now, let us consider the case where $nps^2=O(\log n)$. First, by Theorem \ref{thm:k-core matching}, we can obtain a matching $(\widehat{M}_k,\widehat{\varphi}_k)$ that satisfies \eqref{eq:k-core size} and $\eqref{eq:k-core groundtruth}$ through the $k$-core matching. Let $F$ denote the set of nodes that remain unmatched after performing the $k$-core matching. That is, $F:=[n]\backslash \widehat{M}_k$. From \eqref{eq:k-core size}, we have 
    \begin{equation}\label{eq:size F}
        |F|\leq n^{1-\frac{ns^2(p+q)}{2\log n}+o(1)}.
    \end{equation}
    From \eqref{eq:size F} and \eqref{eq:csbm matching cond1}, we can also have
    \begin{equation}
        \frac{d}{4}\log \frac{1}{1-\rho^2}\geq (1+\epsilon) \log n - ns^2\frac{p+q}{2} \geq \left(1+\epsilon/2 \right) \log |F|
    \end{equation}
    for a sufficiently large $n$. Therefore, if   $\lVert \boldsymbol{\mu} \rVert^2 \geq 2\log n+\omega(1)$ or $ d =\omega (\log n)$, then the exact matching is possible between the unmatched nodes belonging to $F$ solely using the node features by Theorem \ref{thm:gmm matching ach}. Thus, the proof is complete.
\end{IEEEproof}

\subsection{Lemmas for the analysis of $k$-core matching}\label{sec:lemma k core}

For a matching $(M,\varphi)$, define 
\begin{equation}
    f(M,\varphi)=\Sigma_{i\in M : \varphi(i)\neq \pi_*(i)} \deg_{G_1 \wedge_{\varphi} G_2}(i).
\end{equation}
The weak $k$-core matching,  $\pi_*$-maximal matching and $k$-core set were defined in \cite{CKNP20,GRS22,RS23,YC24}. For the completeness, we present the corresponding definitions below.
\begin{definition}[Weak $k$-core matching]
    We say that a matching $(M,\varphi)$ is a weak $k$-core matching if 
    \begin{equation}
        f(M,\varphi)\geq k |\{i\in M : \varphi(i)\neq \pi_*(i)\}|.
    \end{equation}
    
    \end{definition}

\begin{definition}[$\pi_*$-maximal matching]
    We say that a matching $(M,\varphi)$ is $\pi_*$-maximal if for every $i \in [n]$, either $i\in M$ or $\pi_*(i) \in \varphi(M)$, where $\varphi(M)$ is the image of $M$ under $\varphi$. Additionally, let us define $\mathcal{M}(t)$ as the set of $\pi_*$-maximal matchings that have $t$ errors. It means that
    \begin{equation}
    \begin{aligned}
         \mathcal{M}(t):=\{(M,\varphi) : (M,\varphi) \text{ is } \pi_* \text{-maximal and } |\{i\in M : \varphi(i)\neq \pi_*(i)\}| =t \}.
    \end{aligned}       
    \end{equation}
\end{definition}

\begin{definition}[$k$-core set]
    For a graph $G$, a vertex set $M$ is referred to as the $k$-core of $G$ if it is the largest set such that $d_{\min}(G\{M\})\geq k$.
\end{definition}

Let $M_k$ denote the $k$-core of $G_1 \wedge_{\pi_*} G_2$. The following lemma provides a lower bound on the probability to obtain the $k$-core of $G_1 \wedge_{\pi_*} G_2$ and the correct matching over the $k$-core as the result of $k$-core matching. 
Since Gaudio et al. \cite{GRS22} presented this result for general pairs of random graphs in Corollary 20, it can also be applied to the correlated SBMs.

\begin{lemma}[Corollary 4.7 in \cite{GRS22}]\label{lem:k-core lem1}
    Consider the $(G_1,G_2)\sim \textnormal{CSBMs}(n,p,q,s)$. For any positive integer $k$, define the quantity
\begin{equation}\label{def:xi}
    \xi:=\max _{1 \leq t \leq n} \max _{(M, \varphi) \in \mathcal{M}(t)} \mathbb{P}(f(M,\varphi) \geq k t)^{1 / t} .
\end{equation}
Then, the $k$-core estimator $(\widehat{M}_k, \widehat{\varphi}_k)$ satisfies that
\begin{equation}
    \begin{aligned}
\mathbb{P}\left(\widehat{M}_k=M_k \text { and } \widehat{\varphi}_k\{\widehat{M}_k\}\right. & \left.=\pi_*\{\widehat{M}_k\}\right) \geq 2-\exp \left(n^2 \xi\right).
\end{aligned}
\end{equation}
\end{lemma}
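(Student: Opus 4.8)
The plan is to invoke the general result for $k$-core matching in correlated pairs of inhomogeneous random graphs, specifically Corollary~20 of Gaudio et al.~\cite{GRS22}, and verify that the correlated Stochastic Block Model $(G_1,G_2)\sim\textnormal{CSBMs}(n,p,q,s)$ falls within its scope. The key observation is that the CSBM construction---generate a parent $G\sim\textnormal{SBM}(n,p,q)$ and independently subsample each edge twice with probability $s$---produces a pair $(G_1,G_2')$ in which, conditioned on the community labels $\bssigma$, each potential edge $\{u,v\}$ is included in the intersection graph $G_1\wedge_{\pi_*}G_2$ independently across pairs, with probability $ps^2$ if $\sigma_u\sigma_v=+1$ and $qs^2$ if $\sigma_u\sigma_v=-1$. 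Hence $G_1\wedge_{\pi_*}G_2$ is itself distributed (conditionally on $\bssigma$) as an inhomogeneous random graph---indeed as $\textnormal{SBM}(n,ps^2,qs^2)$---which is exactly the structure the analysis in~\cite{GRS22} requires. The correlation structure between $G_1$ and $G_2'$ and the marginal laws also match the framework there.

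First I would restate the definition of $\xi$ in~\eqref{def:xi}: it is the worst-case (over the number of errors $t$ and over $\pi_*$-maximal matchings $(M,\varphi)$ with exactly $t$ errors) of the $t$-th root of the probability that the total intersection-degree $f(M,\varphi)$ over the mismatched nodes is at least $kt$. This quantity controls, via a union bound over all $\pi_*$-maximal matchings with a given number of errors, the probability that the $k$-core estimator disagrees with $\pi_*$ on any node of $M_k$. Then I would cite Corollary~20 of~\cite{GRS22} verbatim: for any pair of correlated graphs in their class, the $k$-core estimator $(\widehat M_k,\widehat\varphi_k)$ satisfies $\mathbb{P}(\widehat M_k = M_k \text{ and } \widehat\varphi_k\{\widehat M_k\} = \pi_*\{\widehat M_k\}) \ge 2 - \exp(n^2\xi)$, where $M_k$ is the $k$-core of $G_1\wedge_{\pi_*}G_2$. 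The counting ingredient behind this bound is that the number of $\pi_*$-maximal matchings with exactly $t$ errors is at most $\binom{n}{t}^2 t! \le n^{2t}$ (choose the $t$ mismatched sources, the $t$ mismatched targets, and the bijection between them), so a union bound gives $\mathbb{P}(\widehat\varphi_k \text{ errs on } M_k) \le \sum_{t\ge 1} n^{2t}\xi^t$, and the stated form follows after bounding the geometric-type sum.

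The only thing to check is that the hypotheses of~\cite{GRS22} are met by our model; since their corollary is stated for general correlated random graph pairs with independent edges in the parent (which the SBM is) and a common subsampling mechanism, no additional work is needed beyond noting that the community labels $\bssigma$ may be treated as fixed throughout (the estimator and the event in question are measurable with respect to $(G_1,G_2)$, and the bound holds conditionally on $\bssigma$, hence unconditionally). The main obstacle---and the substantive work deferred to later lemmas---is \emph{not} in this lemma but in the subsequent estimation of $\xi$ itself: one must show that for the choice $k = \frac{nps^2}{(\log nps^2)^2}\vee\frac{\log n}{(\log\log n)^2}$ and the regime~\eqref{eq:k-core p regmie}, the quantity $\xi$ is small enough that $n^2\xi = o(1)$, which requires a careful large-deviations analysis of $f(M,\varphi)$ (a sum of intersection-degrees, each a sum of Bernoulli$(ps^2)$ or Bernoulli$(qs^2)$ variables) as a function of the number of errors $t$ and the structure of the mismatched set. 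That analysis is carried out in the lemmas of Section~\ref{sec:proof k core}; the present lemma is purely the reduction of the success probability to control of $\xi$, and it is an immediate consequence of the cited corollary.
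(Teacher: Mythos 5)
Your proposal is correct and follows essentially the same route as the paper: both reduce the lemma to a direct citation of the general $k$-core matching result (Corollary 20 / Corollary 4.7 of \cite{GRS22}), noting that the CSBM pair fits that framework since $G_1\wedge_{\pi_*}G_2$ is an inhomogeneous (independent-edge) random graph conditionally on $\bssigma$. Your added sketch of the union bound over $\pi_*$-maximal matchings, giving $\sum_{t\ge 1} n^{2t}\xi^t \le \exp(n^2\xi)-1$ and hence the $2-\exp(n^2\xi)$ form, accurately reflects the mechanism behind the cited corollary.
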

From the above lemma, we can see that if $\xi=o(n^{-2})$, the matching obtained through the $k$-core estimator is the correct matching over the $k$-core of $G_1 \wedge_{\pi_*} G_2$ with high probability.
The next lemma represents an upper bound on $\xi$. This result has been proven in \cite{GRS22,YC24,RS23}, and in particular \cite{RS23} provides a proof for general random graph pairs in Lemma A.4. We obtain the similar results in the correlated Stochastic Block Models.
\begin{lemma}[Lemma A.4 in \cite{RS23}]\label{lem:k-core lem2}
 Consider the $(G_1,G_2)\sim \textnormal{CSBMs}(n,p,q,s)$.
    For any matching $(M, \varphi) \in \mathcal{M}(t)$ and any $\theta>0$, we have that
\begin{equation}
\begin{aligned}
    \mathbb{P}(f(M,\varphi) & \geq k t)  \leq 3 \exp \left(-t\left(\theta k-e^{2 \theta} ps^2-n e^{6 \theta} p^2s^2\right)\right).
\end{aligned}
\end{equation}
\end{lemma}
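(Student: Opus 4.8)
The plan is to follow the proof of Lemma A.4 in \cite{RS23}, verifying that it survives the passage from correlated \textnormal{ER} to correlated \textnormal{SBM}; the only substantive change is that every parent-edge probability ($p$ on $\mathcal{E}^{+}$, $q$ on $\mathcal{E}^{-}$) is bounded above by $p$. Set $\tau := \pi_*^{-1}\circ\varphi$, an injection on $M$, and $W := \{i\in M : \varphi(i)\ne\pi_*(i)\}$, so $|W| = t$. For an unordered pair $\{i,j\}\subseteq M$ let $e_{ij} := \indi[(i,j)\in\mathcal{E}(G_1)]\,\indi[(\tau(i),\tau(j))\in\mathcal{E}(G_2')]$, the indicator that $\{i,j\}$ is an edge of $G_1\wedge_\varphi G_2$. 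Then $f(M,\varphi) = \sum_{i\in W}\deg_{G_1\wedge_\varphi G_2}(i) = \sum_{\{i,j\}:\,\{i,j\}\cap W\ne\emptyset} c_{ij}\,e_{ij}$, with $c_{ij}\in\{1,2\}$ and $c_{ij}=2$ exactly when both endpoints lie in $W$.

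First I would apply the exponential Markov (Chernoff) bound: for $\theta>0$, $\P(f(M,\varphi)\ge kt)\le e^{-\theta kt}\,\E[e^{\theta f(M,\varphi)}]$, so it suffices to show $\E[e^{\theta f}]\le 3\exp\!\big(t(e^{2\theta}ps^2 + ne^{6\theta}p^2s^2)\big)$. To control this MGF I would condition on the \emph{parent} graph $G$. Since $G_1$ and $G_2'$ are independent edge-subsamples of $G$, and since distinct pairs $\{i,j\}$ read distinct $G_1$-survival coins while distinct images $\{\tau(i),\tau(j)\}$ read distinct $G_2'$-survival coins, the family $\{e_{ij}\}$ is conditionally \emph{independent} given $G$, with $\E[e_{ij}\mid G] = s^2\,\indi[\{i,j\}\in\mathcal{E}(G)]\,\indi[\{\tau(i),\tau(j)\}\in\mathcal{E}(G)]$. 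Using $1+(e^{\theta c_{ij}}-1)x\le e^{(e^{2\theta}-1)x}$ (as $c_{ij}\le 2$) and then averaging over $G$ reduces the task to bounding $\E_G\big[\exp\!\big((e^{2\theta}-1)s^2(T_1+T_2)\big)\big]$, where $T_1$ sums $\indi[\{i,j\}\in\mathcal{E}(G)]$ over the at most $\lfloor t/2\rfloor$ \emph{transposed} pairs ($\tau(i)=j,\tau(j)=i$; these are vertex-disjoint and have both endpoints in $W$), and $T_2$ sums $\indi[\{i,j\}\in\mathcal{E}(G)]\,\indi[\{\tau(i),\tau(j)\}\in\mathcal{E}(G)]$ over the remaining pairs meeting $W$, each a product of two \emph{distinct} parent-edge indicators. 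Equivalently, one may expand $\E[e^{\theta f}]$ directly over subsets $S$ of pairs and bound $\P(\bigwedge_{S}e_{ij})\le s^{2|S|}p^{|P(S)|}$, where $P(S)$ is the set of distinct parent pairs involved — this is the route in \cite{CKNP20,GRS22,RS23}.

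Next I would treat the two contributions separately. The transposed part yields a factor $\exp\!\big(O(t\,e^{2\theta}ps^2)\big)$, being the MGF of a sum of at most $\lfloor t/2\rfloor$ independent $\mathrm{Bernoulli}(\le p)$ variables. The other part is the combinatorial heart, handled as in \cite{RS23}: classify the non-transposed configurations according to how their parent edges overlap, noting that extending a configuration by one further pair offers $\le n$ new vertex choices but costs a factor $\le p^2 s^2$ from two fresh parent edges (here is where $q\le p$ enters), while the weight $c_{ij}\le 2$ and the two parent-edge families are decoupled by a H\"older-type step that produces the $e^{6\theta}$. Summing the resulting geometric-type series and absorbing the constant-order losses into the leading $3$ gives $\E[e^{\theta f}]\le 3\exp\!\big(t(e^{2\theta}ps^2 + ne^{6\theta}p^2s^2)\big)$, which together with the Chernoff step is the claim.

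The hard part is this last step: the enumeration/decoupling of the dependent configurations so that only the self-mapped (transposed) pairs are charged the cheaper $p$, while every "spread-out" extension pays $p^2$ and is charged a single factor $n$, with no over-counting. This is precisely the content of Lemma A.4 of \cite{RS23} and its predecessors \cite{CKNP20,GRS22}; in the correlated-SBM setting it goes through essentially verbatim once each SBM edge probability ($p$ or $q$) is replaced by the uniform upper bound $p$, which is what makes the generalization from correlated \textnormal{ER} to correlated \textnormal{SBM} routine.
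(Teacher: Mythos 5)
Your proposal is correct and matches the paper's treatment: the paper itself gives no independent proof of this lemma but simply imports Lemma A.4 of \cite{RS23} (proved there for general correlated inhomogeneous random graphs), observing that the correlated SBM is the special case where every parent-edge probability is bounded by $p = \max\{p,q\}$ — exactly the reduction you identify. Your Chernoff/conditioning-on-the-parent-graph outline and the split into transposed versus spread-out pairs is a faithful reconstruction of that cited argument, with the combinatorial core deferred to the same references the paper relies on.
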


Now, let us analyze the size of the matched node set.
Let us introduce a good event that will be useful in our analysis.
\begin{definition}[Balanced communities]\label{def:balanced}

\begin{equation}
       \mathcal{B}:=\left\{ \frac{n}{2}-n^{2/3} \leq |V^+|,|V^-|\leq \frac{n}{2}+n^{2/3}\right\}.
\end{equation} 
\end{definition}
When $ + $ or $ - $ label is assigned to each node with equal probabilities,  the probability of the event $\mathcal{B}$ is $1-o(1)$. 
\begin{lemma}\label{lem:Balanced comm}
    It holds that $\P(\mathcal{B})=1-o(1).$
\end{lemma}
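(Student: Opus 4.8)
The plan is to observe that $|V^+|$ is a sum of i.i.d.\ Bernoulli random variables and apply a standard concentration inequality. Since each $\sigma_i$ is drawn independently and uniformly from $\{-1,+1\}$, we have $|V^+| = \sum_{i=1}^n \mathds{1}(\sigma_i = +1)$, a binomial random variable with parameters $n$ and $1/2$, so $\E|V^+| = n/2$. Moreover, because $|V^+| + |V^-| = n$, the bound $|V^+| \in [\tfrac{n}{2} - n^{2/3},\, \tfrac{n}{2} + n^{2/3}]$ is equivalent to $|V^-| = n - |V^+| \in [\tfrac{n}{2} - n^{2/3},\, \tfrac{n}{2} + n^{2/3}]$; hence
\[
\mathcal{B} = \Bigl\{\, \bigl||V^+| - \tfrac{n}{2}\bigr| \le n^{2/3} \,\Bigr\},
\]
and it suffices to control the single deviation $\bigl||V^+| - \tfrac{n}{2}\bigr|$.

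Next I would apply Hoeffding's inequality (or equivalently a Chernoff bound for the binomial) to the bounded i.i.d.\ summands $\mathds{1}(\sigma_i = +1) \in [0,1]$, which yields
\[
\P(\mathcal{B}^c) = \P\Bigl( \bigl||V^+| - \tfrac{n}{2}\bigr| > n^{2/3} \Bigr) \le 2\exp\!\left(-\frac{2 (n^{2/3})^2}{n}\right) = 2\exp\!\bigl(-2 n^{1/3}\bigr).
\]
Since $2\exp(-2n^{1/3}) \to 0$ as $n \to \infty$, we conclude $\P(\mathcal{B}) = 1 - o(1)$, completing the proof.

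This argument is entirely routine; there is no genuine obstacle. The only point worth stating carefully is the reduction from the two-sided event on $(|V^+|,|V^-|)$ to a one-sided concentration statement for $|V^+|$, which follows immediately from $|V^+| + |V^-| = n$. Any sub-Gaussian tail bound for sums of independent bounded variables would serve equally well in place of Hoeffding's inequality.
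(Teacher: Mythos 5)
Your proof is correct and follows essentially the same route as the paper: identify $|V^+|\sim\operatorname{Bin}(n,1/2)$ and apply Hoeffding's inequality to get the bound $2\exp(-2n^{1/3})$. Your observation that $|V^+|+|V^-|=n$ reduces the event to a single one-sided deviation is a minor (and slightly cleaner) simplification of the paper's remark that $|V^-|$ is also $\operatorname{Bin}(n,1/2)$.
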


\begin{lemma}\label{lem:size degree k}
    Let $G\sim \textnormal{SBM}(n,p,q)$. Define the set 
    \begin{equation}\label{def:Lk}
         L_k := \{i\in [n] : \deg_G(i) \leq k\}.
    \end{equation}    
    Then, on the event $\mathcal{B}$, it holds that
    \begin{equation}
        \E[|L_k|]\leq n  \exp\left(-n\frac{p+q}{2} +o\left(n\frac{p+q}{2}\right)+ k \log np+1\right).
    \end{equation}
\end{lemma}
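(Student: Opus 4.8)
The plan is to bound $\E[|L_k|]$ via linearity of expectation by computing, for a fixed node $i$, the probability that its degree in $G \sim \mathrm{SBM}(n,p,q)$ is at most $k$, conditioned on the balanced-community event $\mathcal{B}$. First I would fix a node $i$ and condition on the community sizes, so that on $\mathcal{B}$ we have $|V^+|, |V^-| = \tfrac{n}{2}(1 + o(1))$. The degree of $i$ is then a sum of independent Bernoulli random variables: roughly $\tfrac{n}{2}$ of them with parameter $p$ (the same-community neighbors) and $\tfrac{n}{2}$ of them with parameter $q$ (the other-community neighbors), so $\E[\deg_G(i) \mid \mathcal{B}] = \tfrac{n}{2}(p+q)(1+o(1)) =: \lambda$.

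The core of the argument is a lower-tail (Chernoff) bound on this Binomial-type sum. I would write, for any $\theta > 0$,
\begin{equation}
\P(\deg_G(i) \le k \mid \mathcal{B}) \le e^{\theta k}\, \E\!\left[e^{-\theta \deg_G(i)} \,\middle|\, \mathcal{B}\right] = e^{\theta k} \prod_{j}\bigl(1 - p_j(1 - e^{-\theta})\bigr),
\end{equation}
where $p_j \in \{p, q\}$. Using $1 - x \le e^{-x}$ gives the product bound $\exp\bigl(-(1-e^{-\theta})\sum_j p_j\bigr) = \exp\bigl(-(1-e^{-\theta})\lambda\bigr)$. To optimize crudely, I would take $e^{-\theta}$ small, say $\theta = \log(np)$ (so $e^{\theta} = np$, $e^{-\theta} = 1/(np) = o(1)$), yielding
\begin{equation}
\P(\deg_G(i) \le k \mid \mathcal{B}) \le \exp\!\left(k\log(np) - \lambda(1 - o(1))\right) = \exp\!\left(-n\tfrac{p+q}{2} + o\!\left(n\tfrac{p+q}{2}\right) + k\log(np)\right).
\end{equation}
Multiplying by $n$ (linearity of expectation over the $n$ nodes) and absorbing a harmless additive constant gives exactly the claimed bound $\E[|L_k|] \le n\exp\bigl(-n\tfrac{p+q}{2} + o(n\tfrac{p+q}{2}) + k\log(np) + 1\bigr)$.

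The main obstacle — really the only delicate point — is handling the conditioning on $\mathcal{B}$ cleanly: strictly speaking, conditioning on the community sizes changes the edge-presence probabilities only through which fixed set of other vertices is in $i$'s own community, but once the partition is fixed the edges incident to $i$ remain independent Bernoullis, so the Chernoff computation goes through verbatim with $\lambda = |V^+|p + |V^-|q$ or $|V^+|q + |V^-|p$ depending on $\sigma_i$; on $\mathcal{B}$ both equal $\tfrac{n(p+q)}{2}(1 + O(n^{-1/3}))$, and the $O(n^{-1/3})$ fluctuation is swallowed by the $o(n\tfrac{p+q}{2})$ term (this is where we tacitly use $n\tfrac{p+q}{2} = \omega(n^{1/3})$, which holds in the relevant regime $np = \omega(1)$; if $np = \Theta(1)$ the statement is essentially vacuous since the exponent is $O(1)$). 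One should also check that the choice $\theta = \log(np)$ requires $np \ge 1$, but in the regime where $L_k$ matters we have $np \to \infty$; otherwise one picks any $\theta$ bounded away from $0$ and the bound still holds with the weaker but sufficient constant in front of $k$. I would streamline the writeup by stating the Chernoff bound for a Bernoulli sum as a one-line lemma (or citing a standard form) and then just plugging in.
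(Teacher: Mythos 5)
Your proposal is correct and follows essentially the same route as the paper: linearity of expectation over nodes, conditioning on the community sizes given $\mathcal{B}$, a Chernoff/MGF bound on the Bernoulli sum using $1-x\le e^{-x}$, and the choice $t=\log(np)$, with the $n^{2/3}$ fluctuation in community sizes absorbed into the $o\!\left(n\frac{p+q}{2}\right)$ term (note this absorption works because $n^{2/3}(p+q)=o(n(p+q))$ automatically, so your tacit assumption $n\frac{p+q}{2}=\omega(n^{1/3})$ is not actually needed). No substantive differences from the paper's argument.
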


Using the Luczak expansion \cite{Luczak}, the size of the $k$-core of the graph $G$ can be obtained as follows. In \cite{RS23}, this result was proven for a general case.

\begin{lemma}[Lemma IV.6 in \cite{RS23}]\label{lem:3L}
     Let $G\sim \textnormal{SBM}(n,p,q)$ with $np,nq=O(\log n)$. Let $J_k$ denote the $k$-core of $G$, and let $\bar{J}_k:=[n]\backslash J_k$.  Recall $L_k$ defined in \eqref{def:Lk}. If $|L_{k}|\leq n^c$ for $c\in (0,1)$, then $|\bar{J}_k | \leq 3|L_{k}|$.
\end{lemma}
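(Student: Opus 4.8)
The plan is to reproduce the \emph{Łuczak expansion} argument of~\cite{Luczak,RS23}: bound the size of the complement of the $k$-core by combining (i) the combinatorial structure of the peeling process that produces $\bar J_k$ with (ii) the absence of small, moderately dense subgraphs in sparse SBM graphs. First I would record the structural consequence of peeling. Since $\bar J_k$ is obtained by iteratively deleting vertices of current degree at most $k-1$, there is an ordering $v_1,\dots,v_m$ of $\bar J_k$, with $m:=|\bar J_k|$, in which every $v_i$ has at most $k-1$ neighbours in $\{v_{i+1},\dots,v_m\}\cup J_k$. Hence every prefix $B_j:=\{v_1,\dots,v_j\}$ has the property that each of its vertices sends at most $k-1$ edges outside $B_j$; in particular a vertex $v\in B_j$ with $\deg_G(v)\ge\kappa$ has at least $\kappa-k+1$ neighbours inside $B_j$. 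Fixing a degree threshold $\kappa$ with $k\ll\kappa=\Theta(k)$ and letting $L_\kappa:=\{v:\deg_G(v)\le\kappa\}$, summing over $v\in B_j$ gives
\begin{equation}\label{eq:prefix-dense}
    e\bigl(G\{B_j\}\bigr)\ \ge\ \tfrac12(\kappa-k+1)\bigl(|B_j|-|L_\kappa|\bigr)\qquad\text{for every }j .
\end{equation}

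Next comes the contradiction via sparsity. Suppose $|\bar J_k|>3|L_k|$; since $|L_k|\le n^{c}$ with $c<1$, pass to a prefix $B^\star$ of size $\min\{m,\,3n^{c}\}$, so that $|B^\star\cap L_k|\le n^{c}\le|B^\star|/3$. One checks, by the same Chernoff / first‑moment estimate behind Lemma~\ref{lem:size degree k} applied at the threshold $\kappa$, that $|L_\kappa|\le\tfrac12|B^\star|$ with high probability, and one chooses $\kappa$ so that $\kappa-k+1\ge 8$; then \eqref{eq:prefix-dense} forces $e(G\{B^\star\})\ge 2|B^\star|$. On the other hand, a routine first‑moment bound shows that with probability $1-n^{-\Omega(1)}$ \emph{every} vertex set $B$ with $2\le|B|\le 3n^{c}$ satisfies $e(G\{B\})\le 2|B|$: for fixed $|B|=s$, using $p\ge q$,
\begin{equation}\label{eq:no-dense}
    \P\bigl(\exists\,B:\ |B|=s,\ e(G\{B\})>2s\bigr)\ \le\ \binom ns\binom{\binom s2}{2s}p^{2s}\ \le\ \bigl(C\,np\cdot sp\bigr)^{s},
\end{equation}
whose base is $O\bigl((\log n)^2 n^{c-1}\bigr)=n^{-\Omega(1)}$ because $p=O(\log n/n)$ and $s\le 3n^{c}$ with $c<1$; summing over $s$ preserves the bound $n^{-\Omega(1)}$. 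This contradicts $e(G\{B^\star\})\ge 2|B^\star|$, so $|\bar J_k|\le 3|L_k|$ with high probability. Two loose ends must be tidied up: the Chernoff estimate that controls $|L_\kappa|$ (identical in form to Lemma~\ref{lem:size degree k}, using that in the relevant regime $k\log(np)=o(\log n)$ so that inflating the threshold from $k$ to $\Theta(k)$ costs only $n^{o(1)}$), and the harmless off‑by‑one between $L_k$ (degree $\le k$) and the set $L_{k-1}$ actually deleted first. Passing to the capped prefix $B^\star$ rather than to $\bar J_k$ itself is precisely what keeps us inside the range $|B|\le 3n^{c}$ where \eqref{eq:no-dense} is available.

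The main obstacle is the degree‑threshold bookkeeping in the second step: one must take $\kappa$ small enough that $|L_\kappa|$ stays comfortably below $\tfrac12|B^\star|$ — which ties $\kappa$ to both $k$ and $c$ through the first‑moment bound on low‑degree vertices — yet large enough that $\kappa-k+1$ exceeds the density constant in \eqref{eq:no-dense}. Calibrating these two requirements simultaneously (and thereby pinning the absolute constant $3$, as well as covering the case where $|\bar J_k|$ a priori exceeds $3n^{c}$) is where essentially all the work lies; once the threshold is chosen, the remainder is a union bound plus Chernoff. I would structure the final write‑up so that \eqref{eq:prefix-dense}, the $|L_\kappa|$ estimate, and \eqref{eq:no-dense} are separate claims, and then combine them exactly as above.
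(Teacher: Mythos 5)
This lemma is not proved in the paper; it is quoted from \cite{RS23} (the paper only points to the \L{}uczak expansion), so the only question is whether your argument is sound. Your overall skeleton is the right one — a peeling order of $\bar J_k$, a prefix $B^\star$ of size $\min\{|\bar J_k|,\,O(n^{c})\}$, a lower bound on $e(G\{B^\star\})$ from the peeling structure, and a first-moment ``no small dense subgraph'' bound — but the specific calibration through the auxiliary threshold $\kappa$ contains a genuine gap. You need $|L_\kappa|\le\tfrac12|B^\star|$, and in the borderline case $|\bar J_k|=3|L_k|+1$ the prefix has size $|B^\star|=3|L_k|+1$, so your requirement becomes $|L_\kappa|\le \tfrac32|L_k|+O(1)$ for some $\kappa\ge k+7$. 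This is not controllable: in the relevant regime the average degree $\lambda=\Theta(\log n)$ satisfies $\lambda\gg k$, so the degree distribution is increasing on $[0,\kappa]$ and $\E|L_{k+j}|/\E|L_k|\approx(\lambda/k)^{j}\gg 1$ already for $j=1$. The Chernoff/first-moment estimate behind Lemma~\ref{lem:size degree k} only pins $|L_\kappa|$ and $|L_k|$ to within $n^{o(1)}$ multiplicative factors, which is exactly the slack that destroys a factor-$\tfrac32$ comparison. Since the prefix length is capped by $|\bar J_k|$ — the very quantity being bounded — you also cannot escape by enlarging $B^\star$. So no admissible $\kappa>k$ makes your two requirements compatible, and the constant $3$ cannot be recovered this way.

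The standard fix is to drop $\kappa$ entirely and weaken the density threshold. Every $v\in B^\star\setminus L_k$ has $\deg_G(v)\ge k+1$ but at most $k-1$ neighbours among later-peeled vertices and $J_k$, hence at least $2$ neighbours among \emph{earlier}-peeled vertices, all of which lie in $B^\star$; counting each such edge once (for its later endpoint) gives
\begin{equation*}
e\bigl(G\{B^\star\}\bigr)\;\ge\;2\bigl(|B^\star|-|L_k|\bigr)\;\ge\;\tfrac43|B^\star|
\end{equation*}
whenever $|B^\star|\ge 3|L_k|$. One then runs your first-moment computation \eqref{eq:no-dense} at density $\tfrac43$ instead of $2$: the bound becomes $\binom ns\binom{\binom s2}{\lceil 4s/3\rceil}p^{4s/3}\le\bigl(Cns^{1/3}p^{4/3}\bigr)^{s}=\bigl(n^{-(1-c)/3+o(1)}\bigr)^{s}$, which still vanishes after summing over $2\le s\le 3n^{c}$ because any density strictly above $1$ suffices for sparse graphs with $p=O(\log n/n)$. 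With this change the rest of your write-up (the prefix construction, the union bound, the off-by-one remark) goes through and yields exactly $|\bar J_k|\le 3|L_k|$.
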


\subsection{Proofs of Theorems~\ref{thm:k-core matching} and \ref{thm:exact matching} and Lemmas \ref{lem:Balanced comm} and \ref{lem:size degree k}}\label{sec:proof k core}

Based on the lemmas above, we can now prove Theorem \ref{thm:k-core matching} as follows.

\begin{IEEEproof}[Proof of Theorem \ref{thm:k-core matching}]
    If $ns^2\frac{p+q}{2}\geq (1+\epsilon)\log n$ for an arbitrary small constant $\epsilon>0$, then we can obtain $\widehat{\varphi}_k=\pi_*$ by Theorem \ref{thm:exact matching}. 
Now, let us consider the case where $nps^2,nqs^2=O(\log n)$.
    By Lemma \ref{lem:k-core lem2} and the definition of $\xi$ in \eqref{def:xi}, we have that
    \begin{equation}\label{eq:xi upper}
        \begin{aligned}
            \xi &\leq 3^{1/t} \exp \left(-\theta k+e^{2 \theta} ps^2+n e^{6 \theta} p^2s^2\right) \leq 3 \exp \left(-\theta k+e^{2 \theta} ps^2+n e^{6 \theta} p^2s^2\right)
        \end{aligned}
    \end{equation}
     for any $\theta>0$. Through Lemma \ref{lem:k-core lem1} and \eqref{eq:xi upper}, we can show that if there exists an $\theta>0$ such that $\theta k-e^{2 \theta} ps^2-n e^{6 \theta} p^2s^2 \geq 2 \log n +\omega(1)$, then $\widehat{M}_k=M_k \text { and } \widehat{\varphi}_k\{\widehat{M}_k\}=\pi_*\{\widehat{M}_k\}$ hold with probability $1-o(1)$. Let $\theta=(\log \log n)^{2.5}$ and recall that $p\leq O\left(e^{-(\log \log n)^3}\right)$ and $k=\frac{nps^2}{(\log nps^2)^2} \vee \frac{\log n}{(\log \log n)^2}$. Then, we can have that
     \begin{equation}
         \theta k \geq \log n \cdot (\log \log n)^{0.5}
     \end{equation}
and 
\begin{equation}
    e^{2\theta} ps^2 \leq  e^{2\theta} p =o(1).
\end{equation}
     Moreover, we can obtain
     \begin{equation}
         \theta k \geq \theta \frac{nps^2}{(\log nps^2)^2} \geq  (\log \log n)^{2.5}\frac{nps^2}{(\log n)^2} \geq 2ne^{6\theta}p^2s^2.
     \end{equation}
     The last inequality holds by $p\leq O\left(e^{-(\log \log n)^3}\right)$. Therefore, we have 
     \begin{equation}
         \theta k-e^{2 \theta} ps^2-n e^{6 \theta} p^2s^2 \geq \frac{1}{3}\theta k = \omega(\log n). 
     \end{equation}

    Now, we will prove that $|M_k|\geq n-n^{1-\frac{ns^2(p+q)}{2\log n}+o(1)}$.  If $ns^2(p+q)=o(\log n)$, the right-hand side converges to 0, making the result trivial. So, let us consider the case where $ns^2(p+q)=\Theta(\log n)$.    
    Recall that $M_k$ is the $k$-core of $G_1\wedge_{\pi_*}G_2$. Let $\overline{L}_k:=\{i\in [n] : \deg_{G_1\wedge_{\pi_*}G_2}(i) \leq k\}$ and $\overline{M}_k:=[n]\backslash M_k$.   
    Since $G_1\wedge_{\pi_*}G_2 \sim \textnormal{SBM}(n,ps^2,qs^2)$, combining Lemma \ref{lem:Balanced comm} and Lemma \ref{lem:size degree k} allows us to conclude that 
    \begin{equation}\label{eq:L1}
        \E[|\overline{L}_k|]\leq   n\exp\left(-ns^2\frac{p+q}{2} +o\left(ns^2\frac{p+q}{2}\right)+ k \log nps^2+1\right)
    \end{equation}
    with probability $1-o(1)$. By Markov's inequality, we can also obtain that 
 \begin{equation}\label{eq:L2}
        |\overline{L}_k| \leq (\log n) \E [|\overline{L}_k|]
    \end{equation}
with probability at least $1-\frac{1}{\log n}$. By applying Lemma \ref{lem:3L} with \eqref{eq:k-core k regmie}, \eqref{eq:L1} and \eqref{eq:L2}, we can obtain
\begin{equation}
    \overline{M}_k \leq 3|\overline{L}_k| \leq n^{1-\frac{ns^2(p+q)}{2\log n}+o(1)}. 
\end{equation}
Hence, the proof is complete.
\end{IEEEproof}

\begin{IEEEproof}[Proof of Theorem \ref{thm:exact matching}]
    
For the same reasons as in the proof of Theorem \ref{thm:k-core matching}, we obtain that $\widehat{M}_k=M_k \text { and } \widehat{\varphi}_k\{\widehat{M}_k\}=\pi_*\{\widehat{M}_k\}$ with probability $1-o(1)$. Thus, if we can show that $|M_k|=n$ with probability $1-o(1)$, the proof will be complete.

On the event $\mathcal{B}$, we can obtain 
   \begin{equation}\label{eq:Mk=n}
        \begin{aligned}
            \P(|M_k|\neq n)&=\P(d_{\min}(G_1\wedge_{\pi_*}G_2)<k)\\
            & \stackrel{(a)}{\leq} n\P(\deg_{G_1\wedge_{\pi_*}G_2}(i)<k)\\
            & \stackrel{(b)}{\leq} n  \exp\left(-ns^2\frac{p+q}{2} +o\left(ns^2\frac{p+q}{2}\right)+ k \log nps^2+1\right)\\
            & \stackrel{(c)}{=} o(1)            .
        \end{aligned}
    \end{equation}
    The inequality $(a)$ holds by taking union bound over $i\in [n]$, the inequality $(b)$ holds by Lemma \ref{lem:size degree k}, and the equality $(c)$ holds by \eqref{eq:k-core k regmie} and \eqref{eq:exact matching}. Therefore, combining \eqref{eq:Mk=n} and Lemma \ref{lem:Balanced comm}, the proof is complete.
\end{IEEEproof}

\begin{IEEEproof}[Proof of Lemma \ref{lem:Balanced comm}]
    We know that $|V^+| \sim \operatorname{Bin}(n,1/2)$. By applying Hoeffding's inequality (stated in Lemma \ref{lem:hoeff}), we can obtain that
    \begin{equation}
        \P\left( \left| |V^+|-n/2  \right| \geq n^{2/3} \right) \leq 2\exp(-2n^{1/3}).
    \end{equation}
     Since $|V^-|$ also follows a $\operatorname{Bin}(n,1/2)$, the proof is complete.
\end{IEEEproof}

\begin{IEEEproof}[Proof of Lemma \ref{lem:size degree k}]
 Let $n_1 := |V^+|$ and $n_2:=|V^-|$. On the event $\mathcal{B}$, we can obtain that 
 \begin{equation}\label{eq:n1n2}
     \frac{n}{2}-n^{2/3} \leq n_1, n_2\leq \frac{n}{2}+n^{2/3}.
 \end{equation} 
For $i\in V^{+}$, it holds that $\deg_G(i)=\sum^{n_1-1}_{j=1}X_j+\sum^{n_2}_{j=1}Y_j$, where $X_j\sim \operatorname{Bern}(p)$ and $Y_j\sim \operatorname{Bern}(q)$. Therefore, we can obtain that 
    \begin{equation}
        \begin{aligned}
              \P(\deg_G(i) \leq k)&=\P\left(\sum^{n_1-1}_{j=1}X_j+\sum^{n_2}_{j=1}Y_j\leq k \right)\\
            & \stackrel{(a)}{\leq} \inf_{t>0}(1-p+pe^{-t})^{n_1-1}(1-q+qe^{-t})^{n_2} e^{kt}\\
            &\stackrel{(b)}{\leq} \inf_{t>0} \exp\left(-(n_1-1)p(1-e^{-t})-n_2q(1-e^{-t})+kt \right)\\
            &\stackrel{(c)}{\leq}  \exp\left(-n\frac{p+q}{2} +o\left(n\frac{p+q}{2}\right)+ k \log np+1\right).
        \end{aligned}
    \end{equation}
    The inequality $(a)$ holds by Chernoff bound, the inequality $(b)$ holds since $1-x\geq e^{-x}$, and the inequality $(c)$ holds by \eqref{eq:n1n2} and choosing $t= \log np$.
    Since the same result can be obtained for $i\in V^{-}$ as well, the proof is complete.
\end{IEEEproof}

\section{Proof of Theorem \ref{thm:csbm matching imp}: \\ Impossibility of Exact Matching in Correlated Contextual Stochastic Block Models}\label{sec:proof of csbm matching imp}

To prove Theorem \ref{thm:csbm matching imp}, we will analyze the MAP estimator and find the conditions where the MAP estimator fails. The posterior distribution of the permutation $\pi \in S_n$ in the correlated SBMs was analyzed in \cite{YC23,RS21,RS21supp,GRS22}. We will introduce this result and the related lemmas and explain how these previous results can be utilized to prove Theorem \ref{thm:csbm matching imp}.  

Given community labels $\bssigma_2$, for $a,b\in \{0,1\}$, define
\begin{align*}
        \psi^+_{ab}(\pi)&:=\sum_{(\pi(i),\pi(j)) \in \mathcal{E}^+(\bssigma_2)} \mathds{1}\{(A_{i,j},B_{\pi(i),\pi(j)})=(a,b) \},\\
          \psi^-_{ab}(\pi)&:=\sum_{(\pi(i),\pi(j)) \in \mathcal{E}^-(\bssigma_2)} \mathds{1}\{(A_{i,j},B_{\pi(i),\pi(j)})=(a,b) \},\\
          \chi^+(\pi)&:=\sum_{(\pi(i),\pi(j)) \in \mathcal{E}^+(\bssigma_2)} B_{\pi(i),\pi(j)},\\
           \chi^-(\pi)&:=\sum_{(\pi(i),\pi(j)) \in \mathcal{E}^+(\bssigma_2)} B_{\pi(i),\pi(j)},
    \end{align*}
where $\mathcal{E}^{+}(\bssigma_2)$ represents the set of node pairs belonging to the same community in the graph $G_2$, while $\mathcal{E}^{-}(\bssigma_2)$ represents the set of node pairs belonging to different communities in the graph $G_2$. 

We will also use the notation for the edge probabilities as follows.
\begin{equation*}
    p_{ab}:=\P\left((A_{i,j},B_{\pi_*(i),\pi_*(j)})=(a,b) | \bssigma \right)=\begin{cases} 
		ps^2 & \text{if } (a,b)=(1,1) \text{ and } \sigma_i=\sigma_j; \\ 
        ps(1-s) & \text{if } (a,b)=(1,0),(0,1) \text{ and } \sigma_i=\sigma_j; \\
        1-2ps+ps^2 &\text{if } (a,b)=(0,0) \text{ and } \sigma_i=\sigma_j.
     \end{cases}
\end{equation*}  
\begin{equation*}
      q_{ab}:=\P\left((A_{i,j},B_{\pi_*(i),\pi_*(j)})=(a,b) | \bssigma \right)=\begin{cases} 
		qs^2 & \text{if } (a,b)=(1,1) \text{ and } \sigma_i \neq \sigma_j; \\ 
        qs(1-s) & \text{if } (a,b)=(1,0),(0,1) \text{ and } \sigma_i \neq \sigma_j;\\
        1-2qs+qs^2 &\text{if } (a,b)=(0,0)\text{ and } \sigma_i \neq \sigma_j.
     \end{cases}
\end{equation*}  

That is, $p_{ab}$   represents the edge probability between nodes within the same community, while $q_{ab}$ represents the edge probability between nodes in different communities. 
Given $A,B$ and $\bssigma_2$, the posterior distribution for $\pi$ can be written as follows:

\begin{lemma}[Lemma 49 in \cite{GRS22}]\label{lem:posterior}
Let $\pi \in S_n$. Then, we have
\begin{equation}
    \mathbb{P}\left(\pi_*=\pi \mid A, B, \bssigma_2 \right)=c\left(\frac{p_{00} p_{11}}{p_{01} p_{10}}\right)^{\psi^{+}_{11}(\pi)}\left(\frac{q_{00} q_{11}}{q_{01} q_{10}}\right)^{\psi^{-}_{11}(\pi)}\left(\frac{p_{01}}{p_{00}}\right)^{\chi^{+}(\pi)}\left(\frac{q_{01}}{q_{00}}\right)^{\chi^{-}(\pi)},
\end{equation}
where $c$ is a constant depending on $A,B$ and $\bssigma_2$.
\end{lemma}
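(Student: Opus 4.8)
The plan is to derive the posterior by a direct Bayes computation: reduce to a likelihood, factor that likelihood into mutually independent per-pair contributions, and then collect the per-pair factors into the four statistics appearing in the statement, absorbing everything $\pi$-independent into the normalizing constant.

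First I would check that conditioning on $\bssigma_2$ does not disturb the prior on $\pi_*$. Since $\pi_*$ is uniform on $S_n$ and the parent label vector $\bssigma$ is i.i.d.\ uniform and independent of $\pi_*$, the pair $(\pi_*,\bssigma_2)$ with $\bssigma_2=\bssigma\circ\pi_*^{-1}$ has product law, so $\pi_*\mid\bssigma_2$ is still uniform. Bayes' rule then gives $\mathbb{P}(\pi_*=\pi\mid A,B,\bssigma_2)=c'\,\mathbb{P}(A,B\mid\pi_*=\pi,\bssigma_2)$ with $c'=c'(A,B,\bssigma_2)$, so everything reduces to evaluating the likelihood.

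Next I would factor the likelihood over vertex pairs and then collapse it onto $\psi^+_{11},\psi^-_{11},\chi^+,\chi^-$. On the event $\{\pi_*=\pi\}$ the labels of $G_1$ are $\bssigma=\bssigma_2\circ\pi$, so conditioning on $\{\pi_*=\pi,\bssigma_2\}$ is the same as conditioning on $\bssigma$; under this conditioning the parent-edge indicators together with the two independent $\mathrm{Bernoulli}(s)$ subsampling filters make the variables $(A_{ij},B_{\pi(i),\pi(j)})$, $\{i,j\}\in\mathcal{E}$, mutually independent, with marginal law $(p_{ab})$ when $\{\pi(i),\pi(j)\}\in\mathcal{E}^+(\bssigma_2)$ and $(q_{ab})$ when $\{\pi(i),\pi(j)\}\in\mathcal{E}^-(\bssigma_2)$, where $p_{ab},q_{ab}$ are the doubly-subsampled parent probabilities tabulated just before the lemma. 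Grouping the pairs by outcome $(a,b)$ and by intra-/inter-community type under $\pi$ yields
\[
\mathbb{P}(A,B\mid\pi_*=\pi,\bssigma_2)=\prod_{a,b}p_{ab}^{\psi^+_{ab}(\pi)}\prod_{a,b}q_{ab}^{\psi^-_{ab}(\pi)}.
\]
Now I would use three ingredients: the subsampling symmetry $p_{01}=p_{10}$ and $q_{01}=q_{10}$; the $\pi$-invariant totals $\sum_{a,b}\psi^\pm_{ab}(\pi)=|\mathcal{E}^\pm(\bssigma_2)|$; and the identities $\psi^+_{11}(\pi)+\psi^+_{01}(\pi)=\chi^+(\pi)$ and $\psi^-_{11}(\pi)+\psi^-_{01}(\pi)=\chi^-(\pi)$. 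Eliminating $\psi^\pm_{10},\psi^\pm_{01},\psi^\pm_{00}$ in favor of $\psi^\pm_{11}$, $\chi^\pm$, and the invariant totals, and merging the $(1,0)$- and $(0,1)$-exponents via the symmetry, the exponent of $p_{00}p_{11}/(p_{01}p_{10})$ becomes $\psi^+_{11}(\pi)$, that of $q_{00}q_{11}/(q_{01}q_{10})$ becomes $\psi^-_{11}(\pi)$, that of $p_{01}/p_{00}$ becomes $\chi^+(\pi)$, that of $q_{01}/q_{00}$ becomes $\chi^-(\pi)$, and the leftover factors carry $\pi$-invariant exponents which, together with $c'$, define $c=c(A,B,\bssigma_2)$; this is the claimed identity.

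The delicate part is precisely this final collection step: one must verify that, after substituting for $\psi^\pm_{10},\psi^\pm_{01},\psi^\pm_{00}$, the only surviving $\pi$-dependence sits in $\psi^+_{11},\psi^-_{11},\chi^+,\chi^-$ and every remaining power of $p_{00},p_{01},q_{00},q_{01}$ reduces to a $\pi$-invariant quantity. This rests on (a) the symmetry $p_{01}=p_{10}$ (and $q_{01}=q_{10}$), without which the $(1,0)$- and $(0,1)$-counts could not be combined, and (b) the fact that $\{i,j\}\mapsto\{\pi(i),\pi(j)\}$ is a bijection of unordered pairs, so that summing any statistic over the $\pi$-twisted intra-community pairs equals summing it over $\mathcal{E}^+(\bssigma_2)$, which forces totals such as $|\mathcal{E}^\pm(\bssigma_2)|$ and the within-/between-community edge counts of $G_2$ to be independent of $\pi$. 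A secondary point to make explicit is the conditional-independence assertion underlying the product form, namely that $\{\pi_*=\pi,\bssigma_2\}$ and $\{\bssigma=\bssigma_2\circ\pi\}$ induce the same conditional law on $(A,B)$, which is exactly where the parent/subsample structure of the CCSBM is used.
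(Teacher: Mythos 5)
The paper does not prove this lemma (it is imported from \cite{GRS22}), so your proposal has to stand on its own. Your overall route --- Bayes with a uniform prior, conditional independence of the pairs $(A_{ij},B_{\pi(i),\pi(j)})$ given $\{\pi_*=\pi,\bssigma_2\}$, and the product form $\prod_{a,b}p_{ab}^{\psi^+_{ab}(\pi)}\prod_{a,b}q_{ab}^{\psi^-_{ab}(\pi)}$ --- is the right one, and those steps are correct. The problem is exactly in the step you flag as delicate. The relations you list ($\sum_{a,b}\psi^\pm_{ab}=|\mathcal{E}^\pm(\bssigma_2)|$ and $\psi^\pm_{01}+\psi^\pm_{11}=\chi^\pm$) determine $\psi^\pm_{01}$ and the sum $\psi^\pm_{00}+\psi^\pm_{10}$, but not $\psi^\pm_{10}$ itself. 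To finish the elimination you also need $\phi^\pm(\pi):=\psi^\pm_{10}(\pi)+\psi^\pm_{11}(\pi)=\sum_{(\pi(i),\pi(j))\in\mathcal{E}^\pm(\bssigma_2)}A_{ij}$, the number of $G_1$-edges that $\pi$ maps into intra-/inter-community pairs of $G_2$. Carrying the algebra through (with $p_{01}=p_{10}$) gives
\begin{equation*}
\prod_{a,b}p_{ab}^{\psi^+_{ab}(\pi)}=\Bigl(\tfrac{p_{00}p_{11}}{p_{01}p_{10}}\Bigr)^{\psi^+_{11}(\pi)}\Bigl(\tfrac{p_{01}}{p_{00}}\Bigr)^{\chi^{+}+\phi^{+}(\pi)}p_{00}^{|\mathcal{E}^+(\bssigma_2)|},
\end{equation*}
so the exponent of $p_{01}/p_{00}$ is $\chi^{+}+\phi^{+}(\pi)$, not $\chi^{+}$. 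Your justification (b) --- that summing ``any statistic'' over the $\pi$-twisted pairs equals summing over $\mathcal{E}^\pm(\bssigma_2)$ --- is valid only for functions of $B$, which are indexed by the \emph{image} pair $\{\pi(i),\pi(j)\}$; it fails for functions of $A$, which are indexed by the \emph{source} pair $\{i,j\}$. Concretely, $\phi^+(\pi)$ genuinely varies with $\pi$ (take $A$ with a single edge and move it across the $\bssigma_2$-partition), and since $p_{01}/p_{00}\neq q_{01}/q_{00}$ when $p\neq q$, the leftover factor $(p_{01}/p_{00})^{\phi^+(\pi)}(q_{01}/q_{00})^{\phi^-(\pi)}$ cannot be absorbed into $c(A,B,\bssigma_2)$.

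The resolution is that in \cite{GRS22} the statistic $\chi^{\pm}(\pi)$ is defined as $\sum_{(\pi(i),\pi(j))\in\mathcal{E}^\pm(\bssigma_2)}\bigl(A_{ij}+B_{\pi(i),\pi(j)}\bigr)$, i.e.\ it includes the $A$-summand; the paper's displayed definition drops it (note also that the paper's definitions of $\chi^+$ and $\chi^-$ are verbatim identical, an evident typo). With that corrected definition your computation closes exactly, and the restriction to $\pi\in\mathcal{T}_*$ in Lemma~\ref{lem:psi chi} becomes meaningful (with only the $B$-summand, $\chi^\pm(\pi)=\chi^\pm(\pi_*)$ would hold trivially for every $\pi\in S_n$). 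As written, however, your proof asserts an identity that the algebra does not deliver: you must either track the $\phi^\pm(\pi)$ terms explicitly or adopt the $A+B$ definition of $\chi^\pm$.
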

For a permutation $\pi \in S_n$, let us define
\begin{equation}
    \mathcal{H}(\pi):=\left\{i \in[n]: \forall j \in[n], A_{i, j} B_{\pi(i), \pi(j)}=0\right\},
\end{equation}
where $A$ and $B$ are the adjacency matrices of the graphs $G_1$ and $G_2$. 
Additionally, let $\mathcal{H}(\pi)^+ := \mathcal{H}(\pi) \cap V^+$ and $\mathcal{H}(\pi)^- := \mathcal{H}({\pi}) \cap V^-$. To simplify the notation, let the node sets corresponding to $\pi_*$  be denoted by $\mathcal{H}_*,\mathcal{H}_*^+$ and $\mathcal{H}_*^-$.

Now, we will derive a lower bound for $|\mathcal{H}_*^+|$ and $|\mathcal{H}_*^-|$. Recall that
\begin{equation}
       \mathcal{B}:=\left\{ \frac{n}{2}-n^{2/3} \leq |V^+|,|V^-|\leq \frac{n}{2}+n^{2/3}\right\}
\end{equation} 
as defined in Definition \ref{def:balanced}.

\begin{lemma}\label{lem:size H}
    Suppose that 
    the event $\mathcal{B}$ holds. If $ns^2\frac{p+q}{2} =O(\log n)$, then it holds that 
    \begin{equation}
        |\mathcal{H}_*^+|,|\mathcal{H}_*^-| \geq \frac{1}{8} n^{1-\frac{ns^2(p+q)}{2 \log n}}
    \end{equation}
    with probability $1-o(1)$.
\end{lemma}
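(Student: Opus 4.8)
\medskip

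The plan is to show that a constant fraction of the $n^{1-\frac{ns^2(p+q)}{2\log n}}$ ``expected isolated vertices'' in the intersection graph $G_1\wedge_{\pi_*}G_2$ actually lie in each community. Since $\pi_*$ is the identity without loss of generality, $A\wedge_{\pi_*}B$ is the adjacency matrix of a graph distributed as $\textnormal{SBM}(n,ps^2,qs^2)$, so $\mathcal{H}_*$ is precisely its set of isolated vertices, and $\mathcal{H}_*^+$, $\mathcal{H}_*^-$ are the isolated vertices inside $V^+$ and $V^-$ respectively. First I would condition on $\bssigma$ and on the event $\mathcal{B}$, so that $n_1:=|V^+|$ and $n_2:=|V^-|$ both lie in $[\tfrac n2-n^{2/3},\tfrac n2+n^{2/3}]$. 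For a fixed $i\in V^+$, the event $\{i\in\mathcal{H}_*^+\}$ is exactly that $i$ has no neighbour in $A\wedge_{\pi_*}B$, which happens with probability
\begin{equation}
  (1-ps^2)^{n_1-1}(1-qs^2)^{n_2}
  = \exp\Bigl(-(n_1-1)ps^2 - n_2 qs^2 + O(n p^2 s^4)\Bigr)
  = n^{-\frac{ns^2(p+q)}{2\log n}\,(1+o(1))},
\end{equation}
using $\log(1-x)=-x+O(x^2)$ and the assumption $ns^2\tfrac{p+q}{2}=O(\log n)$ (so $np^2s^4=o(1)$) together with $|n_i-\tfrac n2|\le n^{2/3}$. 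Hence $\E[|\mathcal{H}_*^+|\mid\bssigma]\ge \tfrac12 n^{1-\frac{ns^2(p+q)}{2\log n}}$ for large $n$, and similarly for $|\mathcal{H}_*^-|$.

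\medskip

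The main work is the concentration step: turning the first-moment estimate into a high-probability lower bound. I would compute the second moment of $|\mathcal{H}_*^+|=\sum_{i\in V^+}\mathds{1}(i\in\mathcal{H}_*^+)$. For distinct $i,j\in V^+$,
\begin{equation}
  \P(i,j\in\mathcal{H}_*^+)
  = (1-ps^2)^{2n_1-3}(1-qs^2)^{2n_2}(1-ps^2)^{-1}
  \le \frac{\P(i\in\mathcal{H}_*^+)^2}{1-ps^2},
\end{equation}
since the only shared randomness between the two isolation events is the single pair $\{i,j\}$ (its edge is ``double counted'' in the product of marginals, and we divide it back in once). Therefore the pairwise correlations are negligible: writing $\mu_+:=\E[|\mathcal{H}_*^+|\mid\bssigma]$, one gets $\var(|\mathcal{H}_*^+|\mid\bssigma)\le \mu_+ + \tfrac{ps^2}{1-ps^2}\mu_+^2 = \mu_+ + o(\mu_+^2)$, because $ps^2 = o(1)$. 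By Chebyshev's inequality, $|\mathcal{H}_*^+|\ge \tfrac12\mu_+\ge \tfrac14 n^{1-\frac{ns^2(p+q)}{2\log n}}$ with probability $1-o(1)$; the same bound holds for $|\mathcal{H}_*^-|$, and a union bound over the two events finishes the conditional statement. Averaging over $\bssigma$ restricted to $\mathcal{B}$ (which has probability $1-o(1)$ by Lemma~\ref{lem:Balanced comm}) gives the claim.

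\medskip

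The step I expect to require the most care is the exponent bookkeeping in the first-moment computation: one must verify that the deviations $n_i-\tfrac n2 = O(n^{2/3})$ and the second-order term $np^2s^4$ are both $o\bigl(ns^2(p+q)\bigr)=o(\log n)$, so that they only contribute to the $(1+o(1))$ factor in the exponent rather than changing the leading constant $\tfrac{ns^2(p+q)}{2\log n}$ — this is where the hypotheses $ns^2\tfrac{p+q}{2}=O(\log n)$ and $ps^2=o(1)$ are used. Everything else is a routine first- and second-moment argument, analogous to (and slightly simpler than) the degree-tail bound already carried out in Lemma~\ref{lem:size degree k}.
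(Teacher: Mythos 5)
Your proposal is correct and follows essentially the same route as the paper: a first-moment computation of $\E[|\mathcal{H}_*^+|]=n_1(1-ps^2)^{n_1-1}(1-qs^2)^{n_2}\geq \tfrac14 n^{1-\frac{ns^2(p+q)}{2\log n}}$ on $\mathcal{B}$, followed by the second-moment/Chebyshev argument exploiting that distinct isolation events share only the single edge $\{i,j\}$, so the variance is $\mu_+ + o(\mu_+^2)$ when $ps^2=o(1)$. The only differences are cosmetic (you track the error via $\log(1-x)=-x+O(x^2)$ where the paper uses $e^x\geq 1+x$, and you land on the constant $\tfrac14$ rather than $\tfrac18$, which only strengthens the bound).
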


We define the set of permutations $\mathcal{T}_*$ as follows. The permutations belonging to  $\mathcal{T}_*$ will have a posterior probability greater than or equal to that of $\pi_*$

\begin{definition}\label{def:T}
  A permutation  $\pi\in \mathcal{T}_*$ if and only if the following conditions hold: 
\begin{itemize}
    \item $\pi(i)=\pi_*(i)$  \; \quad \ if $i\in  [n] \backslash \mathcal{H}_*$
    \item $\pi(i)=\pi_*(\rho^+(i))$  if $i \in \mathcal{H}_*^+$,
     \item $\pi(i)=\pi_*(\rho^-(i))$  if $i \in \mathcal{H}_*^-$,
\end{itemize}
    where $\rho^+$ and $\rho^-$ are any permutations over $\mathcal{H}^+_*$ and $\mathcal{H}^-_*$, respectively.
\end{definition}

The permutations in set $\mathcal{T}_*$ satisfy the following properties: 
\begin{lemma}[Proposition C.2 and C.3 in \cite{RS21supp}]\label{lem:psi chi}
    For any permutation $\pi \in \mathcal{T}_*$, we have that $\psi^+_{11}(\pi)\geq \psi^+_{11}(\pi_*)$, $\psi^-_{11}(\pi)\geq \psi^-_{11}(\pi_*)$, $\chi^+(\pi)=\chi^+(\pi_*)$ and $\chi^-(\pi)=\chi^-(\pi_*)$.
\end{lemma}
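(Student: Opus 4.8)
The plan is to treat the four assertions separately, handling $\chi^+,\chi^-$ by a change of summation index and $\psi^+_{11},\psi^-_{11}$ by a termwise comparison. For $\chi^+(\pi)$, observe that both the summand $B_{\pi(i),\pi(j)}$ and the constraint $(\pi(i),\pi(j))\in\mathcal{E}^+(\bssigma_2)$ depend on $(i,j)$ only through the pair $\{\pi(i),\pi(j)\}$. Since $\pi$ is a bijection of $[n]$, reindexing by $k=\pi(i)$, $l=\pi(j)$ rewrites the sum as $\sum_{\{k,l\}\in\mathcal{E}^+(\bssigma_2)}B_{k,l}$, which carries no dependence on $\pi$ at all; hence $\chi^+(\pi)=\chi^+(\pi_*)$ for \emph{every} $\pi\in S_n$, and likewise $\chi^-(\pi)=\chi^-(\pi_*)$. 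So these two equalities use nothing about $\mathcal{T}_*$.

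For the $\psi^+_{11}$ inequality the first step is a label-alignment claim: for $\pi\in\mathcal{T}_*$ and every pair $\{i,j\}$ one has $\sigma^2_{\pi(i)}\sigma^2_{\pi(j)}=\sigma^2_{\pi_*(i)}\sigma^2_{\pi_*(j)}$, equivalently $(\pi(i),\pi(j))\in\mathcal{E}^+(\bssigma_2)\iff(\pi_*(i),\pi_*(j))\in\mathcal{E}^+(\bssigma_2)$. This is checked through the three cases in Definition~\ref{def:T}: if $i\notin\mathcal{H}_*$ then $\pi(i)=\pi_*(i)$; if $i\in\mathcal{H}_*^+$ then $\pi(i)=\pi_*(\rho^+(i))$ with $\rho^+(i)\in\mathcal{H}_*^+\subseteq V^+$, so, using $\bssigma^2=\bssigma\circ\pi_*^{-1}$, $\sigma^2_{\pi(i)}=\sigma_{\rho^+(i)}=+1=\sigma_i=\sigma^2_{\pi_*(i)}$; and symmetrically for $i\in\mathcal{H}_*^-$. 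Consequently the index sets of $\psi^+_{11}(\pi)$ and $\psi^+_{11}(\pi_*)$ coincide, so the two quantities can be compared term by term over a common collection of pairs.

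The termwise comparison is then immediate. Fix a pair $\{i,j\}$ in this common index set and suppose it contributes $1$ to $\psi^+_{11}(\pi_*)$, i.e.\ $A_{i,j}=1$ and $B_{\pi_*(i),\pi_*(j)}=1$. Then $A_{i,j}B_{\pi_*(i),\pi_*(j)}=1\neq0$, which by the definition of $\mathcal{H}_*$ forces both $i\notin\mathcal{H}_*$ and $j\notin\mathcal{H}_*$; hence $\pi(i)=\pi_*(i)$ and $\pi(j)=\pi_*(j)$ by Definition~\ref{def:T}, so $B_{\pi(i),\pi(j)}=B_{\pi_*(i),\pi_*(j)}=1$ and the pair also contributes $1$ to $\psi^+_{11}(\pi)$. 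Thus each term of $\psi^+_{11}(\pi_*)$ is dominated by the corresponding term of $\psi^+_{11}(\pi)$, giving $\psi^+_{11}(\pi)\ge\psi^+_{11}(\pi_*)$; the bound $\psi^-_{11}(\pi)\ge\psi^-_{11}(\pi_*)$ follows by the identical argument with $\mathcal{E}^+$ replaced by $\mathcal{E}^-$.

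The only genuine subtlety — and the single place where the precise definition of $\mathcal{T}_*$ is used — is the label-alignment step: one must exploit that $\rho^+$ and $\rho^-$ permute $\mathcal{H}_*^+$ and $\mathcal{H}_*^-$ \emph{separately} (rather than mixing all of $\mathcal{H}_*$), so that the $G_2$-community labels, and hence the constraint sets defining $\psi^\pm_{11}$, are preserved in passing from $\pi_*$ to $\pi$. Everything else is bookkeeping, and the argument parallels Propositions~C.2 and C.3 of \cite{RS21supp}, the difference being only the extra conditioning on $\bssigma_2$.
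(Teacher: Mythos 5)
Your proof is correct. The paper does not prove this lemma itself but imports it from Propositions C.2 and C.3 of \cite{RS21supp}; your argument — the change-of-variables showing $\chi^{\pm}$ is permutation-invariant, the observation that $\rho^+,\rho^-$ preserving $\mathcal{H}_*^+$ and $\mathcal{H}_*^-$ separately keeps the $\bssigma_2$-labels (hence the index sets $\mathcal{E}^{\pm}(\bssigma_2)$) aligned, and the termwise domination using that any pair contributing to $\psi^{\pm}_{11}(\pi_*)$ involves only nodes outside $\mathcal{H}_*$ where $\pi=\pi_*$ — is exactly the standard argument underlying that reference, correctly adapted to the conditioning on $\bssigma_2$.
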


\begin{IEEEproof}   
We consider the posterior distribution of $\pi\in {S}_n$ given $G_1,G_2$, along with the additional side information $\pi_*\left\{[n] \backslash \mathcal{H}_*\right\}, \mathcal{H}_*^+,\mathcal{H}_*^-,$ and $ \bssigma_2$. Then, the MAP estimator can be expressed as follows :
    \begin{equation}
        \widehat{\pi}_{\text {MAP }}:=\argmax_{\pi\in S_n} \P\left(\pi_*=\pi \mid G_1, G_2, \pi_*\left\{[n] \backslash \mathcal{H}_*\right\}, \mathcal{H}_*^+,\mathcal{H}_*^-, \bssigma_2\right).
    \end{equation}

Recall that 
$A$ and $B$ are the adjacency matrices of $G_1$ and $G_2$, respectively, and  $X$ and $Y$ are the corresponding database matrices.
We can obtain that
\begin{equation}\label{eq:condition poster}
    \begin{aligned}
       &\P\left(\pi_*=\pi \mid G_1, G_2, \pi_*\left\{[n] \backslash \mathcal{H}_*\right\}, \mathcal{H}_*^+,\mathcal{H}_*^-,  \bssigma_2\right) \\
         & =\frac{\mathbb{P}\left(\pi_*=\pi \mid G_1, G_2, \bssigma_2\right) \mathbb{P}\left( \pi_*\left\{[n] \backslash \mathcal{H}_*\right\}, \mathcal{H}_*^+,\mathcal{H}_*^-, \mid \pi_*=\pi, G_1, G_2, \bssigma_2\right)}{\mathbb{P}\left(\pi_*\left\{[n] \backslash \mathcal{H}_*\right\}, \mathcal{H}_*^+,\mathcal{H}_*^- \mid G_1, G_2 , \bssigma_2\right)} \\
& \stackrel{(a)}{=} \frac{\mathbb{P}\left(\pi_*=\pi \mid A, B, X, Y,\bssigma_2\right)}{\mathbb{P}\left(\pi_*\left\{[n] \backslash \mathcal{H}_*\right\}, \mathcal{H}_*^+,\mathcal{H}_*^- \mid G_1, G_2,   \bssigma_2\right)} \mathds{1}\left(\pi \in \mathcal{T}_*\right) \\
& \stackrel{(b)}{=} C_1 \frac{\mathbb{P}\left(\pi_*=\pi \mid A, B,  \bssigma_2\right) \mathbb{P}\left(\pi_*=\pi \mid X, Y,  \bssigma_2\right)}{\mathbb{P}\left(\pi_*=\pi\mid \bssigma_2\right) {\mathbb{P}\left(\pi_*\left\{[n] \backslash \mathcal{H}_*\right\}, \mathcal{H}_*^+,\mathcal{H}_*^- \mid G_1, G_2,   \bssigma_2\right)}} \mathds{1}\left(\pi \in \mathcal{T}_*\right) \\
& \stackrel{(c)}{=}C_2 \mathbb{P}\left(\pi_*=\pi \mid A, B, \bssigma_2\right) \mathbb{P}\left(\pi_*=\pi \mid X, Y, \bssigma_2\right) \mathds{1}\left(\pi \in \mathcal{T}_*\right) \\
&  \stackrel{(d)}{=}C_3 \mathbb{P}\left(\pi_*=\pi \mid X, Y, \bssigma_2\right)\left(\frac{p_{00} p_{11}}{p_{01} p_{10}}\right)^{\psi^{+}_{11}(\pi)}\left(\frac{q_{00} q_{11}}{q_{01} q_{10}}\right)^{\psi^{-}_{11}(\pi)}\left(\frac{p_{01}}{p_{00}}\right)^{\chi^{+}(\pi)}\left(\frac{q_{01}}{q_{00}}\right)^{\chi^{-}(\pi)} \mathds{1}\left(\pi \in \mathcal{T}_*\right),
    \end{aligned}
\end{equation}
where $C_1,C_2,$ and $C_3$ are constants that do not depend on $\pi$. The equality $(a)$ holds from Definition \ref{def:T}, the equality $(b)$ holds since
\beq
\begin{split}
&\mathbb{P}\left(\pi_*=\pi \mid A, B, X, Y,  \bssigma_2\right)\\
&=\frac{\mathbb{P}\left(  \pi_*=\pi \mid  \bssigma_2\right) \mathbb{P}\left( A, B, X, Y\mid \pi_*=\pi,  \bssigma_2\right)}{\mathbb{P}\left( A, B, X, Y\mid  \bssigma_2\right)}\\
&=\frac{\mathbb{P}\left(  \pi_*=\pi \mid  \bssigma_2\right) \mathbb{P}\left( A, B\mid \pi_*=\pi,  \bssigma_2\right)\mathbb{P}\left(X, Y\mid \pi_*=\pi,  \bssigma_2\right)}{\mathbb{P}\left( A, B, X, Y\mid  \bssigma_2\right)}\\
&=\frac{\mathbb{P}\left(  \pi_*=\pi \mid  \bssigma_2\right) \mathbb{P}\left( A, B\mid  \bssigma_2\right)\mathbb{P}\left(  \pi_*=\pi\mid A, B,  \bssigma_2\right)
\mathbb{P}\left(X, Y\mid  \bssigma_2\right) \mathbb{P}\left( \pi_*=\pi\mid X, Y,  \bssigma_2\right) }{\mathbb{P}\left( A, B, X, Y\mid  \bssigma_2\right)\mathbb{P}\left(  \pi_*=\pi \mid  \bssigma_2\right)\mathbb{P}\left(  \pi_*=\pi \mid  \bssigma_2\right)}\\
&=C_1\frac{\mathbb{P}\left( \pi_*=\pi\mid A, B,  \bssigma_2\right) \mathbb{P}\left( \pi_*=\pi\mid X, Y,  \bssigma_2\right)}{\mathbb{P}\left(  \pi_*=\pi \mid  \bssigma_2\right)},
\end{split}
\eeq
the equality $(c)$ holds since $\mathbb{P}\left(\pi_*=\pi\mid \bssigma_2\right)=1/n!$ and ${\mathbb{P}\left(\pi_*\left\{[n] \backslash \mathcal{H}_*\right\}, \mathcal{H}_*^+,\mathcal{H}_*^- \mid G_1, G_2,   \bssigma_2\right)}$ does not depend on $\pi$, and the equality $(d)$ holds by Lemma \ref{lem:posterior}. 

Let $K(\pi):=\left(\frac{p_{00} p_{11}}{p_{01} p_{10}}\right)^{\psi^{+}_{11}(\pi)}\left(\frac{q_{00} q_{11}}{q_{01} q_{10}}\right)^{\psi^{-}_{11}(\pi)}\left(\frac{p_{01}}{p_{00}}\right)^{\chi^{+}(\pi)}\left(\frac{q_{01}}{q_{00}}\right)^{\chi^{-}(\pi)}$.
By Lemma \ref{lem:psi chi}, we can obtain that for $\pi \in \mathcal{T}_*$,
\begin{equation}\label{eq:Kpi}
     K(\pi)\geq K(\pi_*).
\end{equation}

Assume that the event $\mathcal{B}$ holds. First, suppose that \eqref{eq:csbm matching imp1} holds. By Lemma \ref{lem:size H}, we can obtain
\begin{equation}\label{eq:H size}
      |\mathcal{H}_*^+|,|\mathcal{H}_*^-| \geq  \frac{1}{8} n^{1-\frac{ns^2(p+q)}{2 \log n}}=\omega(1)
\end{equation}
with probability $1-o(1)$.
Therefore, $|\mathcal{T}_*|=\omega(1)$ also holds. Additionally, we can have
\begin{equation}
\begin{aligned}
     \frac{d}{4} \log \frac{1}{1-\rho^2} &\leq (1-\epsilon) \log n -ns^2\frac{p+q}{2} \leq  \left(1-\frac{\epsilon}{2}\right) \log |\mathcal{H}^+_*|.
\end{aligned}
\end{equation}
Thus, by combining Theorem \ref{thm:gmm matching imp} with \eqref{eq:condition poster} and \eqref{eq:Kpi}, we can conclude that the MAP estimator fails.

Conversely, suppose that \eqref{eq:csbm matching imp2} holds. It holds that $|\mathcal{T}_*|=\omega(1)$ due to \eqref{eq:H size}. Additionally, we can have
\begin{equation}
\begin{aligned}
     \frac{d}{4} \log \frac{1}{1-\rho^2} &\leq \log n -ns^2\frac{p+q}{2}-\log d -\omega(1) \leq \log |\mathcal{H}^+_*| - \log d .
\end{aligned}
\end{equation}
Thus, by combining Theorem \ref{thm:gmm matching imp} with \eqref{eq:condition poster} and \eqref{eq:Kpi}, we can conclude that the MAP estimator fails.
Therefore we can obtain that
\begin{equation}
    \P(\text{MAP estimator fails})\geq \P(\text{MAP estimator fails}|\mathcal{B})\P(\mathcal{B}) =1-o(1)
\end{equation}
since $\P(\mathcal{B})=1-o(1)$ by Lemma \ref{lem:Balanced comm}.
\end{IEEEproof}

\subsection{Proof of Lemma \ref{lem:size H}}

\begin{IEEEproof}[Proof of Lemma \ref{lem:size H}]
    Let $n_1=|V^+|$ and $n_2=|V^-|$, and let $\mathcal{H}^+_*$ and $\mathcal{H}^-_*$ represent the isolated nodes in $G_1\wedge_{\pi_*}G_2 \sim \text{SBM}(n,ps^2,qs^2)$ belonging to $V^+$ and $V^-$, respectively.
    On the event $\mathcal{B}$ in Definition \ref{def:balanced}, we have 
    \begin{equation}
         \frac{n}{2}(1-o(1))\leq n_1,n_1\leq \frac{n}{2}(1+o(1)).
    \end{equation}   
    We can obtain that
    \begin{equation}\label{eq:expect}
    \begin{aligned}
           \E(|\mathcal{H}^+_*|)&=n_1(1-ps^2)^{n_1-1}(1-qs^2)^{n_2}.
    \end{aligned}     
    \end{equation}
Let $I_i$ be the indicator variable representing the event that the vertex $i$ is isolated. Then, we have
\begin{equation}\label{eq:variance}
\begin{aligned}
    \operatorname{Var}(|\mathcal{H}^+_*|)&=\E(|\mathcal{H}^+_*|^2)-\E(|\mathcal{H}^+_*|)^2\\
    &=\E\left(\left(\Sigma_{i\in V^+}I_i\right)^2\right)-\E(|\mathcal{H}^+_*|)^2\\
    &=\E(|\mathcal{H}^+_*|)+n_1(n_1-1)(1-ps^2)^{2n_1-3}(1-qs^2)^{2n_2}-\E(|\mathcal{H}^+_*|)^2.
\end{aligned}    
\end{equation}
By \eqref{eq:expect}, \eqref{eq:variance} and Chebyshev’s inequality, we can obtain that
\begin{equation}
    \begin{aligned}
        \P(|\mathcal{H}^+_*| \leq \frac{1}{2}\E(|\mathcal{H}^+_*|)) \leq \frac{4\operatorname{Var}(|\mathcal{H}^+_*|)}{\E(|\mathcal{H}^+_*|)^2}    \leq \frac{5(n_1ps^2 -1)}{n_1-n_1ps^2} =o(1)
    \end{aligned}
\end{equation}
    Moreover, we have
    \begin{equation}\label{eq:expect2}
    \begin{aligned}
           \E(|\mathcal{H}^+_*|)&=n_1(1-ps^2)^{n_1-1}(1-qs^2)^{n_2}\\
           & \geq n_1 \left(1+\frac{ps^2}{1-ps^2}\right)^{-n_1}\left(1+\frac{qs^2}{1-qs^2}\right)^{-n_2}\\
           &\stackrel{(a)}{\geq} n_1 \exp\left(-\frac{ps^2}{1-ps^2}n_1 -\frac{qs^2}{1-qs^2}n_2 \right)\\
           &\stackrel{(b)}{\geq} \frac{1}{4}n^{1-\frac{ns^2(p+q)}{2\log n}}.
    \end{aligned}     
    \end{equation}
    The inequality $(a)$ holds by $e^x \geq 1+x$ and the inequality $(b)$ holds by the assumption $ns^2\frac{p+q}{2}=O(\log n)$ and \eqref{eq:n1n2}. Therefore, it holds that $|\mathcal{H}^+_*|\geq  \frac{1}{8}n^{1-\frac{ns^2(p+q)}{2\log n}} $ with probability $1-o(1)$.  Similarly, $|\mathcal{H}^-_*|\geq  \frac{1}{8}n^{1-\frac{ns^2(p+q)}{2\log n}} $  can also be proven using the same steps.

\end{IEEEproof}

\section{Proof of Theorem \ref{thm:csbm recovery ach}: \\ Achievability of Exact Community Recovery in Correlated Contextual Stochastic Block Models}\label{sec:proof of csbm recovery ach}
Recall that the graph $G_1$ consists of a database $X$ and an adjacency matrix $A$, while the graph $G_2$ consists of a database $Y$ and an adjacency matrix $B$.
Given a permutation $\pi : [n]\to [n]$, let $G_1+_\pi G_2$ represent a graph where database matrix is given by $X+_{\pi}Y$ and the adjacency matrix is given by $A \vee_{\pi}B$, where each node $i$ is assigned the vector $\frac{\bsx_i + \bsy_{\pi(i)}}{2}$ and  $(A \vee_{\pi}B)_{i,j}=\max\{A_{i,j},B_{\pi(i),\pi(j)}\}$. It can be seen that  
\begin{equation}\label{eq:union fea}
     \frac{\bsx_i+\bsy_{\pi_*(i)}}{2}= \boldsymbol{\mu}\bssigma_i + \frac{(1+\rho)\bsz_i+\sqrt{1-\rho^2}\bsw_i}{2}\sim \boldsymbol{\mu}\bssigma_i + \sqrt{\frac{1+\rho}{2}}\bso_i
\end{equation}
and
\begin{equation}\label{eq:union gra}
    A \vee_{\pi_*}B \sim  \text{SBM}(n,p(1-(1-s)^2),q(1-(1-s)^2)),
\end{equation}
where $\bsz_i,\bsw_i,\bso_i\sim \mathcal{N}(0,I_d)$.

Abbe et al. \cite{abbe2022} found the conditions under which, given a Contextual Stochastic Block Model(CSBM), the estimator $\hat{\bssigma}$ based on a spectral method can exactly recover $\bssigma$.
\begin{theorem}[Theorem 4.1 in \cite{abbe2022}]\label{thm:abbe2022ach}
    Assume that $\eqref{eq:assumption}$ holds. Let $G \sim \textnormal{CSBM}(n,p,q;R,d)$ with community labels $\bssigma : [n]\to \{+,-\}$ defined in Section \ref{sec:intro}. If
    \begin{equation}
        \frac{(\sqrt{a}-\sqrt{b})^2+c}{2}>1,
    \end{equation}
    then, then there exists an estimator $\hat{\bssigma}(G)$ such that $\hat{\bssigma}(G)=\bssigma$ with high probability.
\end{theorem}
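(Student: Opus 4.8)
The final statement is the single-graph achievability result for the Contextual Stochastic Block Model, so the plan is to establish exact recovery through a \emph{two-round} (genie-aided) argument that fuses the edge and attribute evidence. The strategy is: (i) produce a preliminary labelling $\hat{\bssigma}^{(0)}$ that is correct on all but $o(n)$ nodes, and then (ii) \emph{refine} each node independently by recomputing its most likely label against the preliminary labels of the remaining nodes and an estimate $\hat{\bsmu}$ of the mean vector. Exact recovery then reduces to showing that the genie-aided per-node misclassification probability is $o(1/n)$, after which a union bound over the $n$ nodes yields zero errors with high probability.

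For the preliminary step I would invoke a spectral estimator on an operator that combines the centred adjacency matrix with the attribute Gram matrix (the construction of \cite{abbe2022}), which under \eqref{eq:assumption} achieves almost-exact recovery, i.e. $\mathbf{ov}(\hat{\bssigma}^{(0)},\bssigma)=1-o(1)$, and correspondingly $\hat{\bsmu}=\frac{1}{n}\sum_i \hat{\sigma}^{(0)}_i\bsx_i$ concentrates around $\bsmu$ with $\|\hat{\bsmu}-\bsmu\|^2=(1+o(1))\,d/n$. The heart of the proof is the genie computation: fixing the true labels of all nodes other than $u$, the log-likelihood ratio between $\sigma_u=+$ and $\sigma_u=-$ splits additively as
\begin{equation}
L_u=\sum_{v:\sigma_v=+}\Big[A_{uv}\log\tfrac{p}{q}+(1-A_{uv})\log\tfrac{1-p}{1-q}\Big]-\sum_{v:\sigma_v=-}\Big[A_{uv}\log\tfrac{p}{q}+(1-A_{uv})\log\tfrac{1-p}{1-q}\Big]+2\langle\bsx_u,\bsmu\rangle,
\end{equation}
so the edge evidence (a difference of independent Binomials) and the attribute evidence (the Gaussian term $2\langle\bsx_u,\bsmu\rangle$) are independent.

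I would then control $\P(L_u<0\mid\sigma_u=+)$ by a Chernoff bound applied separately to the two independent pieces. For the edge part, optimising the tilting parameter reproduces the SBM Hellinger exponent $\tfrac{(\sqrt{a}-\sqrt{b})^2}{2}\log n$; for the attribute part, $2\langle\bsx_u,\bsmu\rangle\sim\mathcal{N}(2R,4R)$ under $\sigma_u=+$, whose lower tail at $0$ yields exponent $\tfrac{c}{2}\log n$ once the effective SNR $\tfrac{R^2}{R+d/n}=c\log n$ of \eqref{eq:assumption} is invoked (the $d/n$ term coming from the variance inflation incurred when $\bsmu$ is replaced by $\hat{\bsmu}$). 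Since the two pieces are independent, the exponents add, giving
\begin{equation}
\P(L_u<0\mid\sigma_u=+)\;\leq\; n^{-\frac{(\sqrt{a}-\sqrt{b})^2+c}{2}+o(1)},
\end{equation}
which is $o(1/n)$ precisely under the hypothesis $\tfrac{(\sqrt{a}-\sqrt{b})^2+c}{2}>1$. A union bound over the two (balanced, by Lemma~\ref{lem:Balanced comm}) communities then shows the refinement makes no error with high probability.

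The main obstacle is making the genie reduction honest: the refinement uses the \emph{estimated} labels $\hat{\bssigma}^{(0)}$ and mean $\hat{\bsmu}$ rather than the true quantities. I would argue that flipping the $o(n)$ initially-misclassified labels perturbs the edge sum in $L_u$ only by a lower-order term (each such node contributes $O(\log\tfrac{p}{q})$ while the typical margin is $\Theta(\log n)$ and the number of errors is $o(n)$), and that the substitution $\bsmu\to\hat{\bsmu}$ alters the attribute term by an amount governed exactly by $\|\hat{\bsmu}-\bsmu\|^2=O(d/n)$ --- which is why the attribute SNR degrades from $R$ to $R^2/(R+d/n)$. Proving these two perturbation bounds \emph{uniformly} over all $n$ nodes, with rates that still sum to $\tfrac{(\sqrt{a}-\sqrt{b})^2+c}{2}$, is the delicate part; the strength of the spectral almost-exact-recovery guarantee and the concentration of $\hat{\bsmu}$ are the two inputs that must be sharp enough to close the argument.
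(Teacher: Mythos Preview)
The paper does not prove this statement at all: it is quoted as Theorem~4.1 of \cite{abbe2022} and used as a black box in Section~\ref{sec:proof of csbm recovery ach}. So there is no ``paper's proof'' to compare against here.

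That said, your two-round (spectral initialisation $+$ per-node refinement) strategy is \emph{not} the route taken in \cite{abbe2022}. Their argument is a one-shot spectral method: they form the combined matrix $\log(a/b)\,A+\frac{2}{n+d/R^2}\,\mathcal{H}(XX^\top)$ and show, via entrywise ($\ell_\infty$) eigenvector perturbation theory, that the signs of its leading eigenvector already equal $\bssigma$ with high probability. No refinement step is needed. Your approach is the more classical ``almost-exact $+$ clean-up'' paradigm; it is a legitimate alternative and, when it works, is often easier to modularise, whereas the direct $\ell_\infty$ analysis avoids the delicate coupling between the two rounds that you yourself flag.

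Two technical points in your sketch deserve care. First, the claim that ``the exponents add'' under Chernoff is correct here but not for the generic reason you give (independence alone does not make rate functions additive); it works because both the Poisson/SBM piece and the Gaussian piece happen to be optimised at the \emph{same} tilt $t=1/2$, which is exactly the content of Lemma~\ref{lem:II*}. Second, and more seriously, the edge-perturbation step is looser than you indicate: with an initial error fraction $\varepsilon_n=o(1)$, the number of initially-mislabelled \emph{neighbours} of a given node $u$ has mean $\Theta(\varepsilon_n\log n)$, and you need this to be $o(\log n)$ \emph{uniformly} over all $u$ to preserve the exponent. Merely $\varepsilon_n=o(1)$ does not suffice for the uniform bound; you need either $\varepsilon_n=o(1/\log n)$ from the initialisation, a sample-splitting device (which costs a factor~$2$ in the exponent), or a leave-one-out argument to decouple $\hat{\bssigma}^{(0)}_{-u}$ from the edges incident to $u$. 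This is precisely the difficulty that the $\ell_\infty$ perturbation machinery of \cite{abbe2022} was designed to sidestep.
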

When using the same spectral estimator $\hat{\bssigma}$ used in \cite{abbe2022}, we can have
\begin{equation}\label{eq:recov eq-1}
\begin{aligned}
    \P\left(\mathbf{ov}(\hat{\bssigma}(G_1+_{\hat{\pi}} G_2),\bssigma) \neq 1\right) &\leq \P\left(\{\mathbf{ov}(\hat{\bssigma}(G_1+_{\hat{\pi}} G_2),\bssigma) \neq 1\} \cap \{G_1+_{\hat{\pi}} G_2 =G_1+_{\pi_*} G_2\}\right) +\P(G_1+_{\hat{\pi}} G_2 \neq G_1+_{\pi_*} G_2)\\
     &\leq  \P\left(\mathbf{ov}(\hat{\bssigma}(G_1+_{\pi_*} G_2),\bssigma) \neq 1\right) +\P(\hat{\pi}\neq \pi_*).
\end{aligned}
\end{equation}
Suppose that \eqref{eq:csbm matching cond0}, \eqref{eq:csbm matching cond2}  and \eqref{eq:csbm matching cond1} hold. Then, by Theorem \ref{thm:csbm matching ach}, there exists an estimator $\hat{\pi}$ such that $\hat{\pi}=\pi_*$ with probability $1-o(1)$. Thus, it holds that 
\begin{equation}\label{eq:recov eq-2}
    \P(\hat{\pi}\neq \pi_*)=o(1).
\end{equation}
Moreover, combining Theorem \ref{thm:abbe2022ach} with \eqref{eq:union fea} and \eqref{eq:union gra}, we can obtain that if \eqref{eq:csbm recov ach cond1} holds, then
\begin{equation}\label{eq:recov eq-3}
    \P\left(\mathbf{ov}(\hat{\bssigma}(G_1+_{\pi_*} G_2),\bssigma) \neq 1\right)=o(1).
\end{equation}
Finally, by combining the results from \eqref{eq:recov eq-1}, \eqref{eq:recov eq-2} and \eqref{eq:recov eq-3}, the proof is complete.

\section{Proof of Theorem \ref{thm:csbm recovery imp}: \\ Impossiblity of Exact Community Recovery in Correlated Contextual Stochastic Block Models}\label{sec:proof of csbm recovery imp}

\begin{proof}[Proof of Theorem \ref{thm:csbm recovery imp}]
Suppose that \eqref{eq:assumption} holds. 
First, we generate a graph $H$ with the following distributions for the edges and node attributes.
Let $\bssigma^H$ be the community labels of the graph $H$. The probability of an edge forming between nodes within the same community is $p(1-(1-s)^2)=ps(2-s)$, while the probability of an edge forming between nodes in different communities is $q(1-(1-s)^2)=qs(2-s)$. Moreover, each node $i$ has a attribute represented by the vector $\begin{bmatrix}
\bsf_i\\
\bsg_i
\end{bmatrix} \sim \mathcal{N}\left(\begin{bmatrix}
\sigma_i \boldsymbol{\mu}\\
\sigma_i \boldsymbol{\mu}
\end{bmatrix}      , \left[
\begin{matrix}
    \bsI_d & \text{diag}(\rho) \\
\text{diag}(\rho) & \bsI_d \\
\end{matrix}
\right]\right)$, where $\bsf_i,\bsg_i\in \mathbb{R}^d$ and  $\boldsymbol{\mu}$ is generated from the uniform distribution over $\{ \boldsymbol{\mu} \in \mathbb{R}^d : \lVert\boldsymbol{\mu}\rVert^2 =R\}$ for $R>0$.
We have identified the conditions under which the exact community recovery is impossible in the graph $H$.
\begin{lemma}\label{lem:rec imp H}
    Suppose that \eqref{eq:assumption} holds.  If $\eqref{eq:csbm rec imp cond1}$ holds, then for any estimator $\tilde{\bssigma}$, we have
    \begin{equation}
       \P(\mathbf{ov}(\tilde{\bssigma}(H),\bssigma^H)=1) = o(1).
    \end{equation}
\end{lemma}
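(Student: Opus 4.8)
The plan is to reduce exact community recovery in $H$ to the same problem on a single Contextual Stochastic Block Model with rescaled parameters, and then invoke the converse of the exact-recovery threshold for CSBMs. First I would discard the redundant part of the attribute data. Write $A^H$ for the adjacency matrix of $H$. For each node $i$ the pair $\bigl(\tfrac{\bsf_i+\bsg_i}{2},\,\bsf_i-\bsg_i\bigr)$ is an invertible linear function of $(\bsf_i,\bsg_i)$, with $\tfrac{\bsf_i+\bsg_i}{2}\sim\mathcal{N}\bigl(\sigma_i\boldsymbol{\mu},\tfrac{1+\rho}{2}\bsI_d\bigr)$, $\bsf_i-\bsg_i\sim\mathcal{N}(\mathbf{0},2(1-\rho)\bsI_d)$, and these two uncorrelated, hence independent since they are jointly Gaussian. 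Consequently $\{\bsf_i-\bsg_i\}_i$ is a collection of i.i.d.\ $\mathcal{N}(\mathbf{0},2(1-\rho)\bsI_d)$ vectors that is jointly independent of $\bigl(\bssigma^H,A^H,\{\tfrac{\bsf_i+\bsg_i}{2}\}_i\bigr)$, so $\P\bigl(\bssigma^H\mid H\bigr)=\P\bigl(\bssigma^H\mid A^H,\{\tfrac{\bsf_i+\bsg_i}{2}\}_i\bigr)$; in particular the minimal error probability for recovering $\bssigma^H$ is unchanged if one observes only $\bigl(A^H,\{\tfrac{\bsf_i+\bsg_i}{2}\}_i\bigr)$.

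Next I would rescale to a standard CSBM and track the relevant signal-to-noise ratio. The invertible maps $\boldsymbol{\mu}\mapsto\sqrt{\tfrac{2}{1+\rho}}\,\boldsymbol{\mu}$ and $\tfrac{\bsf_i+\bsg_i}{2}\mapsto\sqrt{\tfrac{2}{1+\rho}}\cdot\tfrac{\bsf_i+\bsg_i}{2}$ show that $\bigl(A^H,\{\tfrac{\bsf_i+\bsg_i}{2}\}_i\bigr)$ has the law of a sample from $\textnormal{CSBM}\bigl(n,\,p(1-(1-s)^2),\,q(1-(1-s)^2);\,\tfrac{2R}{1+\rho},\,d\bigr)$, whose edge parameters are $a''=a(1-(1-s)^2)$ and $b''=b(1-(1-s)^2)$. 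Under \eqref{eq:assumption} the attribute signal-to-noise ratio of this model equals $\tfrac{(2R/(1+\rho))^2}{2R/(1+\rho)+d/n}=c'\log n$, and since $p>q$ forces $a>b$ we have $\sqrt{a''}-\sqrt{b''}=\sqrt{1-(1-s)^2}\,(\sqrt{a}-\sqrt{b})$; hence the combined exact-recovery SNR of this CSBM is precisely $\tfrac{(1-(1-s)^2)(\sqrt{a}-\sqrt{b})^2+c'}{2}$, which is strictly less than $1$ by hypothesis \eqref{eq:csbm rec imp cond1}.

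It then remains to apply the impossibility (converse) part of the exact community recovery threshold for a single CSBM: when the combined SNR $\tfrac{(\sqrt{a}-\sqrt{b})^2+c}{2}<1$, no estimator achieves full overlap with probability bounded away from zero. Combined with the two reductions above, this yields $\P\bigl(\mathbf{ov}(\tilde{\bssigma}(H),\bssigma^H)=1\bigr)=o(1)$ for every estimator $\tilde{\bssigma}$, which is the claim.

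The main obstacle is securing this single-graph converse in exactly the form needed. If one prefers not to cite it directly, one can run a genie argument: fix $\bssigma^H$ on all but one node, and show that the optimal likelihood-ratio test for the remaining node's label --- which fuses one $d$-dimensional Gaussian observation with roughly $n(p''+q'')/2$ Bernoulli edge observations --- fails with probability $n^{-\frac{(1-(1-s)^2)(\sqrt{a}-\sqrt{b})^2+c'}{2}+o(1)}$, which is $\omega(1/n)$ under \eqref{eq:csbm rec imp cond1}; one then upgrades this single-node failure to a global failure by a second-moment argument over the $n$ nodes, in the spirit of \cite{Abbe17,Ndaoud22}. Establishing the Chernoff exponent of the mixed Gaussian--Bernoulli test and controlling the weak dependence across nodes is the delicate step; the two reductions are routine bookkeeping.
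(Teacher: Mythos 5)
Your proposal is correct and follows essentially the same route as the paper: the paper likewise reduces the pair $(\bsf_i,\bsg_i)$ to the sufficient statistic $\bsu_i=\tfrac{1}{\sqrt{2+2\rho}}(\bsf_i+\bsg_i)\sim\mathcal{N}\bigl(\sqrt{\tfrac{2}{1+\rho}}\boldsymbol{\mu}\,\sigma_i,\bsI_d\bigr)$, identifies the rescaled parameters $a'=a(1-(1-s)^2)$, $b'=b(1-(1-s)^2)$, $c'$, and then runs exactly the genie-aided argument you sketch as your fallback (adapting Lemmas F.2, F.3 and 4.1 of \cite{abbe2022} to obtain the exponent $I^*(a',b',c')$ and a lower bound $\E\,r(\hat{\bssigma},\bssigma^H)\gtrsim n^{-I^*(a',b',c')}$). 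The only cosmetic difference is that the paper does not cite a black-box single-CSBM converse but carries out the genie/Chernoff analysis explicitly, which is precisely the "delicate step" you flagged.
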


From the graph $H$, we generate two Contextual Stochastic Block Models $H_1$ and $H_2$ using the following method.
First, the edges of $H_1$  and $H'_2$ are generated as follows.
For $(i,j) \in [n]\times [n]$, 
\begin{itemize}
    \item If $(i,j)$ is not an edge in the graph $H$, it is also not an edge in both $H_1$ and $H'_2$
    \item If $(i,j)$ is an edge in the graph $H$, then
    \begin{itemize}
    \item $(i,j)$ is an edge in $H_1$ and $H'_2$ with probability $s_{11}$;
    \item $(i,j)$  is an edge in $H_1$ but not an edge in $H'_2$ with probability $s_{10}$;
    \item  $(i,j)$  is an edge in $H'_2$ but not an edge in $H_1$ with probability $s_{01}$,
    \end{itemize}
\end{itemize}
where $\left(s_{01}, s_{10}, s_{11}\right):=\left(\frac{s(1-s)}{1-(1-s)^2}, \frac{s(1-s)}{1-(1-s)^2}, \frac{s^2}{1-(1-s)^2}\right)$. Additionally, the node attributes of $H_1$  and $H'_2$  will be represented by the database matrices $[\bsf_1,\ldots,\bsf_n]^\top\in \mathbb{R}^{n\times d}$ and $[\bsg_1,\ldots,\bsg_n]^\top \in \mathbb{R}^{n\times d}$, respectively. 
Finally, by applying an arbitrary permutation $\pi :[n]\to [n]$ to the nodes of $H'_2$, we obtain the graph $H_2$.
Then, we can confirm that $(H_1,H_2,\bssigma^H)$ and $(G_1,G_2,\bssigma_*)$ follow the same distribution.  

Suppose that there exists an estimator $\tilde{\bssigma}$ such that we can exactly recover $\bssigma_*$ by observing $G_1$ and $G_2$. 
Since $(H_1,H_2,\bssigma^H)$ and $(G_1,G_2,\bssigma_*)$ follow the same distribution, the estimator should also be able to exactly recover $\bssigma^H$ with high probability by observing $H_1$ and $H_2$.
However, since $(H_1,H_2)$ was generated  from $H$ using only random sampling and random permutation, this assumption leads to a contradiction according to Lemma \ref{lem:rec imp H}.
\end{proof}

\subsection{Proof of Lemma \ref{lem:rec imp H}}

For two sequences of random variables $\{X_n\}$ and $\{Y_n\}$, and a deterministic sequence $\{r_n\}_{n=1}^\infty \subset (0, \infty)$, we write
\[
X_n = o_{\mathbb{P}}(Y_n; r_n)
\]
if there exists a constant $C > 0$ and a deterministic sequence $\{c_n\}$ with $c_n \to 0$ such that for all $C' > 0$, there exists $N_0 > 0$ satisfying
\[
\mathbb{P}\left( |X_n| \geq c_n |Y_n| \right) \leq C e^{-C' r_n}, \quad \forall n \geq N_0.
\]

\begin{proof}[Proof of Lemma \ref{lem:rec imp H}]
When \eqref{eq:assumption} holds, Abbe et al. \cite{abbe2022} demonstrated that the condition $\frac{(\sqrt{a}-\sqrt{b})^2+c}{2}<1$ makes the exact community recovery impossible in the Contextual Stochastic Block Model(CSBM), as explained in Section \ref{sec:intro}. We apply a similar proof technique as in \cite{abbe2022} to prove Lemma \ref{lem:rec imp H}.

Let
\begin{equation}
    I(t,a,b,c)=\frac{a}{2}\left[1-(a / b)^t\right]+\frac{b}{2}\left[1-(b / a)^t\right]-2 c\left(t+t^2\right)
\end{equation}
and
\begin{equation}
    I^*(a,b,c):=  \frac{(\sqrt{a}-\sqrt{b})^2 +c}{2}.
\end{equation}
Then, $I(t,a,b,c)$ and $I^*(a,b,c) $ satisfy the following relationship.
\begin{lemma}\label{lem:II*}
    For $a,b,c>0$, we have that
    \begin{equation}
        I^*(a,b,c)=I(-1/2,a,b,c)=\sup_{t\in \mathbb{R}}I(t,a,b,c).
    \end{equation}
\end{lemma}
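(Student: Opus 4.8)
## Proof plan for Lemma~\ref{lem:II*}

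The plan is to verify the two claimed equalities separately: first that $I^*(a,b,c) = I(-1/2,a,b,c)$, which is a direct algebraic substitution, and then that $t = -1/2$ is the global maximizer of $t \mapsto I(t,a,b,c)$ over $\mathbb{R}$, which is a one-variable calculus/convexity argument.

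For the first equality, I would simply plug $t = -1/2$ into the definition of $I(t,a,b,c)$. The term $-2c(t+t^2)$ becomes $-2c(-\tfrac12 + \tfrac14) = -2c(-\tfrac14) = \tfrac{c}{2}$. The first two terms become $\tfrac{a}{2}\bigl[1 - (a/b)^{-1/2}\bigr] + \tfrac{b}{2}\bigl[1 - (b/a)^{-1/2}\bigr] = \tfrac{a}{2} - \tfrac{a}{2}\sqrt{b/a} + \tfrac{b}{2} - \tfrac{b}{2}\sqrt{a/b} = \tfrac{a+b}{2} - \sqrt{ab}$. Summing gives $\tfrac{a+b}{2} - \sqrt{ab} + \tfrac{c}{2} = \tfrac{(\sqrt a - \sqrt b)^2 + c}{2} = I^*(a,b,c)$, as claimed.

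For the second equality, I would show $I(\cdot, a,b,c)$ is strictly concave on $\mathbb{R}$ and that its derivative vanishes at $t = -1/2$. Write $r := a/b > 0$ and note $(a/b)^t = e^{t\log r}$, so the first two terms are $\tfrac{a}{2}(1 - e^{t\log r}) + \tfrac{b}{2}(1 - e^{-t\log r})$; differentiating in $t$ gives $-\tfrac{a}{2}(\log r)e^{t\log r} + \tfrac{b}{2}(\log r)e^{-t\log r}$, and the second derivative is $-\tfrac{a}{2}(\log r)^2 e^{t\log r} - \tfrac{b}{2}(\log r)^2 e^{-t\log r} < 0$ (when $a \neq b$; if $a = b$ these terms vanish identically). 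The term $-2c(t + t^2)$ has second derivative $-4c < 0$. Hence $I$ is strictly concave, so it has a unique maximizer, and it suffices to check the first-order condition at $t = -1/2$: the derivative there is $-\tfrac{a}{2}(\log r)\sqrt{b/a} + \tfrac{b}{2}(\log r)\sqrt{a/b} - 2c(-\tfrac12 \cdot 2 + \tfrac12 \cdot 2)$... more carefully, $\tfrac{d}{dt}[-2c(t+t^2)] = -2c(1 + 2t)$, which is $0$ at $t = -1/2$; and $-\tfrac{a}{2}(\log r)\sqrt{b/a} + \tfrac{b}{2}(\log r)\sqrt{a/b} = \tfrac{\log r}{2}(\sqrt{ab} - \sqrt{ab}) \cdot(\ldots)$ — collecting, $-\tfrac{a}{2}\sqrt{b/a} + \tfrac{b}{2}\sqrt{a/b} = -\tfrac12\sqrt{ab} + \tfrac12\sqrt{ab} = 0$, so the whole derivative is $0$ at $t = -1/2$. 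Therefore $t=-1/2$ is the unique global maximum and $\sup_{t\in\mathbb{R}} I(t,a,b,c) = I(-1/2,a,b,c) = I^*(a,b,c)$.

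I do not anticipate a genuine obstacle here; this lemma is essentially a bookkeeping computation (it mirrors the standard Chernoff-exponent identity for exact recovery in the SBM, where $-1/2$ is the symmetric saddle point). The only mild care needed is handling the degenerate case $a=b$ separately, where the first two terms of $I$ vanish identically and concavity comes solely from the $-4c$ term, and noting that $\log r$ can be negative without affecting the sign of the second derivative since it enters squared.
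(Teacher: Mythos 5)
Your proof is correct. The paper itself states Lemma~\ref{lem:II*} without proof (it is the standard Chernoff-exponent identity underlying exact recovery in the SBM, inherited from \cite{abbe2022}), so there is no "paper approach" to compare against; your direct verification --- substituting $t=-1/2$ to get $\tfrac{a+b}{2}-\sqrt{ab}+\tfrac{c}{2}=\tfrac{(\sqrt a-\sqrt b)^2+c}{2}$, then establishing strict concavity via $I''(t)=-\tfrac{a}{2}(\log\tfrac{a}{b})^2(a/b)^t-\tfrac{b}{2}(\log\tfrac{a}{b})^2(b/a)^t-4c<0$ and checking the first-order condition $I'(-1/2)=-\tfrac{\log(a/b)}{2}\sqrt{ab}+\tfrac{\log(a/b)}{2}\sqrt{ab}-2c\cdot 0=0$ --- is the natural argument and is complete. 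Your handling of the degenerate case $a=b$ is also right: strict concavity then rests solely on $-4c<0$, which holds since $c>0$.
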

For an estimator $\hat{\bssigma}$, define
\begin{equation}
    r(\hat{\bssigma},\bssigma^H):= \frac{1}{n}\min_{\nu\in \{-1,1\}}\sum\limits_{i=1}^n |\hat{\sigma}_i-\nu \sigma^H_i|.
\end{equation}
We will show that  for any estimator $\hat{\bssigma}$, $\E [ r(\hat{\bssigma},\bssigma^H)]>n^{1-I^*(a',b',c')}$, where $a'=as(2-s),b'=bs(2-s)$ and $\frac{\left(\frac{2}{1+\rho}R\right)^2}{\frac{2}{1+\rho}R+ d/n}=c' \log n$. Through this, we can prove that if $I^*(a',b',c')<1$, then the exact community recovery is impossible.

    Let the adjacency matrix of the graph $H$ be denoted by $J$, and let $Z:=[(\bsf_1,\bsg_1), (\bsf_2, \bsg_2,),\ldots,(\bsf_n,\bsg_n)]^\top \in \mathbb{R}^{n\times 2d}$. Additionally, let the
database matrix be $U:=[\bsu_1,\bsu_2,\ldots,\bsu_n]^\top \in \mathbb{R}^{n\times d}$ where $\bsu_i=\frac{1}{\sqrt{2+2\rho}}(\bsf_i+\bsg_i)$.

Abbe et al. \cite{abbe2022} studied the fundamental limit via a genie-aided approach, and the results are as follows.
\begin{lemma}[Lemma F.3 in \cite{abbe2022}]\label{lem:genie-aided}
     Suppose that $\mathcal{S}$ is a Borel space and $(\bssigma, \boldsymbol{X})$ is a random element in $\{ \pm 1\}^n \times \mathcal{S}$. Let $\mathcal{F}$ be a family of Borel mappings from $\mathcal{S}$ to $\{ \pm 1\}^n$. Define
\begin{equation}
    \begin{aligned}
& \mathcal{M}(\bsu, \bsv)=\min \left\{\frac{1}{n} \sum_{i=1}^n \mathds{1}_{\left\{u_i \neq v_i\right\}}, \frac{1}{n} \sum_{i=1}^n \mathds{1}_{\left\{-u_i \neq v_i\right\}}\right\}, \quad \forall \bsu, \bsv \in\{ \pm 1\}^n \\
& f\left(\cdot \mid \tilde{\boldsymbol{X}}, \tilde{\sigma}_{-i}\right)=\mathbb{P}\left(\sigma_i=\cdot \mid \boldsymbol{X}=\tilde{\boldsymbol{X}}, \sigma_{-i}=\tilde{\sigma}_{-i}\right), \quad \forall i \in[n], \quad \tilde{\boldsymbol{X}} \in \mathcal{S}, \tilde{\sigma}_{-i} \in\{ \pm 1\}^{n-1}.
\end{aligned}
\end{equation}
We have
\begin{equation}
   \inf _{\hat{\bssigma} \in \mathcal{F}} \mathbb{E} \mathcal{M}(\hat{\bssigma}, \bssigma) \geq \frac{n-1}{3 n-1} \cdot \frac{1}{n} \sum_{i=1}^n \mathbb{P}\left[f\left(\sigma_i \mid \boldsymbol{X}, \sigma_{-i}\right)<f\left(-\sigma_i \mid \boldsymbol{X}, \sigma_{-i}\right)\right].
\end{equation}
\end{lemma}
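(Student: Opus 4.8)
The plan is to reprove this genie-aided bound by a Bayes-risk argument in three stages: a per-coordinate Bayes lower bound, elimination of the global sign ambiguity in $\mathcal{M}$ by passing to pairs of coordinates, and a pairwise-versus-single-coordinate comparison. First I would reduce to deterministic maps $\hat{\bssigma}=\hat{\bssigma}(\bsX)\in\mathcal{F}$ and set $p_i:=\P\left(f(\sigma_i\mid\bsX,\sigma_{-i})<f(-\sigma_i\mid\bsX,\sigma_{-i})\right)$; conditioning on $(\bsX,\sigma_{-i})$ identifies $p_i=\E\left[\min\{f(+1\mid\bsX,\sigma_{-i}),f(-1\mid\bsX,\sigma_{-i})\}\right]$, the coordinatewise genie (Bayes) error. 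For any $\hat{\bssigma}$ and any $i$, conditioning on $\bsX$ gives
\[
\P(\hat\sigma_i\neq\sigma_i)=\E\left[\P(\sigma_i=-\hat\sigma_i(\bsX)\mid\bsX)\right]\ \ge\ \E\left[\min\{\P(\sigma_i=+1\mid\bsX),\P(\sigma_i=-1\mid\bsX)\}\right],
\]
and conditional concavity of $\min$ (Jensen applied when integrating out $\sigma_{-i}$) lower-bounds the last quantity by $p_i$; thus $\P(\hat\sigma_i\neq\sigma_i)\ge p_i$ for every estimator.

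Next I would handle the flip invariance of $\mathcal{M}$. Writing $e_i:=\mathds{1}_{\{\hat\sigma_i\neq\sigma_i\}}$ and $W:=\sum_i e_i$, the identity $\mathds{1}_{\{-u_i\neq v_i\}}=\mathds{1}_{\{u_i=v_i\}}$ on $\{\pm1\}$ gives $n\,\mathcal{M}(\hat{\bssigma},\bssigma)=\min(W,n-W)$, and I would invoke the elementary inequality $\min(W,n-W)\ge\frac{W(n-W)}{n-1}=\frac{1}{n-1}\sum_{i<j}\mathds{1}_{\{e_i\neq e_j\}}$, valid for all integer $0\le W\le n$. The crux is then to lower-bound $\P(e_i\neq e_j)$ for each pair. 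Conditioning on $(\bsX,\sigma_{-\{i,j\}})$ with joint posterior $P_{ab}:=\P(\sigma_i=a,\sigma_j=b\mid\bsX,\sigma_{-\{i,j\}})$, the event $\{e_i\neq e_j\}$ equals $\{(\sigma_i,\sigma_j)\in\{(\hat\sigma_i,-\hat\sigma_j),(-\hat\sigma_i,\hat\sigma_j)\}\}$, whose conditional probability is at least $\min\{P_{+-}+P_{-+},\,P_{++}+P_{--}\}$ (the adversarially best choice of $(\hat\sigma_i,\hat\sigma_j)$). A one-line case check, using $\min(P_{++},P_{-+})\le P_{-+}\wedge P_{++}$ and $\min(P_{+-},P_{--})\le P_{+-}\wedge P_{--}$, shows
\[
\min\{P_{+-}+P_{-+},\,P_{++}+P_{--}\}\ \ge\ \min(P_{++},P_{-+})+\min(P_{+-},P_{--}),
\]
and taking $\E_{\bsX,\sigma_{-\{i,j\}}}$ of the right side (splitting $\sigma_{-i}=(\sigma_j,\sigma_{-\{i,j\}})$ and applying Bayes' rule to the inner conditional law of $\sigma_i$) produces exactly $p_i$; the symmetric computation gives $\ge p_j$, so $\P(e_i\neq e_j)\ge\max(p_i,p_j)\ge\tfrac12(p_i+p_j)$.

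Combining the last two bounds,
\[
n\,\E\,\mathcal{M}(\hat{\bssigma},\bssigma)\ \ge\ \frac{1}{n-1}\sum_{i<j}\tfrac12(p_i+p_j)\ =\ \tfrac12\sum_{i=1}^n p_i,
\]
so $\E\,\mathcal{M}(\hat{\bssigma},\bssigma)\ge\frac{1}{2n}\sum_i p_i$, which is in fact stronger than the stated $\frac{n-1}{3n-1}\cdot\frac1n\sum_i p_i$; taking the infimum over $\hat{\bssigma}\in\mathcal{F}$ then completes the argument (and if one wants to stay literally with the cited constant, the weaker $\frac{n-1}{3n-1}$ is also reachable through a cruder treatment of $\min(W,n-W)$). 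The main obstacle is the pairwise step: getting the conditioning right so that $\{e_i\neq e_j\}$ lands precisely on the two anti-aligned posterior cells, and then verifying the combinatorial inequality that forces the pair error to dominate the single-coordinate genie errors $p_i$ and $p_j$. Everything else — the Jensen/concavity step in stage one and the inequality $\min(W,n-W)\ge W(n-W)/(n-1)$ — is routine once that pairwise comparison is in place.
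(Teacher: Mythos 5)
Your proposal is correct, but note that the paper does not actually prove this statement: it is imported verbatim as Lemma F.3 of \cite{abbe2022} and used as a black box, so there is no in-paper argument to compare against. Your argument is a valid, self-contained replacement, and it is in fact \emph{stronger}: the chain $n\,\mathcal{M}=\min(W,n-W)\ge \frac{W(n-W)}{n-1}=\frac{1}{n-1}\sum_{i<j}\mathds{1}_{\{e_i\neq e_j\}}$, together with the pairwise bound $\P(e_i\neq e_j)\ge \min\{P_{+-}+P_{-+},\,P_{++}+P_{--}\}$ in conditional expectation and the elementary inequality $\min(B+C,A+D)\ge\min(A,C)+\min(B,D)$ (which holds because the right side is at most $B+C$ and at most $A+D$), yields $\E\,\mathcal{M}\ge\frac{1}{2n}\sum_i p_i$, and $\tfrac12\ge\frac{n-1}{3n-1}$ for all $n\ge 1$. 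I verified the two places where something could go wrong: (i) the decomposition $\min(P_{a,+},P_{-a,+})+\min(P_{a,-},P_{-a,-})=\E\bigl[\min_a f(a\mid\bsX,\sigma_{-i})\mid \bsX,\sigma_{-\{i,j\}}\bigr]$ does follow from Bayes' rule applied to $\sigma_i$ given $(\bsX,\sigma_{-\{i,j\}},\sigma_j)$, since $\hat\sigma_i,\hat\sigma_j$ are $\bsX$-measurable; and (ii) $\min(W,n-W)\ge W(n-W)/(n-1)$ holds for all integers $0\le W\le n$. One cosmetic imprecision: because the lemma defines $p_i$ via a \emph{strict} inequality, on ties the conditional error is $0$ rather than $\min\{f(+1\mid\cdot),f(-1\mid\cdot)\}$, so your identity $p_i=\E[\min\{f(+1\mid\cdot),f(-1\mid\cdot)\}]$ should be an inequality $p_i\le\E[\min\{\cdots\}]$ — which is the direction you need, so nothing breaks. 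Your opening Jensen step ($\P(\hat\sigma_i\neq\sigma_i)\ge p_i$) is also correct but is not actually used in the final combination.
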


Let $\mathcal{U}:=\mathcal{H}(UU^\top)$, where $\mathcal{H}(\cdot)$ represents the hollowing operator, which sets all diagonal entries of a square matrix to zero. Recall that $\boldsymbol{\mu}$ is generated from the uniform distribution over $\{ \boldsymbol{\mu} \in \mathbb{R}^d : \lVert\boldsymbol{\mu}\rVert^2 =R\}$. The following lemma is the result of applying the genie-aided analysis by Abbe et al. (Lemma 4.1 in \cite{abbe2022}) for the Contextual Stochastic Block Model to our graph model, which have two correlated node attributes.
The key difference from the Contextual Stochastic Block Model is as follows: In the Contextual Stochastic Block Model, each node $i$ is assigned a single attribute $\bsx_i \sim \mathcal{N} (\boldsymbol{\mu}\sigma_i,\bsI_d)$. Consequently, given $\bssigma$, the probability distribution for the database $X=[\bsx_1,\ldots,\bsx_n]^\top$ is proportional to $$\E_{\boldsymbol{\mu}}\exp\left( -\frac{1}{2} \sum^{n}_{i=1}\|\bsx_i-\sigma_i \boldsymbol{\mu} \|^2\right) \propto \E_{\boldsymbol{\mu}}\exp\left( \langle \sum^{n}_{i=1}\bsx_i\sigma_i,\boldsymbol{\mu} \rangle\right).$$ In contrast, in our graph $H$, each node $i$ is assigned two correlated attributes $\bsf_i$ and $\bsg_i$, so given $\bssigma^H$, the probability distribution for the database $Z:=[\bsf_1,\bsf_2,\ldots,\bsf_n,\bsg_1,\bsg_2,\ldots,\bsg_n]^\top$ becomes proportional to  $\E_{\boldsymbol{\mu}}\exp\left( -\frac{1}{2(1-\rho^2)} \sum^n_{i=1} \left( \|\bsf_i-\sigma_i \boldsymbol{\mu} \|^2 -2\rho \langle \bsf_i -\boldsymbol{\mu},\bsg_i-\boldsymbol{\mu}\rangle  + \|\bsg_i-\sigma_i \boldsymbol{\mu} \|^2\right) \right) \propto  \E_{\boldsymbol{\mu}}\exp\left(  \frac{1}{1+\rho}\langle\sum^{n}_{i=1}(\bsf_i+\bsg_i)\sigma_i,\boldsymbol{\mu} \rangle\right)$. Moreover, we can have that $\bsu_i=\frac{1}{\sqrt{2+2\rho}}(\bsf_i+\bsg_i)\sim \mathcal{N}(\boldsymbol{\mu}'\sigma_i,I_d)$, where $\boldsymbol{\mu}'=\sqrt{\frac{2}{1+\rho}}\boldsymbol{\mu}$. Thus, it holds that $  \frac{1}{1+\rho}\langle\sum^{n}_{i=1}(\bsf_i+\bsg_i)\sigma_i,\boldsymbol{\mu} \rangle =\langle\sum^{n}_{i=1}\bsu_i \sigma_i,\boldsymbol{\mu}' \rangle $.
Then, we can obtain the following lemma for our graph $H$, which is a generalized version of Lemma 4.1 in \cite{abbe2022}.
\begin{lemma}\label{lem:abbe1}
 Suppose that \eqref{eq:assumption} holds. For each given  $i$, we have 
 \begin{equation}
     \left|\log \left(\frac{\mathbb{P}\left(\sigma^H_i=1 \mid J, Z, \sigma^H_{-i}\right)}{\mathbb{P}\left(\sigma^H_i=-1 \mid J, Z, \sigma^H_{-i}\right)}\right)-\left[\left(\log (a' / b') J+\frac{2}{n+d / R'^2} \mathcal{U}\right) \bssigma^H \right]_i\right|=o_{\mathbb{P}}\left(\log n ; \log n\right)
 \end{equation}
 where  $a'=a(1-(1-s)^2),b'=b(1-(1-s)^2)$  and $R'=\frac{2}{1+\rho}\lVert\boldsymbol{\mu}\rVert^2=\frac{2}{1+\rho}R$.
\end{lemma}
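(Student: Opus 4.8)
The plan is to follow the genie-aided decomposition of Abbe et al.\ almost verbatim, but carry the second correlated attribute block through the computation and then collapse it to a single effective attribute via the change of variables $\bsu_i = \tfrac{1}{\sqrt{2+2\rho}}(\bsf_i+\bsg_i)$. First I would write out the posterior log-likelihood ratio $\log\frac{\P(\sigma_i^H=1\mid J,Z,\sigma^H_{-i})}{\P(\sigma_i^H=-1\mid J,Z,\sigma^H_{-i})}$ by Bayes' rule, separating the contribution of the edges $J$ from the contribution of the Gaussian attributes $Z$, since conditionally on $\bssigma^H$ (and on $\bsmu$) the edge set and the attributes are independent. The edge part is identical to the SBM computation in \cite{abbe2022} with parameters $a' = a(1-(1-s)^2)$ and $b'=b(1-(1-s)^2)$, contributing $\bigl[\log(a'/b')\,J\,\bssigma^H\bigr]_i$ up to the stated $o_{\P}(\log n;\log n)$ fluctuation; I would simply cite that part of the argument in \cite{abbe2022}.

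For the attribute part, the key observation is the algebraic identity flagged just before the lemma statement: conditioned on $\bssigma^H$, the density of $Z$ is proportional to $\E_{\bsmu}\exp\bigl(\tfrac{1}{1+\rho}\langle\sum_i(\bsf_i+\bsg_i)\sigma_i,\bsmu\rangle\bigr) = \E_{\bsmu}\exp\bigl(\langle\sum_i \bsu_i\sigma_i,\bsmu'\rangle\bigr)$ with $\bsmu' = \sqrt{\tfrac{2}{1+\rho}}\bsmu$ and $\|\bsmu'\|^2 = R' = \tfrac{2}{1+\rho}R$, and moreover $\bsu_i\sim\caN(\bsmu'\sigma_i,\bsI_d)$. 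This reduces the attribute contribution to exactly the single-attribute Gaussian-mixture computation in Lemma~4.1 of \cite{abbe2022} but with the mean vector $\bsmu$ replaced by $\bsmu'$; hence its log-ratio contribution becomes $\bigl[\tfrac{2}{n+d/R'^2}\,\mathcal{U}\,\bssigma^H\bigr]_i$, where $\mathcal{U} = \mathcal{H}(UU^\top)$ and $U=[\bsu_1,\dots,\bsu_n]^\top$, again up to $o_{\P}(\log n;\log n)$. Combining the two contributions and bounding the sum of two $o_{\P}(\log n;\log n)$ error terms by another such term gives the claimed bound.

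The main obstacle is making the reduction to Lemma~4.1 of \cite{abbe2022} rigorous rather than heuristic: one must verify that the $\E_{\bsmu}$ marginalization, the hollowing (diagonal-removal) step, and the concentration estimates that produce the $o_{\P}(\log n;\log n)$ remainder all go through unchanged when $\bsmu$ is uniform on the sphere of radius $\sqrt{R}$ and one works with $\bsu_i$ instead of the original attributes. In particular I would need to check that the substitution $\bsmu\mapsto\bsmu'$ preserves the regime assumptions in \eqref{eq:assumption} — which it does, since $R'=\tfrac{2}{1+\rho}R$ and $\tfrac{(R')^2}{R'+d/n} = c'\log n$ is exactly the quantity appearing in \eqref{eq:assumption} — and that the $\bsu_i$ are genuinely i.i.d.\ $\caN(\bsmu'\sigma_i,\bsI_d)$ given $\bssigma^H$ and $\bsmu$, which follows from $\bsf_i,\bsg_i$ being jointly Gaussian with the stated covariance so that $\tfrac{1}{\sqrt{2+2\rho}}(\bsf_i+\bsg_i)$ has covariance $\tfrac{1}{2+2\rho}(\bsI_d + 2\diag(\rho) + \bsI_d) = \bsI_d$. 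Once these checks are in place, the proof is essentially a transcription of the corresponding argument in \cite{abbe2022} with the dictionary $(\bsmu,R)\leftrightarrow(\bsmu',R')$ and $(a,b)\leftrightarrow(a',b')$, so I would keep the write-up short, citing \cite{abbe2022} for the concentration lemmas and spelling out only the change-of-variables identity and the bookkeeping of the error terms.
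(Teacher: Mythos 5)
Your proposal is correct and follows essentially the same route as the paper: the paper likewise reduces the two correlated attributes to the single sufficient statistic $\bsu_i=\tfrac{1}{\sqrt{2+2\rho}}(\bsf_i+\bsg_i)\sim\mathcal{N}(\bsmu'\sigma_i,\bsI_d)$ with $\bsmu'=\sqrt{\tfrac{2}{1+\rho}}\bsmu$, observes that the conditional density of $Z$ given $\bssigma^H$ is proportional to $\E_{\bsmu}\exp\bigl(\langle\sum_i\bsu_i\sigma_i,\bsmu'\rangle\bigr)$, and then invokes Lemma~4.1 of \cite{abbe2022} with the dictionary $(\bsmu,R,a,b)\leftrightarrow(\bsmu',R',a',b')$. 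The checks you flag (unit covariance of $\bsu_i$, preservation of the regime assumption via $\tfrac{(R')^2}{R'+d/n}=c'\log n$) are exactly the points the paper verifies before stating the lemma.
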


Furthermore, given $\sigma_i$, $\sum_{j\neq i} \langle \bsu_i,\bsu_j \rangle\sigma_j$ and $\sum_{j\neq i} J_{ij}\sigma_j$ are independent, allowing us to obtain the same result as Lemma F.2 in \cite{abbe2022} for our model.

\begin{lemma} \label{lem:abbe2}
  Suppose that \eqref{eq:assumption} holds. Let $\bsu_i:=\frac{1}{\sqrt{2+2\rho}}(\bsf_i+\bsg_i)$.
  Define
  \begin{equation}
      W_{n i}=\left(\frac{2 R'^2}{n  R'^2+d} \sum_{j \neq i}\left\langle \bsu_i, \bsu_j\right\rangle \sigma^H_j+\log (a' / b') \sum_{j \neq i} J_{i j} \sigma^H_j\right) \sigma^H_i.
  \end{equation}
For any fixed $i$,
\begin{equation}
    \lim _{n \rightarrow \infty} \mathbb{P}\left(W_{n i} \leq \varepsilon \log n\right)=n^{-\sup_{t \in \mathbb{R}}\{\varepsilon t+I(t, a', b', c')\}}, \quad \forall \varepsilon<\frac{a'-b'}{2} \log (a' / b')+2 c',
\end{equation}
where $a'=a(1-(1-s)^2),b'=b(1-(1-s)^2)$ and $\frac{R'^2}{R'+ d/n}=c' \log n$.
\end{lemma}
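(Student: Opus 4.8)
The plan is to recognize that, after the reduction recorded just above the statement, $W_{ni}$ is nothing but the genie-aided log-likelihood-ratio statistic of an ordinary Contextual Stochastic Block Model with edge parameters $a'=a(1-(1-s)^2)$, $b'=b(1-(1-s)^2)$ and attribute signal $\boldsymbol{\mu}'=\sqrt{2/(1+\rho)}\,\boldsymbol{\mu}$, $\|\boldsymbol{\mu}'\|^2=R'$: indeed $\bsu_i\sim\mathcal{N}(\boldsymbol{\mu}'\sigma^H_i,\bsI_d)$, the attribute term equals $(\mathcal{H}(UU^\top)\bssigma^H)_i$ for the single database $U=[\bsu_1,\dots,\bsu_n]^\top$, and the edges of $H$ form an $\mathrm{SBM}(n,a'\log n/n,b'\log n/n)$ independent of the attributes. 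Thus the task reduces to reproving Lemma F.2 of \cite{abbe2022} for this effective CSBM; the only point genuinely special to our correlated-attribute setting is the collapse $\bsf_i,\bsg_i\mapsto\bsu_i$, which was already verified in the discussion preceding Lemma \ref{lem:abbe1} (the posterior of $\boldsymbol{\mu}$ given $Z$ depends on the data only through $\sqrt{2+2\rho}\sum_i\sigma^H_i\bsu_i$), together with the fact that the sampling-and-permutation construction of $(H_1,H_2)$ from $H$ preserves edge/attribute independence.

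I would first use the structural fact that, conditionally on $\bssigma^H$, the edge part $W^{e}_{ni}:=\log(a'/b')\,\sigma^H_i\sum_{j\ne i}J_{ij}\sigma^H_j$ and the attribute part $W^{a}_{ni}:=\frac{2R'^2}{nR'^2+d}\,\sigma^H_i\sum_{j\ne i}\langle\bsu_i,\bsu_j\rangle\sigma^H_j$ are independent, so that $\E[e^{tW_{ni}}\mid\bssigma^H]=\E[e^{tW^e_{ni}}\mid\bssigma^H]\,\E[e^{tW^a_{ni}}\mid\bssigma^H]$ and the two factors can be handled separately. On the event $\mathcal{B}$ of Definition \ref{def:balanced}, $W^e_{ni}$ is $\log(a'/b')$ times a difference of two independent binomials with means $\tfrac{a'}{2}\log n(1+o(1))$ and $\tfrac{b'}{2}\log n(1+o(1))$, and the standard Chernoff computation behind SBM exact recovery (cf.\ \cite{Abbe17,ABH15}) gives, locally uniformly in $t$,
\begin{equation}\label{eq:plan-abbe2-edge}
\tfrac{1}{\log n}\log\E\bigl[e^{tW^e_{ni}}\mid\bssigma^H\bigr]\;\longrightarrow\;\tfrac{a'}{2}\bigl[(a'/b')^{t}-1\bigr]+\tfrac{b'}{2}\bigl[(b'/a')^{t}-1\bigr].
\end{equation}
For the attribute factor, writing $\tilde\bsu_j:=\sigma^H_j\bsu_j\sim\mathcal{N}(\boldsymbol{\mu}',\bsI_d)$ and $S:=\sum_{j\ne i}\tilde\bsu_j$, one has $\sigma^H_i\sum_{j\ne i}\langle\bsu_i,\bsu_j\rangle\sigma^H_j=\langle\tilde\bsu_i,S\rangle$; conditioning on $S$ first makes this Gaussian with mean $\langle\boldsymbol{\mu}',S\rangle$ and variance $\|S\|^2$, and concentrating $\langle\boldsymbol{\mu}',S\rangle$ and $\|S\|^2$ around $(n-1)R'$ and $(n-1)^2R'+(n-1)d$ — together with the defining identity $c'\log n=R'^2/(R'+d/n)$ — yields, on a fixed neighbourhood of the tilts used below,
\begin{equation}\label{eq:plan-abbe2-attr}
\tfrac{1}{\log n}\log\E\bigl[e^{tW^a_{ni}}\mid\bssigma^H\bigr]\;\longrightarrow\;2c'\bigl(t+t^2\bigr).
\end{equation}
Adding \eqref{eq:plan-abbe2-edge} and \eqref{eq:plan-abbe2-attr} and using the identity $-I(t,a,b,c)=\tfrac{a}{2}[(a/b)^t-1]+\tfrac{b}{2}[(b/a)^t-1]+2c(t+t^2)$ gives $\tfrac{1}{\log n}\log\E[e^{tW_{ni}}\mid\bssigma^H]\to -I(t,a',b',c')$.

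With the normalized cumulant generating function in hand, I would finish by the Gärtner--Ellis theorem — or, elementarily, a Chernoff bound for the upper estimate and an exponential change of measure at the optimizing tilt $t^\star$ for the matching lower estimate — to obtain
\[
\P\bigl(W_{ni}\le\varepsilon\log n\bigr)=n^{-\sup_{t\in\mathbb{R}}\{\varepsilon t+I(t,a',b',c')\}+o(1)},
\]
the $o(1)$ exponent absorbing the corrections from $\mathcal{B}$, the Gaussian approximation, and the concentration of $\|S\|^2$. The hypothesis $\varepsilon<\tfrac{a'-b'}{2}\log(a'/b')+2c'$ is exactly the statement that $\varepsilon$ lies below $\phi'(0)$, the limiting mean of $W_{ni}/\log n$, where $\phi(t):=-I(t,a',b',c')$; this places us in the lower tail, where the Legendre dual is attained at a finite $t^\star\le 0$ and the estimate is tight on the exponential scale, so one recovers the stated limit rather than a one-sided bound, and letting $\varepsilon\downarrow 0$ (via Lemma \ref{lem:II*}) returns the rate $n^{-I^*(a',b',c')}$ needed in the proof of Lemma \ref{lem:rec imp H}.

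The step I expect to be the main obstacle is \eqref{eq:plan-abbe2-attr}: $\langle\tilde\bsu_i,S\rangle$ is a degree-two Gaussian chaos whose tail interpolates between Gaussian behaviour (when $d=o(nR')$) and $\chi^2$ behaviour (when $d=\omega(nR')$), so the Gaussian approximation and the control of the quadratic remainder in the moment generating function must be made uniform over the full range of $d$, including $d=\Omega(n)$ and $d=\omega(n\log n)$; in particular $\E[e^{tW^a_{ni}}]$ only exists for $|t|$ in a window that must be checked to contain the optimizing tilt. This is precisely the technical content of Lemma F.2 in \cite{abbe2022} once the two correlated attributes are collapsed into the single effective feature $\bsu_i\sim\mathcal{N}(\boldsymbol{\mu}'\sigma^H_i,\bsI_d)$, so the work genuinely new here is limited to justifying that collapse and the edge/attribute independence under the construction of $(H_1,H_2)$ from $H$.
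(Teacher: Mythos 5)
Your proposal is correct and follows essentially the same route as the paper: the paper's entire proof of this lemma is the observation that, conditionally on $\bssigma^H$, the edge statistic $\sum_{j\neq i}J_{ij}\sigma^H_j$ and the attribute statistic $\sum_{j\neq i}\langle\bsu_i,\bsu_j\rangle\sigma^H_j$ are independent and that the collapse $(\bsf_i,\bsg_i)\mapsto\bsu_i\sim\mathcal{N}(\boldsymbol{\mu}'\sigma^H_i,\bsI_d)$ reduces the model to a single effective CSBM with parameters $(a',b',c')$, after which Lemma F.2 of \cite{abbe2022} is invoked verbatim. Your sketch performs exactly this reduction and, in addition, correctly reconstructs the internal Chernoff/G\"artner--Ellis computation (the edge and attribute cumulant limits summing to $-I(t,a',b',c')$) that the paper delegates to the citation.
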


By Lemma \ref{lem:genie-aided}, we can obtain that
\begin{equation}
    \E r(\hat{\bssigma},\bssigma^H) \geq \frac{n-1}{3 n-1} \P \left(f\left(\sigma^H_1 \mid J,Z, \sigma^H_{-1}\right)<f\left(-\sigma^H_1 \mid J,Z, \sigma^H_{-1}\right)\right).
\end{equation}
Define two events
\begin{equation}
 \begin{aligned}
& \mathcal{A}_{\varepsilon}:=\left\{\left|\log \left(\frac{f\left(\sigma^H_1 \mid J,Z, \sigma^H_{-1}\right)}{f\left(-\sigma^H_1 \mid J,Z, \sigma^H_{-1}\right)}\right)-\left(\log (a' / b')(J \sigma^H)_1+\frac{2 R'^2}{n R'^2+d}(\mathcal{U}\sigma^H)_1\right) \sigma^H_1\right|<\varepsilon \log n\right\} \\
& \mathcal{B}_{\varepsilon}:=\left\{\left(\frac{2 R'^2}{n R'^2+d}(\mathcal{U}\sigma^H)_1 + \log (a' / b')(J\sigma^H)_1\right) \sigma^H_1 \leq-\varepsilon \log n\right\}.
\end{aligned}
\end{equation}
If $\mathcal{A}_\varepsilon \cap \mathcal{B}_\varepsilon$ holds, then it is easy to verify that  $f\left(\sigma^H_1 \mid J,Z, \sigma^H_{-1}\right)<f\left(-\sigma^H_1 \mid J,Z, \sigma^H_{-1}\right)$ also  holds. 
Therefore, we can have
\begin{equation}
     \E r(\hat{\bssigma},\bssigma^H) \geq \frac{n-1}{3 n-1}\P(\mathcal{A}_\varepsilon \cap \mathcal{B}_\varepsilon)\geq  \frac{n-1}{3 n-1}\left(\P(\mathcal{B}_\varepsilon)-\P(\mathcal{A}^c_\varepsilon)\right).
\end{equation}
By Lemma \ref{lem:abbe1}, it holds that
\begin{equation}
    \P(\mathcal{A}_\varepsilon^c)=o(1/n). 
\end{equation}
Furthermore, by Lemma \ref{lem:abbe2}, it holds that 
\begin{equation}
    \lim_{n\to \infty }\P(\mathcal{B}_\varepsilon)= n^{-\sup_{t \in \mathbb{R}}\{\varepsilon t+I(t, a', b', c')\}}.
\end{equation}
We can have that $I^*(a',b',c')=I(-1/2,a',b',c')=\sup_{t\in \mathbb{R}}I(t,a',b',c')$ since Lemma \ref{lem:II*}. Therefore, taking $\varepsilon \to 0$, we can obtain that
\begin{equation}
     \liminf_{n\to \infty }\E r(\hat{\bssigma},\bssigma^H) \geq n^{-I^*(a',b',c')}.
\end{equation}
Thus, if $I^*(a',b',c')<1$, then the exact community recovery is impossible.
\end{proof}

\section{Proof of Lemma \ref{lem:upperbound F_t}}\label{sec:proof of lemma upperbound f_t}

Recall that 
\begin{equation}
    \mathcal{F}_t=\left\{ \sum_{k=1}^t Z_{i_k i_k} \geq \sum_{k=1}^t Z_{i_k i_{k+1}}  \right\}
\end{equation}
for $t\geq 2$, where $i_1,\ldots,i_t$ are $t$ distinct integers belonging to $[n]$ and $i_{t+1}=i_1$. 

 We will first show that  
    \begin{equation}
        \P\left(\mathcal{F}_t\right) \leq \exp\left(-\frac{d}{2}S\left(\frac{1-\rho^2}{\rho^2},t\right) \right) +\P(\mathcal{A}^c_1)
    \end{equation}
    for any $t$ distinct integers $i_1,...,i_t \in [n]$, i.e., \eqref{eq:upperbound F_t1} holds. 
    We will prove this for the cases $t=2$ and $t\geq 3$, separately. 

Let us first consider the case where $t=2$.  
Without loss of generality, we assume that $\pi_* : [n]\to [n]$ is the identity permutation. Thus, $Y=Y'$, i.e., $\bsy_i=\bsy_i'$ for all $i\in[n]$.
From the definition of $Z_{i,j}=\|\bsx_i-\bsy_j\|^2$, we have
\beq
\begin{split}
Z_{1,1}&=\|\bsx_1-\bsy_1\|^2=\|(\bsmu \sigma_1+\bsz_1)-(\bsmu \sigma_1+\rho\bsz_1+\sqrt{1-\rho^2} \bsw_1)\|^2=\| (1-\rho)\bsz_1-\sqrt{1-\rho^2}\bsw_1 \|^2;\\
Z_{2,2}&=\|\bsx_2-\bsy_2\|^2=\| (1-\rho)\bsz_2-\sqrt{1-\rho^2}\bsw_2 \|^2;\\
Z_{1,2}&=\|\bsx_1-\bsy_2\|^2=\|(\bsmu \sigma_1+\bsz_1)-(\bsmu \sigma_2+\rho\bsz_2+\sqrt{1-\rho^2} \bsw_2)\|^2=\|\bsmu(\sigma_1-\sigma_2)+\bsz_1-\rho \bsz_2-\sqrt{1-\rho^2} \bsw_2 \|^2;\\
Z_{2,1}&=\|\bsx_2-\bsy_1\|^2=\|\bsmu(\sigma_2-\sigma_1)+\bsz_2-\rho \bsz_1-\sqrt{1-\rho^2} \bsw_1 \|^2.
\end{split}
\eeq
 If nodes $1$ and $2$ belong to the same community, i.e., $\sigma_1=\sigma_2$,  we can obtain that 
\begin{equation}\label{eq:belong same}
    \begin{aligned}
        \P(\mathcal{F}_2)&=\P(Z_{1,1}+Z_{2,2}\geq Z_{1,2}+Z_{2,1})\\
        &=\P\left(\sqrt{1-\rho^2}\left\langle \bsz_1-\bsz_2, \bsw_2-\bsw_1\right\rangle \geq \rho\left\|\bsz_1-\bsz_2\right\|^2\right)\\
        &=\E_{\bsz_1, \bsz_2} \P_{u \sim \mathcal{N}\left(0,2\left(1-\rho^2\right)\left\|\bsz_1-\bsz_2\right\|^2\right)}\left(u \geq \rho\left\|\bsz_1-\bsz_2\right\|^2\right)\\
        &=\E_{\bsz_1, \bsz_2} \P_{u \sim \mathcal{N}\left(0,1\right)}\left(u \geq \sqrt{\frac{\rho^2 \left\|\bsz_1-\bsz_2\right\|^2}{2(1-\rho^2)}} \right)\\
        &\stackrel{(a)}{\leq} \E_{\bsz_1,\bsz_2} \exp \left(-\frac{\rho^2}{4(1-\rho^2) }\left\|\bsz_1-\bsz_2\right\|^2\right)\\
        &\stackrel{(b)}{= }\det \left(\bsI_{2d}+\frac{\rho^2}{2(1-\rho^2)}\bsL^{P_2} \otimes \bsI_d \right)^{-1/2}\\
        &=\det \left(\bsI_2+\frac{\rho^2}{2(1-\rho^2)}\bsL^{P_2}\right)^{-d/2}\\
        &=\left(\frac{\rho^2}{1-\rho^2}\right)^{-d/2}\\
        &= \exp\left(-\frac{d}{2}S\left(\frac{1-\rho^2}{\rho^2},2\right)\right),
    \end{aligned}
\end{equation}
where $\bsL^{P_2}=\begin{bmatrix}
1 & -1\\
-1 & 1 
\end{bmatrix}	
$ is the Laplacian matrix for a path graph consisting of two nodes.
The inequality $(a)$ holds by the tail bound on normal distribution (Lemma \ref{lem:normal tail}), the equality $(b)$ holds by Lemma \ref{lem:MGF QF} and $\left\|\bsz_1-\bsz_2\right\|^2=\bsz^\top (\bsL^{P_2} \otimes \bsI_d) \bsz$ where $\bsz$ is the concatenation of $\bsz_1$ and $\bsz_2$, and the last equality holds by the definition  of $S(\alpha,t)$ in \eqref{eq:def S}.

Conversely, if nodes $1$ and $2$ belong to different communities, i.e., $\sigma_1\neq \sigma_2$, we can obtain that
\begin{equation}\label{eq:augment-1}
    \begin{aligned}
        \P(\mathcal{F}_2)&=\P(Z_{1,1}+Z_{2,2}\geq Z_{1,2}+Z_{2,1})\\
        &=\P\left(\sqrt{1-\rho^2}\left\langle \boldsymbol{\mu}(\sigma_1-\sigma_2)+\bsz_1-\bsz_2, \bsw_2-\bsw_1\right\rangle    \geq     \left\|2\boldsymbol{\mu} \right\|^2+ (1+\rho) \langle \boldsymbol{\mu}(\sigma_1-\sigma_2),\bsz_1-\bsz_2 \rangle +\rho\left\|\bsz_1-\bsz_2\right\|^2\right).
    \end{aligned}
\end{equation} 
On the event $\mathcal{A}_1$, defined in \eqref{eqn:def_A1}, it holds that 
\begin{equation}\label{eq:augment-2}
    \left\|2\boldsymbol{\mu} \right\|^2+ (1+\rho) \langle \boldsymbol{\mu}(\sigma_1-\sigma_2),\bsz_1-\bsz_2 \rangle +\rho\left\|\bsz_1-\bsz_2\right\|^2 \geq \rho \lVert \boldsymbol{\mu}(\sigma_1-\sigma_2)+\bsz_1-\bsz_2 \rVert^2,
\end{equation} 
both for the cases where $(\sigma_1,\sigma_2)=(+1, -1)$ and $(\sigma_1,\sigma_2)=(-1, +1)$, since \eqref{eq:augment-2} is equivalent to
\beq
\|2\bsmu\|^2\geq-\langle \bsmu, \bsz_1-\bsz_2\rangle (\sigma_1-\sigma_2),
\eeq
and we have $-\|\bsmu\|^2\leq -\langle \bsmu, \bsz_i\rangle \leq \|\bsmu\|^2$ for $\forall i\in[n]$ on the event $\mathcal{A}_1$.

Thus, by combining \eqref{eq:augment-1} and \eqref{eq:augment-2} and assuming $(\sigma_1,\sigma_2)=(+1, -1)$ without loss of generality, we can obtain that
\begin{equation}\label{eq:t=2 first}
    \begin{aligned}
         \P(\mathcal{F}_2)&\leq \P(\mathcal{F}_2\cap\mathcal{A}_1 )+\P(\mathcal{A}^c_1) \\
         &\leq \P\left(\sqrt{1-\rho^2}\left\langle 2\boldsymbol{\mu}+\bsz_1-\bsz_2, \bsw_2-\bsw_1\right\rangle    \geq      \rho \lVert 2\boldsymbol{\mu}+\bsz_1-\bsz_2 \rVert^2\right) +\P(\mathcal{A}^c_1) \\
        &=\P\left(\sqrt{1-\rho^2}\left\langle \bszeta_1-\bszeta_2, \bsw_2-\bsw_1\right\rangle   \geq      \rho \lVert \bszeta_1-\bszeta_2 \rVert^2\right)+\P(\mathcal{A}^c_1) \\
        &=\E_{\bszeta_1, \bszeta_2} \P_{u \sim \mathcal{N}\left(0,2\left(1-\rho^2\right)\left\|\bszeta_1-\bszeta_2\right\|^2\right)}\left(u \geq \rho\left\|\bszeta_1-\bszeta_2\right\|^2\right)+\P(\mathcal{A}^c_1)\\
        &=\E_{\bszeta_1, \bszeta_2} \P_{u \sim \mathcal{N}\left(0,1\right)}\left(u \geq \sqrt{\frac{\rho^2 \left\|\bszeta_1-\bszeta_2\right\|^2}{2(1-\rho^2)}} \right)+\P(\mathcal{A}^c_1)\\
        &\stackrel{(a)}{\leq} \E_{\bszeta_1,\bszeta_2} \exp \left(-\frac{\rho^2}{4(1-\rho^2) }\left\|\bszeta_1-\bszeta_2\right\|^2\right)+\P(\mathcal{A}^c_1)\\
        &\stackrel{(b)}{= }\det \left(\bsI_{2d}+\frac{\rho^2}{2(1-\rho^2)}(\bsL^{P_2} \otimes \bsI_d) \right)^{-1/2} \exp (-\rho^2 \lVert \boldsymbol{\mu}\rVert^2 )+\P(\mathcal{A}^c_1)\\
        &\leq \exp\left(-\frac{d}{2}S\left(\frac{1-\rho^2}{\rho^2},2\right) \right)+\P(\mathcal{A}^c_1),
    \end{aligned}
\end{equation}
 where $\bszeta_1=\boldsymbol{\mu}+\bsz_1,$ $\bszeta_2=-\boldsymbol{\mu}+\bsz_2$. The inequality $(a)$ holds by the tail bound on normal distribution (Lemma \ref{lem:normal tail}), the equality $(b)$ holds by Lemma \ref{lem:MGF QF} and $\left\|\bszeta_1-\bszeta_2\right\|^2=\bszeta^\top (L^{P_2} \otimes I_d) \bszeta$ where $\bszeta$ is the concatenation of $\bszeta_1$ and $\bszeta_2$, and the last equality holds by the definition  of $S(\alpha,t)$ in \eqref{eq:def S}.

Next, we consider the case where $t \geq 3$. It holds that 
\begin{equation}
    \sum_{i=1}^t Z_{i, i} \geq \sum_{i=1}^{t-1} Z_{i, i+1}+Z_{t,1} \text{ if and only if}
\end{equation}
\begin{equation}\label{eq:main eq}
    \sqrt{1-\rho^2} \sum_{i=1}^t \left\langle  \bsx_i-\bsx_{i+1}, \bsw_{i+1}\right\rangle \geq  \sum_{i=1}^t \frac{\|\boldsymbol{\mu}\|^2}{2}\left(\sigma_i-\sigma_{i+1}\right)^2  + \rho\langle\boldsymbol{\mu}, \bsz_i \rangle\left(\sigma_i-\sigma_{i-1}\right)+\left\langle\boldsymbol{\mu}, \bsz_i\right\rangle\left(\sigma_i-\sigma_{i+1}\right)+ \frac{\rho}{2} \left\|\bsz_i-\bsz_{i+1}\right\|^2
\end{equation}
where $\bsw_{t+1}=\bsw_1, \bsx_{t+1}=\bsx_1, \bsz_{t+1}=\bsz_1$ and $\sigma_{t+1}=\sigma_1$. 
Note that
 \begin{equation*}
  \begin{aligned}
       \sum_{i=1}^t \frac{\rho}{2}\|\bsx_i-\bsx_{i+1}\|^2 &=  \sum_{i=1}^t  \frac{\rho}{2} \| \bsmu(\sigma_i-\sigma_{i+1}) +(\bsz_i-\bsz_{i+1})  \|^2\\
       &  =\sum_{i=1}^t  \frac{\rho}{2} \|\bsmu \|^2(\sigma_i-\sigma_{i+1})^2 +\frac{\rho}{2} \|\bsz_i-\bsz_{i+1}\|^2 +\rho \langle \bsmu,\bsz_i \rangle (\sigma_i-\sigma_{i+1})+\rho \langle \bsmu,\bsz_{i+1} \rangle (\sigma_{i+1}-\sigma_i).
  \end{aligned}   
  \end{equation*}
  On the event $\mathcal{A}_1$,  we have $2\|\bsmu\|^2\geq -\langle \bsmu, \bsz_i\rangle (\sigma_i-\sigma_{i+1})$  for all  $i \in[t]$. 
  Therefore, we can obtain that
  \begin{equation*}
      \begin{aligned}
              \text{Right-hand side of   \eqref{eq:main eq}  }-\sum_{i=1}^t \frac{\rho}{2}\|\bsx_i-\bsx_{i+1}\|^2 =  \sum_{i=1}^t \frac{1-\rho}{2}\|\bsmu \|^2 (\sigma_i-\sigma_{i+1})^2 + (1-\rho) \langle \bsmu,z_i \rangle (\sigma_i-\sigma_{i+1}) \geq 0.
      \end{aligned}
  \end{equation*}
Thus, we have
\begin{equation}
\begin{split}\label{eqn:bd_A1}
    \text{Right-hand side of  \eqref{eq:main eq} }& \geq \sum_{i=1}^t \frac{\rho}{2}\|\bsx_i-\bsx_{i+1}\|^2= \frac{\rho}{2}\bsx^\top (\bsL^{C_t} \otimes \bsI_d) \bsx,
\end{split}
\end{equation}
where $\bsx$ is the concatenation of $\bsx_i$ for $i=1,\dots,t$ and $\bsL^{C_t}$ is the Laplacian matrix for a cycle graph consisting of $t$ nodes.

Therefore, we can obtain
\begin{equation}\label{eq:t>3 first}
    \begin{aligned}
        \P(\mathcal{F}_t)&\leq \P(\mathcal{F}_t\cap\mathcal{A}_1 )+\P(\mathcal{A}^c_1) \\
        &\leq \P\left( \sqrt{1-\rho^2} \sum_{i=1}^t \left\langle  \bsx_i-\bsx_{i+1}, \bsw_{i+1}\right\rangle \geq   \frac{\rho}{2}\bsx^\top (\bsL^{C_t} \otimes \bsI_d) \bsx \right) +\P(\mathcal{A}^c_1)\\
        &=\E_{\bsx_1,\ldots, \bsx_t} \P_{u \sim \mathcal{N}\left(0,\left(1-\rho^2\right)\bsx^\top (\bsL^{C_t} \otimes \bsI_d) \bsx\right)}\left(u \geq  \frac{\rho}{2}\bsx^\top (\bsL^{C_t} \otimes \bsI_d) \bsx\right) +\P(\mathcal{A}^c_1)\\
        &=\E_{\bsx_1,\ldots, \bsx_t} \P_{u \sim \mathcal{N}\left(0,1\right)}\left(u \geq \sqrt{\frac{\rho^2 \bsx^\top (\bsL^{C_t} \otimes \bsI_d )\bsx}{4(1-\rho^2)}} \right) +\P(\mathcal{A}^c_1)\\
        &\stackrel{(a)}{\leq} \E_{\bsx_1,\ldots, \bsx_t} \exp \left(-\frac{\rho^2}{8(1-\rho^2) }\bsx^\top (\bsL^{C_t} \otimes \bsI_d )\bsx\right) +\P(\mathcal{A}^c_1)\\
        &\stackrel{(b)}{= } \det \left(\bsI_{dt}+\frac{\rho^2}{4(1-\rho^2)}\bsL^{C_t} \otimes \bsI_d \right)^{-1/2}\exp \left(-\frac{1}{2} \bsv^\top \left[\bsI_{dt}-\left[\bsI_{dt}+\frac{\rho^2}{4\left(1-\rho^2\right)} \bsL^{C_t} \otimes \bsI_d\right]^{-1}\right] \bsv\right) +\P(\mathcal{A}^c_1)\\
        &\stackrel{(c)}{\leq} \det \left(\bsI_t+\frac{\rho^2}{4(1-\rho^2)}\bsL^{C_t}\right)^{-d/2} +\P(\mathcal{A}^c_1)\\
        &\stackrel{(d)}{=}\left(\prod_{j=0}^{t-1}\left\{1+\frac{\rho^2}{2(1-\rho^2)}\left(1-\cos \left(\frac{2 \pi j}{t}\right)\right)\right\}\right)^{-d / 2} +\P(\mathcal{A}^c_1)\\
        &= \exp\left(-\frac{d}{2}S\left(\frac{1-\rho^2}{\rho^2},t\right)\right) +\P(\mathcal{A}^c_1),
    \end{aligned}
\end{equation}
where $\bsv$ is the concatenation of $\boldsymbol{\mu} \sigma_i$ for $i=1,\dots,t$.
The inequality $(a)$ holds by Lemma \ref{lem:normal tail}, the equality $(b)$ holds by Lemma \ref{lem:MGF QF}, the inequality $(c) $ holds by confirming through Lemma \ref{lem:eigen LCT} that the eigenvalue of $\bsI_{dt}-\left[\bsI_{dt}+\frac{\rho^2}{4\left(1-\rho^2\right)} \bsL^{C_t} \otimes \bsI_d\right]^{-1}$ is non-negative, thereby establishing that $\bsI_{dt}-\left[\bsI_{dt}+\frac{\rho^2}{4\left(1-\rho^2\right)} \bsL^{C_t} \otimes \bsI_d\right]^{-1}$ is positive semi-definite, and the last equality holds by the definition  of $S(\alpha,t)$ in \eqref{eq:def S}. 
Thus, \eqref{eq:upperbound F_t1} holds.

Now, we will show that \eqref{eq:upperbound F_t2} holds.
Similar to the previous analysis, let us start by considering the case when $t=2$. Since the scenario where the two nodes belong to the same community yields a result analogous to that of \eqref{eq:belong same}, we will now focus on the case where the two nodes belong to different communities. On the event $\mathcal{A}_2$, it holds that
\begin{equation}\label{eq:noncentral-1}
    \left\|2\boldsymbol{\mu} \right\|^2+ (1+\rho) \langle \boldsymbol{\mu}(\sigma_1-\sigma_2),\bsz_1-\bsz_2 \rangle +\rho\left\|\bsz_1-\bsz_2\right\|^2 \geq \lambda \lVert \boldsymbol{\mu}(\sigma_1-\sigma_2)+\bsz_1-\bsz_2 \rVert^2.
\end{equation}
both for the cases where $(\sigma_1,\sigma_2)=(+1, -1)$ and $(\sigma_1,\sigma_2)=(-1, +1)$.
Thus, by combining \eqref{eq:augment-1} and \eqref{eq:noncentral-1} and assuming $(\sigma_1,\sigma_2)=(+1, -1)$ without loss of generality, we can obtain  
\begin{equation}\label{eq:t=2 second}
    \begin{aligned}
        \P(\mathcal{F}_2)&\leq \P(\mathcal{F}_2\cap\mathcal{A}_2 )+\P(\mathcal{A}^c_2) \\
        &\leq \P\left(\sqrt{1-\rho^2}\left\langle 2\boldsymbol{\mu}+\bsz_1-\bsz_2, \bsw_2-\bsw_1\right\rangle    \geq      \lambda \lVert 2\boldsymbol{\mu}+\bsz_1-\bsz_2 \rVert^2\right) +\P(\mathcal{A}^c_2) \\
        &=\P\left(\sqrt{1-\rho^2}\left\langle \bszeta_1-\bszeta_2, \bsw_2-\bsw_1\right\rangle   \geq     \lambda \lVert \bszeta_1-\bszeta_2 \rVert^2\right) +\P(\mathcal{A}^c_2)\\
        &=\E_{\bszeta_1, \bszeta_2} \P_{u \sim \mathcal{N}\left(0,2\left(1-\rho^2\right)\left\|\bszeta_1-\bszeta_2\right\|^2\right)}\left(u \geq \lambda \left\|\bszeta_1-\bszeta_2\right\|^2\right)+\P(\mathcal{A}^c_2)\\
        &=\E_{\bszeta_1, \bszeta_2} \P_{u \sim \mathcal{N}\left(0,1\right)}\left(u \geq \sqrt{\frac{\lambda^2 \left\|\bszeta_1-\bszeta_2\right\|^2}{2(1-\rho^2)}} \right)+\P(\mathcal{A}^c_2)\\
        &\stackrel{(a)}{\leq} \E_{\bszeta_1,\bszeta_2} \exp \left(-\frac{\lambda^2}{4(1-\rho^2) }\left\|\bszeta_1-\bszeta_2\right\|^2\right)+\P(\mathcal{A}^c_2)\\
        &\stackrel{(b)}{= }\det \left(\bsI_{2d}+\frac{\lambda^2}{2(1-\rho^2)}(\bsL^{P_2} \otimes \bsI_d) \right)^{-1/2} \exp \left(-\frac{\lambda^2}{1+\lambda^2-\rho^2} \lVert \boldsymbol{\mu}\rVert^2 \right)+\P(\mathcal{A}^c_2)\\
        &\leq \exp\left(-\frac{d}{2}S\left(\frac{1-\rho^2}{\lambda^2},2\right)  \right)+\P(\mathcal{A}^c_2)
    \end{aligned}
\end{equation}
where $\bszeta_1=\boldsymbol{\mu}+\bsz_1,$ $\bszeta_2=-\boldsymbol{\mu}+\bsz_2$. The inequality $(a)$ holds by the tail bound on normal distribution (Lemma \ref{lem:normal tail}), the equality $(b)$ holds by Lemma \ref{lem:MGF QF} and $\left\|\bszeta_1-\bszeta_2\right\|^2=\bszeta^\top (\bsL^{P_2} \otimes \bsI_d) \bszeta$ where $\bszeta$ is the concatenation of $\bszeta_1$ and $\bszeta_2$, and the last equality holds by the definition  of $S(\alpha,t)$ in \eqref{eq:def S}.

Let us consider the case where $t \geq 3$. Recall that it holds that 
\begin{equation}
    \sum_{i=1}^t Z_{i, i} \geq \sum_{i=1}^{t-1} Z_{i, i+1}+Z_{t,1}
\end{equation}
if and only if
\begin{equation}\label{eq:main eq2}
   \sqrt{1-\rho^2} \sum_{i=1}^t \left\langle  \bsx_i-\bsx_{i+1}, \bsw_{i+1}\right\rangle \geq  \sum_{i=1}^t \frac{\|\boldsymbol{\mu}\|^2}{2}\left(\sigma_i-\sigma_{i+1}\right)^2  + \rho\langle\boldsymbol{\mu}, \bsz_i \rangle\left(\sigma_i-\sigma_{i-1}\right)+\left\langle\boldsymbol{\mu}, \bsz_i\right\rangle\left(\sigma_i-\sigma_{i+1}\right)+ \frac{\rho}{2} \left\|\bsz_i-\bsz_{i+1}\right\|^2 
\end{equation}
where $\bsw_{t+1}=\bsw_1, \bsx_{t+1}=\bsx_1, \bsz_{t+1}=\bsz_1$ and $\sigma_{t+1}=\sigma_1$. Furthermore, it holds that
\begin{equation}
    \begin{gathered}\label{eq:lambda_t3_lower}
        \text{Right-hand side of  \eqref{eq:main eq2} } \geq \sum_{i=1}^t \frac{\lambda}{2}\|\bsx_i-\bsx_{i+1}\|^2=  \frac{\lambda}{2}\bsx^\top (\bsL^{C_t} \otimes \bsI_d) \bsx \; \text{  if and only if} \\
    \frac{\rho-\lambda}{2}\sum_{i=1}^t\left\|z_i-z_{i+1}+ \boldsymbol{\mu} \sigma_i-\frac{1-\lambda}{\rho-\lambda} \boldsymbol{\mu} \sigma_{i+1}\right\|^2 \geq \frac{\rho-\lambda}{2}\| \boldsymbol{\mu}\|^2\sum_{i=1}^t\left(\sigma_i-\frac{1-\lambda}{\rho-\lambda} \sigma_{i+1}\right)^2-\frac{1-\lambda}{2}\| \boldsymbol{\mu}\|^2 \sum_{i=1}^t\left(\sigma_i-\sigma_{i+1}\right)^2.
\end{gathered}
\end{equation}
This can be shown similar to \eqref{eqn:bd_A1}.
Therefore, on the event $\mathcal{A}_2$, \eqref{eq:lambda_t3_lower} holds and we can obtain that 
\begin{equation}\label{eq:t>3 second}
    \begin{aligned}
        \P(\mathcal{F}_t)&\leq \P(\mathcal{F}_t\cap\mathcal{A}_2 )+\P(\mathcal{A}^c_2) \\
        &\leq \P\left( \sqrt{1-\rho^2} \sum_{i=1}^t \left\langle \bsx_i-\bsx_{i+1}, \bsw_{i+1}\right\rangle \geq   \frac{\lambda}{2}\bsx^\top (\bsL^{C_t} \otimes \bsI_d) \bsx \right)+\P(\mathcal{A}^c_2)\\
        &=\E_{\bsx_1,\ldots, \bsx_t} \P_{u \sim \mathcal{N}\left(0,\left(1-\rho^2\right)\bsx^\top (\bsL^{C_t} \otimes \bsI_d) \bsx\right)}\left(u \geq  \frac{\lambda}{2}\bsx^\top (\bsL^{C_t} \otimes \bsI_d )\bsx\right)+\P(\mathcal{A}^c_2)\\
        &=\E_{\bsx_1,\ldots, \bsx_t} \P_{u \sim \mathcal{N}\left(0,1\right)}\left(u \geq \sqrt{\frac{\lambda^2 \bsx^\top (\bsL^{C_t} \otimes \bsI_d )\bsx}{4(1-\rho^2)}} \right)+\P(\mathcal{A}^c_2)\\
        &\stackrel{(a)}{\leq} \E_{\bsx_1,\ldots, \bsx_t} \exp \left(-\frac{\lambda^2}{8(1-\rho^2) }\bsx^\top (\bsL^{C_t} \otimes \bsI_d )\bsx\right)+\P(\mathcal{A}^c_2)\\
        &\stackrel{(b)}{= } \det \left(\bsI_{dt}+\frac{\lambda^2}{4(1-\rho^2)}\bsL^{C_t} \otimes \bsI_d \right)^{-1/2}\exp \left(-\frac{1}{2} \bsv^\top \left[\bsI_{dt}-\left[\bsI_{dt}+\frac{\lambda^2}{4\left(1-\rho^2\right)} \bsL^{C_t} \otimes \bsI_d\right]^{-1}\right] \bsv\right)+\P(\mathcal{A}^c_2)\\
        &\stackrel{(c)}{\leq} \det \left(\bsI_t+\frac{\lambda^2}{4(1-\rho^2)}\bsL^{C_t}\right)^{-d/2}+\P(\mathcal{A}^c_2)\\
        &\stackrel{(d)}{=}\left(\prod_{j=0}^{t-1}\left\{1+\frac{\lambda^2}{2(1-\rho^2)}\left(1-\cos \left(\frac{2 \pi j}{t}\right)\right)\right\}\right)^{-d / 2}+\P(\mathcal{A}^c_2)\\
        &= \exp\left(-\frac{d}{2}S\left(\frac{1-\rho^2}{\lambda^2},t\right)\right)+\P(\mathcal{A}^c_2),
    \end{aligned}
\end{equation}
where $\bsv$ is the concatenation of $\boldsymbol{\mu} \sigma_i$ for $i=1,\dots, t$. 
The inequality $(a)$ holds by Lemma \ref{lem:normal tail}, the equality $(b)$ holds by Lemma \ref{lem:MGF QF}, the inequality $(c) $ holds by confirming through Lemma \ref{lem:eigen LCT} that the eigenvalue of $\bsI_{dt}-\left[\bsI_{dt}+\frac{\lambda^2}{4\left(1-\rho^2\right)} \bsL^{C_t} \otimes \bsI_d\right]^{-1}$ is non-negative, thereby establishing that $\bsI_{dt}-\left[\bsI_{dt}+\frac{\lambda^2}{4\left(1-\rho^2\right)} \bsL^{C_t} \otimes \bsI_d\right]^{-1}$ is positive semi-definite, the inequality holds by Lemma \ref{lem:eigen LCT}, and the last equality holds by the definition  of $S(\alpha,t)$ in \eqref{eq:def S}. 
Thus, \eqref{eq:upperbound F_t2} holds.

\section{Technical Tools}

\begin{lemma}[Proposition 2.2 in \cite{KN22}]\label{lem:I value}
For all $\alpha>0$
\begin{equation}
   I\left(\alpha\right)=2 \log \left(\frac{1+\sqrt{1+\alpha^{-1}}}{2}\right) .
\end{equation}
\end{lemma}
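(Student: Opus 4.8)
The plan is to evaluate $I(\alpha)$ in closed form via a standard Fourier/complex-analytic identity. Writing $\beta := \tfrac{1}{2\alpha}$ and substituting $\theta = 2\pi x$, the integral becomes $I(\alpha) = \tfrac{1}{2\pi}\int_0^{2\pi}\log\bigl(1+\beta-\beta\cos\theta\bigr)\,d\theta$, i.e.\ $\tfrac{1}{2\pi}\int_0^{2\pi}\log(a-b\cos\theta)\,d\theta$ with $a = 1+\beta$ and $b = \beta$, so that $a > |b| > 0$ and $a^2-b^2 = 1+2\beta = 1+\alpha^{-1} > 0$. Thus the whole claim reduces to the classical identity $\tfrac{1}{2\pi}\int_0^{2\pi}\log(a-b\cos\theta)\,d\theta = \log\tfrac{a+\sqrt{a^2-b^2}}{2}$ for $a > |b| > 0$, together with a routine algebraic simplification.

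To prove that identity I would factor the integrand. Setting $r := \tfrac{a-\sqrt{a^2-b^2}}{b}$, one checks that $r\in(0,1)$ for $a,b>0$ and that $r$ is the root of $br^2 - 2ar + b = 0$, whence the algebraic identity $a - b\cos\theta = \tfrac{a+\sqrt{a^2-b^2}}{2}\,\lvert 1 - r\mathrm{e}^{\mathrm{i}\theta}\rvert^2$ follows from $\lvert 1 - r\mathrm{e}^{\mathrm{i}\theta}\rvert^2 = 1 - 2r\cos\theta + r^2$. Taking logarithms and integrating,
\[
\tfrac{1}{2\pi}\int_0^{2\pi}\log(a-b\cos\theta)\,d\theta = \log\tfrac{a+\sqrt{a^2-b^2}}{2} + \tfrac{1}{\pi}\int_0^{2\pi}\log\lvert 1 - r\mathrm{e}^{\mathrm{i}\theta}\rvert\,d\theta,
\]
and the last integral vanishes because $\theta \mapsto \log\lvert 1 - r\mathrm{e}^{\mathrm{i}\theta}\rvert$ is the boundary trace of the harmonic function $z\mapsto\log\lvert 1 - rz\rvert$ on the unit disk (equivalently, expand $\log(1-rz) = -\sum_{n\ge 1} r^n z^n/n$ for $\lvert r\rvert < 1$ and integrate term by term; equivalently invoke Jensen's formula).

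Finally I would substitute back $a = 1+\beta$ and $\sqrt{a^2-b^2} = \sqrt{1+\alpha^{-1}}$, obtaining $I(\alpha) = \log\tfrac{1+\beta+\sqrt{1+\alpha^{-1}}}{2}$, and verify $\log\tfrac{1+\beta+\sqrt{1+\alpha^{-1}}}{2} = 2\log\tfrac{1+\sqrt{1+\alpha^{-1}}}{2}$ by squaring the right-hand bracket: $(1+\sqrt{1+\alpha^{-1}})^2/4 = (2 + \alpha^{-1} + 2\sqrt{1+\alpha^{-1}})/4 = (1 + \tfrac{1}{2\alpha} + \sqrt{1+\alpha^{-1}})/2$, which is exactly the expression above since $\beta = \tfrac{1}{2\alpha}$. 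The only genuine subtlety, and hence the step I expect to be the main obstacle in a fully rigorous write-up, is the vanishing of $\int_0^{2\pi}\log\lvert 1 - r\mathrm{e}^{\mathrm{i}\theta}\rvert\,d\theta$ together with the (mild) verification that the integrand of $I(\alpha)$ is integrable; since $\alpha > 0$ forces $\beta < \infty$ and thus $r$ strictly inside $(0,1)$, there is no singularity and the interchange of sum and integral (or the appeal to the mean-value property) needs no delicate estimates. Everything else is elementary; as this is quoted verbatim from \cite{KN22}, I would simply cite their Proposition 2.2 and include the short computation above for completeness.
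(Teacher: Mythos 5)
Your proposal is correct. Note, however, that the paper does not prove this statement at all: it is imported verbatim as Proposition 2.2 of \cite{KN22}, so there is no in-paper argument to compare against. Your derivation is a clean, self-contained route: the substitution $\theta = 2\pi x$, $\beta = \tfrac{1}{2\alpha}$ reduces $I(\alpha)$ to the classical integral $\tfrac{1}{2\pi}\int_0^{2\pi}\log(a-b\cos\theta)\,d\theta$ with $a=1+\beta$, $b=\beta$, $a^2-b^2 = 1+\alpha^{-1}$; the factorization $a-b\cos\theta = \tfrac{a+\sqrt{a^2-b^2}}{2}\,\lvert 1-r\mathrm{e}^{\mathrm{i}\theta}\rvert^2$ with $r=\tfrac{a-\sqrt{a^2-b^2}}{b}\in(0,1)$ checks out (the coefficient of $\cos\theta$ matches because $2Mr=b$, and the constant term because $M(1+r^2)=\tfrac{2aMr}{b}=a$); and the remaining integral vanishes by the mean-value property of the harmonic function $z\mapsto\log\lvert 1-rz\rvert$ at $z=0$, with no integrability issues since $r$ is strictly inside the unit disk. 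The final algebraic identity $\log\tfrac{1+\beta+\sqrt{1+\alpha^{-1}}}{2}=2\log\tfrac{1+\sqrt{1+\alpha^{-1}}}{2}$ is verified correctly by squaring. This is exactly the kind of short computation one would include if one did not wish to rely on the citation, and it is what the original reference essentially does as well; citing \cite{KN22} and optionally appending your derivation for completeness is entirely appropriate.
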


\begin{lemma}[Lemma 2.2 in \cite{KN22}]\label{lem:S<tI}
    For $t \geq 2$ and $\alpha>0, S\left(\alpha, t\right)<t I\left(\alpha\right)$.
\end{lemma}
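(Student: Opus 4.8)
The plan is to derive an \emph{exact} closed-form identity for $S(\alpha,t)$ via a Fourier expansion of the integrand combined with a sum over $t$-th roots of unity; the strict inequality $S(\alpha,t)<tI(\alpha)$ then drops out immediately as the sign of an explicit correction term. Write $g(x):=\log\bigl(1+\tfrac{1}{2\alpha}(1-\cos 2\pi x)\bigr)=\log(A-B\cos 2\pi x)$ with $A:=1+\tfrac{1}{2\alpha}$ and $B:=\tfrac{1}{2\alpha}$, so that $A>B>0$ and hence $A-B\cos 2\pi x\ge A-B>0$, making $g$ smooth and periodic. First I would observe that $g(0)=0$, so the $j=0$ term is harmless and $S(\alpha,t)=\sum_{j=0}^{t-1}g(j/t)$, which displays $\tfrac1t S(\alpha,t)$ as the equally spaced Riemann sum approximating $I(\alpha)=\int_0^1 g$.

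Next, factor $A-B\cos\theta=\tfrac{B}{2r}(1-2r\cos\theta+r^2)$, where $r$ is the root in $(0,1)$ of $Br^2-2Ar+B=0$, i.e.\ $r=\tfrac{A-\sqrt{A^2-B^2}}{B}$; that $0<r<1$ is elementary from $A>B>0$. Using the classical generating-function identity $\log(1-2r\cos\theta+r^2)=-2\sum_{k\ge1}\tfrac{r^k}{k}\cos(k\theta)$, which converges absolutely and uniformly for $|r|<1$, one obtains the uniformly convergent Fourier series
\begin{equation*}
g(x)=\log\tfrac{B}{2r}-2\sum_{k\ge1}\frac{r^k}{k}\cos(2\pi k x).
\end{equation*}
Integrating term by term over $[0,1]$ picks off only the constant term, giving $I(\alpha)=\log\tfrac{B}{2r}$ (which also recovers Lemma~\ref{lem:I value} after simplifying $\tfrac{B}{2r}=\tfrac{A+\sqrt{A^2-B^2}}{2}$). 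Summing term by term over $x=j/t$ for $j=0,\dots,t-1$ and using that $\sum_{j=0}^{t-1}\cos(2\pi kj/t)$ equals $t$ when $t\mid k$ and $0$ otherwise, only the modes $k=mt$ survive, so
\begin{equation*}
S(\alpha,t)=t\log\tfrac{B}{2r}-2\sum_{m\ge1}\frac{r^{mt}}{m}=t\,I(\alpha)+2\log(1-r^t).
\end{equation*}

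Finally, since $0<r<1$ we have $0<r^t<1$, hence $2\log(1-r^t)<0$, which yields $S(\alpha,t)<tI(\alpha)$ strictly for every $t\ge1$, completing the argument. I do not expect a genuine obstacle here: the only points requiring (routine) care are justifying the interchange of the infinite sum with the finite sum over $j$ and with the integral over $[0,1]$ — both immediate from the absolute/uniform convergence guaranteed by $A-B>0$ — and checking $r\in(0,1)$. The conceptual heart of the proof is simply that averaging a periodic function over equally spaced points annihilates all nonconstant Fourier modes except those of order divisible by $t$, which converts the Riemann-sum error into the manifestly negative quantity $2\log(1-r^t)$.
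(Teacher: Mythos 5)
Your proof is correct. Note that the paper itself does not prove this lemma; it is imported verbatim as Lemma~2.2 of \cite{KN22}, so there is no in-paper argument to compare against. Your derivation is a valid self-contained proof: the factorization $A-B\cos\theta=\tfrac{B}{2r}(1-2r\cos\theta+r^2)$ with $r=\tfrac{A-\sqrt{A^2-B^2}}{B}\in(0,1)$ is right (the two roots of $Br^2-2Ar+B=0$ are positive reciprocals, so exactly one lies in $(0,1)$), the Fourier expansion $\log(1-2r\cos\theta+r^2)=-2\sum_{k\ge1}\tfrac{r^k}{k}\cos(k\theta)$ converges uniformly for fixed $r<1$, the root-of-unity orthogonality kills all modes with $t\nmid k$, and the resulting identity $S(\alpha,t)=tI(\alpha)+2\log(1-r^t)$ checks out (in particular $\log\tfrac{B}{2r}=\log\tfrac{A+\sqrt{A^2-B^2}}{2}$ agrees with the closed form of $I(\alpha)$ in Lemma~\ref{lem:I value}, since $A^2-B^2=1+\alpha^{-1}$). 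The strict negativity of $2\log(1-r^t)$ then gives the claim for every $t\ge 2$, indeed with an explicit deficit. As a minor remark, the same identity can be reached without any interchange-of-limits bookkeeping by evaluating the finite product directly: $\prod_{j=0}^{t-1}\bigl(1-2r\cos(2\pi j/t)+r^2\bigr)=\prod_{j=0}^{t-1}\bigl(1-re^{2\pi \ii j/t}\bigr)\bigl(1-re^{-2\pi \ii j/t}\bigr)=(1-r^t)^2$, whence $\exp S(\alpha,t)=(B/2r)^t(1-r^t)^2$ immediately; this is essentially the route taken in \cite{KN22}. Either way, the conceptual content is identical and your argument stands.
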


\begin{lemma}[Moment generating function of quadratic form]\label{lem:MGF QF}
    Let $X \sim \mathcal{N}(\vec{\mu}, \Sigma)$ and $A$ is symmetric. We have
    \begin{equation}
        M_{X^{\top} A X}(t):=\E(e^{X^\top A X})=\frac{1}{|I-2 t A \Sigma|^{\frac{1}{2}}} e^{-\frac{1}{2} \mu^{\top}\left[I-(I-2 t A \Sigma)^{-1}\right] \Sigma^{-1} \mu}.
    \end{equation}
\end{lemma}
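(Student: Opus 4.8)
The plan is to evaluate the defining integral directly by completing the square, a standard Gaussian computation. Writing $X\sim\mathcal N(\vec\mu,\Sigma)$ with density proportional to $\exp\bigl(-\tfrac12(x-\vec\mu)^\top\Sigma^{-1}(x-\vec\mu)\bigr)$, one has
\begin{equation}
M_{X^\top A X}(t)=\frac{1}{(2\pi)^{n/2}\,|\Sigma|^{1/2}}\int_{\R^n}\exp\Bigl(t\,x^\top A x-\tfrac12(x-\vec\mu)^\top\Sigma^{-1}(x-\vec\mu)\Bigr)\,dx .
\end{equation}
Expanding the exponent and collecting the quadratic and linear terms in $x$ gives $-\tfrac12 x^\top M x + x^\top\Sigma^{-1}\vec\mu-\tfrac12\vec\mu^\top\Sigma^{-1}\vec\mu$, where $M:=\Sigma^{-1}-2tA$ is symmetric. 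The manipulation is valid (the integral finite) for $t$ in a neighbourhood of $0$, more precisely whenever $M\succ 0$, equivalently whenever all eigenvalues of $2tA\Sigma$ are strictly less than $1$; I would record this restriction explicitly.

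Next I would complete the square: with $b:=\Sigma^{-1}\vec\mu$,
\begin{equation}
-\tfrac12 x^\top M x + x^\top b = -\tfrac12(x-M^{-1}b)^\top M(x-M^{-1}b)+\tfrac12\, b^\top M^{-1}b .
\end{equation}
Shifting the variable of integration and using $\int_{\R^n}\exp\bigl(-\tfrac12 y^\top M y\bigr)\,dy=(2\pi)^{n/2}|M|^{-1/2}$ collapses the whole expression to
\begin{equation}
M_{X^\top A X}(t)=\frac{|M|^{-1/2}}{|\Sigma|^{1/2}}\exp\Bigl(\tfrac12\,\vec\mu^\top\Sigma^{-1}M^{-1}\Sigma^{-1}\vec\mu-\tfrac12\,\vec\mu^\top\Sigma^{-1}\vec\mu\Bigr).
\end{equation}

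The final step, which is the only place requiring any care, is to rewrite this in the stated closed form. For the prefactor, $|M|^{-1/2}|\Sigma|^{-1/2}=|M\Sigma|^{-1/2}=|(\Sigma^{-1}-2tA)\Sigma|^{-1/2}=|I-2tA\Sigma|^{-1/2}$. For the exponent, the key identity is $\Sigma^{-1}M^{-1}\Sigma^{-1}=\bigl[(\Sigma^{-1}-2tA)\Sigma\bigr]^{-1}\Sigma^{-1}=(I-2tA\Sigma)^{-1}\Sigma^{-1}$, after which $\tfrac12\vec\mu^\top\Sigma^{-1}M^{-1}\Sigma^{-1}\vec\mu-\tfrac12\vec\mu^\top\Sigma^{-1}\vec\mu=-\tfrac12\vec\mu^\top\bigl[I-(I-2tA\Sigma)^{-1}\bigr]\Sigma^{-1}\vec\mu$, exactly the claimed form. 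The main obstacle is thus purely bookkeeping: since $A$, $\Sigma$ and their products generally do not commute, one must keep left- and right-multiplication straight (for example distinguishing $\Sigma M=I-2t\Sigma A$ from $M\Sigma=I-2tA\Sigma$) so that both the determinant and the quadratic form come out in precisely the stated order. I would also flag that the left-hand side $M_{X^\top A X}(t)$ should read $\E\,e^{t\,X^\top A X}$, since the factor $t$ appears to have been dropped from the displayed expectation.
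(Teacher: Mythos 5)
Your proof is correct. The paper states this lemma as a technical tool without providing a proof, so there is nothing to compare against; your completion-of-the-square computation is the standard derivation, and I verified the two rearrangements you flag as the delicate steps: $|M|^{-1/2}|\Sigma|^{-1/2}=|I-2tA\Sigma|^{-1/2}$ and $\Sigma^{-1}M^{-1}\Sigma^{-1}=(I-2tA\Sigma)^{-1}\Sigma^{-1}$ both check out, with the orderings consistent with the stated formula. Your two side remarks are also well taken: the expectation in the lemma as printed is missing the factor $t$ in the exponent, and the identity only holds for $t$ such that $\Sigma^{-1}-2tA\succ 0$, a restriction the paper leaves implicit.
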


\begin{lemma}[Tail bound of normal distribution]\label{lem:normal tail}
    Let $X \sim \mathcal{N}(0,1)$. For $t>0$, we have
    \begin{equation}
        \P(X\geq t)\leq \exp(-t^2/2).
    \end{equation}
\end{lemma}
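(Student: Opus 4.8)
The plan is to establish this classical tail bound via the Chernoff (exponential moment) method, which has the decisive advantage of yielding the stated inequality uniformly for all $t > 0$ — in particular in the small‑$t$ regime, where the cruder integral estimate does not suffice. First I would recall that the moment generating function of a standard Gaussian is $\E[e^{\lambda X}] = e^{\lambda^2/2}$ for every $\lambda \in \mathbb{R}$; this is the scalar special case $\vec{\mu}=0$, $\Sigma = 1$, $A = \lambda/2$ of Lemma~\ref{lem:MGF QF}, and can also be checked directly by completing the square inside the Gaussian integral $\frac{1}{\sqrt{2\pi}}\int_{\mathbb{R}} e^{\lambda x - x^2/2}\,dx$.

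Next, for any fixed $\lambda > 0$, I would apply Markov's inequality to the nonnegative random variable $e^{\lambda X}$:
\[
\P(X \geq t) \;=\; \P\bigl(e^{\lambda X} \geq e^{\lambda t}\bigr) \;\leq\; e^{-\lambda t}\,\E\bigl[e^{\lambda X}\bigr] \;=\; \exp\Bigl(\tfrac{\lambda^2}{2} - \lambda t\Bigr).
\]
Finally I would optimize the exponent over $\lambda > 0$: the map $g(\lambda) = \lambda^2/2 - \lambda t$ is a convex parabola minimized at $\lambda = t > 0$, with minimum value $g(t) = -t^2/2$. Substituting $\lambda = t$ into the displayed bound gives $\P(X \geq t) \leq e^{-t^2/2}$, which is exactly the claim.

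There is essentially no obstacle here: the argument is three short lines, and the only point worth flagging is the reason for using the exponential‑moment estimate rather than the elementary bound $\int_t^\infty e^{-x^2/2}\,dx \leq \tfrac{1}{t}\int_t^\infty x\,e^{-x^2/2}\,dx = \tfrac{1}{t}e^{-t^2/2}$. The latter carries the prefactor $1/(t\sqrt{2\pi})$, which exceeds $1$ when $t < 1/\sqrt{2\pi}$ and therefore fails to give $e^{-t^2/2}$ in that range, whereas the Chernoff bound above is valid for every $t > 0$ and is all that is invoked elsewhere in the paper.
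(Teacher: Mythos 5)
Your proof is correct: the Chernoff argument (Markov's inequality applied to $e^{\lambda X}$, followed by optimizing $\lambda = t$) is a complete and standard derivation of the prefactor-free bound $\P(X\geq t)\leq e^{-t^2/2}$, valid for all $t>0$. The paper states Lemma~\ref{lem:normal tail} as a technical tool without proof, so there is nothing to compare against; your remark about why the Mills-ratio estimate $\frac{1}{t\sqrt{2\pi}}e^{-t^2/2}$ would not suffice for small $t$ is also accurate.
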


\begin{lemma}[Hoeffding's inequality]\label{lem:hoeff}
    Let $X_1, \ldots, X_n$ be independent random variables such that $a_i \leq X_i \leq b_i$ almost surely. Consider the sum of these random variables,
\begin{equation}
    S_n=X_1+\cdots+X_n
\end{equation}
Then it holds that , for all $t>0$
\begin{equation}
    \begin{gathered}
\P\left(S_n-\E\left[S_n\right] \geq t\right) \leq \exp \left(-\frac{2 t^2}{\sum_{i=1}^n\left(b_i-a_i\right)^2}\right) \\
\P\left(\left|S_n-\E\left[S_n\right]\right| \geq t\right) \leq 2 \exp \left(-\frac{2 t^2}{\sum_{i=1}^n\left(b_i-a_i\right)^2}\right).
\end{gathered}
\end{equation}
\end{lemma}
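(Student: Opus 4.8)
The plan is to prove the one-sided tail bound by the exponential Chernoff method and then deduce the two-sided bound by a union bound. For any $\lambda>0$, Markov's inequality applied to $e^{\lambda(S_n-\E[S_n])}$ together with independence of the $X_i$ gives
\[
\P\left(S_n-\E[S_n]\geq t\right)\leq e^{-\lambda t}\,\E\!\left[e^{\lambda(S_n-\E[S_n])}\right]=e^{-\lambda t}\prod_{i=1}^n\E\!\left[e^{\lambda(X_i-\E[X_i])}\right],
\]
so it suffices to control the moment generating function of each centered, bounded summand.

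The key ingredient is \emph{Hoeffding's lemma}: if $Y$ satisfies $\E[Y]=0$ and $a\leq Y\leq b$ almost surely, then $\E[e^{\lambda Y}]\leq e^{\lambda^2(b-a)^2/8}$ for every $\lambda\in\R$. I would prove this from convexity of $x\mapsto e^{\lambda x}$: for $y\in[a,b]$, $e^{\lambda y}\leq \frac{b-y}{b-a}e^{\lambda a}+\frac{y-a}{b-a}e^{\lambda b}$, and taking expectations with $\E[Y]=0$ yields $\E[e^{\lambda Y}]\leq (1-p)e^{\lambda a}+pe^{\lambda b}=e^{L(h)}$, where $p:=-a/(b-a)\in[0,1]$, $h:=\lambda(b-a)$, and $L(h):=-ph+\log(1-p+pe^{h})$. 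A short calculation shows $L(0)=L'(0)=0$ and $L''(h)=u(1-u)\leq\tfrac14$ with $u:=pe^{h}/(1-p+pe^{h})\in[0,1]$, so Taylor's theorem gives $L(h)\leq h^2/8=\lambda^2(b-a)^2/8$.

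Applying Hoeffding's lemma to $Y_i:=X_i-\E[X_i]$ (which has mean zero and range of length $b_i-a_i$), the displayed bound becomes $\P(S_n-\E[S_n]\geq t)\leq \exp\!\left(-\lambda t+\tfrac{\lambda^2}{8}\sum_{i=1}^n(b_i-a_i)^2\right)$, and minimizing the exponent over $\lambda>0$, at $\lambda=4t/\sum_i(b_i-a_i)^2$, yields the first claimed inequality. Repeating the argument with $X_i$ replaced by $-X_i$ (whose ranges still have length $b_i-a_i$) bounds $\P(S_n-\E[S_n]\leq -t)$ by the same quantity, and a union bound over the two one-sided events gives the two-sided statement with the extra factor $2$.

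I expect the only nontrivial point to be \emph{Hoeffding's lemma}, specifically the estimate $L''(h)\leq\tfrac14$; the rest is the standard Chernoff bookkeeping and a one-line optimization over $\lambda$.
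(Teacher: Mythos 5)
Your proof is the standard and correct derivation of Hoeffding's inequality (Chernoff bound plus Hoeffding's lemma via convexity, then optimization over $\lambda$ and a union bound for the two-sided version); the optimization at $\lambda = 4t/\sum_i(b_i-a_i)^2$ does yield the exponent $-2t^2/\sum_i(b_i-a_i)^2$ as claimed. The paper itself states this lemma as a standard technical tool without proof, so there is nothing to compare against; your argument fills that in correctly and completely.
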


\begin{lemma}[Proposition 2.1 in \cite{KN22}]\label{lem:eigen LCT}
    Let $L^{C_t}$ be the Laplacian matrix for a cycle graph consisting of $t$ nodes. The eigenvalues of $L^{C_t}$ are $2\left(1-\cos \left(\frac{2 \pi k}{t}\right)\right)=4 \sin ^2\left(\frac{\pi k}{t}\right)$ for $k=0, \ldots, t-1$.
\end{lemma}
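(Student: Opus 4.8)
The plan is to exploit the circulant structure of $L^{C_t}$. Label the vertices of the cycle by $0,1,\dots,t-1$ with edges $\{j,\,j+1\bmod t\}$, and let $C$ be its adjacency matrix, which is the circulant matrix with first row $(0,1,0,\dots,0,1)$; equivalently $C = P + P^{-1}$, where $P$ is the cyclic shift permutation matrix and $P^{t}=I_t$. Since every vertex has degree $2$, the Laplacian is $L^{C_t} = 2I_t - C = 2I_t - P - P^{-1}$.

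First I would diagonalise $P$ over $\mathbb{C}$. Set $\omega := e^{2\pi\ii/t}$ and, for each $k\in\{0,\dots,t-1\}$, let $v_k := (1,\omega^{k},\omega^{2k},\dots,\omega^{(t-1)k})^\top$. A one-line coordinatewise check shows $Pv_k = \omega^{-k}v_k$, and hence $P^{-1}v_k = \omega^{k}v_k$. The matrix with columns $v_0,\dots,v_{t-1}$ is the discrete Fourier matrix, whose determinant is a Vandermonde determinant in the distinct values $1,\omega,\dots,\omega^{t-1}$ and is therefore nonzero; so $\{v_k\}$ is a basis of $\mathbb{C}^t$ that simultaneously diagonalises $P$ and $P^{-1}$, hence also $L^{C_t}$.

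Next I would read off the spectrum: for each $k$,
\[
L^{C_t}v_k = \bigl(2 - \omega^{-k} - \omega^{k}\bigr)v_k = \Bigl(2 - 2\cos\tfrac{2\pi k}{t}\Bigr)v_k .
\]
Since the $v_k$ form a basis, these are exactly the $t$ eigenvalues of $L^{C_t}$, counted with multiplicity. Finally, applying the half-angle identity $1-\cos\theta = 2\sin^2(\theta/2)$ with $\theta = 2\pi k/t$ gives $2 - 2\cos(2\pi k/t) = 4\sin^2(\pi k/t)$, which is the claimed form.

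There is essentially no obstacle here: this is the textbook diagonalisation of a circulant matrix by the Fourier basis. The only point requiring a moment's care is the completeness of the eigenvalue list, i.e.\ that $v_0,\dots,v_{t-1}$ are linearly independent, which is precisely the nonvanishing of the Vandermonde determinant noted above. (As a degenerate remark, for $t=2$ one reads $C_2$ as the multigraph with a double edge, whose Laplacian has diagonal entries $2$ and off-diagonal entries $-2$, with eigenvalues $0 = 4\sin^2 0$ and $4 = 4\sin^2(\pi/2)$, consistent with the formula; in the places where the paper needs $t=2$ it uses the path Laplacian $\bsL^{P_2}$ directly, so no conflict arises.)
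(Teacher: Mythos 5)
Your proof is correct. The paper does not prove this lemma at all---it is imported verbatim as Proposition~2.1 of \cite{KN22}---so there is nothing to compare against; your circulant-matrix argument (writing $L^{C_t}=2I_t-P-P^{-1}$, diagonalising $P$ in the Fourier basis $v_k$, invoking the Vandermonde determinant for completeness of the spectrum, and finishing with the half-angle identity) is exactly the standard derivation one would expect behind that citation, and your aside about the degenerate $t=2$ case correctly notes that the paper sidesteps it by using the path Laplacian $\bsL^{P_2}$ there.
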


\begin{lemma}[Lemma 8.1 in \cite{noncentral} : Tail bound for noncentral chi-squared distribution]\label{lem:noncentral}
    Let $X$ be a noncentral $\chi^2$ variable with $D$ degrees of freedom and noncentrality parameter $B$, then for all $x>0$,
\begin{equation}
    \mathbb{P}[X \geq(D+B)+2 \sqrt{(D+2 B) x}+2 x] \leq \exp (-x)
\end{equation}
and
\begin{equation}
    \mathbb{P}[X \leq(D+B)-2 \sqrt{(D+2 B) x}] \leq \exp (-x).
\end{equation}
\end{lemma}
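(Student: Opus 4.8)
The plan is to obtain both inequalities from the Chernoff (exponential Markov) bound applied to the moment generating function of a noncentral $\chi^{2}$ law, combined with two elementary scalar estimates for $\log(1\mp 2t)$.

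First I would record the MGF. Representing $X$ as $\sum_{i=1}^{D-1}Z_i^{2}+(Z_D+\sqrt{B})^{2}$ with $Z_i$ i.i.d.\ standard normal, a direct Gaussian integral gives $\E[e^{sX}]=(1-2s)^{-D/2}\exp\!\big(\tfrac{Bs}{1-2s}\big)$ for $s<\tfrac12$. Centering at $D+B$ yields, for $t\in(0,\tfrac12)$,
\[
\log\E\!\big[e^{t(X-D-B)}\big]=-\tfrac{D}{2}\big(\log(1-2t)+2t\big)+\tfrac{2Bt^{2}}{1-2t}.
\]
Using $-\log(1-u)-u=\sum_{k\ge 2}u^{k}/k\le\tfrac{u^{2}}{2(1-u)}$ with $u=2t$, the right-hand side is at most $\tfrac{(D+2B)t^{2}}{1-2t}$, so Markov's inequality gives $\P[X\ge D+B+u]\le\exp\!\big(\tfrac{(D+2B)t^{2}}{1-2t}-tu\big)$ for every $u>0$ and every $t\in(0,\tfrac12)$.

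The only bookkeeping step is to pick $t$ so that this exponent equals exactly $-x$ when $u=2\sqrt{(D+2B)x}+2x$. Substituting $t=\tfrac{s}{1+2s}$ (which makes $\tfrac{t^{2}}{1-2t}=st$ and automatically keeps $t\in(0,\tfrac12)$) turns the exponent into $t\big((D+2B)s-u\big)$; choosing $s=\sqrt{x/(D+2B)}$ then gives $(D+2B)s-u=-\sqrt{(D+2B)x}-2x$ and $t=\sqrt{x}/\big(\sqrt{D+2B}+2\sqrt{x}\big)$, whose product simplifies to exactly $-x$. This proves the first bound. For the lower tail I would run the Chernoff argument with $-t$, $t>0$: now $\log\E[e^{-t(X-D-B)}]=-\tfrac{D}{2}\big(\log(1+2t)-2t\big)+\tfrac{2Bt^{2}}{1+2t}$, and the inequality $\log(1+u)\ge u-\tfrac{u^{2}}{2}$ (which follows since $h(u)=\log(1+u)-u+\tfrac{u^{2}}{2}$ has $h(0)=0$ and $h'(u)=\tfrac{u^{2}}{1+u}\ge0$) together with $\tfrac{2Bt^{2}}{1+2t}\le 2Bt^{2}$ bounds the log-MGF by $(D+2B)t^{2}$. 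Hence $\P[X\le D+B-u]\le\exp\!\big((D+2B)t^{2}-tu\big)$, which is minimized at $t=\tfrac{u}{2(D+2B)}$ to give $\exp\!\big(-\tfrac{u^{2}}{4(D+2B)}\big)$; taking $u=2\sqrt{(D+2B)x}$ yields the second bound.

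There is no real obstacle here, since this is essentially a textbook concentration inequality (a noncentral analogue of the Laurent--Massart bound). The only points requiring care are orienting the two logarithmic inequalities in the correct direction, and choosing the Chernoff parameter in the upper-tail case so that the exponent telescopes \emph{exactly} to $-x$ for the stated value of $u$ rather than merely to an order-optimal bound; the reparametrization $t=s/(1+2s)$ is precisely the device that replaces a messy optimization by a one-line simplification.
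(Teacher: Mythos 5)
Your proof is correct. Note that the paper itself offers no proof of this statement: it is imported verbatim as Lemma 8.1 of the cited reference (Birg\'e, 2001) and used as a black-box technical tool, so there is no in-paper argument to compare against. Your Chernoff-bound derivation is the standard one for this Laurent--Massart-type inequality and all steps check out: the MGF and its centered logarithm are computed correctly, the two scalar estimates $-\log(1-u)-u\le \tfrac{u^{2}}{2(1-u)}$ and $\log(1+u)\ge u-\tfrac{u^{2}}{2}$ are applied in the right directions, and the substitution $t=s/(1+2s)$ does make the upper-tail exponent collapse exactly to $-x$ at $u=2\sqrt{(D+2B)x}+2x$ (one verifies $t\bigl((D+2B)s-u\bigr)=-x$ with $s=\sqrt{x/(D+2B)}$), while the lower-tail optimization at $t=u/(2(D+2B))$ gives exactly $e^{-x}$ at $u=2\sqrt{(D+2B)x}$. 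The only cosmetic caveat is that the representation $\sum_{i=1}^{D-1}Z_i^{2}+(Z_D+\sqrt{B})^{2}$ presumes integer $D\ge 1$; for the MGF formula itself this is immaterial since it holds for general $D>0$ by the standard density or by rotational invariance.
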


\section*{Acknowledgment} 
This work was supported by the National Research Foundation of Korea (NRF) grant funded by the Korea government (MSIT) (No. RS-2024-00408003, No. RS-2025-00516153 and RS-2024-00444862).

\bibliographystyle{IEEEtran}

\end{document}